\algnewcommand\algorithmicoutput{\textbf{Output:}}
\algnewcommand\Output{\item[\algorithmicoutput]}
\newcommand{\intmeasure}{em}
\newcommand{\interval}[2][\relax]{%
  \underbrace{\ifx#1\relax
      \rule{2\intmeasure}{0pt}
    \else
      \rule{#1\intmeasure}{0pt}
    \fi}_{\text{#2}}
}
\numberwithin{equation}{section}
\newcommand{\wt}{\widetilde}
\newcommand{\wh}{\widehat}
\newcommand{\wb}{\widebar}
\newcommand{\hdm}{\bm X}
\newcommand{\ldm}{\wt{\bm X}}
\newcommand{\rshdm}{\wb{\bm X}}
\newcommand{\hatl}{\wh{\bm L}}
\newcommand{\hd}{\wh{\bm D}}
\newcommand{\hslm}{\wh{\bm S}^L}
\newcommand{\hsrm}{\wh{\bm S}^R}
\newcommand{\hslv}{\wh{\bm \Sigma}^L}
\newcommand{\hsrv}{\wh{\bm \Sigma}^R}
\newcommand{\bu}{\wb{\bm U}}
\newcommand{\bv}{\wb{\bm V}}
\newcommand{\hslmp}{\wh{\bm S}^{L,\mathrm{pc}}}
\newcommand{\hsrmp}{\wh{\bm S}^{R,\mathrm{pc}}}
\newcommand{\hslvp}{\wh{\bm \Sigma}^{L,\mathrm{pc}}}
\newcommand{\hsrvp}{\wh{\bm \Sigma}^{R,\mathrm{pc}}}
\newcommand{\bq}{Q}
\newcommand{\tq}{\wt{Q}}
\newcommand{\calf}{\mathcal{F}}
\newcommand{\caltf}{\wt{\mathcal{F}}}
\newcommand{\rcheck}{\widehat{\bm R}}
\newcommand{\hatuq}{\wh{U}^{\mathrm Q, 0}}
\newcommand{\calc}{\mathcal{C}}
\newcommand{\usig}{\wh{U}^{\mathrm{Q},\mathsf{signal}}}
\newcommand{\unoise}{\wh{U}^{\mathrm{Q},\mathsf{noise}}}
\newcommand{\varnoise}{\bm \Sigma^{\mathrm Q,H}}
\newcommand{\rmq}{\mathrm Q}
\newcommand{\radobs}{\wh b}
\newcommand{\mobs}{\wh U}
\newcommand{\hatum}{\wh U^H}
\newcommand{\hatul}{\wh U^L}
\newcommand{\hslmor}{\bm S^{L,\orc}}
\newcommand{\hsrmor}{\bm S^{R,\orc}}
\newcommand{\hslvor}{\bm \Sigma^{L,\orc}}
\newcommand{\hsrvor}{\bm \Sigma^{R,\orc}}
\newcommand{\hslmorp}{\bm S^{L,\mathrm{pc}}}
\newcommand{\hsrmorp}{\bm S^{R,\mathrm{pc}}}
\newcommand{\hslvorp}{\bm \Sigma^{L,\mathrm{pc}}}
\newcommand{\hsrvorp}{\bm \Sigma^{R,\mathrm{pc}}}
\newcommand{\orc}{\circ}
\newcommand{\uor}{{\bm U}^{\orc}}
\newcommand{\vor}{{\bm V}^{\orc}}
\newcommand{\upca}{{\bm U}^{\mathrm{pc}}}
\newcommand{\vpca}{{\bm V}^{\mathrm{pc}}}
\newcommand{\tupca}{\wt{\bm U}^{\mathrm{pc}}}
\newcommand{\buor}{\wb{\bm U}^{\orc}}
\newcommand{\bvor}{\wb{\bm V}^{\orc}}
\newcommand{\rmxl}{\mathrm{X}^{L}}
\newcommand{\rmxr}{\mathrm{X}^{R}}
\newcommand{\rmtxl}{\wt{\mathrm{X}}^{L}}
\newcommand{\rmufunc}{\mathsf{u}}
\newcommand{\rmvfunc}{\mathsf{v}}
\newcommand{\gammab}{\bm \Gamma^L}
\newcommand{\gamman}{\bm \Gamma^R}
\newcommand{\gsu}{\bm U^{\mathsf{gs}}}
\newcommand{\projpar}{\bm P^{\parallel}}
\newcommand{\projperp}{\bm P^{\perp}}
\newcommand{\gsuperp}{\bm U^{\mathsf{gs},\perp}}
\newcommand{\adX}{\bm X^{\perp}}
\newcommand{\adv}{\bm V^\perp}
\newcommand{\adu}{\bm U^\perp}
\newcommand{\advs}{\wb{\bm V}^\perp}
\newcommand{\adus}{\wb{\bm U}^\perp}
\newcommand{\adxl}{Y^{L,\perp}}
\newcommand{\adxr}{Y^{R,\perp}}
\newcommand{\alphalt}{\bm{A}^{L,\perp}}
\newcommand{\betalt}{\bm{B}^{L,\perp}}
\newcommand{\tault}{\bm{T}^{L,\perp}}
\newcommand{\alphart}{\bm{A}^{R,\perp}}
\newcommand{\betart}{\bm{B}^{R,\perp}}
\newcommand{\taurt}{\bm{T}^{R,\perp}}
\newcommand{\funcm}{\mathtt{S}}
\newcommand{\funcv}{\mathtt{\Sigma}}
\newcommand{\tildSl}{\bm{S}^{L,\perp}}
\newcommand{\tildSr}{\bm{S}^{R,\perp}}
\newcommand{\tildSgl}{\bm{\Sigma}^{L,\perp}}
\newcommand{\tildSgr}{\bm{\Sigma}^{R,\perp}}
\newcommand{\tildfuncu}{u^{\perp,\orc}}
\newcommand{\tildfuncv}{v^{\perp,\orc}}
\newcommand{\nperpu}{{\bm N}^{\perp,L}}
\newcommand{\nbaru}{\wb{\bm N}^{\perp,L}}
\newcommand{\nperpv}{{\bm N}^{\perp,R}}
\newcommand{\nbarv}{\wb{\bm N}^{\perp,R}}
\newcommand{\tildgenu}{u^{\mathtt{g}}}
\newcommand{\tildgenv}{v^{\mathtt{g}}}
\newcommand{\genericm}{\bm G}
\newcommand{\genu}{{\bm N}^{G,L}}
\newcommand{\genub}{\wb{\bm N}^{G,L}}
\newcommand{\genv}{{\bm N}^{G,R}}
\newcommand{\genvb}{\wb{\bm N}^{G,R}}
\newcommand{\taul}{\bm T^{G,L}}
\newcommand{\taur}{\bm T^{G,R}}
\newcommand{\jacu}{\bm J^{G,L}}
\newcommand{\jacv}{\bm J^{G,R}}
\newcommand{\sfnr}{\boldsymbol{\mathsf{N}}^{G,R}}
\newcommand{\sfnbr}{\boldsymbol{\wb{\mathsf{N}}}^{G,R}}
\newcommand{\sfnl}{\boldsymbol{\mathsf{N}}^{G,L}}
\newcommand{\sfnbl}{\boldsymbol{\wb{\mathsf{N}}}^{G,L}}
\newcommand{\sfE}{\boldsymbol{\mathsf E}}
\newcommand{\sfP}{\boldsymbol{\mathsf P}}
\newcommand{\rfF}{\boldsymbol{\mathrm N}}
\newcommand{\jac}{\bm J}
\newcommand{\jacl}{\bm J^L}
\newcommand{\jacr}{\bm J^R}
\newcommand{\jacor}{\bm J^\orc}
\newcommand{\jaclor}{\bm J^{L,\orc}}
\newcommand{\jacror}{\bm J^{R,\orc}}
\newcommand{\jaclperp}{\bm J^{L,\perp}}
\newcommand{\jacrperp}{\bm J^{R,\perp}}
\newcommand{\orchAMP}{{OrchAMP}}
\newtheorem{thm}{Theorem}[section]
\newtheorem{lem}{Lemma}[section]
\newtheorem{asm}{Assumption}[section]
\newtheorem{cor}{Corollary}[section]
\newtheorem{prop}{Proposition}[section]
\theoremstyle{remark}
\newtheorem{rem}{Remark}[section] 
\newcommand{\R}{\mathbb{R}}
\newcommand{\revzm}[1]{\textcolor{black}{#1}}
\newcommand{\revzms}[1]{\textcolor{black}{#1}}
\newcommand{\revsn}[1]{\textcolor{black}{#1}}
\newcommand{\revsns}[1]{\textcolor{black}{#1}}
\newcommand{\revsag}[1]{\textcolor{black}{#1}}
\renewcommand{\arraystretch}{2}
\DeclareMathOperator*{\argmax}{arg\,max}
\title{Multimodal data integration and cross-modal querying via orchestrated approximate message passing}
\author{
Sagnik Nandy$^{1}$\thanks{Email: nandy.15@osu.edu}
\and
Zongming Ma$^{2}$\thanks{Email: zongming.ma@yale.edu}
}
\date{
$^{1}$Department of Statistics, The Ohio State University\\
$^{2}$Department of Statistics and Data Science, Yale University
}
\begin{document}

	\maketitle
	
\begin{abstract}
The need for multimodal data integration arises naturally when multiple complementary sets of features are measured on the same sample, such as in single-cell multi-omics studies.
Under a dependent multifactor model, we develop a fully data-driven orchestrated approximate message passing algorithm for integrating information across these feature sets to achieve statistically optimal signal recovery. In practice, these reference datasets are often queried later by new subjects that are only partially observed. 
Leveraging on asymptotic normality of estimates generated by our data integration method, we further develop an asymptotically valid prediction set for the latent factor representation of any such query subject.
We demonstrate the prowess of
both the data integration and the prediction set construction algorithms synthetic examples and real world single-cell datasets.
\\
~
\\
\textbf{Keywords:} Empirical Bayes, Factor analysis, Prediction set, Principal component analysis, Single-cell, Multi-omics.
\end{abstract}
	
% \tableofcontents

% \bigskip

% Some general issues:
% \begin{enumerate}
% \item Shall we write the entire paper focusing on single-cell applications or in general terms? [I suggest general terms, and only focus on single-cell applications in data part. We shall write a separate method paper with more serious single-cell data examples next.]

% \item 
% \end{enumerate}

% \newpage

%!TEX root = aos_final_file.tex

\section{Introduction}

\subsection{Motivation}

% \revzm{It has become increasingly common in data collection processes to measure the same set of subjects from multiple complementary facets. 
% From a statistical viewpoint, we can regard it as observing multiple sets of features on the same sample. 
% For example, in single-cell biology, multi-omics technologies that simultaneously measure multiple single-cell level characteristics have gained tremendous popularity in recent years \nb{insert method-of-the-year ref}. 
% With these technologies, biologists can now measure single-cell gene expression levels together with surface protein abundances \cite{Stoeckius2017} and/or chromatin accessibility \cite{doi:10.1126/science.aau0730, Chen2019} for thousands of cells in a single experiment.
% }

Identifying and documenting different cell populations and cell states from single-cell RNA sequencing (scRNAseq) data is an important problem in cell biology \cite{Papalexi2017-dn, Chen2019a}. 
Funding agencies have organized consortium level efforts to build large scale reference datasets (a.k.a.~atlases) for different tissues and organs in different biological conditions. 
Such efforts include but are not limited to the Human Biomolecular Atlas Project (HuBMAP) \cite{hubmap2019human}, the Human Tumor Atlas Network (HTAN) \cite{rozenblatt2020human}, the Molecular Atlas of Lung Development Program (LungMAP) \cite{gaddis2024lungmap}, etc.
They have resulted in high-quality expert-annotated reference datasets or atlases that can be later used for classifying and annotating new cells of similar types. 
The practice is known as ``label transfer''.
% There exists high-quality and expertly annotated reference datasets from different consortia including Human Cell Atlas \cite{regev2017human} and the Human Biomolecular Atlas Project (HuBMAP) \cite{hubmap2019human}. 
Though scRNAseq can help separate heterogeneous groups of cells, transcriptomics alone often is unable to separate molecularly similar but functionally different types of cells. 
% For example, transcriptomic data is not very efficient in segregating functionally different T-cells like effector, regulatory, and mucosal-associated invariant T (MAIT). 
To resolve this issue, biologists often rely on other molecular or structural features observed along with scRNAseq data.

Multimodal technologies that measure different single-cell level characteristics have gained tremendous popularity in recent years \cite{teichmann2020method}. 
These technologies allow for the availability of transcriptome data along with other single-cell level biochemical properties, such as surface protein abundance \cite{Stoeckius2017} and chromatin accessibility \cite{doi:10.1126/science.aau0730, Chen2019}, among others. With the availability of multimodal data, it becomes important to design efficient data integration techniques to combine information from these heterogenous modalities to create a unified low-dimensional embedding to segregate cells having molecular and functional dissimilarities. 
\revsn{In practical settings, multimodal reference atlases are frequently queried by new cells measured in a single modality, or even a subset of features within one modality. 
This task, often referred to as \emph{label transfer}, involves inferring cell types of new observations by
\revzm{mapping them to the embedding space learned from reference data.}
% projecting them onto the low-dimensional structure learned from the reference. 
While conceptually related to supervised learning, label transfer differs in that it often requires cross-modality mapping and domain adaptation, as the new data may not share the same feature space or modality as the reference atlas. These challenges make it essential not only to produce accurate predictions of cell identity, but also to quantify the uncertainty associated with each prediction, particularly when query data are noisy, incomplete, or potentially unmatched to known cell types in the reference.
}
% \nb{to distinguish label transfer and supervised learning!}

In \cite{HAO20213573}, the authors developed a weighted-nearest-neighbor (WNN) approach for integrating multimodal data. 
Their method can handle multiple datasets with different dimensions and is state-of-the-art in the literature to efficiently segregate different cells into interpretable categories based on the structural and molecular similarities inferred from 
the multimodal single-cell observations. 
Other popular methods include CiteFuse \cite{kim2020citefuse}, MOFA+ \cite{argelaguet2020mofa+}, and totalVI \cite{gayoso2021joint}.
However, these methods suffer from a common shortcoming from 
a
% the 
statistical perspective: 
They are built on heuristics and lack solid mathematical foundations that enable theoretically justifiable generalization beyond the data examples shown in the original papers in which they were proposed.
Moreover, to the best of our knowledge, none of these methods generates an atlas that allows uncertainty quantification of the predicted cell state when queried.
% \nb{This part I have copied from your project proposal. Please review}
% \textcolor{red}{ This lack of understanding of when an existing method can be used in analyzing the current data of interest results in a slowdown of scientific research and the potential pitfalls of method abuse and false discovery. Further, it is not clear under what data-generating models existing methods work or whether they are statistically reasonable or optimal even under the most idealized generative models. The weighted nearest neighbor-based method even lacks the basic sanity check on whether it works for the model underpinning the heuristics that motivated its development.} To the best of our knowledge, no methods have been proposed in the literature to achieve such multimodal data integration with performance comparable to the weighted nearest neighbor-based methods from \citet{HAO20213573} and theoretical guarantees on the performance.

To address these issues, we propose in this paper a new multimodal data integration method that generates atlases with provable performance guarantees under a stylized model and provides an asymptotically valid prediction set for any desired coverage probability when the constructed atlas is queried by a new subject with partial observation of the modalities and the features used in atlas construction.

Although we have motivated the problem studied by this manuscript mainly with single-cell multi-omics data analysis, the model we shall propose is general and so are the developed methods and the related theory.

% \nbz{revised above at the end.}

\subsection{Problem formulation}

% To address these issues, 
% We propose a new data integration approach based on the idea of \emph{orchestrated approximate message passing}, abbreviated as {\orchAMP}.
% developed in \cite{ma_nandy}. 
In general, one can view multimodal data as multiple datasets with correspondence of measured subjects. 
If each dataset is stored as a subject-by-feature matrix, then these data matrices have corresponding rows.
% Our method is \revzm{motivated by and further develops in the current context} the idea of {orchestrated approximate message passing} \cite{ma_nandy} ({\orchAMP}): 
% \revzm{While the original {\orchAMP} focused on recovering an identical rank-one signal from multimodal data with oracle prior information and initialization, we develop it in this paper into a practical algorithm that reconstructs signals of generic dependence structure and of potentially different ranks across different modalities with fully data-based empirical prior and initialization.}
% It treats 
\revzms{We model}
the latent state encodings of the same subject across different datasets as 
% mutually correlated 
dependent
% random 
\revzms{signal}
vectors and design a multi-orbit approximate message passing algorithm in each iteration of which we synchronously integrate information across different datasets to denoise each estimate of the latent space representation vector specific to an individual dataset.
The resulting \revzms{algorithm}
% method 
has a number of desirable characteristics: fast computation, provable statistical optimality under reasonable modeling assumptions, and the availability of asymptotically valid confidence statements. 
% \revsn{However, the method in \cite{ma_nandy} critically depends on oracle initialization and knowledge of the true priors, making it infeasible for practical use. In contrast, our approach develops a fully data-driven implementation of orchestrated approximate message passing that enables efficient integration across multiple dependent datasets without requiring any oracle information.}
% These features are not only attractive in the aforementioned single-cell data analysis settings, but also applicable in general multimodal data integration problems.

% The basic idea underpinning the technique is the construction of a low-rank spectral embedding from a high dimensional matrix by alternating power iteration and denoising steps.
% In the denoising steps, we synchronously integrate information across different datasets.

We develop our method based on the following dependent multimodal factor model.
% with Gaussian noises. 
For any positive integer $\ell$, let $[\ell] = \{1,\dots,\ell\}$ be the count set.
Suppose all available datasets come in the form of $m$ high-dimensional and $\wt m$ low-dimensional datasets,
where $m \geq 1$, $\wt m \geq 0$, and $m + \wt{m}\geq 2$.
% Without loss of generality, $\wt m$ can be zero.
Suppose \revsn{for each $h \in [m]$, the high-dimensional data matrix} $\hdm_h \in \mathbb{R}^{N \times p_h}$
% ,\; d \in [m]$, 
satisfies
\begin{equation}
    \label{eq:def_matrix_prop_asymp}
    \hdm_h = \frac{1}{\sqrt{N}}\bm U_h \bm D_h \bm V_h^\top + \bm W_h,
\end{equation}
where $\bm U_h \in \R^{N \times r_h}$, $\bm V_h \in \R^{p_h \times r_h}$, $\bm D_h \in \R^{r_h \times r_h}$ and $\bm W_h \in \R^{N \times p_h}$. 
% For $h\in [m]$, the high dimensional matrices 
Hence, each $\hdm_h$ can be viewed as a noisy observation of the low-rank signal $\frac{1}{\sqrt{N}}\bm U_h \bm D_h \bm V_h^\top$ representing effective information present in the matrix.
\revzms{We assume that each $\bm D_h$ is diagonal with positive diagonal entries, ${\bm U}_h^\top {\bm U}_h\approx N {\bm I}_{r_h}$, and ${\bm V}_h^\top {\bm V}_h \approx p_h {\bm I}_h$. 
In other words, up to scaling, each $\bm U_h \bm D_h \bm V_h^\top$ is essentially the singular value decomposition of the signal component.
We do not enforce strict orthonormalization of ${\bm U}_h$'s and ${\bm V}_h$'s as we shall assume that their rows are sampled from certain priors and we shall make this assumption precise below.}
In addition, 
assume each low-dimensional matrix $\ldm_\ell \in \mathbb{R}^{N \times {\wt r}_\ell}$
% $,\; k \in \{1,\ldots,n\},$ satisfying:
satisfies
\begin{equation}
    \label{eq:def_matrix_low_dim_asymp}
    \ldm_\ell = \wt{\bm U}_\ell\bm L^\top_\ell+\wt{\bm W}_\ell,
\end{equation}
where $\wt{\bm U}_\ell \in \R^{N \times {\wt r}_\ell}$ and $\bm L_\ell \in \R^{{\wt r}_\ell \times {\wt r}_\ell}$. We assume that for all $\ell \in [\wt m]$, the matrices $\bm L_{\ell}$ are symmetric and full-rank. 
The low dimensional matrix $\ldm_\ell$ can be viewed as a noisy observation of the low-dimensional signal $\wt{\bm U}_\ell\bm L^\top_\ell$ for $\ell \in [\wt m]$.
\revzms{Similarly, we assume $\wt{\bm U}_\ell^\top \wt{\bm U}_\ell \approx N \bm I_{\wt{r}_\ell}$ as the rows of $\wt{\bm U}_\ell$'s are also sampled from prior.}
Throughout the paper, we differentiate high- and low-dimensional datasets in asymptotic sense: 
for high-dimensional datasets, we assume proportional growth asymptotics where $p_h/N \rightarrow \gamma_h \in (0,\infty)$ \revzm{and $r_h = O(1)$ is fixed} as both $N$ and $p_h \rightarrow \infty$ for $h\in [m]$, and
for low-dimensional datasets, we assume that ${\wt r}_\ell = O(1)$ is fixed for all $\ell \in [\wt m]$ as $N \rightarrow \infty$.

\revsn{The assumption of a low-rank latent structure for the signal component of the high-dimensional modalities is widely used in both empirical and theoretical works across several disciplines. In single-cell biology, it is widely accepted that a relatively small number of biological programs, such as differentiation trajectories and cell-cycle phases,
% , or lineage identities
drive the observed high-dimensional variation in gene
% or protein 
expressions. This motivates the modeling of each modality as arising from a low-dimensional latent representation, a framework that has been adopted in numerous recent works including \cite{HAO20213573, gayoso2021joint, argelaguet2020mofa+, JMLR:v22:20-589, denault2025covariate}. 
Beyond biology, similar assumptions have been adopted in machine learning and artificial intelligence, where it is often assumed that high-dimensional observations lie near a low-dimensional manifold. 
% This is the basis for many generative models (e.g., variational autoencoders, diffusion models) and dimensionality reduction techniques (e.g., PCA, t-SNE, UMAP). 
In the multimodal setting, the different data modalities are typically assumed to be governed by a set of dependent low-dimensional latent factors \cite{lock2013jive,radford2021learning,gui2025multi} 
% \nb{add clip and gmm24}, 
which capture coordinated structure across modalities, making low-rank modeling a natural and interpretable choice.}
\revsn{The additional low-dimensional modalities in equation \eqref{eq:def_matrix_low_dim_asymp} are motivated by an experimental design in single-cell genomics where a \emph{targeted panel} \revzm{of biomarkers} is used. 
Such a panel involves a curated set of pre-selected genes or proteins (typically tens 
% to hundreds 
of features) that are known to be biologically informative, and 
% These targeted measurements 
are often collected alongside more comprehensive but sparse or noisy measurements of the full transcriptome or \revzm{the full} epigenome. 
% The motivation for targeted panels is both practical and biological: they allow for cost-effective, high-throughput, and sensitive profiling of specific cell states or subpopulations.
}

In data integration and atlas construction,
our goal is to \revzms{jointly} recover the low-rank signals in \eqref{eq:def_matrix_prop_asymp} and \eqref{eq:def_matrix_low_dim_asymp} from the collection of matrices $\{\hdm_h: h \in [m] \}$ and $\{\ldm_\ell: \ell \in [\wt m] \}$.
For the $i$th subject, we regard the $i$th rows in the subject effect matrices $\bm U_h$ for $h \in [m] $ and $\wt{\bm U}_\ell$ for $\ell \in [\wt m] $ as latent state encodings of this subject provided by different data modalities. 
Since the subject is the same and the modalities are different, these encoding vectors are expected to be distinct while dependent. 
In particular, 
we assume that $\{((\bm U_1)_{i*},\dots, (\bm U_m)_{i*}, ({\wt{\bm U}}_1)_{i*},\dots, ({\wt{\bm U}}_{\wt m})_{i*} ):i\in [N] \}$ are i.i.d.~samples from a joint prior distribution $\mu$ (\revsn{supported on a subset of $\mathbb R^{\sum_{i=1}^{m}r_i+\sum_{j=1}^{\wt m}\wt r_j}$}),
% the rows of the matrices $\bm U_d$ for $d \in \{1,\ldots,m\}$ and the rows of the matrices $\tilde{\bm U}_k$ for $k \in \{1,\ldots,n\}$ representing the cell characteristics are jointly sampled from a prior $\mu$,
through which we are allowed to incorporate any dependence structure among different modalities. 
The 
% mutual 
dependence of the modality-specific latent state encoding vectors, and hence the signals, necessitates information pooling for their reconstruction. 
For other components in \eqref{eq:def_matrix_prop_asymp}-\eqref{eq:def_matrix_low_dim_asymp}, we assume that the rows 
of each modality effect matrix $\bm V_h$ are sampled independently from some prior $\nu_h$ and they are mutually independent across $h \in [m]$.
The matrices $\{\bm D_h: h \in [m] \}$ are assumed to be diagonal matrices with positive diagonal entries representing signal strengths characterizing the association between subject effects and modality effects. Furthermore, to ensure identifiability of the columns in $\bm U_h$ and $\bm V_h$ for each $h \in [m]$, we assume that the entries $\{(\bm D_h)_{ii}: i \in [r_h]\}$ are distinct for all $h \in [m]$. 
\revsn{This assumption is not necessary for the validity of the OrchAMP algorithm or its associated state evolution equations, but is imposed to guarantee identifiability of the singular vectors \revzm{and to facilitate prediction set construction for query subjects}.}
% , which could be important in applications beyond label transfer where the orientation of the latent factors is meaningful.
The entries of the error matrices $\bm W_h$ for $h \in [m]$ and $\wt{\bm W}_\ell$ for $\ell \in [\wt m]$ are i.i.d.~$N(0,1)$ random variables.
% \nbz{using two different scaling for high and low dim datasets is really confusing. try multiply $1/\sqrt{N}$ in the signal component in \eqref{eq:def_matrix_low_dim_asymp} and let noise have unit variance.} 

% whereas the rows of the matrices $\bm V_d$ for all $d \in \{1,\ldots,m\}$ may be viewed as modality-specific characteristics. 

In atlas querying and prediction set construction, we assume that we observe, on a query subject, subsets of features on a subset of modalities seen in the reference datasets used for atlas construction. 
To be precise, we observe $\{\bq_{h_1},\ldots,\bq_{h_d},\tq_{\ell_1},\ldots,\tq_{\ell_{\wt d}}\}$ on a query subject, where the vectors $\bq_{h_1},\ldots,\bq_{h_d}$ are from high-dimensional modalities $\{h_1,\ldots,h_d\} \subseteq [m]$ and $\tq_{\ell_1},\ldots,\tq_{\ell_{\wt d}}$ low-dimensional modalities $\{\ell_1,\ldots,\ell_{\wt d}\} \subseteq [\wt m]$.
% we observe $\bq_{h_1},\ldots,\bq_{h_d}$ corresponding to the high-dimensional modalities $\{h_1,\ldots,h_d\} \subseteq [m]$, where for each $k \in [d]$,
For the $k$-th observed high-dimensional modality,
$\bq_{h_k} \in \R^{|\calf_{h_k}|}$ 
% can be written as
% \begin{equation}
% 	\label{eq:query-high}
%   \bq_{h_k} =  \frac{1}{\sqrt N}(\bm V_{h_k})_{\calf_{h_k} *}\bm D_{h_k}U^{\mathrm Q}_{h_k}  + W^{\mathrm Q}_{h_k}, \quad \mbox{for $W^{\mathrm Q}_{h_k} \sim N_{|\calf_{h_k}|}(0,\bm I_{|\calf_{h_k}|})$.}
% \end{equation}
% In \eqref{eq:query-high}, 
where $\calf_{h_k} \subseteq [p_{h_k}]$ collects indices of the observed features in this modality.
For the $k$-th observed low-dimensional modality,
$\tq_{\ell_k} \in \R^{|\wt\calf_{\ell_k}|}$ 
% can be written as
% \begin{equation}
% 	\label{eq:query-low}
% \tq_{\ell_k}=(\bm L_{\ell_k})_{\caltf_{\ell_k}*}\wt{U}^{\mathrm Q}_{\ell_k}+\wt{W}^{\mathrm Q}_{\ell_k}, \quad \mbox{for $\wt{W}^{\mathrm Q}_{\ell_k} \sim N_{|\wt\calf_{\ell_k}|}(0,\bm I_{|\wt\calf_{\ell_k}|})$.}
% \end{equation}
% In \eqref{eq:query-low}, 
where $\wt\calf_{\ell_k} \subseteq [\wt r_{\ell_k}]$ denotes indices of the observed features.
% in this modality, 
% and $U^{\mathrm Q}_{h_k} \in \R^{r_{h_k}}$ denotes the query's realized subject effect in the modality. 
The observed information on query is assumed to be partial observation of an independent realization of the rows in model \eqref{eq:def_matrix_prop_asymp}--\eqref{eq:def_matrix_low_dim_asymp} with the same $\{\bm V_h, \bm D_h: h\in [m]\}$, $\{\bm L_\ell: \ell\in [\wt m]\}$, and the same latent factor prior distribution $\mu$ as in the reference data. 
Here, our goal is to leverage both the atlas and the partial observation on query to construct a prediction set for the query's full latent representation $({U}_1^{\mathrm{Q}},\dots,{U}_m^{\mathrm{Q}}, \wt{U}_1^{\mathrm{Q}},\dots, \wt{U}_{\wt{m}}^{\mathrm{Q}})$.

% \nbz{to give more precise mathematical definitions of the inference problems we aim to tackle here!!}

% \subsection{Paper Organization}
\subsubsection{Main contributions}

The main contributions of this work include the following:
\begin{enumerate}

\item \textbf{Proposal of the OrchAMP algorithm for data integration and its theoretical justification}.
We propose and justify a fully data-driven algorithm for the integrative estimation of signal components in model \eqref{eq:def_matrix_prop_asymp}--\eqref{eq:def_matrix_low_dim_asymp}. 
We establish the ``large $N$'' asymptotic normality of all iterates of the algorithm under general conditions for a wide range of priors on the signal.
Under an additional unique-fixed-point condition on the approximate message passing state evolution, we further establish the ``large $N$, large $t$'' asymptotic Bayes optimality of the estimators produced by the algorithm.

\item \textbf{Design of an asymptotically valid prediction set for query.} Leveraging on the asymptotic normality of the iterates in the OrchAMP algorithm, we construct asymptotically valid prediction sets for query data in which only a subset of modalities and/or subsets of features within modalities are observed.
The prediction set is valid even when the unique-fixed-point condition on state evolution is not satisfied. Hence, it is widely applicable.
To the best of our knowledge, the present work is the first to construct asymptotically valid prediction sets for queries.

% \item \textbf{Calibrated uncertainty-aware label transfer.} \revsn{Our framework enables a novel form of label transfer that incorporates calibrated uncertainty quantification through the use of prediction sets. Rather than assigning a single label to a query cell, our method identifies a set of compatible reference cell types, accompanied by a formal coverage guarantee. This allows for interpretable and reliable annotation, particularly when the query data is incomplete, noisy, or mismatched in modality. To our knowledge, this is the first work to implement prediction-set-based, uncertainty-aware label transfer in the context of multimodal single-cell data.}

\item \textbf{Validation on a single-cell data example.}
We validate the proposed atlas building and prediction set construction algorithms on a tri-modal TEA-seq dataset \cite{10.7554/eLife.63632}.
While both algorithms have been motivated by and theoretically validated under model \eqref{eq:def_matrix_prop_asymp}--\eqref{eq:def_matrix_low_dim_asymp} in the first place, we demonstrate on this real dataset that the OrchAMP algorithm achieves comparable state-of-the-art cell state differentiation capacity to its competitors such as WNN, and that the prediction sets constructed by our algorithm are highly informative when we query the constructed atlas with a new cell measured in a single modality.

\end{enumerate}

A demonstration of the proposed methods on the single-cell data example is available at \url{https://github.com/Sagnik-Nandy/OrchAMP}.

\subsection{Related works}

Approximate message passing (AMP) was developed by \citet{doi:10.1073/pnas.0909892106} in the context of compressed sensing. %It is worthwhile to mention that 
Related ideas were also explored in the context of belief propagation in graphical models \cite{PEARL198829} and statistical physics \cite{10.1093/acprof:oso/9780198570837.001.0001,doi:10.1142/0271}. 
Furthermore, the idea of state evolution constituting a major component of the message-passing algorithms leading to the development of AMP was numerically studied in \cite{6409458,5464920,doi:10.1073/pnas.0909892106}, among others.
\revsn{The foundational work of \citet{BM11journal} developed a rigorous framework for analyzing the approximate message passing (AMP) algorithm in the context of high-dimensional linear regression, including the LASSO, 
% Although their original focus was not low-rank matrix recovery, the techniques introduced in that work—particularly the state evolution formalism—have since been extended to more general settings. Notably, 
\revzm{and} \citet{JM12} adapted this framework to study AMP for low-rank matrix recovery problems. 
% Following this line of work, it has been shown 
\revzm{They showed}
that AMP can achieve asymptotically Bayes-optimal recovery of low-rank signals when an appropriate class of denoising functions is employed within the iterative updates. 
\revzm{See also \cite{Berthier, AbbeMonYash}.}
% These developments have led to a broader body of results extending the AMP framework to various structured signal recovery problems; see, for example, \citet{Berthier, AbbeMonYash}. 
In parallel, \citet{Lesieur2017ConstrainedLM} studied constrained low-rank matrix estimation using AMP, precisely characterizing phase transitions in recovery performance under structural priors and non-Gaussian noise, thereby broadening AMP’s applicability beyond classical spiked models. 
\revzm{See also \cite{FletcherRangan2018, yuting, montanari2021,eb_pca}.}}
% In a related early contribution, \citet{FletcherRangan2018} introduced the Iterative Factorization (IterFac) algorithm for rank-one matrix estimation in Gaussian noise. Although distinct from AMP, IterFac admits a state evolution characterization similar in spirit, thus enriching the theoretical understanding of iterative methods for low-rank inference.}
% Recently, \citet{yuting} analyzed the properties of AMP iterates for a finite sample size and a finite number of iterative refinements. 
% In the context of low-rank matrix recovery, 
% \citet{montanari2021} rigorously analyzed asymptotics of AMP with spectral initialization, and 
\revzm{In particular, \citet{eb_pca} proposed an empirical Bayes technique to estimate the underlying prior of the signal components and hence established a completely data-driven method to compute the denoising functions in AMP for low-rank matrix recovery in the single-view setting. The present manuscript generalizes this empirical Bayes idea to estimation in multi-view settings and to prediction set construction. The latter is beyond the scope of \cite{eb_pca} even in single-view settings.}

% \citet{eb_pca} proposed an empirical Bayes technique to estimate the underlying prior from which the data was generated and hence established a completely data-dependent method to compute the denoiser functions for achieving Bayes optimal recovery of low-rank matrices. 
% In the present work, we adapt the empirical Bayes framework of \cite{eb_pca} to the multimodal data integration context.
% \revsn{We note that our theoretical framework builds upon and extends the analysis of \citet{eb_pca}. 
% While their work studies a single AMP orbit associated with one data matrix, our orchestrated setting 
% involves multiple parallel AMP orbits, each characterized by a different matrix $\bm X_h$, with synchronous 
% information integration through the denoiser functions. Several of our results (e.g., Assumption~5.1, 
% Proposition~5.1, and Theorem~5.1) can be viewed as generalizations of corresponding results in \citet{eb_pca}. 
% At the same time, our paper introduces new theoretical developments, most notably on the asymptotic validity 
% of prediction sets, which require novel empirical process arguments beyond the scope of \citet{eb_pca}.}

The use of AMP in information retrieval with multi-view data, where more than one data sources (views) contain independent pieces of information about the same underlying signal of interest, has been considered in the Statistics/Machine Learning literature. 
When one data source is high-dimensional and the other is low-dimensional, researchers have designed specific AMP algorithms around the high-dimensional data source by treating low-dimensional data as side information \revzm{\cite{Rush,doi:10.1073/pnas.1802705116}}. 
% For instance, in \citet{Rush}, the authors used AMP to integrate low-dimensional correlated side-information to the main source of information in low-rank matrix recovery problem and in \citet{doi:10.1073/pnas.1802705116}, the authors used similar techniques to characterize the generalization error in high dimensional generalized linear models.
% Incorporation of separable low-dimensional side information in AMP has been studied in
% See, for instance, \cite{Rush,loureiro2021learning,clarte2023double} 
% \nbz{in what model/context? are the side information correlated or identical wrt the main information? need to specify these details. }. 
\revsn{When there are multiple high-dimensional sources, \citet{ma_nandy} studied a special case where the signal of interest is a vector \revzm{(hence of rank one) that is identical across sources and} whose entries are i.i.d.~random numbers from a (known) Rademacher
% \nbz{check spelling} 
prior in a stylized contextual multi-layer two-block network model. 
They proposed \revzm{an elementary form of {\orchAMP} with multi-orbit message passing to ensure}
% an AMP algorithm with multiple message passing orbits and special denoiser functions for optimal synchronization of information from all sources and 
asymptotically Bayes optimal reconstruction of the signal.
\revzm{The present manuscript develops this idea from identical rank-one signal to dependent signals of unequal ranks, from an oracle initialization to spectral, from known Rademacher prior to generic unknown priors, and from only low-rank signal recovery to both signal recovery and cross-modal prediction set construction.}}
% They showed that a sharp upper bound on the asymptotic minimum mean-squared-error (MMSE) for reconstructing a high-dimensional vector with all entries uniformly and independently sampled from $\{\pm 1\}$ under .
% characterized by correlated high dimensional matrices such that there is synchronous integration of information across different orbits. They used the technique to study community detection in Contextual Multi-layer networks.
\revzm{See \cite{nandy2023bayes} 
% used Orchestrated Approximate Message Passing to
for the generalization to high-dimensional regression with network side information.}
\citet{gerbelot2022graphbased} proposed a unified \revzms{perspective}
% view 
of these AMP algorithms as message passing on edges of an oriented graph whose vertices contain different related modalities.
% \revsag{\citet{gerbelot2022graphbased} developed an AMP algorithm of similar flavor where the authors model the entire set-up as a message passing on the edges of a tree whose vertices contain different related modalities. However, they only considered high-dimensional modalities and did not consider low-dimensional modalities.}
%\citet{gerbelot2022graphbased} developed an AMP algorithm of similar flavor 
% \nbz{in what context? for what models? need to be specific.}.
% \nbz{is this a posted/published work? if not, we may not want to cite it to further dilute novelty.}
% \revsn{In a different direction, \citet{ranganfletcher} developed Adaptive GAMP, an algorithm for general linear-nonlinear models that jointly estimates signal and model parameters using EM-style updates within AMP. While Adaptive GAMP achieves asymptotically consistent parameter learning and is applicable to a wide class of problems, it assumes a known functional form for the adaptation and is designed for single-modality settings. In contrast, our OrchAMP framework addresses the more general and practically relevant case where signals across modalities are correlated (but not identical), priors are unknown, and both high- and low-dimensional sources are present, with adaptation performed jointly across modalities without requiring parametric knowledge of the prior.}
When preparing this manuscript, we became aware of the recent work by \citet{keup2024optimal}, which considered a related multimodal set-up consisting of two high-dimensional rank-one spiked random matrices with correlated factors. 
They linearized the belief propagation updates to obtain a version of AMP and used it to reconstruct the latent factors from the data matrices. 
Furthermore, they quantified the gain in the minimum signal-to-noise ratio required for weak recovery of the correlated factors by integrating the two data matrices compared to that from using a single matrix.
In comparison,
% By further developing the multi-orbit idea in \cite{ma_nandy},
we tackle here the more challenging and realistic scenario where the underlying signals to be recovered from multiple data sources are correlated/dependent rather than identical and have potentially different ranks, the signal priors are unknown, and there exist both high-dimensional and low-dimensional information sources.
% \nbz{spell out connection to and difference from the current work.}

\subsection{Notations}
The notation $\R^k$ denotes the set of $k$ dimensional vectors with real co-ordinates. 
For a vector $a \in \R^k$, the Euclidean norm is denoted by $\|a\|$ and the $p$-th norm for $p\in \mathbb N\setminus\{2\}$ is denoted by $\|a\|_p$. 
The notation $\R^{k \times \ell}$ denotes the set of $k\times \ell$ matrices with real entries. 
For any set $\mathcal F$, $|\mathcal F|$ denotes its cardinality. 
All the matrices will be denoted by upper case bold letters throughout the text. 
For any matrix $\bm A$, $\bm A_{i*}$ denotes its $i$-th row and $\bm A_{*i}$ its $i$-th column. 
Furthermore, $\bm A_{ij}$ denotes its $(i,j)$-th entry. 
For any subset $\mathcal F$ of $\{1,\ldots,n\}$, the submatrix formed by the rows with indices in $\mathcal F$ is denoted by $\bm A_{\mathcal F,*}$ and the submatrix formed by the columns with indices in $\mathcal F$ is denoted by $\bm A_{*,\mathcal F}$. 
Next, for $\bm A \in \R^{k \times k}$, $\bm A^\top$ denotes its transpose and $\mathrm{Tr}(\bm A)= \sum_{i=1}^{k}\bm A_{ii}$ denotes its trace. 
The Frobenius norm of a matrix $\bm A$ is denoted by $\|\bm A\|_F$ 
% and is defined as $\|\bm A\|^2_F=\sum_{i=1}^k\sum_{j=1}^k\bm A^2_{ij}$ 
and the spectral norm of $\bm A$ is denoted by $\|\bm A\|$. 
The notation $\mbox{diag}(a_1,\ldots,a_k)$ defines a $k \times k$ diagonal matrix with the $j$-th diagonal element $a_j$ for $j \in [k]$. 
For a vector $v \in \R^n$, $v^{\otimes 2}$ denotes the $n\times n$ matrix $vv^\top$. 
For two matrices $\bm A$ and $\bm B$, $\bm A \preceq \bm B$ means $\bm B-\bm A$ is positive semi-definite. The class of all $m \times m$ dimensional positive definite matrices is denoted by $\mathbb S^{m}_+$. The notation $\mu_n \xrightarrow{w} \mu$ means that the sequence of measures $\mu_n$ converges weakly to $\mu$. Finally, let $B_p(r)$ denote the set $\{x \in \R^p:\|x\| \le r\}$.

\section{Methods}
\label{sec:methods}
 
Suppose on $N$ subjects, we observe multimodal data $\{\hdm_h: h\in[m]\}$ and $\{\ldm_\ell:\ell\in [\wt m] \}$ as generated in \eqref{eq:def_matrix_prop_asymp}--\eqref{eq:def_matrix_low_dim_asymp}.
In this section, we first introduce an algorithm for the integrative estimation of the signal components in these data matrices. 
In addition, we propose a method for 
\revzms{constructing prediction set of latent state encodings}
% predicting the feature values 
across all modalities for a new subject (i.e., a query) on which only a subset of modalities/features are observed.

\subsection{Integrative signal recovery with {\orchAMP}}
\label{sec:estimation}
\subsubsection{An initial singular value decomposition}
To facilitate the description of our method, we rescale the high-dimensional matrices by $\sqrt{N}$ 
to define
\begin{align}
\label{eq:rescaled_matrices}
  \rshdm_h= \frac{1}{\sqrt{N}}\hdm_h,\quad h\in [m].
\end{align}
Recall that the
% effective 
signal 
in 
$\rshdm_h$ is of rank $r_h$.
% we start with 
We write the best rank $r_h$ approximation to $\rshdm_h$ as
\begin{equation}
    \label{eq:best_rank_k}
    \frac{1}{N}\upca_{0,h}\bm D_{0,h} (\vpca_{0,h})^\top = 
    \frac{1}{N}\sum_{k=1}^{r_h}(\bm D_{0,h})_{kk}\;(\upca_{0,h})_{*k}\;(\vpca_{0,h})^\top_{*k},
\end{equation}
where $\bm D_{0,h}$ is the diagonal matrix consisting of the top $r_h$ 
% rescaled
singular values of 
$\rshdm_h$ and $\upca_{0,h}$ and $\vpca_{0,h}$ are the matrices of the corresponding sample left and right singular vectors rescaled such that 
\begin{equation}
    \label{eq:sample-evec-scale}
    (\upca_{0,h})^\top\upca_{0,h}=N\,\bm I_{r_h} 
    \quad \mbox{and}\quad 
    (\vpca_{0,h})^\top\vpca_{0,h}=p_h\,\bm I_{r_h}. 
\end{equation}
Further, we  assume that for all $h \in [m]$ and $j \in [r_h]$
\begin{align}
\label{eq:identifiability_sign_eqn}
(\upca_{0,h})^\top_{*j}(\bm U_h)_{*j} \ge 0, \quad \mbox{and} \quad (\vpca_{0,h})^\top_{*j}(\bm V_h)_{*j} \ge 0.
\end{align}
\revsn{This assumption is taken to enforce identifiability of the latent factors since 
\revzm{otherwise they}
% these factors 
are \revzm{only} identifiable up to \revzm{columnwise} sign flips.}
% in each coordinate in the absence of this assumption.}

\revsn{When the population singular values $\{(\bm D_h)_{kk} : k \in [r_h]\}$ are sufficiently large\footnote{\revzm{To be exact, when each $(\bm D_h)_{kk} > \gamma_h^{-1/4}$ 
as 
% under the asymptotic regime where 
$N, p_h \to \infty$ with $p_h/N \to \gamma_h \in (0, \infty)$.
This condition 
% on the signal strengths, given by $(\bm D_h)_{kk} > \gamma_h^{-1/4}$ for all $k \in \{1, \ldots, r_h\}$ and $h \in \{1, \ldots, m\}$, 
corresponds to a Baik–Ben Arous–Péché (BBP) type threshold \cite{baik2005} that ensures separation of the signal components from the bulk of the spectrum.}}, one can show that\footnote{For the precise formulation of the approximation in \eqref{eq:emp_sin_vec_asymp}, see Proposition~\ref{prop:singular_values}.},}
\begin{align}
\label{eq:emp_sin_vec_asymp}
    \upca_{0,h} \approx \bm U_h(\hslmorp_{0,h})^\top+\bm Z^L_h(\hslvorp_{0,h})^{1/2}, \hskip 1.5em \vpca_{0,h} \approx \bm V_h(\hsrmorp_{0,h})^\top+\bm Z^R_h(\hsrvorp_{0,h})^{1/2}.
\end{align}
Here $\bm Z^L_h$ and $\bm Z^R_h$ are independent white noise matrices with $(\bm Z^L_h)_{ik}, (\bm Z^R_h)_{jk} \stackrel{iid}{\sim} N(0,1)$ for 
$i\in [N], j\in [p_h]$, and $k\in [r_h]$.
In addition, for $\star\in \{L,R\}$,
\begin{equation}
\label{eq:initializer_1}
  {\bm S}^{\star,\mathrm{pc}}_{0,h} = \mathrm{diag}((s^\star_{0,h})_{1},\dots, (s^\star_{0,h})_{r_h}),
  \quad
  {\bm \Sigma}^{\star,\mathrm{pc}}_{0,h} = \mathrm{diag}((\sigma^\star_{0,h})^2_{1},\dots, (\sigma^\star_{0,h})_{r_h}^2),
  % \quad \mbox{for $\star\in \{L,R\}$},
\end{equation}
where
for all $h \in [m]$ and $k \in [r_h]$,
\begin{equation}
\label{eq:initializers}
\begin{aligned}
    (\sigma^L_{0,h})^2_{k} = \frac{1+(\bm D_h)^2_{kk}}{(\bm D_h)^2_{kk}\{\gamma_h\,(\bm D_h)^2_{kk}+1\}}, &\hskip 1.5em  (\sigma^R_{0,h})^2_{k} = \frac{1+\gamma_h\,(\bm D_h)^2_{kk}}{\gamma_h\,(\bm D_h)^2_{kk}\{(\bm D_h)^2_{kk}+1\}},\\
    (s^L_{0,h})_{k} = \sqrt{1-(\sigma^L_{0,h})^2_{k}}, &\hskip 1.5em  (s^R_{0,h})_{k} = \sqrt{1-(\sigma^R_{0,h})^2_{k}}.
\end{aligned}
\end{equation}
% \revsn{The condition on the signal strengths, given by $(\bm D_h)_{kk} > \gamma_h^{-1/4}$ for all $k \in \{1, \ldots, r_h\}$ and $h \in \{1, \ldots, m\}$, corresponds to a Baik–Ben Arous–Péché (BBP) type threshold \cite{baik2005} that ensures the signal components separate from the bulk of the spectrum.}

The representations \eqref{eq:emp_sin_vec_asymp}--\eqref{eq:initializers} and the model \eqref{eq:def_matrix_prop_asymp}--\eqref{eq:def_matrix_low_dim_asymp}
enable one to infer the population singular vectors in $\bm U_h$ and $\bm V_h$ based on their sample counterparts and $\wt{\bm U}_\ell$ based on the leading left singular vectors $\tupca_{0,\ell}$ of $\wt{\bm X}_\ell$ within each dataset. 
However,  
when 
there is dependence among these signals, it is necessary to integrate information across all datasets for statistical efficiency even if one is only interested in the signal within a single \revzms{modality}.
% dataset.

% \subsubsection{Empirical estimation of unknown nuisance parameters}
\subsubsection{\revzms{Estimation of nuisance parameters}}
Under appropriate assumptions,
we can consistently estimate the signal strengths $(\bm D_h)_{kk}$ for all $h \in [m]$ and $k \in [r_h]$ by the square roots of 
\begin{align}
\label{eq:est_sing_val}
(\hd_{h})^2_{kk}=\frac{\gamma_h(\bm D_{0,h})^2_{kk}-(1+\gamma_h)+\sqrt{[\gamma_h(\bm D_{0,h})^2_{kk}-(1+\gamma_h)]^2-4\gamma_h}}{2\gamma_h}.
\end{align}
Consequently, we can construct consistent estimators of ${\bm S}^{\star,\mathrm{pc}}_{0,h},{\bm \Sigma}^{\star,\mathrm{pc}}_{0,h}$ for $\star\in \{L,R\}$ and $h \in [m]$ as
\begin{equation}
    \label{eq:scale-est}
  \wh{\bm S}^{\star,\mathrm{pc}}_{0,h} = \mathrm{diag}((\wh{s}^\star_{0,h})_{1},\dots, (\wh{s}^\star_{0,h})_{r_h}),
  \quad
  \wh{\bm \Sigma}^{\star,\mathrm{pc}}_{0,h} = \mathrm{diag}((\wh{\sigma}^\star_{0,h})^2_{1},\dots, (\wh{\sigma}^\star_{0,h})_{r_h}^2),
  \end{equation}
where for $k\in [r_h]$, $(\wh{s}^\star_{0,h})_{k}$ and $(\wh{\sigma}^\star_{0,h})_{k}$ are obtained from replacing $({\bm D}_h)_{kk}$ with $(\wh{\bm D}_h)_{kk}$ in \eqref{eq:initializers}.
For low-dimensional modalities, we can consistently estimate the matrices $\bm L_\ell$ for $\ell \in [\wt m]$ by 
\begin{equation}
    \label{eq:sqrt_L}
    \wh{\bm L}_\ell=\Big(\frac{1}{N}\wt{\bm X}^\top_\ell\wt{\bm X}_\ell-\bm I_{{\wt r}_\ell}\Big)^{1/2}.
\end{equation}

\subsubsection{Iterative refinement via {\orchAMP}}
To incorporate information across multiple modalities,
we design an iterative refinement algorithm for $\{\upca_{0,h},\vpca_{0,h}: h\in [m]\}$, and $\{\tupca_{0,\ell}:\ell\in [\wt{m}]\}$, 
which generates a sequence of estimates 
$\{\bm U_{t,h},\bm V_{t,h}: h\in [m]\}$ and $\{\wt{\bm U}_{t,\ell}:\ell\in [\wt{m}]\}$ for $t\geq 0$, and information integration across all datasets occurs at each iteration.

In the ideal situation, the optimal updates in each iteration require knowledge of the priors $\mu$ and $\{\nu_h: h\in [m]\}$. 
In practice, this knowledge is unavailable and we adopt the empirical Bayes approach in \cite{eb_pca} to obtain data-based estimators of these priors.

\paragraph{Estimating priors}
Let $r = \sum_{h=1}^m r_h$ and $\wt{r} = \sum_{\ell=1}^{\wt{m}}\wt{r}_\ell$.
The approximation in \eqref{eq:emp_sin_vec_asymp} suggests that
the corresponding rows in the matrices $\{\bm U_{0,h}:h\in [m]\}$ and
$\{\ldm_{\ell}:\ell\in [\wt{m}]\}$
can be viewed as i.i.d.~samples from the joint distribution of 
\[
(Y_{0,1},\ldots,Y_{0,m},\wt Y_{0,1},\ldots,\wt Y_{0,\wt m}) \in \R^{r + \wt{r}}
\]
where conditional on $(U_1,\ldots,U_m,\tilde{U}_1,\ldots,\tilde{U}_{\wt m})$,
\begin{equation}
\label{eq:compund_decision_singular_vector}    
\begin{aligned}
    Y_{0,h} \sim N_{r_h}(\hslmorp_{0,h}U_h,\hslvorp_{0,h}),\, h\in [m], \quad 
    \wt{Y}_{0,\ell} \sim N_{\wt{r}_\ell}(\bm L_{\ell}\wt{U}_\ell,\bm I_{{\wt r}_\ell}),\, \ell\in [\wt{m}].
\end{aligned}
\end{equation}
Hence, the (approximate) likelihood of $\mu$ given $\{\upca_{0,h}:h\in [m]\}$ and
$\{\ldm_{\ell}:\ell\in [\wt{m}]\}$
is 
\begin{equation}
    \label{eq:lik-U}
\begin{aligned}
& L(\mu; \{ \hslmorp_{0,h}, \hslvorp_{0,h}:h\in [m] \}, \{{\bm L}_{\ell}:\ell\in [\wt{m}]\} ) \\
& ~~~\propto~~ \prod_{i=1}^{N}\Bigg[ \int \prod_{h=1}^{m}\exp\left(-\frac{1}{2}[(\upca_{0,h})_{i*}-\hslmorp_{0,h}(\bm U_h)_{i*}]^\top(\hslvorp_{0,h})^{-1}[(\upca_{0,h})_{i*}-\hslmorp_{0,h}(\bm U_h)_{i*}]\right)\\
& \hskip 6.4em \times\prod_{\ell=1}^{\wt m}\exp\left(-\frac{1}{2}[(\ldm_\ell)_{i*}-\bm L_{\ell}(\wt{\bm U}_\ell)_{i*}]^\top[(\ldm_\ell)_{i*}-\bm L_{\ell}(\wt{\bm U}_\ell)_{i*}]\right)\\
& \hskip 6.4em ~~~~~\times d\mu((\bm U_1)_{i*},\ldots,(\bm U_m)_{i*},(\wt{\bm U}_1)_{i*},\ldots,(\wt{\bm U}_{\wt m})_{i*})\Bigg].
\end{aligned}
\end{equation}
When the prior $\mu$ belongs to \revsn{some class of prior distributions $\mathcal P$ on $\mathbb{R}^{r+\widetilde{r}}$}, we define its approximate MLE as 
\begin{equation}
\label{eq:emp_bayes_1}
    \wh{\mu} = \argmax_{\mu \in \mathcal P}
    L(\mu; \{ \hslmp_{0,h}, \hslvp_{0,h}:h\in [m] \}, \{\wh{\bm L}_{\ell}:\ell\in [\wt{m}]\} ).
    \end{equation}
Note that in \eqref{eq:emp_bayes_1}, we have replaced $\{ \hslmorp_{0,h}, \hslvorp_{0,h}:h\in [m] \}$ and $\{{\bm L}_{\ell}:\ell\in [\wt{m}]\}$ in the definition of $L$ with their sample counterparts in \eqref{eq:scale-est} and \eqref{eq:sqrt_L}, and hence $\wh{\mu}$ is fully data-driven.

Similarly, the rows of $\bm V_{0,h}$ can be regarded as iid~samples from the model
\begin{equation}
\label{eq:compound_decision_model_2}
    Y^R_{0,h}\,\big|\,V_h \sim N_{r_h}(\hsrmorp_{0,h}V_h,\hsrvorp_{0,h}) \quad \mbox{with} \quad V_h \sim \nu_h \quad \mbox{for $h\in[m]$.} 
\end{equation}
As the $V_h$'s are mutually independent,
for each $h\in [m]$,
the approximate MLE of $\nu_h$ is given by 
\begin{align}
\label{eq:emp_bayes_2}
\widehat{\nu}_h &= \argmax_{\nu_h \in \mathcal P_{\nu_h}}\prod_{i=1}^{p_h}\Bigg[\int \exp\left(-\frac{1}{2}[(\vpca_{0,h})_{i*}-\hsrmp_{0,h}(\bm V_{h})_{i*}]^\top (\hsrvp_{0,h})^{-1}[(\vpca_{0,h})_{i*}-\right.\nonumber\\
& \hskip 20em\left. \hsrmp_{0,h}(\bm V_{h})_{i*}]\right)
d\nu_h((\bm V_{h})_{i*})\Bigg],
\end{align}
\revsn{where $\mathcal P_{\nu_h}$ is a class of priors on $\mathbb{R}^{r_h}$. }
We defer
examples
of possible classes of priors $\mathcal P$ and $\mathcal P_{\nu_h}$ to Appendix J.1-J.2.

\paragraph{Updating estimators for the latent factors}
To initialize the iterative updates, we define $\bm V_{0,h}=\vpca_{0,h}$ and $\bu_{-1,h}=\upca_{0,h}(\hsrv_{0,h})^{1/2}$ for $h \in [m]$ where $\hsrv_{0,h}=\hsrvp_{0,h}$ for $\hsrvp_{0,h}$ defined in \eqref{eq:scale-est}. Additionally, we set $\hsrm_{0,h}=\hsrmp_{0,h}$ for all $h \in [m]$.
The proposed {\orchAMP} updates $\{\bu_{t,h},\bv_{t,h}:h\in [m]\}$ for estimating 
$\{{\bm U}_h,{\bm V}_h:h\in [m]\}$
are defined recursively for all integer $t\geq 0$ as:
\begin{equation}
\label{eq:orc_amp_emp_bayes}    
\begin{aligned}
    \bv_{t,h} &= v_{t,h}(\bm V_{t,h};\wh\nu_h), \\
    {\bm U}_{t,h} &= \wb{\hdm}_h\bv_{t,h}-\gamma_h\bu_{t-1,h}(\bm J^R_{t,h}\big(\bm V_{t,h};\wh\nu_h)\big)^\top,\\
    \bu_{t,h} &= u_{t,h}(\bm U_{t,1},\ldots,\bm U_{t,m},\ldm_1,\ldots,\ldm_{\wt m};\wh\mu),\\
    \hskip 0.8em \bm V_{t+1,h} &= \wb{\hdm}^\top_h\bu_{t,h}-\bv_{t,h}(\bm J^L_{t,h}
    \big(\bm U_{t,1},\ldots,\bm U_{t,m},\ldm_1,\ldots,\ldm_{\wt m};\wh\mu)\big)^\top.
\end{aligned}
\end{equation}
% For $t \ge 0$ and $h \in [m]$, the sequences of 
\revzms{where the}
\emph{denoisers} 
$v_{t,h}:\mathbb{R}^{r_h} \rightarrow \mathbb{R}^{r_h}$ and $u_{t,h}:\mathbb R^{r+{\wt r}} \rightarrow \mathbb R^{r_h}$ 
% $\{v_{t,h}\}:\mathbb{R}^{r_h} \rightarrow \mathbb{R}^{r_h}$ and $\{u_{t,h}\}:\mathbb R^{r+{\wt r}} \rightarrow \mathbb R^{r_h}$ 
are 
% defined as follows:
\begin{equation}
\label{eq:opt_denoisers}
\begin{aligned}
   & v_{t,h}(x;\wh\nu_h) =\mathbb{E}_{\wh\nu_h}\left[V_h \,\Big|\,\hsrm_{t,h}V_h+(\hsrv_{t,h})^{1/2}Z_h=x \right],\\
   & u_{t,h}(x_1,\ldots,x_m,\wt x_1,\ldots,\wt x_{\wt m};\wh\mu)\\
   & ~~~~~~~~~~~~~=\mathbb{E}_{\wh\mu}\left[U_h \,\Big|\,\wh Y_{t,1}=x_1,\ldots,\wh Y_{t,m}=x_m,\wt Y_{1}=\wt x_1,\ldots,\wt Y_{\wt m}=\wt x_{\wt m} \right]. 
\end{aligned}
\end{equation}
\revzms{In \eqref{eq:opt_denoisers}, }
% Here, 
$Z_h\sim N_{r_h}(0, {\bm I}_{r_h})$, $V_h\sim \wh{\nu}_h$ and $V_h$, $Z_h$, and $(\hsrm_{t,h},\hsrv_{t,h})$ are mutually independent.
In addition, 
\[
\wh Y_{t,h}\,\big|\, U_h \sim N_{r_h}(\hslm_{t,h}U_h,\hslv_{t,h}), 
\;\wt Y_{\ell}\,\big|\, \wt{U}_\ell \sim N_{\wt r_\ell}(\hatl_{\ell}\wt U_\ell,\bm I_{\wt r_\ell}),\; (U_1,\ldots,U_m,\tilde{U}_1,\ldots,\tilde{U}_{\wt m}) \sim \wh\mu\]
 and are independent of $\{\hslm_{t,h}, \hslv_{t,h}:h\in [m]\}$ and $\{\hatl_{\ell}:\ell\in [\wt{m}]\}$. Furthermore, conditional on $\{\hslm_{t,h}, \hslv_{t,h}:h\in [m]\}$, $\{\hatl_{\ell}:\ell\in [\wt{m}]\}$, $(U_1,\ldots,U_m,\tilde{U}_1,\ldots,\tilde{U}_{\wt m})$, elements in the collection $\{\wh{Y}_{t,1},\dots, \wh{Y}_{t,m}, \wt{Y}_1,\dots, \wt{Y}_{\wt{m}}\}$ are mutually independent.
The matrices $\{\hslm_{t,h}, \hslv_{t,h}, \hsrm_{t,h}, \hsrv_{t,h}:t\ge 0\}$ are \emph{state evolution parameters} and are defined by the following state evolution recursions:
\begin{equation}
\label{eq:state_evol_emp_bayes}    
\begin{aligned}
    \hslv_{t,h} = \frac{1}{N}(\bv_{t,h})^\top\bv_{t,h},\quad & \hsrv_{t+1,h} =\frac{1}{N}(\bu_{t,h})^\top\bu_{t,h}, \\
    \hslm_{t,h} = \hslv_{t,h}\wh{\bm D}_h,\quad &
    \hsrm_{t+1,h} = \hsrv_{t+1,h}\wh{\bm D}_h.
\end{aligned}
\end{equation}
With \eqref{eq:opt_denoisers}, the matrices $v_{t,h}(\bm V_{t,h}; \wh \nu_h) \in \mathbb R^{p_h \times r_h}$ and $u_{t,h}(\bm U_{t,1},\ldots,\bm U_{t,m},\ldm_1,\ldots,\ldm_{\wt m}; \wh \mu) \in \mathbb R^{N \times r_h}$ are defined by applying the denoisers in a rowwise fashion to the matrix arguments:

\begin{equation}
    \label{eq:rowwise}
\begin{aligned}
  &v_{t,h}(\bm V_{t,h}; \wh \nu_h)_{k*}=v_{t,h}((\bm V_{t,h})_{k*}; \wh \nu_h),\\  
    &u_{t,h}(\bm U_{t,1},\ldots,\bm U_{t,m},\ldm_1,\ldots,\ldm_{\wt m}; \wh \mu)_{k*}\\
    &~~~~~~~~~~~~~~~~~=u_{t,h}((\bm U_{t,1})_{k*},\ldots,(\bm U_{t,m})_{k*},(\ldm_1)_{k*},\ldots,(\ldm_{\wt m})_{k*}; \wh \mu).
\end{aligned}
\end{equation}
Furthermore, the matrix $\jacr_{t,h}(\bm V_{t,h};\wh\nu_h) \in \R^{r_h \times r_h}$ is defined entrywise as follows:  
\[
\left[\jacr_{t,h}(\bm V_{t,h};\wh\nu_h)\right]_{ij}= \frac{1}{p_h}\sum_{k=1}^{p_h}\frac{\partial v_{t,h,i}}{\partial x_j}\left((\bm V_{t,h})_{k*};\wh\nu_h\right), 
\quad (i,j) \in [r_h] \times [r_h],
\]
where $v_{t,h}(\cdot)=(v_{t,h,1}(\cdot),\ldots,v_{t,h,r_h}(\cdot))^{\top}$ and $\partial v_{t,h,i}/\partial x_j$ is the partial derivative of $v_{t,h,i}$ with respect to its $j$-th argument. 
Next, for $h \in [m]$, let 
\begin{equation}
\label{eq:mathcali}
\mathcal I_h= \left\{\sum_{k=1}^{h-1}r_{k}+1,\ldots,\sum_{k=1}^{h}r_{k} \right\}
\end{equation}
% \revzm{collect all indices in the $h$-th high-dimensional modality,}
and for $(x_1,\ldots,x_m,\wt x_1,\ldots,\wt x_{\wt m}) \in \R^{r+\wt r}$, 
let $\bm J_{t,h}(x_1,\ldots,x_m,\wt x_1,\ldots,\wt x_{\wt m};\wh \mu)$
be the Jacobian matrix of the function 
$u_{t,h}(x_1,\ldots,x_m,\wt x_1,\ldots,\wt x_{\wt m};\wh \mu)$ and 
\[\left[\jac_{t,h}(x_1,\ldots,x_m,\wt x_1,\ldots,\wt x_{\wt m};\wh \mu)\right]_{*\mathcal I_h} \in \mathbb{R}^{(r+\wt{r})\times r_h}\] be the submatrix of $\bm J_{t,h}$ that contains only the columns in $\mathcal I_h$. 
Then, in the last line of \eqref{eq:orc_amp_emp_bayes}, the quantity
\begin{equation*}
\begin{aligned}
&\jacl_{t,h}(\bm U_{t,1},\ldots,\bm U_{t,m},\ldm_1,\ldots,\ldm_{\wt m};\wh\mu)\\
&~~~~~~~~~= \frac{1}{N}\sum_{k=1}^{N}\left[\jac_{t,h}((\bm U_{t,1})_{k*},\ldots,(\bm U_{t,m})_{k*},(\ldm_1)_{k*},\ldots,(\ldm_{\wt m})_{k*};\wh\mu)\right]_{*\mathcal I_h}.
\end{aligned}    
\end{equation*}

In summary, the {\orchAMP} updates  in \eqref{eq:orc_amp_emp_bayes} can be roughly decomposed into three parts:
\begin{enumerate}
\item Computing $\wb{\hdm}_h\bv_{t,h}$ and $\wb{\hdm}_h^\top\bu_{t,h}$ is \revsn{a} power iteration for improving estimation accuracy of left and right singular vectors, respectively.
    
\item The subtraction of $\bv_{t,h}(\bm J^L_{t,h})^\top$ from $\hdm^\top_h\bu_{t,h}$ and of $\gamma_h\bu_{t-1,h}(\bm J^R_{t,h}(\bm V_{t,h};\wh\nu_h))^\top$ from $\hdm_h\bv_{t,h}$ removes the dependence on past iterates from the current, which enables tracking of the asymptotic distributions of $\{\bu_{t,h},\bv_{t,h}:h\in [m] \}$.
    
\item The matrices $\bm U_{t,h}$ and $\bm V_{t,h}$ are intermediate estimates of $\bm U_h$ and $\bm V_h$ in the $t$-th iteration. 
They are further improved by the denoisers $u_{t,h}$ and $v_{t,h}$, respectively, to construct the final estimates $\bu_{t,h}$ and $\bv_{t,h}$ in the $t$-th iteration.
\end{enumerate}

\paragraph{Updating estimators for $\{\wt{\bm U}_\ell: \ell\in [\wt{m}]\}$}
For each fixed $\ell\in [\wt{m}]$, define
the denoiser at the $t$-th iteration
$\wt u_{t,\ell}(x_1,\ldots,x_m,\wt x_1,\ldots,\wt x_{\wt m};\wh \mu):\mathbb R^{r+\wt r} \rightarrow \R^{{\wt r}_{\ell}}$ 
as
\begin{equation}
\label{eq:theta_t_k}    
\begin{aligned}
   &\wt u_{t,\ell}(x_1,\ldots,x_m,\wt x_1,\ldots,\wt x_{\wt m};\wh\mu)\\
   &=\mathbb E_{\wh\mu}\left[\wt{U}_\ell\,\Big|\,\wh Y_{t,1}=x_1,\ldots,\wh Y_{t,m}=x_m,\wt Y_{1}=\wt x_1,\ldots,\wt Y_{\wt m}=\wt x_{\wt m} \right],
\end{aligned}
\end{equation}
where the joint distribution of $\{\wh{Y}_{t,1},\dots, \wh{Y}_{t,m}, \wt{Y}_1,\dots, \wt{Y}_{\wt{m}}\}$ is the same as that in \eqref{eq:opt_denoisers}.

For $\ell \in [\wt m]$ and $t \ge 0$, we propose the following denoised estimates for the subject effects of the low-dimensional modalities:
\begin{equation}
	\label{eq:low-dim-iter}
\wt{\bm U}_{t,\ell}=\wt u_{t,\ell}(\bm U_{t,1},\ldots,\bm U_{t,m},\wt{\bm X}_1,\ldots,\wt {\bm X}_{\wt m};\wh\mu).	
\end{equation}
The matrices, $\wt u_{t,\ell}(\bm U_{t,1},\ldots,\bm U_{t,m},\wt{\bm X}_1,\ldots,\wt {\bm X}_{\wt m};\wh\mu) \in \R^{N \times \wt r_{\ell}}$ satisfies 
\begin{equation}
\begin{aligned}
 \label{eq:low_dim_rowwise}
     &\wt u_{t,\ell}(\bm U_{t,1},\ldots,\bm U_{t,m},\wt{\bm X}_1,\ldots,\wt {\bm X}_{\wt m};\wh\mu)_{k*}\\
     &~~~~~~~~~~~~~~=\wt u_{t,\ell}((\bm U_{t,1})_{k*},\ldots,(\bm U_{t,m})_{k*},(\ldm_1)_{k*},\ldots,(\ldm_{\wt m})_{k*};\wh\mu), \quad k \in [N].
 \end{aligned}    
\end{equation}

\subsubsection{\revzms{Summary}}
\revzms{All details of the integrative signal recovery procedure are summarized in}
% The foregoing procedure is summarized as 
Algorithm \ref{alg:orchamp}.

\begin{algorithm}[tb]
\caption{Orchestrated Approximate Message Passing}\label{alg:orchamp}
\begin{algorithmic}[1]
\Require
\begin{enumerate}
    \item Data matrices $\wb{\bm X}_h \in \R^{N \times p_h}$ for $h \in [m]$ and $\wt{\bm X}_\ell \in \R^{N \times {\wt r}_\ell}$ for $\ell \in [\wt m]$;
    \item Signal ranks of high-dimensional modalities $\{r_1,\ldots,r_m\}$;
    \item Maximum number of iterates $T$.
\end{enumerate}
\State {\bf Initialization:}
\begin{enumerate} 
\item Perform SVD on each $\wb{\bm X}_h$ to get its best rank $r_h$ approximation $\frac{1}{N}\upca_{0,h}\bm D_{0,h} (\vpca_{0,h})^\top$ for $h\in [m]$.
\item Compute $\wh{\bm D}_h$,  $\hslmp_{0,h},\hslvp_{0,h},\hsrmp_{0,h}, \hsrvp_{0,h}$ for $h\in [m]$ as in \eqref{eq:est_sing_val}-\eqref{eq:scale-est} and $\hatl_\ell$ for $\ell \in [\wt m]$ as in \eqref{eq:sqrt_L}.
\item Compute empirical Bayes priors $\wh{\mu}$ and $\wh{\nu}_h$ for $h \in [m]$ as in \eqref{eq:emp_bayes_1} and \eqref{eq:emp_bayes_2}.
\item Set $\hsrm_{0,h} \leftarrow \hsrmp_{0,h}$, $\hsrv_{0,h} \leftarrow \hsrvp_{0,h}$, $\bm V_{0,h}=\vpca_{0,h}$, and $\bu_{-1,h}=\upca_{0,h}(\hsrv_{0,h})^{1/2}$ for $h \in [m]$.
\end{enumerate}
\For{$t=0,\ldots,T$} 
\Comment{Iterative refinement}
\For{$h=1,\ldots,m$}
\State $\bv_{t,h} \leftarrow v_{t,h}(\bm V_{t,h};\wh{\nu}_h)$;
\Comment{Update the estimate of $\bm{V}_h$}
\State $\bm U_{t,h} \leftarrow \wb{\bm X}_h\bv_{t,h}-\gamma_h\bu_{t-1,h}(\bm J^R_{t,h}(\bm V_{t,h};\wh\nu_h))^\top$; 
\State $\hslv_{t,h} \leftarrow \frac{1}{N}(\bv_{t,h})^\top\bv_{t,h}$ \quad \mbox{and} \quad $\hslm_{t,h} \leftarrow \hslv_{t,h} \wh{\bm D}_h$;
\State $\bu_{t,h} \leftarrow u_{t,h}(\bm U_{t,1},\ldots,\bm U_{t,m},\wt{\bm X}_1,\ldots,\wt{\bm X}_{\wt m};\wh{\mu})$;
\Comment{Update the estimate of $\bm{U}_h$ via integration}
\State $\bm V_{t+1,h} \leftarrow \wb{\bm X}^\top_h\bu_{t,h}-\bv_{t,h}(\bm J^L_{t,h}(\bm U_{t,1},\ldots,\bm U_{t,m},\ldm_1,\ldots,\ldm_{\wt m};\wh\mu))^\top$; 
\State $\hsrv_{t+1,h} \leftarrow \frac{1}{N}(\bu_{t,h})^\top\bu_{t,h}$ \quad \mbox{and} \quad $\hsrm_{t+1,h} \leftarrow \hsrv_{t+1,h}\wh{\bm D}_h$.
\EndFor
\For{$\ell=1,\ldots,\wt m$}
\State $\wt{\bm U}_{t,\ell} \leftarrow \wt{u}_{t,\ell}(\bm U_{t,1},\ldots,\bm U_{t,m},\wt{\bm X}_1,\ldots,\wt{\bm X}_{\wt m};\wh{\mu})$.
\Comment{Update the estimate of $\wt{\bm U}_\ell$ via integration}
\EndFor
\EndFor
\Output $\{\bu_{T,1},\ldots,\bu_{T,m},
\wt{\bm U}_{T,1},\ldots,\wt{\bm U}_{T,\wt m}\}$ and 
$\{\bv_{T,1},\ldots,\bv_{T,m}\}$.
\end{algorithmic}
\end{algorithm}

\begin{rem}
When different modalities have different signal ranks, 
there exists modality-specific information and such information must not be destroyed in the process of integration. 
Since our algorithm can handle different signal ranks for different modalities, it helps incorporating modality-specific information in the construction of the refined estimators $\bu_{t,h}$, $\bv_{t,h}$, and $\wt{\bm U}_{t,\ell}$.
\end{rem}

\subsection{Latent representation prediction set of query data}
\label{pred_0}

\subsubsection{Problem formulation}
We now consider predicting the latent representation of a query subject with partially observed data about it:
for a query subject, we observe $d \in [m]$ high-dimensional modalities and $\wt{d} \in [\wt m]$ low-dimensional modalities, and for each observed modality, potentially only a subset of modality-specific features are observed. 
Our goal is to construct a prediction set for the realized value of the query's latent factor representation $({U}_1^{\mathrm{Q}},\dots,{U}_m^{\mathrm{Q}}, \wt{U}_1^{\mathrm{Q}},\dots, \wt{U}_{\wt{m}}^{\mathrm{Q}})$ by leveraging the availability of $\{\hdm_1,\dots,\hdm_m, \ldm_1,\dots,\ldm_{\wt m}\}$ that will be referred to as the reference data in the following discussion.

To be precise, we observe $\{\bq_{h_1},\ldots,\bq_{h_d},\tq_{\ell_1},\ldots,\tq_{\ell_{\wt d}}\}$ on a query subject, where $\bq_{h_1},\ldots,\bq_{h_d}$ are from high-dimensional modalities $\{h_1,\ldots,h_d\} \subseteq [m]$ and $\tq_{\ell_1},\ldots,\tq_{\ell_{\wt d}}$ low-dimensional modalities $\{\ell_1,\ldots,\ell_{\wt d}\} \subseteq [\wt m]$.
For the $k$-th observed high-dimensional modality,
$\bq_{h_k} \in \R^{|\calf_{h_k}|}$ can be written as
\begin{equation}
	\label{eq:query-high}
  \bq_{h_k} =  \frac{1}{\sqrt N}(\bm V_{h_k})_{\calf_{h_k} *}\bm D_{h_k}U^{\mathrm Q}_{h_k}  + W^{\mathrm Q}_{h_k}, \quad \mbox{for $W^{\mathrm Q}_{h_k} \sim N_{|\calf_{h_k}|}(0,\bm I_{|\calf_{h_k}|})$.}
\end{equation}
In \eqref{eq:query-high}, $\calf_{h_k} \subseteq [p_{h_k}]$ collects indices of the observed features and $U^{\mathrm Q}_{h_k} \in \R^{r_{h_k}}$ denotes the query's realized subject effect in the modality. 
We assume that
the matrices $\bm D_{h_k}$ and $\bm V_{h_k}$ are shared by the query data $\bq_{h_k}$ and the reference data $\bm X_{h_k}$, for all $k \in [d]$, 
and that 
the noise vectors $\{W^{\mathrm Q}_{h_k}:k \in [d]\}$ are mutually independent and are independent of the reference data. 
Furthermore, we assume that the number of observed features is comparable to the total number of features in reference data for any observed high-dimensional modality in the query: 
for all $k \in [d]$, ${|\calf_{h_k}|}/{p_{h_k}} \rightarrow \lambda_{h_k} \in (0,1]$, as $p_{h_k} \rightarrow \infty$.
For the $k$-th observed low-dimensional modality,
$\tq_{\ell_k} \in \R^{|\wt\calf_{\ell_k}|}$ can be written as
\begin{equation} 
	\label{eq:query-low}
\tq_{\ell_k}=(\bm L_{\ell_k})_{\caltf_{\ell_k}*}\wt{U}^{\mathrm Q}_{\ell_k}+\wt{W}^{\mathrm Q}_{\ell_k}, \quad \mbox{for $\wt{W}^{\mathrm Q}_{\ell_k} \sim N_{|\wt\calf_{\ell_k}|}(0,\bm I_{|\wt\calf_{\ell_k}|})$.}
\end{equation}
In \eqref{eq:query-low}, $\wt\calf_{\ell_k} \subseteq [\wt r_{\ell_k}]$ denotes indices of the observed features in this modality, 
${\wt U}^{\rmq}_{\ell_k} \in \R^{\wt r_{\ell_k}}$ represents the query's realized subject effect in the modality, and the noise vector $\wt{W}^{\mathrm Q}_{\ell_k}$ is independent of that in other modalities and of those in reference data.
Finally, we assume that 
$$
(U^{\mathrm Q}_{h_1},\dots,U^{\mathrm Q}_{h_d},\wt{U}^{\mathrm Q}_{\ell_1},\dots,\wt{U}^{\mathrm Q}_{\ell_{\wt d}}) \sim \mu_{h_1,\ldots,h_d;\ell_1,\ldots,\ell_{\wt d}},
$$ 
where $\mu_{h_1,\ldots,h_d;\ell_1,\ldots,\ell_{\wt d}}$ is the marginal distribution of $(U_{h_1},\ldots,U_{h_d},\wt U_{\ell_1},\ldots,\wt U_{\ell_{\wt d}})$ when $(U_1,\ldots,U_m,\wt U_1,\ldots,\wt U_{\wt m}) \sim \mu$ that is the data generating prior for the reference datasets. 
 
\subsubsection{Prediction set construction}
We are to construct a prediction set in the form of an $\ell_2$ ball.
We start with point prediction of the latent subject effect vector of the query that serves as the center of the prediction set.
For the $k$-th observed high-dimensional modality in the query, we first estimate $(\bm V_{h_k})_{\calf_{h_k}*}$ by $(\bv_{T,h_k})_{\calf_{h_k}*}$ where $\bv_{T,h_k}$ is the estimator of ${\bm V}_{h_k}$ generated by Algorithm \ref{alg:orchamp}.
The point prediction of the subject effect of the query in the $k$-th observed high-dimensional modality is then obtained by least squares as
\begin{align}
\label{eq:point_est_cell_eff} \hatuq_{h_k}:=(\rcheck^\top_{h_k}\rcheck_{h_k})^{-1}\rcheck^\top_{h_k}\wb\bq_{h_k},
\end{align}
where $\wb\bq_{h_k}={1 \over \sqrt{N}}Q_{h_k}$ and for $\hd_{h_k}$ defined in \eqref{eq:est_sing_val},
$\rcheck_{h_k}:=\frac{1}{N}(\bv_{T,h_k})_{\calf_{h_k}*}\hd_{h_k} \in \R^{|\calf_{h_k}| \times r_{h_k}}$.

We shall later show that if $N$, the number of cells in the reference dataset, tends to infinity, then
\begin{equation}
\label{eq:hat_uq_def}
    \hatuq_{h_k} = U^{\mathrm Q}_{h_k} +\lambda^{-1/2}_{h_k}\revsns{\left(\hd^{-1}_{h_k}(\hslv_{T,h_k})^{-1}\hd^{-1}_{h_k}\right)^{1/2}}Z_{h_k}+o_p(1),
\end{equation}
where $Z_{h_k} \sim N_{r_{h_k}}(0,\bm I_{r_{h_k}})$ and 
$\hslv_{T,h_k}=\frac{1}{N}(\bv_{T,h_k})^\top\bv_{T,h_k}$
.
To further integrate with other observed modalities, 
let 
$r^{\rmq} = \sum_{k=1}^{d}r_{h_k}$ and $\wt{r}^{\rmq} = \sum_{k=1}^{\wt d}|\wt {\calf}_{\ell_k}|$.
In the same spirit as \eqref{eq:opt_denoisers} and \eqref{eq:theta_t_k}, we define denoisers $u^{\rmq}_{h}:\mathbb R^{r^\rmq+\wt r^\rmq}\rightarrow \mathbb R^{r_{h}}$ and $\wt u^{\rmq}_{\ell}:\mathbb R^{r^\rmq+\wt r^\rmq}\rightarrow \mathbb R^{\wt r_{\ell}}$ for $h \in [m]$, $\ell \in [\wt m]$, and
any 
$x=(x_{h_1},\ldots,x_{h_d},\wt x_{\ell_1},\ldots,\wt x_{\ell_{\wt d}}) \in \R^{r^\rmq+\wt r^\rmq}$ as
\begin{align}
\label{eq:denoisers_query_hd}
    u^{\rmq}_{h}(x;\wh \mu) &= \mathbb{E}_{\wh \mu}\left[U_{h} \,\Big|\,Y^{\rmq}_{h_1}=x_{h_1},\ldots,Y^{\rmq}_{h_d}=x_{h_d},\wt Y^{\rmq}_{\ell_1}=\wt x_{\ell_1},\ldots,\wt Y^{\rmq}_{\ell_{\wt d}}=\wt x_{\ell_{\wt d}} \right],\\
\label{eq:denoisers_query_ld}
    \wt u^{\rmq}_{\ell}(x;\wh \mu) &= \mathbb{E}_{\wh \mu}\left[\wt U_{\ell} \,\Big|\,Y^{\rmq}_{h_1}=x_{h_1},\ldots,Y^{\rmq}_{h_d}=x_{h_d},\wt Y^{\rmq}_{\ell_1}=\wt x_{\ell_1},\ldots,\wt Y^{\rmq}_{\ell_{\wt d}}=\wt x_{\ell_{\wt d}} \right].
\end{align}
In \eqref{eq:denoisers_query_hd}--\eqref{eq:denoisers_query_ld}, the sequence of priors $\wh{\mu}$ weakly converges to $\mu$ as $N \rightarrow \infty$.
For $k \in [d]$ and $\wt k \in [\wt d]$, 
\begin{equation}
\label{eq:pred_y_q}
Y^{\rmq}_{h_k} \sim N_{r_{h_k}}(U^\rmq_{h_k},\,\lambda^{-1}_{h_k}\hd^{-1}_{h_k}(\hslv_{T,h_k})^{-1}\hd^{-1}_{h_k}), \quad  
\wt Y^{\rmq}_{\ell_{\wt k}} \sim N_{|\wt{\calf}_{\ell_{\wt k}}|}((\hatl_{\ell_{\wt k}})_{\wt{\calf}_{\ell_{\wt k}}*}\wt U^\rmq_{\ell_{\wt k}},\,\bm I_{|\wt{\calf}_{\ell_{\wt k}}|}),
\end{equation}
where $(U^\rmq_{h_1},\ldots,U^\rmq_{h_d},\wt U^\rmq_{\ell_1},\ldots,\wt U^\rmq_{\ell_{\wt d}}) \sim \wh \mu_{ h_1,\ldots,h_d;\ell_1,\ldots,\ell_{\wt d}}$
with $\wh \mu_{h_1,\ldots,h_d;\ell_1,\ldots,\ell_{\wt d}}$ being the marginal distribution of 
$(U^\rmq_{1},\ldots,U^\rmq_{m},\wt U^\rmq_{1},\ldots,\wt U^\rmq_{\wt m}) \sim \wh \mu$.
In addition, we assume that the random vectors $(U^\rmq_{h_1},\ldots,U^\rmq_{h_d},\wt U^\rmq_{\ell_1},\ldots,\wt U^\rmq_{\ell_{\wt d}})$ are independent of $\{\wh{\bm D}_{h_k},\wh{\bm \Sigma}^L_{T,h_k},\wh{\bm L}_{\ell_{\wt k}}:k \in [d],\; \mbox{and}\; \wt k \in [\wt d]\}$ and that the collection of variables 
\[
(Y^{\rmq}_{h_1},\ldots,Y^{\rmq}_{h_d},\wt Y^{\rmq}_{\ell_1},\ldots,\wt Y^{\rmq}_{\ell_{\wt d}})
\]
 are mutually independent given 
 \[
 (U^\rmq_{h_1},\ldots,U^\rmq_{h_d},\wt U^\rmq_{\ell_1},\ldots,\wt U^\rmq_{\ell_{\wt d}}),
 \]
  $\{\wh{\bm D}_{h_k},\wh{\bm \Sigma}^L_{T,h_k}:k \in [d]\}$, and $\{\wh{\bm L}_{\ell_{\wt k}}: \wt k \in [\wt d]\}$.
Finally, the point predictor for the subject effects of the query subject is given by:
\begin{equation}
	\label{eq:point-pred}
\mobs:=
   \big[(\hatum_{1})^\top ~ \ldots ~ (\hatum_{m})^\top ~ (\hatul_{1})^\top ~ \ldots ~ (\hatul_{\wt m})^\top
   \big]^\top
\end{equation}
where, for all $h \in [m]$ and $\ell \in [\wt m]$,
\[
\hatum_{h} = u^{\rmq}_{h}((\hatuq_{h_1},\ldots,\hatuq_{h_d},\tq_{\ell_1},\ldots,\tq_{\ell_{\wt d}});\wh \mu)\]
and
\[\hatul_{\ell} = \wt u^{\rmq}_{\ell}((\hatuq_{h_1},\ldots,\hatuq_{h_d},\tq_{\ell_1},\ldots,\tq_{\ell_{\wt d}});\wh \mu).\]

Our $100\times(1-\alpha)\%$ prediction set for the query's latent representation vector is a ball centered at $\mobs$. 
To determine its radius, let $Y^{\rmq}=(Y^{\rmq}_{h_1},\ldots,Y^{\rmq}_{h_d}) \in \R^{r^\rmq}$ and $\wt Y^{\rmq}=(Y^{\rmq}_{\ell_1},\ldots,\wt Y^{\rmq}_{\ell_{\wt d}}) \in \R^{\wt{r}^\rmq}$ with components defined as in \eqref{eq:pred_y_q}.
Now for
\begin{align}
\label{eq:observed_predictors}
y^{\rmq} = (\hatuq_{h_1},\ldots,\hatuq_{h_d} )\in \R^{r^\rmq} \quad \mbox{and} \quad \wt y^{\rmq} = (\tq_{\ell_1},\ldots,\tq_{\ell_{\wt d}})\in \R^{\wt{r}^\rmq},
\end{align}
define the prediction ball radius $\wh b_{\alpha}>0$ for $\alpha \in (0,1)$ as 
\begin{align}
\label{eq:rad_pred_set}
    \wh b_{\alpha}&=\inf\bigg\{b>0: \mathbb P_{\wh \mu}\left[(U^\rmq_1,\ldots,U^\rmq_m,\wt{U}^\rmq_1,\ldots,\wt{U}^\rmq_{\wt m}) \in B_{r+\wt r}(\mobs,b)\,\Big|\,Y^{\rmq}=y^{\rmq},\wt Y^{\rmq}=\wt y^{\rmq}\right]\nonumber\\
    &\hskip 30em \ge 1-\alpha\bigg\},
\end{align}
where $(U^\rmq_{1},\ldots,U^\rmq_{m},\wt U^\rmq_{1},\ldots,\wt U^\rmq_{\wt m}) \sim \wh \mu$. 
Thus, the $100\times(1-\alpha)\%$ prediction set is given by
\begin{align}
\label{eq:full_pred_set}
  \calc_{\alpha} & =B_{r+\wt r}(\mobs,\radobs_{\alpha}).
\end{align}
The construction of the preceding prediction set is summarized in Algorithm \ref{alg:pred}.

\begin{rem}
    The radius $\wh{b}_\alpha$ can approximated by simulating a large number of samples from the conditional distribution described in \eqref{eq:rad_pred_set} and using the $(1-\alpha)$-th sample quantile of the Euclidean distance between the sampled points and $\wh U$.
\end{rem}

\begin{algorithm}[tb]
\caption{Prediction Set of the subject effects for the Query Subjects}\label{alg:pred}
\begin{algorithmic}[1]
\Require 
\begin{enumerate}
	\item A query data entry with observed high-dimensional modalities $Q_{h_k} \in \R^{|\calf_{h_k}|}$ with observed features indexed by $\calf_{h_k}$  for $k \in [d]$ and observed low-dimensional modalities $\wt{Q}_{\ell_l} \in \R^{|\wt{\calf}_{\ell_l}|}$ with observed features indexed by $\wt{\calf}_{\ell_l}$ for $l\in [\wt d]$; 
	\item Matrices $\{\bv_{T,h_1},\ldots,\bv_{T,h_d}\}$, $\{\wh{\bm D}_{h}:h\in [m]\}$,  and $\{\wh{\bm L}_{\ell}:\ell \in [\wt m]\}$ estimated from a reference dataset with $N$ subjects according to Algorithm \ref{alg:orchamp}, \eqref{eq:est_sing_val}, and \eqref{eq:sqrt_L}, respectively;
	\item A sequence of priors $\wh \mu$ converging to $\mu$ weakly;
	\item Confidence level $100\times (1-\alpha)\%$.
\end{enumerate}
\State 
Compute $\rcheck_{h_k}:=\frac{1}{N}(\bv_{T,h_k})_{\calf_{h_k}*}\hd_{h_k}$ for $k \in [d]$.
\State 
Compute $\hatuq_{h_k}:=(\rcheck^\top_{h_k}\rcheck_{h_k})^{-1}\rcheck^\top_{h_k}\wb\bq_{h_k}$ for $k \in [d]$, where $\wb\bq_{h_k}=\bq_{h_k}/\sqrt{N}$.
\For{$h=1,\ldots,m$}
\State {$\hatum_{h} \leftarrow u^{\rmq}_{h}((\hatuq_{h_1},\ldots,\hatuq_{h_d},\tq_{\ell_1},\ldots,\tq_{\ell_{\wt d}});\wh \mu)$.}
\EndFor
\For{$\ell=1,\ldots,\wt m$}
\State {$\hatul_{\ell} \leftarrow \wt u^{\rmq}_{\ell}((\hatuq_{h_1},\ldots,\hatuq_{h_d},\tq_{\ell_1},\ldots,\tq_{\ell_{\wt d}});\wh \mu)$.}
\EndFor
\State Construct $\wh U$ as in \eqref{eq:point-pred} and $\wh b_\alpha$ as in \eqref{eq:rad_pred_set};
\Output The prediction set $\calc_{\alpha} = B_{r+\wt{r}}({\wh U}, {\wh b_\alpha})$.
\end{algorithmic}
\end{algorithm}

\section{Numerical Experiments}
\revzm{In this section, we present numerical experiments on synthetic data to benchmark the performance of Algorithm \ref{alg:orchamp} in signal recovery and the finite-sample coverage of the prediction set in Algorithm \ref{alg:pred}.}

\subsection{\revzm{Integrative signal recovery}}
\label{sec:eff_data_integration}

\revzm{We considered a simulation setting consisting of two high-dimensional modalities $\bm X_i\in \mathbb{R}^{N\times p_i}$, $i=1,2$, and one low-dimensional modality $\wt{\bm X}_1\in \R^{N\times \wt r}$, generated according to the models \eqref{eq:def_matrix_prop_asymp} and \eqref{eq:def_matrix_low_dim_asymp}. 
We had let $r_1 = 2$,  $r_2 = 3$,
% \nb{was 4!}, 
$\wt r = 2$, $p_1 = \lfloor\gamma_1 N\rfloor$ and $p_2 = \lfloor\gamma_2 N\rfloor$ for $N= 10000$. 
We fixed $\gamma_1 = \gamma_2 = 0.25$ for this subsection. 
%See the supplement \cite{nandy2024multimodal_supp} for additional simulation results with large values of $\gamma_1$ and $\gamma_2$.
% while considering two different $\gamma_2$ values: $0.25$ and $0.05$. \nb{revise after sagnik's new simulation.}
% The former was a clear ``large $N$, large $p$'' case, while the latter was closer to ``large $N$, moderate $p$''. 
Our primary interest here was to benchmark whether Algorithm \ref{alg:orchamp} leverages the inter-modality dependence of signal components to achieve better signal recovery.} % for individual modalities.}

% \revsn{
% One of the primary benefits of multimodal data integration is the ability to boost signal strength in modalities with weak signals by borrowing strength across other modalities. This results in improved estimation of the latent factors corresponding to those weaker modalities. To empirically investigate this effect, we consider two high-dimensional data matrices $\bm X_1 \in \R^{N \times p_1}$ and $\bm X_2 \in \R^{N \times p_2}$, along with one low-dimensional matrix $\wt{\bm X}_1 \in \R^{N \times \wt r}$, generated according to the models \eqref{eq:def_matrix_prop_asymp} and \eqref{eq:def_matrix_low_dim_asymp}. We fix the latent dimensions as
% \[
% r_1 = 2, \quad r_2 = 4, \quad \wt r = 2,
% \]
% and set the ambient dimensions as $p_1 = \lfloor\gamma_1 N\rfloor$, $p_2 = \lfloor\gamma_2 N\rfloor$ with $\gamma_1 = 0.25$, and consider two scenarios for the second aspect ratio: $\gamma_2 \in \{0.25, 0.05\}$. The case $\gamma_2 = 0.25$ corresponds to the standard proportional asymptotic regime, while $\gamma_2 = 0.05$ approximates a near-vanishing aspect ratio regime.}

\revzm{To this end, we let the rows of $\bm U_1 \in \R^{N\times 2}$ be sampled as
% \[
$(\bm U_1)_{i*} \overset{\text{i.i.d.}}{\sim} \sqrt{\tfrac{2}{3}}(T_{1i} + T_{2i})$,
% \]
% \[
% (\bm U_1)_{i*} \overset{\text{i.i.d.}}{\sim} \frac{T_i - \mathbb{E}(T_i)}{\sqrt{\mathrm{Var}(T_i)}}, \quad T_i = T_{1i} + T_{2i},
% \]
where $T_{1i} \overset{\text{i.i.d.}}{\sim} \mathrm{Unif}(\mathbb{S}^1)$ and $T_{2i} \overset{\text{i.i.d.}}{\sim} \mathrm{Unif}(\{-1, 1\}^2)$ were mutually independent.}
% \revsn{For each setting, the rows of the latent matrix $\bm U_1$ are sampled as
% \[
% (\bm U_1)_{i*} \overset{\text{i.i.d.}}{\sim} \sqrt{\frac{2}{3}}(T_{1i} + T_{2i}),
% \]
% % \[
% % (\bm U_1)_{i*} \overset{\text{i.i.d.}}{\sim} \frac{T_i - \mathbb{E}(T_i)}{\sqrt{\mathrm{Var}(T_i)}}, \quad T_i = T_{1i} + T_{2i},
% % \]
% where $T_{1i} \overset{\text{i.i.d.}}{\sim} \mathrm{Unif}(\mathbb{S}^1)$ and $T_{2i} \overset{\text{i.i.d.}}{\sim} \mathrm{Unif}(\{-1, 1\}^2)$ are mutually independent.}
\revzm{The first column of $\bm U_2$ was generated independently by
$(\bm U_2)_{i,1} \overset{\text{i.i.d.}}{\sim} \mathrm{Laplace}\left(0, \tfrac{1}{\sqrt{2}}\right)$.}
\revsn{The \revzm{next} two columns of $\bm U_2$ were then generated via nonlinear transformations of the columns of $\bm U_1$. 
Each entry in the \revzm{second} column of ${\bm U}_2$ was generated as
\[
(\bm U_2)_{i2} = 
\rho \cdot 
\frac{\max(0, (\bm U_1)_{i2}) - \mathbb{E}[\max(0, (\bm U_1)_{i2})]}{\sqrt{\operatorname{Var}[\max(0, (\bm U_1)_{i2})]}}
% R_i 
+ \sqrt{1 - \rho^2} \cdot B_i,  
\]
\revzm{where $B_i  \overset{\text{i.i.d.}}{\sim}\mathrm{Rademacher}(\pm 1)$.}
Each entry in the third column of ${\bm U}_2$ was generated as
\[
\begin{aligned}
% Z_i &= (\bm U_1)_{i2}, \quad \varepsilon_i \sim \mathcal{N}(0,1),\\
% Z_i &= (\bm U_1)_{i2} + 0.3 \cdot \varepsilon_i, \quad \varepsilon_i \sim \mathcal{N}(0,1), \\
% R_i &= \frac{\max(0, (\bm U_1)_{i2}) - \mathbb{E}[\max(0, (\bm U_1)_{i2})]}{\sqrt{\operatorname{Var}[\max(0, (\bm U_1)_{i2})]}}, \\
(\bm U_2)_{i3} &= 
\rho \cdot \left\{ \frac{\mathsf{h}((\bm U_1)_{i1}) - \mathbb{E}[\mathsf{h}((\bm U_1)_{i1})]}{\sqrt{\mathrm{Var}[\mathsf{h}((\bm U_1)_{i1})]}} \right\} + \sqrt{1 - \rho^2} \cdot Z_i,
% \rho \cdot 
% \frac{\max(0, (\bm U_1)_{i2}) - \mathbb{E}[\max(0, (\bm U_1)_{i2})]}{\sqrt{\operatorname{Var}[\max(0, (\bm U_1)_{i2})]}}
% % R_i 
% + \sqrt{1 - \rho^2} \cdot B_i,  
\end{aligned}
\]
% is independent of $\bm U_1$ and the $Z_i$'s.}
where $Z_i \overset{\text{i.i.d.}}{\sim} {N}(0,1)$ and $\mathsf{h}(x):= \mathrm{sgn}(x)\min\{|x|,0.25\}
% \min\{\max\{x,-0.25\},0.25\}
$.
\revzm{Moreover, $B_i$'s, $Z_i$'s and $\bm U_1$ were mutually independent.} 
}
% \revsn{The remaining columns of $\bm U_2$ are sampled independently as:
% \[
% (\bm U_2)_{i,3:4} \overset{\text{i.i.d.}}{\sim} \mathrm{Laplace}\left(0, \frac{1}{\sqrt{2}}\right).
% \]}
\revsn{The signal $\wt{\bm U}_1$ \revzm{in $\wt{\bm X}_1$} was constructed \revzm{as}
% \revzm{as a noisy linear function of}
% via a noisy linear embedding of 
% $\bm U_1$:
\[
(\wt{\bm U}_1)_{i*} = \rho \cdot (\bm U_1)_{i*} + \sqrt{1 - \rho^2} \cdot T_{3i}, \quad T_{3i} \overset{\text{i.i.d.}}{\sim} \mathrm{Unif}(\{-1,1\}^2),
\]
where $T_{3i}$'s were independent of $\bm U_1$, $Z_i$'s, and $B_i$'s.}
% and $\rho \in (0,1)$ controls the strength of cross-modal dependence.}
\revsn{The loading matrices $\bm V_1 \in \R^{p_1 \times r_1}$ and $\bm V_2 \in \R^{p_2 \times r_2}$ were independently generated row-wise as:
% \[
$(\bm V_h)_{i*} \overset{\text{i.i.d.}}{\sim} \sqrt{r_h} \cdot \mathrm{Unif}(\mathbb{S}^{r_h - 1})$, 
% \quad 
% \text{for } 
for
$h \in \{1,2\}$.}
% \]}
\revsn{The \revzm{diagonal} signal strength matrices in high-dimensional modalities were 
% chosen to be diagonal and with the diagonal entries 
defined \revzm{by}
% as:
\begin{align}
\label{eq:diag_entry}
(\bm D_1)_{ii} = \lambda+3-i, \quad i = 1,2; 
\qquad (\bm D_2)_{ii} = 2 \times \revzm{(4-i)},
% (5 - i) \lambda, 
\quad i = 1,\dots,\revzm{3}.
\end{align}
% \nb{$\bm D_2$ may change!}
\revzm{Finally, we set $\bm L_1 = 5 \bm I_2 + 2.5 \bm 1_2 \bm 1_2^\top$ in the low-dimensional modality.
In the foregoing definitions, the hyperparameter $\rho \in [0,1]$ governs the level of inter-modality dependence of the signals, where larger $\rho$ enforces stronger dependence.
The hyperparameter $\lambda > 0$ alters signal strength in the second high-dimensional modality.}}

% Observe that smaller values of $\gamma_2$ reduce the signal strength in modality~2, since
% \[
% \left\| \frac{1}{N} \bm U_2 \bm D_2 \bm V_2^\top \right\| = O\left( \sqrt{4\gamma_2 \lambda} \right), \quad \text{whereas} \quad \left\| \bm W_2 \right\| = O(1).
% \]

\revzm{We benchmarked the performance of {\orchAMP} against two others approaches:}
% We compare the following three approaches for recovering latent factors:

\begin{itemize}
    % \item \revsn{\textbf{\orchAMP}, which integrates all modalities via joint power iterations, uses Gaussian Mixture Models (GMM) to estimate the prior, and applies nonlinear empirical Bayes denoising across iterations.}
    
    \item \revsn{{EB-PCA}~\cite{eb_pca}, which applies 
    % the same GMM-based 
    denoising \revzm{based on empirical priors}, but separately 
    \revzm{within}
    % to 
    each modality without information sharing.}
    
    \item \revsn{{SVD}, which applies \revzm{neither denoising nor cross-modal integration}. 
    Latent factors $\bm U_h$ are \revzm{estimated by}
    % approximated via 
    top left singular vectors $\upca_{0,h}$ for $h \in \{1,2\}$, and 
    % the factor 
    $\wt{\bm U}_1$ is estimated via least-squares projection $\wt{\bm X}_1 \wh{\bm L}_1(\wh{\bm L}_1^\top \wh{\bm L}_1)^{-1}$, where $\wh{\bm L}_1$ is defined in \eqref{eq:sqrt_L}.}
\end{itemize}

% \nb{start here 20250925 zm}
\revsn{All methods were applied to the triplet 
% $(\bm X_1, \bm X_2, \wt{\bm X}_3)$ 
$(\bm X_1, \bm X_2, \wt{\bm X}_{\revzm{1}})$ \revzm{with the knowledge of the true signal ranks}.
\revzm{For any estimator triplet $(\wb{\bm U}_1,\wb{\bm U}_2,\widecheck{\bm U}_1)$,}
% to estimate the latent factors $\bm U_1$, $\bm U_2$, and $\wt{\bm U}_1$ via the denoised embeddings $\wb{\bm U}_1$, $\wb{\bm U}_2$, and $\widecheck{\bm U}_1$, respectively. 
% Reconstruction accuracy is 
\revzm{reconstruction accuracies were} 
quantified by the following normalized \revzm{squared} Frobenius loss:
\begin{align*}
    % \mathcal{E}^{\mathrm{rec}}_h 
    \ell(\bm{U}_h, \wb{\bm U}_h)
    &= \frac{1}{N^2 r_h} \left\| \wb{\bm U}_h \wb{\bm U}_h^\top - \bm U_h \bm U_h^\top \right\|_F^2, 
    % \quad 
    \, h \in \{1,2\};\,
    % \\
    % \wt{\mathcal{E}}^{\mathrm{rec}}_3 
    \ell(\wt{\bm U}_1, \widecheck{\bm U}_1)
    =
    % &= 
    \frac{1}{N^2 \wt r} \left\| \widecheck{\bm U}_1 \widecheck{\bm U}_1^\top - \wt{\bm U}_1 \wt{\bm U}_1^\top \right\|_F^2.
\end{align*}
The relative performance of the three methods \revzm{are}
% —\orchAMP, empirical Bayes PCA, and standard SVD—is 
expected to depend on both the signal strength $\lambda$ and the strength of 
\revzm{inter-modality}
% cross-modality 
dependence $\rho$,
\revzm{and we varied these simulation hyperparameters}
% . To study this relationship, we fix the sample size at $N = 10{,}000$ and vary the parameter pairs $(\lambda, \rho)$ 
over the following grid \revzm{to demonstrate such a dependence}
\[
\lambda \in \{2.0, 2.4, 2.8\}, \quad \rho \in \{0.8, 0.9, 1.0\}.
\]}
\revzm{In both {\orchAMP} and EB-PCA, we restricted each estimated prior to be a Gaussian mixture model (GMM) with four components and considered the results after $25$ AMP iterations.}
% \revsn{In \orchAMP, we estimate both the joint prior distribution $\mu$ and the modality-specific priors $\nu_h$ using an empirical Bayes approach described in Appendix~G.1 of \cite{nandy2024multimodal_supp}. Specifically, we adopt a Gaussian mixture model (GMM) with 4 components as the prior class for the left singular vectors. For empirical Bayes PCA, we use the same GMM framework to estimate the prior distributions of the rows of each $\bm U_h$ ($h \in \{1,2\}$) separately. In both \orchAMP\ and empirical Bayes PCA, the prior distributions of the right singular vectors $\bm V_1$ and $\bm V_2$ are also estimated using GMMs with 4 components. All AMP-based methods are executed for 25 iterations to obtain the final latent embeddings. For all three methods, we supplied the corresponding algorithms with the ground truth ranks to evaluate the estimation procedure under ideal condition.}

% \nb{to revise after tables are updated... shorten to 2/3 in length}
% \revzm{For each fixed $\gamma_2$, we report average losses and the associated standard errors over $50$ replicates for each $(\lambda, \rho)$ pair.
% Table \ref{tab:modality_lambda_by_rho} collect these results for $\gamma_2 = 0.25$ while results for $\gamma_2 = 0.05$ are deferred to Table 2 in the supplement \cite{nandy2024multimodal_supp}. 
\revzm{Table \ref{tab:modality_lambda_by_rho} reported average losses and the associated standard errors over $50$ replicates for each $(\lambda, \rho)$ pair.
By Table \ref{tab:modality_lambda_by_rho}, both {\orchAMP} and EB-PCA outperformed SVD across all $(\lambda,\rho)$ configurations.
% in our experiments.
Moreover, the additional performance gain by {\orchAMP} was more pronounced for smaller values of $\lambda$, which is aligned with the intuition that borrowing strength from other modalities is most helpful for modalities with relatively low signal-to-noise ratios (SNRs). 
Furthermore, 
% it is worth noting that 
the performance of {\orchAMP} was not monotone in $\lambda$ for the reconstruction of latent factors from Modality 2, since the difficulty in prior estimation is simultaneously affected by $\lambda$.
% Similar observations are made for $\gamma=0.05$ in Table 2 in the supplement \cite{nandy2024multimodal_supp}. 
\revsn{In summary, these experiments demonstrate that \orchAMP~is particularly effective when the latent components of interest have weak SNRs and cannot be reliably recovered from individual modalities alone. In such cases, the use of information across the modalities allows for a more accurate recovery.}
Within each modality, when the SNR increases, the performance of {\orchAMP} may be surpassed by that of EB-PCA as {\orchAMP} needs to estimate a higher-dimensional joint prior on latent factors, while EB-PCA only focuses on its marginal within each modality. This is the price that one pays for retaining the capacity of cross-modal querying, which EB-PCA does not offer.}

% \revsn{
% Each experiment is repeated over $50$ independent replicates, and the average reconstruction errors for all three methods are reported in Table~\ref{tab:modality_lambda_by_rho} for the experiment with the aspect ratio of \textbf{Modality~2} equal to 0.25. The same results for $\gamma_2=0.05$ is relegated to Table 2 of the supplement \cite{nandy2024multimodal_supp} We find that the estimation of latent factors in \textbf{Modality~2} improves through information integration, particularly when the association between modalities is strong. This is noteworthy since the signals in Modality~2 are typically weak. At the same time, we observe a trade-off in \textbf{Modality~1}, where the need to estimate a high-dimensional prior increases reconstruction error. When the dependence across modalities is low, the empirical Bayes step effectively estimates a relatively simple prior, enabling OrchAMP to outperform the other methods across all signal strengths. However, for $\rho=1$, when the prior involves strong and complex nonlinear relationships across modalities, the joint prior estimation becomes less effective than marginal prior estimation, leading to better performance of EBPCA. This effect becomes more pronounced at higher signal strengths. Finally, we note that OrchAMP consistently outperforms SVD across modalities, values of $\rho$, and signal strengths $\lambda$, underscoring the benefit of denoising in improving latent factor estimation.
% }

\begin{table}[tbh]
\centering
\scriptsize
\setlength{\tabcolsep}{3pt}
\renewcommand{\arraystretch}{0.9}
\begin{tabular}{@{}llccc@{}}
\toprule
\multicolumn{5}{l}{\textbf{Modality 1}}\\
\midrule
% \midrule
 & Method & $\rho=0.8$ & $\rho=0.9$ & $\rho=1$ \\
\midrule
\multirow{3}{*}{\(\lambda=2\)} & OrchAMP &   \textbf{0.442344 (0.008088)} & \textbf{0.374381 (0.000860)} & \textbf{0.425845 (0.005339)} \\
& EBPCA & 0.460190 (0.000892) & 0.460190 (0.000892) & 0.460190 (0.000892) \\
& SVD & 0.555038 (0.000874) & 0.555038 (0.000874) & 0.555038 (0.000874) \\
\addlinespace[2pt]
\multirow{3}{*}{\(\lambda=2.4\)} & OrchAMP & \textbf{0.386132 (0.006364)} & \textbf{0.335412 (0.000896)} & \textbf{0.381269 (0.005446)} \\
& EBPCA & 0.393059 (0.000747) & 0.393059 (0.000747) & 0.393059 (0.000747) \\
& SVD & 0.459870 (0.000730) & 0.459870 (0.000730) & 0.459870 (0.000730) \\
\addlinespace[2pt]
\multirow{3}{*}{\(\lambda=2.8\)} & OrchAMP & 0.350279 (0.006589) & \textbf{0.301414 (0.000758)} & 0.345909 (0.005424) \\
& EBPCA &  \textbf{0.338840 (0.000702)} & 0.338840 (0.000702) & \textbf{0.338840 (0.000702)}  \\
& SVD &  0.386666 (0.000616) & 0.386666 (0.000616) & 0.386666 (0.000616) \\
\addlinespace[2pt]

\midrule
\multicolumn{5}{l}{\textbf{Modality 2}}\\
\midrule
 & Method & $\rho=0.8$ & $\rho=0.9$ & $\rho=1$ \\
\midrule
\multirow{3}{*}{\(\lambda=2\)} & OrchAMP & 0.519971 (0.008662) & \textbf{0.482710 (0.002066)} & \textbf{0.485770 (0.005907)} \\
& EBPCA &  \textbf{0.496249 (0.001140)} & 0.505828 (0.001081) & 0.513239 (0.001150) \\
&SVD & 0.627213 (0.001231) & 0.627244 (0.001248) & 0.627341 (0.001284)  \\
\addlinespace[2pt]
\multirow{3}{*}{\(\lambda=2.4\)}& OrchAMP & 0.506314 (0.007229) & \textbf{0.483782 (0.002131)} & \textbf{0.479342 (0.005626)} \\
&EBPCA & \textbf{0.496249 (0.001140)} & 0.505828 (0.001081) & 0.513239 (0.001150) \\
&SVD & 0.627213 (0.001231) & 0.627244 (0.001248) & 0.627341 (0.001284) \\
\addlinespace[2pt]
\multirow{3}{*}{\(\lambda=2.8\)}& OrchAMP & 0.511756 (0.007916) & \textbf{0.485434 (0.002248)} & \textbf{0.475111 (0.005041)} \\
&EBPCA & \textbf{0.496249 (0.001140)} & 0.505828 (0.001081) & 0.513239 (0.001150) \\
&SVD & 0.627213 (0.001231) & 0.627244 (0.001248) & 0.627341 (0.001284) \\
\addlinespace[2pt]
\midrule
\multicolumn{5}{l}{\textbf{Modality 3}}\\
\midrule
 & Method & $\rho=0.8$ & $\rho=0.9$ & $\rho=1$ \\
\midrule
\multirow{3}{*}{\(\lambda=2\)} & OrchAMP & \textbf{0.048622 (0.000191)} & \textbf{0.048131 (0.000102)} & \textbf{0.050549 (0.000163)} \\
& EBPCA & 0.049676 (0.000144) & 0.049513 (0.000113) & 0.051961 (0.000205)  \\
& SVD & 1.854730 (0.009864) & 1.852470 (0.009474) & 1.851095 (0.008641)  \\
\addlinespace[2pt]
\multirow{3}{*}{\(\lambda=2.4\)} & OrchAMP  &  \textbf{0.048314 (0.000158)} & \textbf{0.048053 (0.000100)} & \textbf{0.050537 (0.000169)} \\
& EBPCA & 0.049676 (0.000144) & 0.049513 (0.000113) & 0.051961 (0.000205) \\
& SVD & 1.854730 (0.009864) & 1.852470 (0.009474) & 1.851095 (0.008641)  \\
\addlinespace[2pt]
\multirow{3}{*}{\(\lambda=2.8\)} & OrchAMP  & \textbf{0.048334 (0.000175)} & \textbf{0.047973 (0.000100)} & \textbf{0.050593 (0.000182)}  \\
& EBPCA & 0.049676 (0.000144) & 0.049513 (0.000113) & 0.051961 (0.000205)  \\
& SVD & 1.854730 (0.009864) & 1.852470 (0.009474) & 1.851095 (0.008641) \\
\addlinespace[2pt]
\bottomrule
\end{tabular}
\caption{\revzm{Comparison of average within modality reconstruction errors (standard errors in parentheses) 
for 
% all modalities with columns indexed by $\rho$ and rows grouped by $\lambda$ (three sub-rows: 
OrchAMP, EBPCA, and SVD over $50$ replications: $\gamma_1= \gamma_2 =0.25$. 
The lowest average (i.e.~best performance) within each modality is given in boldface for each $(\lambda,\rho)$ combination.} 
% Top panel: Modality 1; middle: Modality 2; bottom: Modality 3. \nb{to revise}
}
\label{tab:modality_lambda_by_rho}
\end{table}

\subsection{Finite-sample 
% calibration 
\revzms{coverage}
of prediction sets} 
\revsn{Next we investigated the finite-sample coverage 
% properties 
of the prediction sets constructed via Algorithm~\ref{alg:pred}. 
In this experiment, we considered two high-dimensional data matrices $\bm X_1 \in \mathbb{R}^{N \times p_1}$ and $\bm X_2 \in \mathbb{R}^{N \times p_2}$ generated according to the data generation scheme described in Section~\ref{sec:eff_data_integration} (but without the low-dimensional modality), with ambient dimensions $p_1 = p_2 = \lfloor0.25N\rfloor$.
% and intrinsic dimensions $r_1 = 2$ and $r_2 = \revzm{3}$. The signal strength matrices are chosen as
% \begin{align}
%     (\bm D_1)_{ii} = \lambda + 3-i, \quad i = 1,2; \qquad 
%     (\bm D_2)_{ii} =  2 i,
%     % (5 - i)\lambda, 
%     \quad i = 1,\dots,4.
% \end{align}
\revzm{We fixed $\rho = 0.8$, and varied the sample size $N$ and signal strength parameter $\lambda$ as
\[
N \in \{2000, 3000, 4000\}, \quad \lambda \in \{5, 7, 9\}.
\]}}
% \revsn{
\revzm{In each experiment, after obtaining the {\orchAMP} estimators on the training data, we queried them with ${\bm X}_1^\mathrm{Q}$ (i.e., modality $1$) of an independent query sample generated from the same distribution. 
In particular, we supplied ${\bm X}_1^\mathrm{Q}$ and the {\orchAMP} estimators to Algorithm~\ref{alg:pred} to compute a $95\%$ prediction set for the concatenated latent embedding $(U^\rmq_1, U^\rmq_2) \in \mathbb{R}^5$ and tested if it contained the true latent embedding of the query sample. 
\revsn{To estimate the radius $\wh b_\alpha$ of the prediction set, we simulated $10000$ samples from the posterior distribution specified in~\eqref{eq:rad_pred_set} and took the $0.95$ quantile of the corresponding Euclidean norms.} 
All other implementation details were identical to those used in Section \ref{sec:eff_data_integration}.}
% we generate an independent test sample from the same distribution as the training data, and use only modality $1$ to construct a $95\%$ prediction set for the concatenated latent embedding $(U^\rmq_1, U^\rmq_2) \in \mathbb{R}^6$ using Algorithm~\ref{alg:pred}. Within the algorithm, the prior $\mu$ is estimated via the empirical Bayes procedure described in~\eqref{eq:emp_bayes_1}, using a Gaussian mixture model class of priors with $4$ components. The loading matrices $\bm V_1$ and $\bm V_2$ are estimated by $\wb{\bm V}_{T,h}$ for $h=1,2$, obtained by running~\orchAMP~with the same class of prior distributions as in Section \ref{sec:eff_data_integration} on the training data $(\bm X_1, \bm X_2)$ for $25$ iterations.
% }

\revsn{
% To estimate the radius $\wh b_\alpha$ of the prediction set, we simulate $10{,}000$ samples from the posterior distribution specified in~\eqref{eq:rad_pred_set} and take the $0.95$ quantile of the corresponding Euclidean norms. 
For each $(N, \lambda)$ pair, we repeated the procedure described above over $50$ independent replicates.
\revzm{Table~\ref{tab:pred_calibrate} reports the proportions of replicates}
% , and report the empirical coverage—i.e., the proportion of trials 
in which the true latent embedding $(U^\rmq_1, U^\rmq_2)$ fell within the constructed prediction set.
\revzm{Across all configuration, the coverage was close to the nominal $95\%$ level,} demonstrating that our prediction sets were reasonably well-calibrated even in finite samples.}
\revzm{It is possible that  conformal prediction techniques could be adapted to the current context for achieving exact finite-sample calibration, and we leave this interesting problem for future research.} 
% \revzm{While} certain conformal prediction techniques can \revzm{potentially} be incorporated to achieve exact finite-sample calibration in the current setting, they may yield wider prediction sets with limited benefit in high-signal or large-sample regimes.

\begin{table}[tbh]
\centering
\scriptsize
\setlength{\tabcolsep}{3pt}
\renewcommand{\arraystretch}{0.95}
    % \centering
    \begin{tabular}{c c c c}
    \toprule
     & $\lambda = 5$ & $\lambda = 7$ &  $\lambda = 9$ \\
    \midrule
    $N = 2000$ & 0.94 & 0.92 & 0.92 \\
    % \hline
    $N = 3000$ & 0.94 & 0.94 & 0.92 \\
    % \hline
    $N = 4000$ & 0.94 & 0.96 & 0.98 \\
    \toprule
    \end{tabular}
    \caption{Empirical coverage probabilities of the constructed $95\%$ prediction sets.}
    \label{tab:pred_calibrate}
\end{table}

\section{A \revzms{single-cell multi-omics} data example}

In this section, we illustrate on a real data example the prowess of Algorithm \ref{alg:orchamp} for constructing cell atlas with single-cell multi-omics data and of Algorithm \ref{alg:pred} for probabilistic querying of the constructed atlas with uncertainty quantification where the query cell is only measured in one modality.

We focused on
a human peripheral blood mononuclear cell (PBMC) single-cell TEA-seq data 
% sourced from 
\cite{10.7554/eLife.63632},
% The dataset 
\revzms{which}
comprised three modalities: chromatin accessibility (\revzms{the} ATAC modality) information for the whole genome (around $66828$ read pairs), gene expression levels (\revzms{the} RNA modality) for the whole genome (around $36601$ transcripts), and abundance levels of $48$ surface proteins (\revzms{the} Protein modality), for $8213$ PBMC cells. 
% \revzm{(See the supplement \cite{nandy2024multimodal_supp} for an additional CITE-seq data example.)}
% \nb{add ref to cite-seq data in the supplement.}
After pre-processing, we retained
a subset of $6323$ cells with high data quality. 
In addition, after feature screening, we selected $5000$ highly variable read pairs for the ATAC modality, $2000$ highly variable genes for the RNA modality, \revzms{and $40$ highly variable proteins for the Protein modality}.
To conform with model \eqref{eq:def_matrix_prop_asymp}-\eqref{eq:def_matrix_low_dim_asymp}, we applied the $\log 1p$ transform $(x \mapsto \log(1+x))$ entry-wise and subsequently centered each column (feature) in each modality\footnote{\revsn{The $\log 1p$ transformation maps the (total-counts-per-cell-normalized) discrete count measurements to a continuous scale and, by stabilizing the variance, mitigates the strong mean--variance dependence and skewness. 
% inherent in raw counts. 
This step is standard in the pre-processing of single-cell RNA and ATAC data.
% where the distribution of measurements is often approximated by Poisson or negative binomial models whose variance grows with the mean. 
The \revzm{post-transform} variation around the latent signal is more homogeneous across features, making \eqref{eq:def_matrix_prop_asymp}-\eqref{eq:def_matrix_low_dim_asymp} more reasonable approximation to data.
% it reasonable to approximate this noise by an isotropic Gaussian distribution, as assumed in \eqref{eq:def_matrix_prop_asymp}-\eqref{eq:def_matrix_low_dim_asymp}.
}}.
%Furthermore, to facilitate prior estimation, we replaced the data matrix in the Protein modality with its top 10 PCs\footnote{\revzm{The purpose of this step was only to reduce the dimension of the joint prior one needs to estimate. The Protein modality is a low-dimensional modality to start with, and readers should not confuse the operation here with SVD of high-dimensional modalities.}}. 
Thus, the three modalities were represented by two high-dimensional matrices $\bm X_1 \in \R^{6323 \times 5000}$ for ATAC and $\bm X_2 \in \R^{6323 \times 2000}$ for RNA, and one low-dimensional matrix $\bm {\wt X}_1 \in \R^{6323 \times 40}$ for Protein.

For each cell, we regarded its original cell type annotation in \cite{10.7554/eLife.63632} as the ground-truth \revzms{for evaluation purpose}.
\revzms{Neither Algorithm \ref{alg:orchamp} or \ref{alg:pred} had access to these annotations.}

\subsection{Cell atlas construction}

Based on the comparison of the empirical singular value distribution with the square root of the (properly rescaled) Marchenko-Pastur distribution,
we set $r_1=15$ and $r_2=20$. 
\revzm{See Fig \ref{fig:sidebyside_scree_plot} for scree plots of these two modalities.}
We adopted the empirical Bayes technique for prior estimation with a Gaussian mixture class of priors (described in Appendix J.1) with 4 mixing components for estimating $\mu$, 11 mixing components to estimate $\nu_1$, and 6 mixing components to estimate $\nu_2$\footnote{ 
\revsn{We selected the number of mixture components by Bayesian Information Criterion (BIC) \cite{schwarz1978bic}. Specifically, for the latent‐factor prior $\mu$ we fit GMMs with $K\in\{2,\ldots,\lfloor \sqrt[3]{N}\rfloor\}$, and for the loading priors $\nu_1$ and $\nu_2$ we used $K\in\{2,\ldots,\lfloor \sqrt[3]{p_1}\rfloor\}$ and $K\in \{2,\ldots,\lfloor \sqrt[3]{p_2}\rfloor\}$, respectively, retaining the $K$ that minimized BIC for each prior.}}.
%to be roughly $\sqrt[3]{N}$, $\sqrt[3]{p_1}$, and $\sqrt[3]{p_2}$, respectively.
We ran Algorithm \ref{alg:orchamp} with $T = 15$ iterative updates and 
learned
embeddings 
\revzms{$\wb{\bm U}_{1} \in \R^{6323 \times 15}$, $\wb{\bm U}_{2} \in \R^{6323 \times 20}$, and $\wt{\bm U}_{1} \in \R^{6323 \times 40}$
% $\wb{\bm U}_{15,1} \in \R^{6323 \times 20}$, $\wb{\bm U}_{15,2} \in \R^{6323 \times 20}$, and $\wt{\bm U}_{15,1} \in \R^{6323 \times 40}$ 
for the ATAC, the RNA, and the Protein modalities, respectively.}
% for the 
% ATAC
% modality, $\wb{\bm U}_{10,2} \in \R^{6341 \times 15}$ for the RNA modality, 
% and $\wt{\bm U}_{10,1} \in \R^{6341 \times 10}$ for the Protein modality. 
Concatenating these embeddings for each cell, we got the following matrix of integrated multimodal embeddings as the constructed cell atlas:
\[
\bm U = [\wb{\bm U}_{1} \quad \wb{\bm U}_{2} \quad \wt{\bm U}_{1} ] \in \R^{6323 \times 75}.
% \bm U = [\wt{\bm U}_{15,1} \quad \wb{\bm U}_{15,1} \quad \wb{\bm U}_{15,2}] \in \R^{6323 \times 80}.
\]

\revsn{Figure~\ref{fig:atlas_tea_seq} compares UMAP embeddings of the atlases built by Algorithm~\ref{alg:orchamp} (middle) with those by Seurat WNN~\cite{HAO20213573} (left, \revzm{with 50 PCs for ATAC and RNA and 40 PCs for Protein for optimized performance in metrics reported in Table \ref{tab:clustering_metrics_tea_seq}},
% \footnote{We tuned these hyperparameters and report a configuration that performed near–optimally across three clustering metrics: average silhouette score~\cite{rousseeuw1987silhouettes}, LISI~\cite{Korsunsky2019}, and V-measure~\cite{rosenberg2007vmeasure}.} 
\revzm{and those by }
% For additional context, the right panel shows the atlas obtained with 
MOFA+~\cite{argelaguet2020mofa+} \revzm{(right, with default tuning parameter choices)}.}
% $20$ shared factors trained using the default settings of the \textsf{MOFA2} R package.
There appears to be no substantial difference \revzms{across the three atlas UMAPs in terms of separating the cell types}.
% , and
% major cell types are well separated. 

% The UMAP visualization \cite{mcinnes2018umap-software} 
% of the atlas points constructed by Algorithm \ref{alg:orchamp} is presented in Figure \ref{fig:atlas_tea_seq} (middle panel), while the atlas constructed by state-of-the-art Seurat WNN \cite{HAO20213573} (using 50 PCs in both ATAC and RNA modalities and 30 PCs in the Protein modality\footnote{We experimented different choices of these tuning parameters for Seurat WNN, and the reported choice gave one of the best performances in terms of three clustering metrics: the average silhouette score \cite{rousseeuw1987silhouettes}, the Local Inverse Simpson’s Index (LISI) \cite{Korsunsky2019}, and the V-measure \cite{rosenberg2007vmeasure}.}) is presented in the left panel for direct comparison. 

\begin{figure}[tb]
    \centering
    \includegraphics[scale = 0.28]{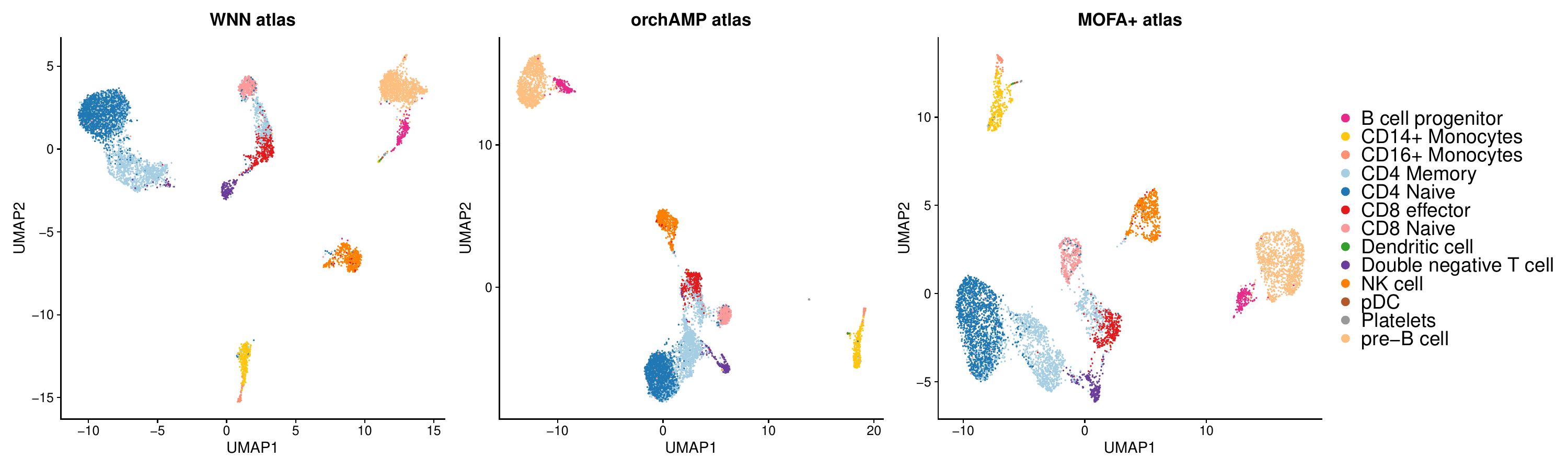}
    \caption{\revsn{UMAPs of PBMC atlases from $6323$ TEA-seq cells by Seurat WNN (left), OrchAMP (middle), and MOFA+ (right),
    % Seurat WNN used $50$ PCs for ATAC, $50$ for RNA, and $30$ for Protein; OrchAMP used $10$, $15$, and $10$ PCs for the three modalities, respectively; MOFA+ was trained with $20$ shared latent factors. 
    colored according to original cell type annotations in \cite{10.7554/eLife.63632}.}}
    \label{fig:atlas_tea_seq}
\end{figure} 

\revsn{\revzm{In addition to the qualitative visual inspection enabled by UMAPs,}
% In addition to UMAP visualization, which provides an intuitive but qualitative view of the reconstructed atlas, 
we 
% also 
report quantitative 
metrics to \revzm{further evaluate and compare the capacity of these atlases in delineating cell types.}
% address the concern that UMAP alone may not reliably capture clustering quality. 
Specifically, we compared \revzm{these atlases}
% the OrchAMP atlas with two alternative integration approaches—the WNN-based method of \cite{HAO20213573} and MOFA+ of \cite{argelaguet2020mofa+}—
using four clustering metrics: the average silhouette score \cite{rousseeuw1987silhouettes}, the V-measure \cite{rosenberg2007vmeasure}, the Adjusted Rand Index \cite{HubertArabie1985}, and the celltype Local Inverse Simpson’s Index (cLISI) \cite{Korsunsky2019}. \footnote{\revsn{See Section \ref{sec:clusreing_metrics} for further description of the clustering metrics.}}
These metrics were computed \revzm{to compare the ground-truth cell type annotation with the cluster labels}
of 
the \revzm{learned representations of all cells by} 
% of the atlas points for 
each method \revzm{where the cluster labels were obtained from Louvain clustering \cite{blondel2008fast} of the 20-nearest-neighbor graph constructed from the respective learned representations}.
% embeddings. 
These 
% complementary 
metrics \revzm{collectively} offer an objective assessment of \revzm{how well clustering structure in the learned representations reflects ground-truth cell types}: higher values of the average silhouette score, Adjusted Rand Index, and V-measure indicate stronger separation of major cell types \revzm{in the resulting clusters}, whereas lower cLISI values reflect improved clustering fidelity.
% , particularly around the local neighborhood of each atlas point. 
The \revzm{metrics}, summarized in Table~\ref{tab:clustering_metrics_tea_seq}, \revzm{provided additional evidence showing} that OrchAMP achieves comparable performance \revzm{in cell atlas construction to state-of-the-art} methods while \revzm{being theoretically justifiable and} provides additional flexibility for downstream \revzm{querying} tasks.}
\revzm{See Section \ref{sec:bench_cite_seq} for an additional multi-omics example.}

\begin{table}[ht]
\centering
\scriptsize
\setlength{\tabcolsep}{3pt}
\renewcommand{\arraystretch}{0.95}
\begin{tabular}{cccc}
\toprule
\textbf{Clustering Metric} & \textbf{WNN} & \textbf{OrchAMP} & \textbf{MOFA+}\\
\midrule
Average Silhouette Score & 0.387  & {\bf 0.395}  & 0.394 \\
% \hline
V-measure                & 0.732  & {\bf 0.762}  & 0.730  \\
% \hline
Adjusted Rand Index                & 0.592  & {\bf 0.634}  & 0.533  \\
% \hline
cLISI                     & 1.200  & 1.200  & {\bf 1.181} \\
% \hline
\toprule
\end{tabular}
\caption{Comparison of clustering metrics among WNN, OrchAMP, and MOFA+ on the TEA-seq dataset. \revzms{The best value of each metric is given in boldface.}}
\label{tab:clustering_metrics_tea_seq}
\end{table}

\begin{figure}[t]
    \centering
    \includegraphics[scale = 0.4]{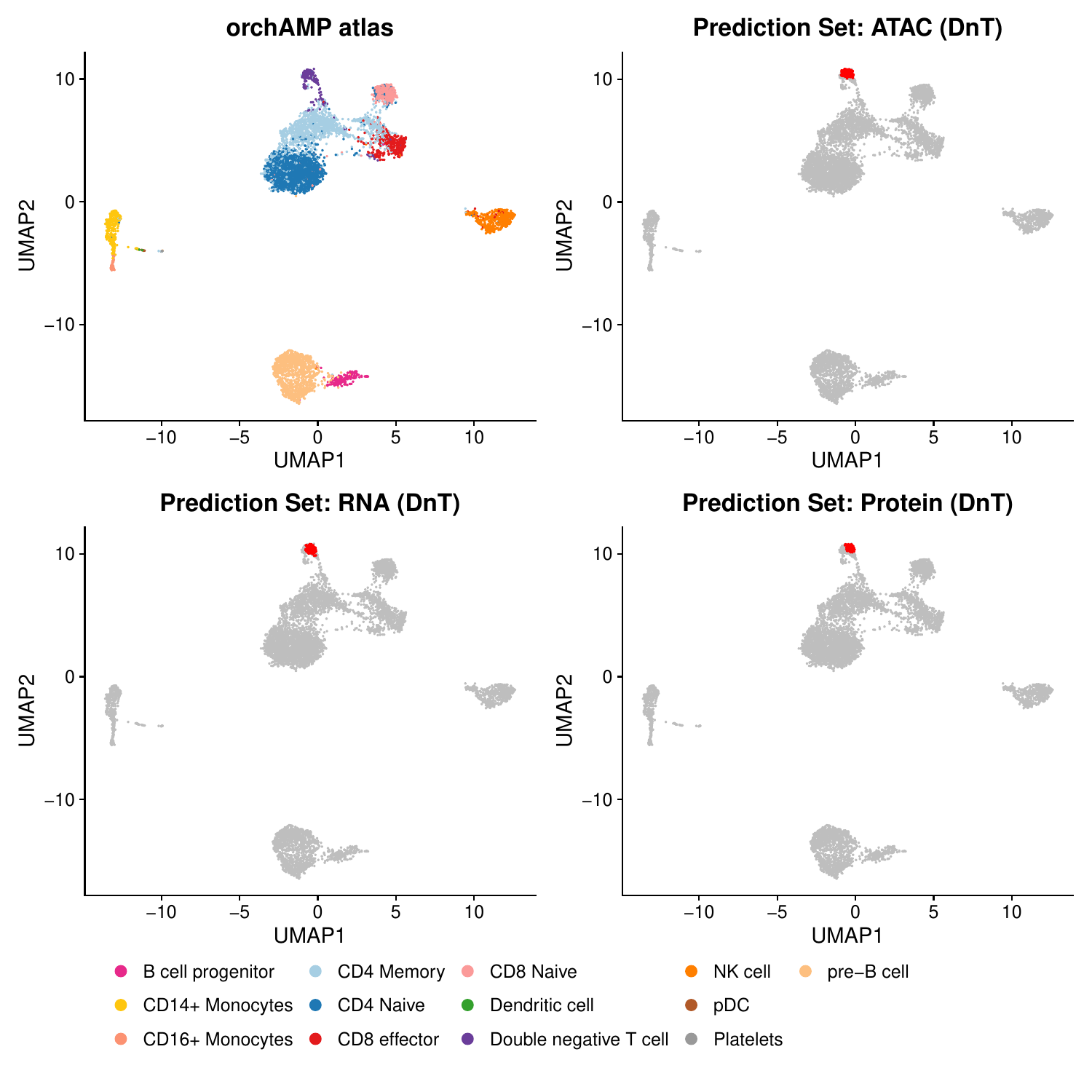}
    \caption{Visualization of prediction sets constructed by Algorithm \ref{alg:pred}. 
    % for a held-out \textsf{double Negative T} cell.
	Top left: OrchAMP cell atlas constructed 
    % by Algorithm \ref{alg:orchamp} 
    on $6320$ TEA-seq cells. Cells are colored according to their cell type annotations in \cite{10.7554/eLife.63632}.
	\revzms{Visualizations of the $95\%$ prediction set ($500$ randomly sampled points from the set in red and atlas cells in grey)} when queried by the ATAC observation of a held-out \textsf{double Negative T} cell \revzms{(top right), its RNA observation (bottom left), and its Protein observation (bottom right), respectively}.
	% Bottom left: \revzm{Visualization of}
 %    % $500$ randomly sampled points from 
 %    the $95\%$ prediction set when 
 %    % the atlas is 
 %    queried by the RNA observation of the held-out cell.
	% Bottom right: $500$ randomly sampled points from the $95\%$ prediction sets when the atlas is queried by the Protein expressions.
	% Other than the top left panel, the atlas points are in {grey}, and the sampled points are in {red}.
    }
    \label{fig:atlas_tea_seq_pred}
\end{figure}

\subsection{Cell atlas querying}

We demonstrate the performance of Algorithm \ref{alg:pred} in constructing 
% informative 
prediction sets for query cells' joint latent representations with a hold-out experiment.
% As Algorithm \ref{alg:pred} is the first cell atlas querying method with uncertainty quantification, we do not have any alternative approach for comparison here.

First, we randomly held out one \textsf{CD8 effector} cell, one  \textsf{double negative T} cell, and one \textsf{pre-B} cell as test data and performed the same procedure as in the previous subsection on the remaining $6320$ cells to obtain the OrchAMP cell atlas (Figure \ref{fig:atlas_tea_seq_pred} top left).
Next,
we queried the atlas with the observation within each single modality (i.e. 
\revzms{ATAC, RNA, and Protein})
% ATAC reads, RNA expression levels, Protein expression levels) 
% separately 
\revzms{respectively}
for each held-out cell, constructed the $95\%$ prediction set of the joint latent representation of the query cell using Algorithm \ref{alg:pred}.
\revzms{Finally, we visualized each prediction set by first sampling $500$ points in the set uniformly at random and then mapping
% The sampled points were mapped 
them} to the constructed atlas using the \texttt{MapQuery} function in the Seurat V4 \texttt{R} package. 

\revzm{For illustration, we visualized the prediction sets for the held-out \textsf{double negative T} cell when queried only with its ATAC, RNA, or Protein observations, respectively, in the top right, the bottom left, and the bottom right panels of Figure \ref{fig:atlas_tea_seq_pred}, respectively.}
\revsn{
% For illustration, we show—for the held-out \textsf{Double Negative T} cell—the relative locations of the mapped points obtained when each modality is used on its own to construct a prediction set; 
Analogous plots for the other two held-out cells \revzm{are deferred to} the Section \ref{sec:app_tea_seq} \revzm{as the main message remains similar}. 
\revzm{With reference to cell type locations in the atlas in the top left panel, all three prediction sets were contained predominantly within the correct cell population. 
\revsn{Moreover, the prediction sets typically occupied only a subset of the full cell-type cluster, thereby conveying information beyond a coarse cell-type label.}
% On the other hand, they differed in size, with the Protein query yielded the most concentrated prediction set.
% , which is well-aligned with the immunology knowledge that gating surface proteins efficiently differentiates certain T cell subpopulations \cite{kotliar2025reproducible}.
}} 

\section{Theoretical results}
\label{theo_res}

In this section, we study the asymptotic properties of the OrchAMP estimators constructed by Algorithm~\ref{alg:orchamp}, as well as the asymptotic coverage probabilities of the prediction sets obtained from Algorithm~\ref{alg:pred}. 

\revsn{Our theoretical analysis \revzm{of Algorithm \ref{alg:orchamp}}
% is closely connected to, and 
builds upon \revzm{and further develops} the framework 
% developed in 
in \cite{eb_pca}: 
% In particular, 
Assumption~5.1 \revzm{below}
% in this paper 
generalizes Assumption~5.1 of \cite{eb_pca}, 
Proposition~5.1 extends Proposition~5.2 therein, 
and Theorem~5.1 \revzm{generalizes}
% strengthens 
the conclusions of their Theorem~5.4 \revzm{to multi-modal settings}. 
While the proof strategies are \revzm{closely connected},
% related, 
adapting them to \revzm{the multi-modal}
% our 
setting requires substantial %new 
work. 
Specifically, \citet{eb_pca} \revzm{analyzed} a single AMP orbit associated with \revzm{a single} data matrix, whereas \revzm{our} orchestrated framework involves multiple parallel AMP orbits, each corresponding to a different high-dimensional modality whose leading singular vectors are to be recovered. 
Therefore, the analysis must account for synchronous information integration \revzm{at each iterate} across these orbits via 
% appropriate 
\revzm{respective} denoisers, a \revzm{signature in our analysis that was} absent in the single-orbit setting.} 

\revsn{Beyond \revzm{estimation}, we contribute new theoretical results on the asymptotic validity of the prediction sets constructed by Algorithm~\ref{alg:pred}. 
Establishing these \revzm{results} requires novel arguments based on empirical process techniques, which \revzm{is orthogonal to} the scope of \cite{eb_pca} and forms a distinct component of our \revzm{theory}.}

% \subsection{Assumptions on the Generative Model}
\subsection{Assumptions}
We begin with the following assumptions on the model \eqref{eq:def_matrix_prop_asymp}--\eqref{eq:def_matrix_low_dim_asymp}.

\begin{asm}
\label{asm:prior_1_mom}
The matrices $\{\bm U_h, \bm D_h, \bm V_h, \bm W_h: h \in [m]\}$, $\{\wt{\bm U}_\ell,\wt{\bm W}_\ell: \ell \in [\wt m]\}$ and the prior classes $\mathcal P$ and $\{\mathcal P_{\nu_h}: h \in [m]\}$ satisfy the following assumptions:
\begin{enumerate}
    \item All $\mu \in \mathcal P$ are absolutely continuous with respect to the Lebesgue measure and have bounded density.
    \item For all $\mu \in \mathcal P$, we have $\mathbb{E}_\mu[U_hU^\top_k]<\infty$, $\mathbb{E}_\mu[U_h\wt{U}^\top_\ell]<\infty$ and $\mathbb{E}_\mu[\wt{U}_\ell\wt{U}^\top_g]<\infty$, where $h,k \in [m]$; $\ell,g \in [\wt m]$ and $(U_1,\ldots,U_{m},\wt{U}_1,\ldots,\wt{U}_{\wt m}) \sim \mu$. Furthermore, $\mathbb E_\mu[U_hU^\top_h]=\bm I_{r_h}$ and $\mathbb E_\mu[\wt U_\ell\wt U^\top_\ell]=\bm I_{\wt r_\ell}$ for all $h \in [m]$ and $\ell \in [\wt m]$. 
    \item For all $h \in [m]$, $\nu_h \in \mathcal P_{\nu_h}$ satisfies $\mathbb{E}_{\nu_h}[V_{h}V^\top_{h}]=\bm I_{r_h}$, where $V_{h} \sim \nu_h$. 
    \item The collection of the matrices $\{\bm U_1,\ldots,\bm U_m,\wt{\bm U}_1,\ldots,\wt{\bm U}_{\wt m}\}$ is independent of the collection of the matrices $\{\bm V_1,\ldots,\bm V_m\}$.
	\item For all $h \in [m]$ and $k \in [r_h]$, the signal strengths $(\bm D_{h})_{kk}>\gamma^{-1/4}_h$. 
    \item For any collection of positive semi-definite matrices $\{\bm A_{h}, \bm B_{h} \in \mathbb{R}^{r_h \times r_h}: h \in [m]\}$, and  $\{\wt{\bm A}_\ell \in \mathbb{R}^{{\wt r}_\ell \times {\wt r}_\ell}: \ell \in [\wt m]\}$ and $\mu \in \mathcal{P}$, there exists an open neighborhood $O_\mu$ of $\mu$ in the topology of weak convergence, such that for
    \[
     Y^G_{h} \sim N_{r_h}(\bm A_{h}U_h,\bm B_{h}),\quad \wt{Y}^G_{\ell} \sim N_{{\wt r}_\ell}(\wt{\bm A}_\ell \wt U_\ell,\bm I_{{\wt r}_\ell}),\quad \mbox{and} \quad (U_1,\ldots,U_m,\wt{U}_1,\ldots,\wt{U}_{\wt m}) \sim \mu,
     \]
    the functions $u^G_{h}:\mathbb R^{r+\wt r} \rightarrow \R^{r_h}$ and $\wt{u}^G_{\ell}:\mathbb R^{r+\wt r} \rightarrow \R^{\wt r_\ell}$, given by 
 \begin{align}
 \label{eq:oracle_denoisers_g}
   &u^G_{h}(x_1,\ldots,x_m,\wt x_1,\ldots,\wt x_{\wt m};\mu)\\
   &=\mathbb E_\mu[U_h|Y^G_{1}=x_1,\ldots,Y^G_{m}=x_m,\wt{Y}^G_{1}=\wt x_1,\ldots,\wt{Y}^G_{\wt m}=\wt x_{\wt m}],  \quad \text{and}\\
 % \end{align}
 %  and  
 % \begin{align}
 \label{eq:oracle_ld_denoiser_g}
   &\wt{u}^G_{\ell}(x_1,\ldots,x_m,\wt x_1,\ldots,\wt x_{\wt m};\mu)\\
   &=\mathbb E_\mu[\tilde{U}_\ell|Y^G_{1}=x_1,\ldots,Y^G_{m}=x_m,\wt{Y}^G_{1}=\wt x_1,\ldots,\wt{Y}^G_{\wt m}=\wt x_{\wt m}],  
 \end{align}
 respectively, are uniformly Lipschitz over the neighborhood $O_\mu$ for all $h \in [m]$ and $\ell \in [\wt m]$. 
    \item For any positive semi-definite matrices $\bm A^R_h,\bm B^R_h \in \R^{r_h \times r_h}$, and $\nu_h \in \mathcal P_{\nu_h}$, $h \in [m]$, there exists an open neighborhood $O_{\nu_h}$ of $\nu_h$ in the topology of weak convergence, such that for 
    \[
    Y^G_{h,R} \sim N_{r_h}(\bm A^R_{h}V_h,\bm B^R_{h}), \quad \mbox{where} \quad V_h \sim \nu_h,
    \]
    the function $v^G_{h}:\R^{r_h} \rightarrow \R^{r_h}$ given by
    \begin{align}
    \label{eq:oracle_denoisers_g_r}
   &v^G_{h}(x;\nu_h)=\mathbb E_{\nu_h}[V_h|Y^G_{h,R}=x],  
 \end{align}
 is uniformly Lipschitz over the neighborhood $O_{\nu_h}$, for all $h \in [m]$.
\end{enumerate}
\end{asm}

\revsn{The first assumption on $\mu$ is a regularity condition that is \revzm{only} required for 
% the convergence of~\orchAMP, but is necessary to 
\revzm{establishing} the asymptotic coverage of the constructed prediction sets. \revzm{It} is valid for a wide range of distribution classes $\mathcal P$ including Gaussian mixture models. 
The second and \revzm{the} third assumptions guarantee identifiability of the latent factors $\bm U_h$, $\wt{\bm U}_\ell$, and the factor loading matrices $\bm V_h$ for $h \in [m]$ and $\ell \in [\wt m]$. 
The fourth assumption \revzm{can be generalized slightly to allow dependence among the $\bm{V}_h$'s, while we state it in its current form to avoid estimating higher-dimensional priors without clear benefits.}
% is not strictly needed for our theoretical analysis 
% , but is particularly relevant in the single-cell application considered here, where the factor loading matrices encode modality-specific information that is typically independent of the cell types. 
% The observed features in the high-dimensional modalities then arise through the interaction of these components within the factor model structure described in \eqref{eq:def_matrix_prop_asymp}.
}
Part 5 of Assumption \ref{asm:prior_1_mom}
is an eigengap assumption which lands all high-dimensional modality in the super-critical regime \cite[Theorem 2.8]{BENAYCHGEORGES2012120} and ensures for all $h \in [m]$ that the leading $r_h$ sample singular values of $\bm X_h$ are separated from the bulk
and so
the leading $r_h$ singular vectors of $\hdm_{h}$ have non-zero correlations with the true signals. 
The latter property is critical to the successful initialization of Algorithm \ref{alg:orchamp} with sample SVD.
While seemingly strong, it
holds for most motivating data examples in single-cell biology.
In the rest of this section,
we assume oracle knowledge of the matrices $\bm D_1,\ldots,\bm D_m$ for notational convenience\footnote{ 
Under Assumption \ref{asm:prior_1_mom} part 5,
the matrices $\{\bm D_h: h\in [m]\}$ are consistently estimated by the matrices $\{\wh{\bm D}_h: h \in [m]\}$ in \eqref{eq:est_sing_val} \cite[Theorem 2.8]{BENAYCHGEORGES2012120}, and
the same asymptotic analysis goes through with minimal modification if we replace the matrices $\{\bm D_h: h \in [m]\}$ with the estimated signal strength matrices $\{\wh{\bm D}_h: h \in [m]\}$.}.
The last two parts of Assumption \ref{asm:prior_1_mom} are satisfied by many commonly considered classes of priors such as priors supported on compact subsets of $\R^{r+\wt r}$ and $\R^{r_h}$, respectively, as well as Gaussian mixture models with bounded component means and component covariances bounded away from zero.

\iffalse
\begin{rem}
From a practical perspective, a more viable assumption about the distribution of $\{\bm W_h: h \in [m]\}$ is $(\bm W_{h})_{ij}=(\bm W_{h})_{ji} \sim N(0,\sigma^2_h)$ for $h \in [m], (i,j) \in [N] \times [p_h]$ and $\sigma^2_h>0$. However, we assume $\sigma^2_h=1$ in this chapter to avoid notational inconvenience. The factors $\sigma^2_h$ can be estimated from $\bm X_h$ as follows:
\[
\wh{\sigma}^2_h = \frac{\|\wh{\bm R}_h\|^2_F}{Np_h},
\]
where the matrices $\wh{\bm R}_h:=\hdm_h-\frac{1}{\sqrt{Np_h}}\upca_{0,h}\bm D_{0,h}(\vpca_{0,h})^\top$.
The deviation in the noise variance of $\bm X_h$ from \eqref{eq:def_matrix_prop_asymp} can be corrected by dividing it by $\wh{\sigma}_h$.
\end{rem}
\fi

\subsection{Asymptotic properties of OrchAMP estimators}
\subsubsection{Estimation of nuisance parameters and priors}
A function $\psi:\mathbb{R}^{p} \rightarrow \mathbb R$ is called \emph{pseudo-Lipschitz} if there exists a constant $C>0$ such that for any $\bm x, \bm y \in \mathbb{R}^{p}$ we have
\begin{align}
\label{eq:pseudo_lips}
    |\psi(\bm x)-\psi(\bm y)| \le C\left(1+\|\bm x\|+\|\bm y\|\right)\|\bm x-\bm y\|.
\end{align}
In the super-critical regime, the matrices $\{\upca_{0,h}, \vpca_{0,h}: h \in [m]\}$, satisfy the following properties.
\begin{prop}
\label{prop:singular_values}
For all $h \in [m]$, consider the matrices $\wb{\bm X}_h$ and their best rank $r_h$ approximations given by $\frac{1}{N}\upca_{0,h}\bm D_{0,h} (\vpca_{0,h})^\top$. 
Let the matrices $\{\hslmorp_{0,h},\hslvorp_{0,h},\hsrmorp_{0,h}, \revsn{\hsrvorp_{0,h}}:h\in [m]\}$ be defined by \eqref{eq:initializer_1} and \eqref{eq:initializers}.
Then, for any sequence of deterministic subsets $\mathcal F_L \subset [N]$ satisfying $\frac{|\mathcal F_L|}{N} \rightarrow \lambda_{L} \in (0,1]$, as $N \rightarrow \infty$, and any pseudo-Lipschitz function $\psi:\mathbb R^{2(r+\wt r)} \rightarrow \mathbb R$ \revzms{as defined in \eqref{eq:pseudo_lips}}, we have
\begin{equation} 
\label{eq:lim_u_pca}	
\begin{aligned}
    &\lim_{N \rightarrow \infty}\frac{1}{|\mathcal F_L|}\sum_{i\in \mathcal F_L}\psi\big((\upca_{0,1})_{i*},\dots,(\upca_{0,m})_{i*},(\wt{\bm X}_{1})_{i*},\dots,(\wt{\bm X}_{\wt m})_{i*},\\
    &\hskip 15em(\bm U_{1})_{i*},\ldots,(\bm U_{m})_{i*},(\wt{\bm U}_{1})_{i*},\ldots,(\wt {\bm U}_{\wt m})_{i*}\big)\\
    &\hskip 4em\overset{a.s.}{=} \mathbb{E}_{\mu}\left[\psi(Y_{0,1},\ldots,Y_{0,m},\wt{Y}_{0,1},\ldots,\wt{Y}_{0,\wt m},U_1,\ldots,U_m,\wt{U}_1,\ldots,\wt{U}_{\wt m})\right].
\end{aligned}
\end{equation}
Here, 
$\{Y_{0,h}:h\in [m]\}$ and $\{\wt{Y}_{0,\ell}:\ell\in [\wt m] \}$ are defined by \eqref{eq:compund_decision_singular_vector} and the collection
$(U_1,\ldots,U_m,\wt{U}_1,\ldots,\wt{U}_{\wt m}) \sim \mu$. 
Furthermore, conditional on $(U_1,\ldots,U_m,\wt U_1,\ldots,\wt U_{\wt m})$, the collection of random vectors $\{Y_{0,1},\ldots,Y_{0,m},\wt Y_{0,1},\ldots,\wt Y_{0,\wt m}\}$ are mutually independent.
Similarly, for all $h \in [m]$, if we consider any sequence of deterministic subsets $\mathcal F_{R,h} \subset [p_{h}]$ that satisfy $\frac{|\mathcal F_{R,h}|}{p_{h}} \rightarrow \lambda_{R,h} \in (0,1]$, as $p_h \rightarrow \infty$ and any pseudo-Lipschitz function $\varphi_h:\mathbb R^{2r_h} \rightarrow \mathbb R$, we have
\begin{align}
\label{eq:lim_v_pca}
    \lim_{p_h \rightarrow \infty}\frac{1}{|\mathcal F_{R,h}|}\sum_{j \in \mathcal F_{R,h}}\varphi_h((\vpca_{0,h})_{j*},(\bm V_{h})_{j*}) \overset{a.s.}{=} \mathbb{E}_{\nu_h}\left[\varphi_h(Y^R_{0,h},V_h)\right],
\end{align}
where $\{Y^R_{0,h}:h\in [m]\}$ is defined by \eqref{eq:compound_decision_model_2} and $V_h \sim \nu_h$.
\end{prop}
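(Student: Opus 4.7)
My plan is to reduce the proposition to a modality-wise application of standard spiked-matrix results and then patch these together using the independence of the noise matrices across modalities.

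First, I would work modality-by-modality on the high-dimensional matrices. Fix $h\in[m]$. Conditionally on $\bm U_h$ and $\bm V_h$, the matrix $\wb{\bm X}_h = \tfrac{1}{\sqrt N}(\bm U_h\bm D_h\bm V_h^\top/\sqrt N)\cdot\sqrt N + \tfrac{1}{\sqrt N}\bm W_h$ is a finite-rank perturbation of a GOE-type noise. Under Assumption \ref{asm:prior_1_mom} parts 2 and 5, each population singular value $(\bm D_h)_{kk}$ lies strictly above the BBP threshold $\gamma_h^{-1/4}$, and the normalizations $\bm U_h^\top\bm U_h/N\to\bm I_{r_h}$, $\bm V_h^\top\bm V_h/p_h\to\bm I_{r_h}$ hold almost surely. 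The spiked-model results of \citet{BENAYCHGEORGES2012120} then give the deterministic limits of $(\bm D_{0,h})_{kk}$ and the squared overlaps $\tfrac{1}{N}\langle (\upca_{0,h})_{*k},(\bm U_h)_{*k}\rangle^2 \to (s_{0,h}^L)_k^2$, $\tfrac{1}{p_h}\langle (\vpca_{0,h})_{*k},(\bm V_h)_{*k}\rangle^2\to (s_{0,h}^R)_k^2$ with the explicit values in \eqref{eq:initializers}, and moreover the sign normalization in \eqref{eq:identifiability_sign_eqn} makes the inner products themselves nonnegative. A stronger row-level statement follows from the same spectral analysis: there exist matrices $\bm Z_h^L\in\R^{N\times r_h}$ and $\bm Z_h^R\in\R^{p_h\times r_h}$, measurable with respect to $(\bm W_h,\bm U_h,\bm V_h)$, with entries asymptotically i.i.d.\ $N(0,1)$ and asymptotically independent of $(\bm U_h,\bm V_h)$, such that
\begin{equation*}
\frac{1}{\sqrt N}\bigl\|\upca_{0,h}-\bm U_h(\hslmorp_{0,h})^\top-\bm Z_h^L(\hslvorp_{0,h})^{1/2}\bigr\|_F\xrightarrow{a.s.}0,\quad \frac{1}{\sqrt{p_h}}\bigl\|\vpca_{0,h}-\bm V_h(\hsrmorp_{0,h})^\top-\bm Z_h^R(\hsrvorp_{0,h})^{1/2}\bigr\|_F\xrightarrow{a.s.}0.
\end{equation*}
This is the precise meaning of \eqref{eq:emp_sin_vec_asymp} and can be extracted from the resolvent analysis behind the BBP results, or alternatively from the AMP-with-spectral-initialization arguments of \citet{montanari2021}.

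Next I would pass from this Frobenius-norm approximation to the empirical-averages statement. Since the rows $\{((\bm U_1)_{i*},\dots,(\bm U_m)_{i*},(\wt{\bm U}_1)_{i*},\dots,(\wt{\bm U}_{\wt m})_{i*})\}_{i\in[N]}$ are i.i.d.\ from $\mu$ by assumption, since the rows of $\ldm_\ell$ are by construction conditionally Gaussian with mean $\bm L_\ell(\wt{\bm U}_\ell)_{i*}$ and identity covariance (matching the law of $\wt Y_{0,\ell}$), and since the noise matrices $\{\bm W_h:h\in[m]\}$ and $\{\wt{\bm W}_\ell:\ell\in[\wt m]\}$ are mutually independent and independent of the signals, the artificial noise vectors $\{(\bm Z_h^L)_{i*}:h\in[m]\}$ are asymptotically independent across $h$ and jointly independent of the signal rows and of $\{(\ldm_\ell)_{i*}\}$. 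Consequently, the oracle row vectors
\begin{equation*}
\widehat{\bm Y}_{0,i}:=\bigl(\bm U_h(\hslmorp_{0,h})^\top_{i*}+(\hslvorp_{0,h})^{1/2}(\bm Z_h^L)_{i*}\bigr)_{h\in[m]}\cup\bigl((\ldm_\ell)_{i*}\bigr)_{\ell\in[\wt m]}
\end{equation*}
are (asymptotically) i.i.d.\ with the joint distribution of $(Y_{0,1},\dots,Y_{0,m},\wt Y_{0,1},\dots,\wt Y_{0,\wt m})$ described in \eqref{eq:compund_decision_singular_vector}. A standard strong law of large numbers for pseudo-Lipschitz test functions of i.i.d.\ vectors with finite second moments (guaranteed by Assumption \ref{asm:prior_1_mom} part 2 and Gaussianity of noise) then yields \eqref{eq:lim_u_pca} when the sum is over all of $[N]$.

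To handle the restriction to the deterministic subset $\mathcal F_L$ with $|\mathcal F_L|/N\to\lambda_L$, I would use the pseudo-Lipschitz growth \eqref{eq:pseudo_lips} together with the Frobenius-norm approximation. Writing the difference between the empirical average over $\mathcal F_L$ using $\upca_{0,h}$ and the corresponding average using its oracle surrogate, Cauchy–Schwarz bounds it by a constant times the Frobenius approximation error divided by $\sqrt{|\mathcal F_L|}$, up to a factor involving the second moments of the vectors, which is uniformly bounded in $N$ almost surely. Because $|\mathcal F_L|\asymp N$ the approximation error still vanishes, and because the oracle surrogates are functions of i.i.d.\ rows the restricted empirical average over any deterministic index set of size $\lambda_L N$ converges almost surely to the same expectation by the SLLN applied to the subsequence. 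The proof of \eqref{eq:lim_v_pca} is analogous, now using $p_h\to\infty$ and the fact that the rows of $\bm V_h$ are i.i.d.\ from $\nu_h$ with $\mathbb E_{\nu_h}[V_hV_h^\top]=\bm I_{r_h}$; only a single modality appears because the $\bm V_h$ are mutually independent. The main technical obstacle is the first step: establishing the Frobenius-level (rather than overlap-level) approximation \eqref{eq:emp_sin_vec_asymp} with an explicit noise matrix $\bm Z_h^L$ that is asymptotically independent of the signal. Everything after that is bookkeeping plus a pseudo-Lipschitz LLN.
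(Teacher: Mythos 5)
Your overall strategy coincides with the paper's: reduce \eqref{eq:lim_u_pca} and \eqref{eq:lim_v_pca} to the Frobenius-norm approximations \eqref{eq:sing_vec_F}--\eqref{eq:sing_vec_G} via the pseudo-Lipschitz growth bound and Cauchy--Schwarz, then conclude with a strong law of large numbers; your treatment of the deterministic subsets $\mathcal F_L$ and $\mathcal F_{R,h}$ matches the paper's. The one place your argument is under-justified is precisely the step you flag as the main obstacle. The resolvent analysis behind \citet{BENAYCHGEORGES2012120} delivers only the limiting singular values and the overlaps $N^{-1}\bm U_h^\top\upca_{0,h}\to\hslmorp_{0,h}$; it does not by itself identify the \emph{distribution} of the component of $\upca_{0,h}$ orthogonal to the column space of $\bm U_h$, which is what \eqref{eq:emp_sin_vec_asymp} requires. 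The paper fills this in with a symmetry argument: conditioning on the signal matrices, the law of $\bm W_h$ is invariant under orthogonal maps fixing $\bm U_h$, so the Gram--Schmidt orthogonalization $\gsuperp_h$ of $\projperp_h\upca_{0,h}$ is Haar-distributed on the Stiefel manifold orthogonal to $\bm U_h$ and can be written as $\projperp_h\bm Z^L_h\bigl((\bm Z^L_h)^\top\projperp_h\bm Z^L_h/N\bigr)^{-1/2}$ for a genuine i.i.d.\ Gaussian matrix $\bm Z^L_h$, independent across $h$ given the signals; the SLLN then collapses the normalizing factor and yields \eqref{eq:sing_vec_F}. Your alternative citation to \citet{montanari2021} is the right one (their argument is of exactly this conditional-Haar type), but if you take the BBP route you must add this rotational-invariance step explicitly — the overlap limits alone do not produce the row-level Gaussian surrogate or its asymptotic independence from the signal rows.
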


This proposition substantiates the approximation in \eqref{eq:emp_sin_vec_asymp} and lays the theoretical foundation for the {empirical Bayes estimation of priors 
described in \eqref{eq:emp_bayes_1} and \eqref{eq:emp_bayes_2}.

\begin{rem}
The expectations on the right hand sides of \eqref{eq:lim_u_pca} and \eqref{eq:lim_v_pca} are with respect to the priors $\mu$ and $\{\nu_h:h \in [m]\}$ and the Gaussian measures used in defining $\{Y_{0,h},Y^R_{0,h}:h \in [m]\}$ and $\{\wt Y_{0,\ell}:\ell \in [\wt m]\}$. 
Thus, the quantities on the right sides are deterministic.
We have deliberately suppressed the Gaussian measures in the subscript of the expectation for conciseness. 
A similar convention is followed throughout the manuscript.
\end{rem}
Next, let us consider the estimates of the nuisance parameters used to initialize Algorithm \ref{alg:orchamp}, defined in \eqref{eq:est_sing_val}, \eqref{eq:scale-est} and \eqref{eq:sqrt_L}. The following lemma shows that these estimates are consistent estimators of the population quantities $\{\bm D_h:h \in [m]\}$, $\{\bm L_\ell:\ell \in [\wt m]\}$ and $\{\bm S^{\star,\mathrm{pc}}_{0,h},\bm \Sigma^{\star,\mathrm{pc}}_{0,h}:\star \in \{L,R\},\, h \in [m]\}$ defined in \eqref{eq:def_matrix_prop_asymp}, \eqref{eq:def_matrix_low_dim_asymp} and \eqref{eq:initializers}.

\begin{lem}
    \label{lem:consistency_nuisance}
    Under Assumptions \ref{asm:prior_1_mom}, the following properties hold as $N \rightarrow \infty$:
    \begin{enumerate}
        \item For all $h \in [m]$, 
        $
        \wh{\bm D}_h \xrightarrow{a.s} \bm D_h,
        $
        where $\wh{\bm D}_h \in \R^{r_h \times r_h}$ is defined in \eqref{eq:est_sing_val}.
        \item For all $\ell \in [\wt m]$, the estimator of $\bm L_\ell$ defined in \eqref{eq:sqrt_L} satisfies
$
\wh{\bm L}_\ell \xrightarrow{a.s.} \bm L_\ell.
$
\item For all $h \in [m]$ and $\star \in \{L,R\}$, $\wh{\bm S}^{\star,\mathrm{pc}}_{0,h} \xrightarrow{a.s}{\bm S}^{\star,\mathrm{pc}}_{0,h}$ and $\wh{\bm \Sigma}^{\star,\mathrm{pc}}_{0,h} \xrightarrow{a.s}{\bm \Sigma}^{\star,\mathrm{pc}}_{0,h}$, where the estimators $\wh{\bm S}^{\star,\mathrm{pc}}_{0,h},\wh{\bm \Sigma}^{\star,\mathrm{pc}}_{0,h}$ are defined in \eqref{eq:scale-est}.
    \end{enumerate}
\end{lem}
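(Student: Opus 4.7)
The plan is to deduce all three claims from two standard ingredients: the BBP-type almost-sure convergence of the leading sample singular values of a spiked Wigner/Wishart matrix (as in \cite{BENAYCHGEORGES2012120}, Theorem 2.8) and the strong law of large numbers applied to the low-dimensional signal-plus-noise matrices, together with the continuous mapping theorem applied to the explicit formulas \eqref{eq:est_sing_val}, \eqref{eq:scale-est}, and \eqref{eq:sqrt_L}.

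For Part 1, I would first invoke the spiked model asymptotics from \cite{BENAYCHGEORGES2012120}: under Assumption \ref{asm:prior_1_mom} part 5, for every $h\in[m]$ and $k\in[r_h]$ with $(\bm D_h)_{kk}>\gamma_h^{-1/4}$, the $k$-th sample singular value of $\wb{\bm X}_h$ satisfies
\[
(\bm D_{0,h})^2_{kk} \xrightarrow{a.s.} \frac{\bigl(1+(\bm D_h)^2_{kk}\bigr)\bigl(1+\gamma_h(\bm D_h)^2_{kk}\bigr)}{(\bm D_h)^2_{kk}}.
\]
The map in \eqref{eq:est_sing_val} that produces $(\wh{\bm D}_h)^2_{kk}$ from $(\bm D_{0,h})^2_{kk}$ is the functional inverse of the above display on the super-critical branch, and it is a continuous function of $(\bm D_{0,h})^2_{kk}$ on a neighborhood of the limit. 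Composing with the a.s.~limit and using the continuous mapping theorem yields $(\wh{\bm D}_h)^2_{kk}\xrightarrow{a.s.}(\bm D_h)^2_{kk}$; taking positive square roots gives $\wh{\bm D}_h\xrightarrow{a.s.}\bm D_h$.

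For Part 2, I would expand
\[
\tfrac{1}{N}\wt{\bm X}_\ell^\top\wt{\bm X}_\ell
= \bm L_\ell\bigl(\tfrac{1}{N}\wt{\bm U}_\ell^\top\wt{\bm U}_\ell\bigr)\bm L_\ell^\top
+ \bm L_\ell\bigl(\tfrac{1}{N}\wt{\bm U}_\ell^\top\wt{\bm W}_\ell\bigr)
+ \bigl(\tfrac{1}{N}\wt{\bm W}_\ell^\top\wt{\bm U}_\ell\bigr)\bm L_\ell^\top
+ \tfrac{1}{N}\wt{\bm W}_\ell^\top\wt{\bm W}_\ell
\]
and handle each term separately. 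By Assumption \ref{asm:prior_1_mom} part 2 and the strong law, $\tfrac{1}{N}\wt{\bm U}_\ell^\top\wt{\bm U}_\ell\xrightarrow{a.s.}\bm I_{\wt r_\ell}$; since $\wt{\bm W}_\ell$ is iid Gaussian with finite dimension $\wt r_\ell$, the cross term $\tfrac{1}{N}\wt{\bm U}_\ell^\top\wt{\bm W}_\ell\xrightarrow{a.s.}\bm 0$ by the SLLN applied entrywise, and $\tfrac{1}{N}\wt{\bm W}_\ell^\top\wt{\bm W}_\ell\xrightarrow{a.s.}\bm I_{\wt r_\ell}$. Combining these, $\tfrac{1}{N}\wt{\bm X}_\ell^\top\wt{\bm X}_\ell-\bm I_{\wt r_\ell}\xrightarrow{a.s.}\bm L_\ell\bm L_\ell^\top=\bm L_\ell^2$ using symmetry of $\bm L_\ell$. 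Finally, the matrix square root is continuous on the set of positive-definite matrices, and since $\bm L_\ell^2$ is positive-definite (as $\bm L_\ell$ is symmetric and full-rank under the standard PSD sign convention used for identifiability), the continuous mapping theorem yields $\wh{\bm L}_\ell\xrightarrow{a.s.}(\bm L_\ell^2)^{1/2}=\bm L_\ell$.

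Part 3 is then immediate: the formulas in \eqref{eq:initializers}, which define $\wh{\bm S}^{\star,\mathrm{pc}}_{0,h}$ and $\wh{\bm \Sigma}^{\star,\mathrm{pc}}_{0,h}$ from $\wh{\bm D}_h$ via \eqref{eq:scale-est}, are continuous functions of $(\wh{\bm D}_h)_{kk}$ on a neighborhood of $(\bm D_h)_{kk}$ (the denominators are nonzero under Assumption \ref{asm:prior_1_mom} part 5). Applying the continuous mapping theorem to the conclusion of Part 1 gives the claim. The main obstacle I anticipate is the bookkeeping in Part 2 around the matrix square root when $\bm L_\ell$ is only assumed symmetric and full-rank; one either adopts the PSD convention used for identifiability in the factor model, or equivalently identifies $\bm L_\ell$ only up to its absolute-value matrix and reinterprets the product $\wt{\bm U}_\ell\bm L_\ell^\top$ accordingly. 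Everything else amounts to standard applications of the SLLN and continuous mapping.
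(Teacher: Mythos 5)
Your proof is correct and follows essentially the same route as the paper's (which is stated in one line): Part 1 via \cite[Theorem 2.8]{BENAYCHGEORGES2012120} together with the continuity of the inversion formula \eqref{eq:est_sing_val}, Part 2 via the strong law of large numbers applied to the expansion of $\tfrac{1}{N}\wt{\bm X}_\ell^\top\wt{\bm X}_\ell$ (using $\mathbb E_\mu[\wt U_\ell\wt U_\ell^\top]=\bm I_{\wt r_\ell}$ and the symmetry of $\bm L_\ell$), and Part 3 via continuous mapping from Part 1. One minor bookkeeping slip: for \eqref{eq:est_sing_val} to invert consistently, the almost-sure limit of $(\bm D_{0,h})^2_{kk}$ must be $\bigl(1+(\bm D_h)^2_{kk}\bigr)\bigl(1+\gamma_h(\bm D_h)^2_{kk}\bigr)/\bigl(\gamma_h(\bm D_h)^2_{kk}\bigr)$, so your displayed limit is missing a factor of $\gamma_h$ in the denominator; this does not affect the structure of the argument.
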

Finally, consider the estimated priors $\wh \mu$ and $\{\wh \nu_{h}:h \in [m]\}$ defined in \eqref{eq:emp_bayes_1} and \eqref{eq:emp_bayes_2}, respectively. The following lemma shows that they converge to the true data generating priors $\mu$ and $\{\nu_h:h \in [m]\}$ as $N \rightarrow \infty$ if the nonparametric classes are properly chosen.
\begin{lem}
\label{lem:weak_conv_emp_bayes}
    If $\mu \in \mathcal P$ and $\nu_h \in \mathcal P_{\nu_h}$ for all $h \in [m]$; then as $N \rightarrow \infty$, $\wh{\mu} \overset{w}{\rightarrow} \mu$ and $\wh{\nu}_{h} \overset{w}{\rightarrow} \nu_h$, almost surely.
\end{lem}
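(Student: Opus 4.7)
The plan is to establish $\wh\mu \xrightarrow{w} \mu$ almost surely via a Wald-type consistency argument for nonparametric maximum likelihood; the argument for $\wh \nu_h \xrightarrow{w} \nu_h$ is entirely analogous and slightly simpler because only one modality is involved. I would organize the proof into three stages: (i) identify the asymptotic iid-like structure of the rows appearing in the likelihood \eqref{eq:lik-U}, (ii) reduce the plug-in likelihood \eqref{eq:emp_bayes_1} to an oracle likelihood with known nuisance parameters, and (iii) apply an NPMLE consistency argument of Wald/Kiefer--Wolfowitz type to the oracle problem.

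For stage (i), Proposition \ref{prop:singular_values} says that the empirical distribution of the rows of the concatenated matrix $[\upca_{0,1}\,|\,\cdots\,|\,\upca_{0,m}\,|\,\ldm_{1}\,|\,\cdots\,|\,\ldm_{\wt m}]$ converges weakly almost surely to the joint law of $(Y_{0,1},\ldots,Y_{0,m},\wt Y_{0,1},\ldots,\wt Y_{0,\wt m})$ of \eqref{eq:compund_decision_singular_vector}, which is a Gaussian location mixture with mixing distribution $\mu$ and kernels parameterized by $\{\hslmorp_{0,h},\hslvorp_{0,h}\}$ and $\{\bm L_\ell\}$. Thus the rows asymptotically behave like an iid sample from the correct mixture model, which is the standard setting for NPMLE consistency.

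For stage (ii), Lemma \ref{lem:consistency_nuisance} yields $\hslmp_{0,h} \to \hslmorp_{0,h}$, $\hslvp_{0,h} \to \hslvorp_{0,h}$, and $\wh{\bm L}_\ell \to \bm L_\ell$ almost surely, and Assumption \ref{asm:prior_1_mom} part 5 guarantees that the limiting covariances are positive definite. Since a multivariate Gaussian density is jointly continuous in its mean and covariance parameters, a routine continuity argument (together with the moment bounds in Assumption \ref{asm:prior_1_mom} parts 1--2) shows that replacing $\hslmp_{0,h}, \hslvp_{0,h}, \wh{\bm L}_\ell$ in \eqref{eq:emp_bayes_1} by their oracle counterparts perturbs the normalized log-likelihood $\frac{1}{N}\log L(\mu';\cdot)$ by $o(1)$ pointwise in $\mu'$, and uniformly over tight subfamilies of $\mathcal P$. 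So it suffices to prove consistency of the oracle NPMLE. Combining stage (i) with a Glivenko--Cantelli step over $\mathcal P$ (using the uniform density bound from Assumption \ref{asm:prior_1_mom} part 1), the normalized oracle log-likelihood converges a.s.\ to $\E_{\mu}[\log f_{\mu'}(Y_{0,1},\ldots,\wt Y_{0,\wt m})]$, where $f_{\mu'}$ denotes the mixture density under mixing distribution $\mu'$; by Gibbs' inequality this limit is uniquely maximized at $\mu' = \mu$, because Gaussian location mixtures with non-degenerate kernels are identifiable.

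A Wald-type compactness argument, combined with tightness of $\{\wh \mu\}$ (which follows from Assumption \ref{asm:prior_1_mom} part 2 yielding uniform second-moment control of the likelihood-maximizing sequence), then gives $\wh \mu \xrightarrow{w} \mu$ almost surely. I expect the main obstacle to be a clean uniform control of the log-likelihood over the potentially infinite-dimensional class $\mathcal P$; the cleanest workaround is a subsequence argument whereby any subsequential weak limit $\mu^\star$ of $\wh \mu$ must attain the supremum of $\mu' \mapsto \E_\mu[\log f_{\mu'}]$, which by identifiability forces $\mu^\star = \mu$. A secondary but minor concern is that the rows are not truly iid; however Proposition \ref{prop:singular_values} already supplies the necessary averaging for all bounded continuous (indeed pseudo-Lipschitz) functionals, which is enough for the empirical-process machinery here. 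Replacing \eqref{eq:lim_u_pca} by \eqref{eq:lim_v_pca} and repeating the argument modality-by-modality yields $\wh \nu_h \xrightarrow{w} \nu_h$ for each $h \in [m]$.
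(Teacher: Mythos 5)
Your proposal is correct and follows essentially the same route as the paper: the paper's proof consists of Lemma \ref{lem:consistency_of_NPMLE} (an approximate-MLE/KL-regularity condition implies weak convergence, proved by the Wald--Kiefer--Wolfowitz-type argument of Lemma B.2 of \citet{eb_pca} together with the nuisance consistency of Lemma \ref{lem:consistency_nuisance}), combined with the verification that the plug-in NPMLE satisfies that condition (Corollary B.7 of \citet{eb_pca}), which matches your three stages of asymptotic iid structure via Proposition \ref{prop:singular_values}, reduction to the oracle likelihood, and identifiability of the Gaussian location mixture. The only point to keep in mind when writing this out in full is that $\log f_{\mu'}$ is not globally pseudo-Lipschitz without a truncation step, which is exactly how \citet{eb_pca} (and hence the paper) handles the uniform control you flag as the main obstacle.
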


\subsubsection{Asymptotics of OrchAMP iterates}

To analyze the asymptotic behavior of OrchAMP iterates \eqref{eq:orc_amp_emp_bayes} and \eqref{eq:low-dim-iter},
we begin with their oracle counterparts
constructed using the true priors $\mu$ and $\{\nu_h:h\in [m]\}$. 
Define
$\hslmor_{t,h}$, $\hslvor_{t,h}$, $\hsrmor_{t,h}$ and $\hsrvor_{t,h} \in \mathbb{R}^{r_h \times r_h}$, for all integers $t \ge 0$, recursively as follows:
\begin{equation}
\begin{aligned}
\label{eq:state_evol_gen}
    \hslvor_{t,h} &= \gamma_h\; \mathbb{E}_{\nu_h}[v^{\orc}_{t,h}(Y^{R,\orc}_{t,h};\nu_h)^{\otimes 2}],\\ 
    \hslmor_{t,h} &= \gamma_h\; \mathbb{E}_{\nu_h}[v^{\orc}_{t,h}(Y^{R,\orc}_{t,h};\nu_h)V^\top_h]\bm D_h,\\
    \hsrvor_{t+1,h} &= \mathbb{E}_{\mu}[u^{\orc}_{t,h}(Y^{\orc}_{t,1},\ldots,Y^{\orc}_{t,m},\wt{Y}_{0,1},\ldots,\wt{Y}_{0,\wt m};\mu)^{\otimes 2}], \\
    \hsrmor_{t+1,h} &= \mathbb{E}_{\mu}[u^{\orc}_{t,h}(Y^{\orc}_{t,1},\ldots,Y^{\orc}_{t,m},\wt{Y}_{0,1},\ldots,\wt{Y}_{0,\wt m};\mu)U^\top_h]\bm D_h,
\end{aligned}
\end{equation}
with $\hsrvor_{0,h}=\hsrvorp_{0,h}$ and $\hsrmor_{0,h}=\hsrmorp_{0,h}$.
In the above equations, for all $h \in [m]$ and $\ell \in [\wt m]$,
\begin{align}
\label{eq:y_orcs}
    Y^{\orc}_{t,h} \sim N_{r_h}(\hslmor_{t,h} U_h,\hslvor_{t,h})\quad \mbox{and}\quad \wt{Y}_{0,\ell} \sim N_{{\wt r}_h}(\bm L_\ell \wt{U}_\ell,\bm I_{\wt r_{\ell}}),
\end{align}
with $(U_1,\ldots,U_m,\wt U_1,\ldots,\wt U_{\wt m}) \sim \mu$, and conditional on $(U_1,\ldots,U_m,\wt U_1,\ldots,\wt U_{\wt m})$, the elements in the collection $\{Y^\orc_{t,1},\ldots,Y^\orc_{t,m},\wt Y_{0,1},\ldots,\wt Y_{0,\wt m}\}$ are mutually independent. 
Similarly, for all $h \in [m]$,
\begin{equation}
\label{eq:y_orc_2}
    Y^{R,\orc}_{t,h} \sim N_{r_h}(\hsrmor_{t,h} V_h,\hsrvor_{t,h}),\; \mbox{where} \; V_h \sim \nu_h.
\end{equation}
In addition, for $h \in [m]$ and $\ell \in [\wt m]$, the \emph{oracle denoisers}, $\{v^\orc_{t,h}\}:\mathbb{R}^{r_h} \rightarrow \mathbb{R}^{\revzms{r_h}}$, $\{u^\orc_{t,h}\}:\mathbb R^{r+{\wt r}} \rightarrow \mathbb R^{r_h}$ and $\{\wt u^\orc_{t,\ell}\}:\mathbb R^{r+{\wt r}} \rightarrow \mathbb R^{\wt r_\ell}$, are defined as follows:
\begin{equation}
\label{eq:oracle_denoisers}	
\begin{aligned}
   &v^\orc_{t,h}(x;\nu_h)=\mathbb{E}_{\nu_h}[V_h|Y^{R,\orc}_{t,h}=x]\\
   &u^\orc_{t,h}(x_1,\ldots,x_m,\wt{x}_1,\ldots,\wt{x}_{\wt m};\mu)\\
   &\hskip 2em=\mathbb{E}_{\mu}[U_h|Y^{\orc}_{t,1}=x_1,\ldots,Y^{\orc}_{t,m}=x_m,\wt{Y}_{0,1}=\wt x_1,\ldots,\wt{Y}_{0,\wt m}=\wt x_{\wt m}],\\
   & \wt{u}^\orc_{t,\ell}(x_1,\ldots,x_m,\wt{x}_1,\ldots,\wt{x}_{\wt m};\mu)\\
   &\hskip 2em=\mathbb{E}_{\mu}[\wt{U}_\ell|Y^{\orc}_{t,1}=x_1,\ldots,Y^{\orc}_{t,m}=x_m,\wt{Y}_{0,1}=\wt x_1,\ldots,\wt{Y}_{0,\wt m}=\wt x_{\wt m}].
\end{aligned} 
\end{equation}
Now, for $h \in [m]$ consider the set of iterates, $\{\buor_{t,h},\bvor_{t,h}\}_{t \ge 0}$ given by:
\begin{equation}
\label{eq:orc_amp_basic}	
\begin{aligned}
    & \bvor_{t,h} = v^\orc_{t,h}(\vor_{t,h};\nu_h),\\
    & \uor_{t,h} = \wb{\bm X}_h\bvor_{t,h}-\gamma_h\buor_{t-1,h}(\jacror_{t,h}(\bm V^\orc_{t,h};\nu_h))^\top\\
    &\buor_{t,h} = u^\orc_{t,h}(\uor_{t,1},\ldots,\uor_{t,m},\wt{\bm X}_1,\ldots,\wt{\bm X}_{\wt m};\mu),\\
    & \vor_{t+1,h} = \wb{\bm X}^\top_h\buor_{t,h}-\bvor_{t,h}(\jaclor_{t,h}(\bm U^\orc_{t,1},\ldots,\bm U^\orc_{t,m},\ldm_1,\ldots,\ldm_{\wt m};\mu))^\top,
\end{aligned}
\end{equation}
where $\buor_{-1,h}=\wb{\bm U}_{-1,h}$ and $\bm V^\orc_{0,h}=\bm V_{0,h}$.
Here, the matrices 
\[u^\orc_{t,h}(\bm U^\orc_{t,1},\ldots,\bm U^\orc_{t,m},\ldm_1,\ldots,\ldm_{\wt m}; \mu) \in \mathbb R^{N \times r_h}\quad \mbox{and}\quad v^\orc_{t,h}(\bm V^\orc_{t,h}; \nu_h) \in \mathbb R^{p_h \times r_h}\] are constructed by applying $u^\orc_{t,h}$ to the collection of rows of the embeddings $(\bm U^\orc_{t,1},\ldots,\bm U^\orc_{t,m},\newline\ldm^\orc_1,\ldots,\ldm^\orc_{\wt m})$ and $v^\orc_{t,h}$ to the rows of $\bm V^\orc_{t,h}$, respectively, as in \eqref{eq:rowwise}.
Further, the matrix $\jacror_{t,h}(\bm V^\orc_{t,h};\nu_h) \in \R^{r_h \times r_h}$ is defined entrywise as follows: For $(i,j) \in [r_h] \times [r_h]$,
\begin{align}
\label{eq:jacror}
\left[\jacror_{t,h}(\bm V^\orc_{t,h};\nu_h)\right]_{ij}= \frac{1}{p_h}\sum_{k=1}^{p_h}\frac{\partial v^\orc_{t,h,i}}{\partial x_j}\left((\bm V^\orc_{t,h})_{k*};\nu_h\right),
\end{align}
where $v^\orc_{t,h}(\cdot)=(v^\orc_{t,h,1}(\cdot),\ldots,v^\orc_{t,h,r_h}(\cdot))^\top$ and $\partial v^\orc_{t,h,i}/\partial x_j$ is the partial derivative of $v^\orc_{t,h,i}$ with respect to its $j$-th argument. Similarly, for $(x_1,\ldots,x_m,\wt x_1,\ldots,\wt x_{\wt m}) \in \R^{r+\wt r}$, let $\jacor_{t,h}(x_1,\ldots,x_m,\wt x_1,\ldots,\wt x_{\wt m};\mu)$ be the Jacobian of the function $u^\orc_{t,h}(x_1,\ldots,x_m,\wt x_1,\ldots,\wt x_{\wt m};\mu)$ and  $[\jacor_{t,h}(x_1,\ldots,x_m,\wt x_1,\ldots,\wt x_{\wt m};\mu)]_{*\mathcal I_h}$ be the submatrix of $\jacor_{t,h}$ formed by selecting the columns in $\mathcal I_h$, where $\mathcal I_h$ is defined in \eqref{eq:mathcali}. 
Then, the matrix $\jaclor_{t,h}$
in the last line of \eqref{eq:orc_amp_basic} is defined as
\begin{equation}
\label{eq:jaclor}	
\begin{aligned}
&\jaclor_{t,h}(\bm U^\orc_{t,1},\ldots,\bm U^\orc_{t,m},\ldm_1,\ldots,\ldm_{\wt m};\mu)\\
&~~~~~~~~~= \frac{1}{N}\sum_{k=1}^{N}[\jacor_{t,h}((\bm U^\orc_{t,1})_{k*},\ldots,(\bm U^\orc_{t,m})_{k*},(\ldm_1)_{k*},\ldots,(\ldm_{\wt m})_{k*};\mu)]_{*\mathcal I_h},
\end{aligned}
\end{equation}
Finally,
for $\ell \in [\wt m]$ and all $t\geq 0$, define 
\[
\wt{\bm U}^\orc_{t,\ell} = \wt{u}^\orc_{t,\ell}(\uor_{t,1},\ldots,\uor_{t,m},\wt{\bm X}_1,\ldots,\wt{\bm X}_{\wt m};\mu),
\]
where the matrices $\wt{u}^\orc_{t,\ell}(\uor_{t,1},\ldots,\uor_{t,m},\wt{\bm X}_1,\ldots,\wt{\bm X}_{\wt m};\mu) \in \R^{N \times \wt r_\ell}$ are constructed by applying the function $\wt{u}^\orc_{t,\ell}$ rowwise to the collection of matrices $\uor_{t,1},\ldots,\uor_{t,m}$ and $\wt{\bm X}_1,\ldots,\wt{\bm X}_{\wt m}$ as in \eqref{eq:low_dim_rowwise}.

With the aid of the foregoing definitions of oracle quantities, the following theorem characterizes the large sample asymptotics of the (empirical Bayes) OrchAMP iterates.
\begin{thm}
\label{thm:em_bayes_state_evol}
Under Assumption \ref{asm:prior_1_mom}, we have the following for all $t \ge 0$:
    \begin{enumerate}
        \item As $N \rightarrow \infty$, for all $h \in [m]$,
        % \[
        $\frac{1}{N}\|\bm U_{t,h}-\uor_{t,h}\|^2_F \xrightarrow{a.s.} 0$,
        % \quad
        $\frac{1}{p_h}\|\bm V_{t,h}-\vor_{t,h}\|^2_F \xrightarrow{a.s.} 0$,
        % \]
        and for all $\ell \in [\wt m]$,
         % \[
        $\frac{1}{N}\|\wt{\bm U}_{t,\ell}-\wt{\bm U}^\orc_{t,\ell}\|^2_F \xrightarrow{a.s.} 0$.
        % \]
		
\item As $N \rightarrow \infty$, for all $h \in [m]$,
% \[
$\wh{\bm S}^{\star}_{t,h} \xrightarrow{a.s.} {\bm S}^{\star,\orc}_{t,h}$,
  % \quad \mbox{and} \quad
  and
  $\wh{\bm \Sigma}^{\star}_{t,h} \xrightarrow{a.s.} {\bm \Sigma}^{\star,\orc}_{t,h}$,
  % \quad 
  \mbox{for $\star\in \{L,R\}$.
  }
% \]
		
        \item For any sequence of deterministic subsets $\mathcal F_L \subset [N]$ satisfying $\frac{|\mathcal F_L|}{N} \rightarrow \lambda_{L} \in (0,1]$ as $N \rightarrow \infty$, and any  pseudo-Lipschitz function $\psi:\mathbb R^{2(r+\wt r)} \rightarrow \mathbb R$, 
\begin{equation}
\label{eq:orc_iterate_lim_u}	
\begin{aligned}
     &\lim_{N \rightarrow \infty}\frac{1}{|\mathcal F_L|}\sum_{i \in \mathcal F_L}\psi\Big((\bm U_{t,1})_{i*},\ldots,(\bm U_{t,m})_{i*},(\wt{\bm X}_{1})_{i*},\dots,(\wt{\bm X}_{\wt m})_{i*},\\
     & \hskip 15em (\bm U_{1})_{i*},\ldots,(\bm U_{m})_{i*},
	 (\wt{\bm U}_1)_{i*},\ldots,(\wt{\bm U}_{\wt m})_{i*}\Big)\\
     &\hskip 4em\overset{a.s.}{=} \mathbb{E}_\mu\left[\psi(Y^\orc_{t,1},\ldots,Y^\orc_{t,m},\wt{Y}_{0,1},\ldots,\wt{Y}_{0,\wt m},U_1,\ldots,U_m,\tilde U_1,\ldots,\wt U_{\wt m})\right],
\end{aligned}
\end{equation}
where $(U_1,\ldots,U_m,\wt{U}_1,\ldots,\wt{U}_{\wt m}) \sim \mu$. Here, $\{Y^\orc_{t,h}:h\in [m]\}$ and $\{\wt Y_{0,\ell}:\ell\in [\wt m]\}$ are defined in \eqref{eq:y_orcs}.
Furthermore, for all $h \in [m]$, if we consider any sequence of deterministic subsets $\mathcal F_{R,h} \subset [p_{h}]$ that satisfy $\frac{|\mathcal F_{R,h}|}{p_{h}} \rightarrow \lambda_{R,h} \in (0,1]$ as $p_h \rightarrow \infty$, and any pseudo-Lipschitz function $\varphi_h:\mathbb R^{2r_h} \rightarrow \mathbb R$, we have,
\begin{align}
\label{eq:orc_iterate_lim_v}
    \lim_{p_h \rightarrow \infty}\frac{1}{|\mathcal F_{R,h}|}\sum_{j \in \mathcal F_{R,h}}\varphi_h((\bm V_{t,h})_{j*},(\bm V_{h})_{j*}) \overset{a.s.}{=} \mathbb{E}_{\nu_h}\left[\varphi_h(Y^{R,\orc}_{t,h},V_h)\right],
\end{align}
where $V_h \sim \nu_h$ and $Y^{R,\orc}_{t,h}$ is defined in \eqref{eq:y_orc_2}.
    \end{enumerate} 
\end{thm}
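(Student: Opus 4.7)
The plan is to first establish the theorem for the \emph{oracle} version of the algorithm (with the true priors $\mu,\{\nu_h\}$ plugged into the denoisers and with oracle state evolution parameters used in the Onsager corrections), and then to transfer the result to the data-driven empirical Bayes iterates by combining the uniform Lipschitz assumptions (Assumption \ref{asm:prior_1_mom} parts 6--7) with the weak convergence of $\wh\mu,\wh\nu_h$ from Lemma \ref{lem:weak_conv_emp_bayes}. The three conclusions in the theorem are proved in a single induction on $t$: part (3) (the state evolution limit) for the oracle iterates feeds part (2) (consistency of $\wh{\bm S}^{\star}_{t,h},\wh{\bm \Sigma}^{\star}_{t,h}$), and parts (2) and (3) together feed part (1) at the next step.

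For the oracle analysis, I would extend the standard single-matrix low-rank AMP state evolution of \citet{BM11journal} and its PCA-initialized variant \citet{montanari2021} to the multi-orbit setting. Proposition \ref{prop:singular_values} already furnishes the base case $t=0$: the leading sample singular vectors $\upca_{0,h},\vpca_{0,h}$ behave jointly as signal plus independent Gaussian noise matrices $\bm Z^L_h,\bm Z^R_h$, with covariances $\hslvorp_{0,h},\hsrvorp_{0,h}$ and correlation structure $\hslmorp_{0,h},\hsrmorp_{0,h}$ with $\bm U_h,\bm V_h$. Given this, the standard conditioning argument of \citet{BM11journal} (treating $\wb{\bm X}_h$ as conditioned on its projection onto $\bm U_h,\bm V_h$) applies orbit-by-orbit to the deterministic denoisers $v^{\orc}_{t,h},u^{\orc}_{t,h},\wt u^{\orc}_{t,\ell}$. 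The cross-orbit coupling only enters through the row-wise denoisers $u^{\orc}_{t,h}$ and $\wt u^{\orc}_{t,\ell}$, which jointly denoise $(\uor_{t,1},\dots,\uor_{t,m},\wt{\bm X}_1,\dots,\wt{\bm X}_{\wt m})$ under the joint prior $\mu$; since $\{\wb{\bm X}_h\}_{h=1}^m$ are mutually independent conditional on $\{\bm U_h,\bm V_h\}$ and $\{\wt{\bm X}_\ell\}$ are also independent, the per-orbit Bolthausen-style conditioning can be executed in parallel and the resulting pseudo-Lipschitz limits assemble into the joint expectation in \eqref{eq:orc_iterate_lim_u}, delivering part (3) at iteration $t$. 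Parts (1)--(2) for the oracle then follow from the empirical concentration of the defining quadratic forms $\tfrac{1}{N}(\buor_{t,h})^\top\buor_{t,h}$, etc., by taking $\psi$ to be products of coordinates.

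To move from oracle to empirical Bayes, I would run an induction with hypothesis
\begin{equation*}
\tfrac{1}{N}\|\bm U_{t,h}-\uor_{t,h}\|_F^2+\tfrac{1}{p_h}\|\bm V_{t,h}-\vor_{t,h}\|_F^2+\tfrac{1}{N}\|\wt{\bm U}_{t,\ell}-\wt{\bm U}^{\orc}_{t,\ell}\|_F^2\xrightarrow{a.s.} 0.
\end{equation*}
The base case holds trivially since the two algorithms share the same initialization. For the induction step I write, e.g.,
\begin{equation*}
\bv_{t,h}-\bvor_{t,h}=\bigl[v_{t,h}(\bm V_{t,h};\wh\nu_h)-v_{t,h}(\vor_{t,h};\wh\nu_h)\bigr]+\bigl[v_{t,h}(\vor_{t,h};\wh\nu_h)-v^{\orc}_{t,h}(\vor_{t,h};\nu_h)\bigr].
\end{equation*}
The first bracket is controlled by the inductive hypothesis together with Lipschitzness of $v_{t,h}(\cdot;\wh\nu_h)$, which holds uniformly in $\wh\nu_h$ in a neighborhood of $\nu_h$ by Assumption \ref{asm:prior_1_mom}(7); the second bracket tends to zero row-wise because $\wh\nu_h\to\nu_h$ weakly (Lemma \ref{lem:weak_conv_emp_bayes}) and $\wh{\bm S}^{R,\mathrm{pc}}_{0,h},\wh{\bm \Sigma}^{R,\mathrm{pc}}_{0,h}\to {\bm S}^{R,\mathrm{pc}}_{0,h},{\bm \Sigma}^{R,\mathrm{pc}}_{0,h}$ (Lemma \ref{lem:consistency_nuisance}), so the conditional expectation defining the denoiser converges for almost every argument and Fatou-plus-uniform-boundedness upgrades this to $L^2$ convergence. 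The same decomposition works for $u_{t,h}$ and $\wt u_{t,\ell}$ using Assumption \ref{asm:prior_1_mom}(6) and the weak convergence $\wh\mu\to\mu$. Propagation through the linear steps $\wb{\bm X}_h\bv_{t,h}$ and through the Onsager correction only needs spectral norm bounds on $\wb{\bm X}_h$, which are $O(1)$ almost surely, and convergence of the Jacobian trace terms $\bm J^R_{t,h},\bm J^L_{t,h}$; the latter is a consequence of part (3) applied to the derivative functions (which are themselves pseudo-Lipschitz in a neighborhood by Assumption \ref{asm:prior_1_mom}(6)--(7)). Part (2) for the empirical iterates follows from part (1) and part (3) for the oracle, since $\wh{\bm S}^{\star}_{t,h},\wh{\bm \Sigma}^{\star}_{t,h}$ are quadratic forms in the iterates, which agree with the oracle ones in the limit.

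The main obstacle, as in all PCA-initialized AMP analyses, lies in the oracle state evolution at $t=0$: one must show that the dependence between the PCA initializer $\upca_{0,h}$ (which is a complicated function of the entire noise $\bm W_h$) and the subsequent noise matrices used in the conditioning argument does not spoil the Gaussian-like behavior claimed in \eqref{eq:emp_sin_vec_asymp}. I would handle this exactly as in \citet{montanari2021}, using the fact that conditional on the signal projection the residual matrix is again Gaussian and the PCA output admits a decomposition into a signal component and an independent Gaussian component up to $o_{P}(1)$; the multi-orbit version then follows because the $m$ matrices $\wb{\bm X}_h$ are independent and their PCA initializers can be coupled independently. A secondary technical point is verifying that the pseudo-Lipschitz test functions in \eqref{eq:orc_iterate_lim_u} can be restricted to deterministic subsets $\mathcal F_L$ with limiting density $\lambda_L$; this is immediate by exchangeability of the row indices once the full-row empirical average has been established.
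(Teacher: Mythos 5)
Your proposal follows essentially the same route as the paper: the oracle iterates are analyzed via a Montanari--Venkataramanan-style Gaussian conditioning argument with the PCA initialization supplied by Proposition \ref{prop:singular_values}, and the empirical-Bayes iterates are then coupled to the oracle ones by an induction on $t$ using exactly your two-term triangle decomposition, with one bracket controlled by the uniform Lipschitzness in Assumption \ref{asm:prior_1_mom}(6)--(7) and the other by the convergence $\wh\mu\to\mu$, $\wh\nu_h\to\nu_h$ and Lemma \ref{lem:consistency_nuisance}. The only loose point is your claim that the second bracket is handled by pointwise convergence plus ``Fatou-plus-uniform-boundedness'': the posterior-mean denoisers are not bounded, and the paper instead establishes convergence \emph{uniformly in the argument} by writing the denoisers through Tweedie's formula and proving uniform convergence of the mixture densities and their gradients (Lemmas \ref{lem:conv_of_density} and \ref{lem:err_emp_bayes}), which is the ingredient you would need to make that step rigorous.
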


Considering $\mathcal F_L=[N]$ and $\mathcal F_{R,h}=[p_h]$ for all $h \in [m]$, we can synopsize the last claim of the above theorem as follows:
\begin{align}
\label{eq:equiv_rep_amp_it}
    \bm U_{t,h} \approx \bm U_h(\hslm_{t,h})^\top+\bm Z^L_h(\hslv_{t,h})^{1/2} 
\quad \mbox{and}\quad 
	 \bm V_{t,h} \approx \bm V_h(\hsrm_{t,h})^\top+{\bm Z^R_h}(\hsrv_{t,h})^{1/2},
\end{align}
where $\bm Z^L_h \in \R^{N \times r_h}$ and $\bm Z^R_h \in \R^{p_h \times r_h}$ are matrices with iid~$N(0,1)$
entries 
for all $h \in [m]$.

Theorem \ref{thm:em_bayes_state_evol} implies the following corollary characterizing the mean-squared-error (MSE)
risk of estimating $\{\bm U_h,\bm V_h: h \in [m]\}$ with $\{\bu_{t,h},\bv_{t,h}: h \in [m]\}$.
\begin{cor}
    \label{thm:mmse_svd}
	Under Assumption \ref{asm:prior_1_mom},
for all $t \ge 0$, the iterates $\{\bu_{t,h}, \bv_{t,h}: h \in [m]\}$ satisfy:
\begin{align}
\label{eq:limit_svds}
  \lim\limits_{N \rightarrow \infty}\frac{1}{N^2}\|\bu_{t,h}\bu^\top_{t,h}-\bm U_h\bm U^\top_h\|^2_F&\,\overset{a.s}{=}\,\mathrm{Tr}\left(\bm I_{r_h}-(\hsrvor_{t+1,h})^2\right),\nonumber\\
  \lim\limits_{p_h \rightarrow \infty}\frac{1}{p^2_h}\|\bv_{t,h}\bv^\top_{t,h}-\bm V_h\bm V^\top_h\|^2_F&\,\overset{a.s}{=}\,
  \mathrm{Tr}\left(\bm I_{r_h}-\frac{1}{\gamma^2_h}{(\hslvor_{t,h})^2}\right).
\end{align}
\end{cor}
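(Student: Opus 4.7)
The plan is to reduce the squared Frobenius norm of the gram-matrix differences to traces of limiting $r_h \times r_h$ matrices by means of the algebraic identity
\begin{equation*}
\|AA^\top - BB^\top\|_F^2 = \|A^\top A\|_F^2 - 2\|A^\top B\|_F^2 + \|B^\top B\|_F^2,
\end{equation*}
which follows by expanding $\mathrm{Tr}((AA^\top-BB^\top)^2)$ and using the cyclicity of the trace. Applying this with $A = \bu_{t,h}$, $B = \bm U_h$ and dividing by $N^2$ reduces the first identity to computing the entrywise almost-sure limits of the three matrices $N^{-1}\bu_{t,h}^\top \bu_{t,h}$, $N^{-1}\bu_{t,h}^\top \bm U_h$, and $N^{-1}\bm U_h^\top \bm U_h$.

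To compute these limits, I would invoke part 3 of Theorem \ref{thm:em_bayes_state_evol} with pseudo-Lipschitz test functions obtained by composing the oracle denoiser $u^\orc_{t,h}(\cdot\,;\mu)$ with a componentwise product $(a,b)\mapsto ab$. Since $u^\orc_{t,h}$ is Lipschitz by Assumption \ref{asm:prior_1_mom}(6) and the product is pseudo-Lipschitz, the composition is pseudo-Lipschitz and \eqref{eq:orc_iterate_lim_u} applies. Using \eqref{eq:state_evol_gen} to identify the limiting expectation, this gives
\[
\frac{1}{N}\bu_{t,h}^\top \bu_{t,h} \xrightarrow{a.s.} \mathbb{E}_{\mu}\bigl[u^\orc_{t,h}(Y^\orc_{t,\cdot},\wt Y_{0,\cdot};\mu)^{\otimes 2}\bigr] = \hsrvor_{t+1,h}.
\]
The cross term $N^{-1}\bu_{t,h}^\top \bm U_h$ converges a.s.\ to $\mathbb{E}_\mu[u^\orc_{t,h}(\cdots)U_h^\top]$; since $u^\orc_{t,h}(Y^\orc_{t,\cdot},\wt Y_{0,\cdot};\mu) = \mathbb{E}_\mu[U_h \mid Y^\orc_{t,\cdot},\wt Y_{0,\cdot}]$, the tower property collapses this cross moment to the same $\hsrvor_{t+1,h}$. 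Finally, $N^{-1}\bm U_h^\top \bm U_h \xrightarrow{a.s.} \bm I_{r_h}$ by the SLLN on i.i.d.\ rows together with Assumption \ref{asm:prior_1_mom}(2). Summing and using the symmetry of $\hsrvor_{t+1,h}$ produces $\|\hsrvor_{t+1,h}\|_F^2 - 2\|\hsrvor_{t+1,h}\|_F^2 + r_h = \mathrm{Tr}(\bm I_{r_h} - (\hsrvor_{t+1,h})^2)$, which is the first claim.

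The right-singular-vector identity is obtained by the structurally identical argument with $N$ replaced by $p_h$, \eqref{eq:orc_iterate_lim_v} replacing \eqref{eq:orc_iterate_lim_u}, Assumption \ref{asm:prior_1_mom}(3) supplying $p_h^{-1}\bm V_h^\top \bm V_h \to \bm I_{r_h}$, and Lemma \ref{lem:weak_conv_emp_bayes} providing $\wh\nu_h \xrightarrow{w}\nu_h$. The only arithmetic adjustment is the additional factor $\gamma_h$ built into the definition of $\hslvor_{t,h}$ in \eqref{eq:state_evol_gen}, which makes the limiting second moment of $v^\orc_{t,h}(Y^{R,\orc}_{t,h};\nu_h)$ equal to $\gamma_h^{-1}\hslvor_{t,h}$ and introduces the $\gamma_h^{-2}$ factor in the final trace.

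The step I expect to require the most care is justifying the replacement of the empirical-Bayes denoiser $u_{t,h}(\cdot\,;\wh\mu)$ by the oracle $u^\orc_{t,h}(\cdot\,;\mu)$ (and likewise for $v_{t,h}$) inside averages of quadratic quantities over $N$ rows, so that Theorem \ref{thm:em_bayes_state_evol}(3) can be invoked with a deterministic test function. The combination of the uniform Lipschitz property over a weak neighborhood of $\mu$ in Assumption \ref{asm:prior_1_mom}(6), the weak consistency $\wh\mu\xrightarrow{w}\mu$ from Lemma \ref{lem:weak_conv_emp_bayes}, and the $N^{-1}\|\bm U_{t,h} - \uor_{t,h}\|_F^2 \xrightarrow{a.s.} 0$ conclusion of Theorem \ref{thm:em_bayes_state_evol}(1) should supply the needed bound via a Cauchy--Schwarz truncation that controls the pseudo-Lipschitz envelope, but uniformity of this control across the $t$-indexed state evolution parameters is the technical detail I would handle most carefully.
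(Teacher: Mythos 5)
Your proposal is correct and follows essentially the same route the paper uses: the paper states the corollary as a direct consequence of Theorem \ref{thm:em_bayes_state_evol}, and the computation it spells out in the proof of Theorem \ref{thm:improvement_in_error_2} (expanding the squared Frobenius norm into three trace terms, identifying $N^{-1}\bu_{t,h}^\top\bu_{t,h}\to\hsrvor_{t+1,h}$ and $p_h^{-1}\bv_{t,h}^\top\bv_{t,h}\to\gamma_h^{-1}\hslvor_{t,h}$ via the state evolution, collapsing the cross terms with the tower property, and applying the SLLN with the identity-second-moment normalization) is exactly your argument. Your closing concern about swapping the empirical-Bayes denoiser for the oracle one is also handled the same way in the paper, via the $\frac{1}{N}\|\bu_{t,h}-\buor_{t,h}\|_F^2\to 0$ conclusions established in the induction for Theorem \ref{thm:em_bayes_state_evol}.
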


\begin{comment}
Using Theorem \ref{thm:em_bayes_state_evol}, we get the following theorem characterizing the $L_2$ estimation risk of $\{\bm U_h,\bm V_h: h \in [m]\}$ using $\{\bu_{t,h},\bv_{t,h}: h \in [m]\}$.
\begin{thm}
    \label{thm:mmse_svd}
For all $\mu \in \mathcal P$, and $\nu_h \in \mathcal{P}_{\nu_h}$, the iterates $\{\bu_{t,h}, \bv_{t,h}: h \in [m]\}$ satisfy:
\begin{align}
\label{eq:limit_svds}
  \lim\limits_{N \rightarrow \infty}\frac{1}{N}\|\bu_{t,h}-\bm U_h\|^2_F&\overset{a.s}{=}\mathbb{E}_{\mu}[\|U_h-\mathbb{E}_{\mu}[U_h|Y^\orc_{t,1},\ldots,Y^\orc_{t,m},\wt{Y}_{0,1},\ldots,\wt{Y}_{0,\wt m}]\|^2],\nonumber\\
  \lim\limits_{p_h \rightarrow \infty}\frac{1}{p_h}\|\bv_{t,h}-\bm V_h\|^2_F&\overset{a.s}{=}\mathbb{E}_{\nu_h}[\|V_h-\mathbb{E}_{\nu_h}[V_h|Y^{R,\circ}_{t,h}]\|^2],
\end{align}
where $(U_1,\ldots,U_m,\wt U_1,\ldots,\wt U_{\wt m}) \sim \mu$ and $V_h \sim \nu_h$ for $h \in [m]$.
\end{thm}
\end{comment}

The counterpart for estimating $\{\bm{\wt U}_\ell: \ell \in [\wt m]\}$ with $\{\wt{\bm U}_{t,\ell}: \ell \in [\wt m]\}$ are as follows.
 \begin{cor}
     \label{thm:risk_ldim}
Under Assumption \ref{asm:prior_1_mom},	for all $t \ge 0$,
	the iterates $\{\wt{\bm U}_{t,\ell}: \ell \in [\wt m]\}$ satisfy:
     \begin{align}
     \label{eq:risk_ldim}
     \lim\limits_{N \rightarrow \infty}\frac{1}{N^2}\|\wt{\bm U}_{t,\ell}\wt{\bm U}^\top_{t,\ell}-\wt{\bm U}_\ell\wt{\bm U}^\top_\ell\|^2_F \,\overset{a.s.}{=}\,
     \mathrm{Tr}\left(\bm I_{\wt r_\ell}-(\wt{\bm \Sigma}^{\circ}_{t,\ell})^2\right),
     \end{align}
     where 
     % $$
     $\wt{\bm \Sigma}^{\circ}_{t,\ell}=\mathbb E_\mu\left[\mathbb E_{\mu}
% \left
[\wt{U}_\ell\,\Big|\,Y^{\orc}_{t,1},\ldots,Y^{\orc}_{t,m},\wt{Y}_{0,1},\ldots,\wt{Y}_{0,m}
% \right
]^{\otimes 2}\right]$.
     % $$ 
 \end{cor}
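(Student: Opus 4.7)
The plan is to reduce the claim to an application of Theorem \ref{thm:em_bayes_state_evol}(3) by expanding the Frobenius norm and using the conditional expectation structure of $\wt u^\orc_{t,\ell}$. Specifically, for any matrices $\bm A, \bm B \in \R^{N\times \wt r_\ell}$,
\[
\|\bm A\bm A^\top - \bm B\bm B^\top\|_F^2 = \mathrm{Tr}((\bm A^\top \bm A)^2) - 2\,\|\bm A^\top \bm B\|_F^2 + \mathrm{Tr}((\bm B^\top \bm B)^2).
\]
Applied with $\bm A = \wt{\bm U}_{t,\ell}$ and $\bm B = \wt{\bm U}_\ell$, the problem reduces to computing the almost sure limits of $\frac{1}{N}\wt{\bm U}_{t,\ell}^\top \wt{\bm U}_{t,\ell}$, $\frac{1}{N}\wt{\bm U}_{t,\ell}^\top \wt{\bm U}_\ell$, and $\frac{1}{N}\wt{\bm U}_\ell^\top \wt{\bm U}_\ell$.

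For the third, I observe that the rows $(\wt{\bm U}_\ell)_{i*}$ are i.i.d.\ with marginal $\wt U_\ell$ under $\mu$, so by the law of large numbers together with Assumption \ref{asm:prior_1_mom}(2), $\frac{1}{N}\wt{\bm U}_\ell^\top \wt{\bm U}_\ell \xrightarrow{a.s.} \mathbb E_\mu[\wt U_\ell \wt U_\ell^\top] = \bm I_{\wt r_\ell}$, hence $\frac{1}{N^2}\mathrm{Tr}((\wt{\bm U}_\ell^\top \wt{\bm U}_\ell)^2) \to \wt r_\ell$. For the first two, I would apply Theorem \ref{thm:em_bayes_state_evol}(3) with $\mathcal F_L = [N]$. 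Using the test function $\psi(x_1,\ldots,x_m,\wt x_1,\ldots,\wt x_{\wt m},u_1,\ldots,u_m,\wt u_1,\ldots,\wt u_{\wt m}) = \wt u_{t,\ell,i}^\orc(\cdot;\mu)\cdot\wt u_{t,\ell,j}^\orc(\cdot;\mu)$ entrywise (which is pseudo-Lipschitz since the conditional mean denoiser is Lipschitz by Assumption \ref{asm:prior_1_mom}(6) and products of Lipschitz functions of linearly-bounded arguments are pseudo-Lipschitz), together with Theorem \ref{thm:em_bayes_state_evol}(1) which gives $\frac{1}{N}\|\wt{\bm U}_{t,\ell} - \wt{\bm U}_{t,\ell}^\orc\|_F^2 \to 0$, one obtains
\[
\tfrac{1}{N}\wt{\bm U}_{t,\ell}^\top \wt{\bm U}_{t,\ell} \xrightarrow{a.s.} \mathbb E_\mu\bigl[\wt u^\orc_{t,\ell}(Y^\orc_{t,1},\ldots,Y^\orc_{t,m},\wt Y_{0,1},\ldots,\wt Y_{0,\wt m};\mu)^{\otimes 2}\bigr] = \wt{\bm \Sigma}^\orc_{t,\ell}.
\]
Analogously, the test function $\psi(\cdot) = \wt u_{t,\ell,i}^\orc(\cdot;\mu)\cdot \wt u_{\ell,j}$ yields
\[
\tfrac{1}{N}\wt{\bm U}_{t,\ell}^\top \wt{\bm U}_\ell \xrightarrow{a.s.} \mathbb E_\mu\bigl[\wt u^\orc_{t,\ell}(\cdots)\,\wt U_\ell^\top\bigr].
\]
Since $\wt u^\orc_{t,\ell}(Y^\orc_{t,1},\ldots,\wt Y_{0,\wt m};\mu) = \mathbb E_\mu[\wt U_\ell\,|\,Y^\orc_{t,1},\ldots,\wt Y_{0,\wt m}]$ by definition \eqref{eq:oracle_denoisers}, the tower property gives $\mathbb E_\mu[\wt u^\orc_{t,\ell}(\cdots)\,\wt U_\ell^\top] = \mathbb E_\mu[\wt u^\orc_{t,\ell}(\cdots)^{\otimes 2}] = \wt{\bm\Sigma}^\orc_{t,\ell}$, so the cross term contributes $-2\,\mathrm{Tr}((\wt{\bm \Sigma}^\orc_{t,\ell})^2)$ and the quadratic-in-$\wt{\bm U}_{t,\ell}$ term contributes $+\mathrm{Tr}((\wt{\bm \Sigma}^\orc_{t,\ell})^2)$. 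Summing the three limits yields $\wt r_\ell - \mathrm{Tr}((\wt{\bm \Sigma}^\orc_{t,\ell})^2) = \mathrm{Tr}(\bm I_{\wt r_\ell} - (\wt{\bm \Sigma}^\orc_{t,\ell})^2)$, as claimed.

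The main technical obstacle is ensuring that Theorem \ref{thm:em_bayes_state_evol}(3) can be applied to test functions that are quadratic in the latent factors, since such functions are only pseudo-Lipschitz (not Lipschitz). This is why the hypothesis of Theorem \ref{thm:em_bayes_state_evol}(3) is stated for pseudo-Lipschitz $\psi$, and the composition of the Lipschitz denoisers (from Assumption \ref{asm:prior_1_mom}(6)) with an entrywise product gives a pseudo-Lipschitz function in the combined arguments; uniform integrability needed for convergence of sample to population averages follows from the finite second moments guaranteed by Assumption \ref{asm:prior_1_mom}(2). Passing from the oracle iterates $\wt{\bm U}^\orc_{t,\ell}$ to the empirical-Bayes iterates $\wt{\bm U}_{t,\ell}$ is handled by Theorem \ref{thm:em_bayes_state_evol}(1), after a standard Cauchy–Schwarz inequality applied to the differences $\mathrm{Tr}((\wt{\bm U}_{t,\ell}^\top \wt{\bm U}_{t,\ell} - (\wt{\bm U}_{t,\ell}^\orc)^\top \wt{\bm U}_{t,\ell}^\orc))$ and the analogous cross term.
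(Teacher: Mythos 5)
Your proposal is correct and follows essentially the same route the paper takes: the corollary is left as a direct consequence of Theorem \ref{thm:em_bayes_state_evol}, and the explicit computation you give (expand the squared Frobenius norm into three trace terms, apply the SLLN to $\frac{1}{N}\wt{\bm U}_\ell^\top\wt{\bm U}_\ell$, apply parts (1) and (3) of Theorem \ref{thm:em_bayes_state_evol} with quadratic pseudo-Lipschitz test functions to the other two terms, and collapse the cross term via the tower property $\mathbb E[\wt U_\ell\, g(Y)^\top]=\mathbb E[\mathbb E[\wt U_\ell|Y]g(Y)^\top]$) is exactly the argument carried out for the high-dimensional modalities in the proof of Theorem \ref{thm:improvement_in_error_2}. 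The arithmetic checks out: $\mathrm{Tr}((\wt{\bm\Sigma}^\orc_{t,\ell})^2)-2\,\mathrm{Tr}((\wt{\bm\Sigma}^\orc_{t,\ell})^2)+\wt r_\ell=\mathrm{Tr}(\bm I_{\wt r_\ell}-(\wt{\bm\Sigma}^\orc_{t,\ell})^2)$.
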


\subsubsection{Bayes optimality of OrchAMP estimators
}

Fix any set of positive definite matrices $\{\bm B^L_h, \bm B^R_h \in \R^{r_h \times r_h}: h\in [m]\}$.
Conditional on the latent factors $({U}_1,\ldots,{U}_m,\wt{U}_1,\ldots,\wt{U}_{\wt m}) \sim \mu$ and $V_h \sim \nu_{h}$ for $h\in [m]$, define random vectors $\rmxl_h = \rmxl_h(\bm B^L_h)$, $\rmxr_h = \rmxr_h(\bm B^{R}_h)$, $h\in [m]$ and $\rmtxl_\ell$, $\ell\in [{\wt m}]$ as
\begin{equation}
	\label{eq:bayes-opt-rv}
    \rmxl_h(\bm B^L_h) \sim N_{r_h}(U_h,(\bm B^L_h)^{-1}),\,\,
	\rmxr_h(\bm B^{R}_h) \sim N_{r_h}(V_h,(\bm B^{R}_h)^{-1}),\,\,
    \rmtxl_\ell \sim N_{{\wt r}_\ell}(\bm L_\ell \wt{U}_\ell,\bm I_{\wt r_{\ell}}).
\end{equation}
% and
%    \[ \rmtxl_\ell \sim N_{{\wt r}_\ell}(\bm L_\ell \wt{U}_\ell,\bm I_{\wt r_{\ell}}).\]	
Next, for all $h\in [m]$, define the matrices $ \rmufunc_h(\bm B^L_1,\ldots,\bm B^L_m;\mu) \in \mathbb{R}^{r_h\times r_h}$ and $\rmvfunc_h(\bm B^R_h;\nu_h) \in \mathbb{R}^{r_h\times r_h} $ as follows:
\begin{align}
\label{eq:u_d}
    \rmufunc_h(\bm B^L_1,\ldots,\bm B^L_m;\mu) & := \mathbb{E}_\mu[(U_h-\mathbb{E}_\mu[U_h|\rmxl_1(\bm B^{L}_1),\ldots,\rmxl_m(\bm B^{L}_m),\rmtxl_1,\ldots,\rmtxl_{\wt m}])^{\otimes 2}], \\
\label{eq:v_d}
    \rmvfunc_h(\bm B^R_h;\nu_h) & := \mathbb{E}_{\nu_h}[(V_h-\mathbb{E}_{\nu_h}[V_h|\rmxr_h(\bm B^{R}_h)])^{\otimes 2}],
\end{align}
and
matrices $\bm \Gamma^R_{t,h},\bm \Gamma^L_{t,h} \in \R^{r_h \times r_h}$ as follows:
\begin{equation}
\label{eq:def_gamma}	
\begin{aligned}
    \gammab_{t,h}&:= \frac{1}{\gamma_h}\bm D^{-1/2}_h(\hslmor_{t,h})^\top(\hslvor_{t,h})^{-1}\hslmor_{t,h}\bm D^{-1/2}_h,\\
    \gamman_{t,h} &:= \bm D^{-1/2}_h(\hsrmor_{t,h})^\top(\hsrvor_{t,h})^{-1}\hsrmor_{t,h}\bm D^{-1/2}_h.
\end{aligned}
\end{equation}
Using the foregoing definitions, we \revzms{can reformulate}
% have the following reformulation of 
the \emph{state evolution} recursions \revzms{as follows}. 
\begin{lem}
\label{lem:redef_state_evol}
The state evolution recursions \eqref{eq:state_evol_gen} can be equivalently characterized 
as 
\begin{align}
\label{eq:state_evol_mod}
   \gammab_{t,h}=\bm D_h-\rmvfunc_h(\gamman_{t,h};\bm D^{1/2}_h\nu_h), \quad \gamman_{t+1,h}=\bm D_h-\rmufunc_h(\gamma_1\gammab_{t,1},\ldots,\gamma_m\gammab_{t,m};\bm D^{1/2}\mu),
\end{align}
where $\bm D = \mathsf{diag}(\bm D_1,\ldots,\bm D_m,\bm I_{\wt r_1},\ldots,\bm I_{\wt r_{\wt m}})$. 
\end{lem}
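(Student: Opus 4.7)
The plan is to verify the two identities in \eqref{eq:state_evol_mod} by a direct algebraic argument in three steps: (i) collapse the cross-moments in \eqref{eq:state_evol_gen} via the tower property; (ii) reduce the linear Gaussian channels in the state evolution to the canonical channels \eqref{eq:bayes-opt-rv} after a pushforward by $\bm D^{1/2}$; and (iii) close the identities with the MMSE variance decomposition.

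First, since $v^\orc_{t,h}$ and $u^\orc_{t,h}$ are posterior means, the tower property yields
\[
\mathbb E_{\nu_h}[v^\orc_{t,h}(Y^{R,\orc}_{t,h})V_h^\top]=\mathbb E_{\nu_h}[v^\orc_{t,h}(Y^{R,\orc}_{t,h})^{\otimes 2}]=:S_{t,h},
\]
and analogously $\mathbb E_\mu[u^\orc_{t,h}(\cdot)U_h^\top]=\mathbb E_\mu[u^\orc_{t,h}(\cdot)^{\otimes 2}]=:T_{t,h}$. Consequently $\hslmor_{t,h}=\hslvor_{t,h}\bm D_h$ and $\hsrmor_{t+1,h}=\hsrvor_{t+1,h}\bm D_h$, and substituting into \eqref{eq:def_gamma} (using symmetry of $\hslvor_{t,h}$ and $\hsrvor_{t+1,h}$) collapses the definitions of $\gammab_{t,h}$ and $\gamman_{t+1,h}$ to $\gammab_{t,h}=\bm D^{1/2}_h S_{t,h}\bm D^{1/2}_h$ and $\gamman_{t+1,h}=\bm D^{1/2}_h T_{t,h}\bm D^{1/2}_h$.

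Second, I would show that the channel $Y^{R,\orc}_{t,h}\mid V_h\sim N(\hsrmor_{t,h}V_h,\hsrvor_{t,h})$ is equivalent, after a linear sufficient-statistic reduction followed by a rescaling by $\bm D^{1/2}_h$, to $\rmxr_h(\gamman_{t,h})\mid W_h\sim N(W_h,\gamman_{t,h}^{-1})$ with $W_h:=\bm D^{1/2}_h V_h\sim\bm D^{1/2}_h\nu_h$; the key identity is $(\hsrmor_{t,h})^\top(\hsrvor_{t,h})^{-1}\hsrmor_{t,h}=\bm D^{1/2}_h\gamman_{t,h}\bm D^{1/2}_h$, which tracks the precision through the rescaling. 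This gives $\mathbb E[W_h\mid \rmxr_h(\gamman_{t,h})]=\bm D^{1/2}_h v^\orc_{t,h}(Y^{R,\orc}_{t,h})$ in distribution, and the MMSE variance decomposition combined with $\mathbb E_{\nu_h}[V_hV_h^\top]=\bm I$ (Assumption \ref{asm:prior_1_mom}, part 3) yields
\[
\bm D_h=\mathbb E[W_hW_h^\top]=\bm D^{1/2}_h S_{t,h}\bm D^{1/2}_h+\rmvfunc_h(\gamman_{t,h};\bm D^{1/2}_h\nu_h),
\]
which establishes the first identity in \eqref{eq:state_evol_mod}.

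Third, the second identity follows from the same argument applied to the joint channel $(Y^\orc_{t,1},\ldots,Y^\orc_{t,m},\wt Y_{0,1},\ldots,\wt Y_{0,\wt m})$. After rescaling the $h$-th high-dimensional component by $\bm D^{1/2}_h$ for every $h\in [m]$, the analogous identity $(\hslmor_{t,h})^\top(\hslvor_{t,h})^{-1}\hslmor_{t,h}=\gamma_h\bm D^{1/2}_h\gammab_{t,h}\bm D^{1/2}_h$ converts the high-dimensional marginals into canonical channels $\rmxl_h(\gamma_h\gammab_{t,h})\sim N(\bm D^{1/2}_h U_h,(\gamma_h\gammab_{t,h})^{-1})$, while the low-dimensional marginals $\wt Y_{0,\ell}\sim N(\bm L_\ell\wt U_\ell,\bm I)$ already coincide with $\rmtxl_\ell$. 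Under the joint pushforward prior $\bm D^{1/2}\mu$, whose second moment equals $\bm D=\mathrm{diag}(\bm D_1,\ldots,\bm D_m,\bm I_{\wt r_1},\ldots,\bm I_{\wt r_{\wt m}})$ by Assumption \ref{asm:prior_1_mom} parts 2--3, the same MMSE decomposition closes the identity. The only delicate point throughout will be the bookkeeping of the scalings $\bm D^{1/2}_h$ and $\gamma_h$ in \eqref{eq:def_gamma}: one must verify that they align exactly with the canonical form \eqref{eq:bayes-opt-rv} after the pushforward so that the two parametrizations describe the same Bayesian inference problem; beyond that, the argument is essentially a reshuffling of the tower property and the MMSE identity.
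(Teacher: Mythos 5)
Your proposal is correct and follows essentially the same route as the paper's proof: both collapse $\gammab_{t,h}$ and $\gamman_{t+1,h}$ to $\bm D_h^{1/2}$-sandwiches of the state-evolution covariances (via the tower-property identity $\mathbb E[v^\orc V^\top]=\mathbb E[(v^\orc)^{\otimes 2}]$, hence $\hsrmor=\hsrvor\bm D_h$), reduce the Gaussian channels to the canonical form \eqref{eq:bayes-opt-rv} under the pushforward prior by an invertible linear transformation, and close with the MMSE variance decomposition together with the unit second-moment normalization of Assumption \ref{asm:prior_1_mom}. No gaps; your explicit tracking of the precision identities $(\hsrmor_{t,h})^\top(\hsrvor_{t,h})^{-1}\hsrmor_{t,h}=\bm D_h^{1/2}\gamman_{t,h}\bm D_h^{1/2}$ and its left-sided analogue is exactly the bookkeeping the paper performs.
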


The following theorem characterizes the large-$t$ asymptotics of the state evolutions.
\begin{thm}
    \label{thm:improvement_in_error}
    There exists matrices $\gamman_{\infty,h},\gammab_{\infty,h} \in \R^{r_h \times r_h}$ for $h \in [m]$, satisfying 
 \begin{align}
 \label{eq:fixed_point}
   \gammab_{\infty,h}=\bm D_h-\rmvfunc_h(\gamman_{\infty,h};\bm D^{1/2}_h\nu_h), \,
   % \quad 
   \gamman_{\infty,h}=\bm D_h-\rmufunc_h(\gamma_1\gammab_{\infty,1},\ldots,\gamma_m\gammab_{\infty,m};\bm D^{1/2}\mu),
\end{align}   
    such that $\lim_{t \rightarrow \infty}\gammab_{t,h}=\gammab_{\infty,h}$ and $\lim_{t \rightarrow \infty}\gamman_{t,h}=\gamman_{\infty,h}$, for all $h \in [m]$.
\end{thm}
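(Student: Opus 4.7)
The plan is to establish that the matrix sequences $\{\gamman_{t,h}\}_{t\geq 0}$ and $\{\gammab_{t,h}\}_{t\geq 0}$ are monotone non-decreasing in the Loewner order and uniformly bounded above, then invoke monotone convergence, and finally identify the limit as a fixed point by continuity of the MMSE maps. Using Lemma \ref{lem:redef_state_evol}, I would first rewrite the recursion as in \eqref{eq:state_evol_mod}, where $\rmvfunc_h$ and $\rmufunc_h$ are the matrix MMSE functionals defined in \eqref{eq:u_d}--\eqref{eq:v_d}. The central tool is matrix MMSE monotonicity: if $\bm B \succeq \bm B'$, then $\rmvfunc_h(\bm B;\bm D_h^{1/2}\nu_h) \preceq \rmvfunc_h(\bm B';\bm D_h^{1/2}\nu_h)$, and $\rmufunc_h$ is Loewner non-increasing in each of its $m$ matrix arguments separately. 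This follows from a data-processing argument: a higher-precision Gaussian observation can always be degraded to a lower-precision one by adding independent Gaussian noise, and by the tower property, the posterior covariance can only shrink under conditioning on a richer $\sigma$-algebra.

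Next, I would verify the base case $\gamman_{1,h} \succeq \gamman_{0,h}$ and $\gammab_{1,h} \succeq \gammab_{0,h}$ for each $h \in [m]$. The initializer $\gamman_{0,h}$ corresponds to the effective SNR produced by the spectral initialization, which amounts to using an identity (linear) denoiser on $\bm V_{0,h} = \vpca_{0,h}$. The posterior-mean denoiser $v^\circ_{0,h}$ is Bayes-optimal and therefore yields a Loewner-smaller posterior covariance than the identity, which translates to $\gamman_{1,h} \succeq \gamman_{0,h}$; the corresponding inequality for $\gammab$ then follows from the monotonicity of $\rmvfunc_h$. An induction based on the monotonicity of $\rmufunc_h$ and $\rmvfunc_h$ combined with this base case shows that both $\{\gamman_{t,h}\}$ and $\{\gammab_{t,h}\}$ are Loewner non-decreasing for every $t$. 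Since $\rmvfunc_h, \rmufunc_h$ are positive semidefinite, the sequences are uniformly upper bounded by $\bm D_h$. A Loewner-bounded monotone matrix sequence converges: for any vector $v$, the scalar sequence $v^\top \gamman_{t,h} v$ is bounded and monotone, hence convergent, and polarization delivers entrywise convergence. Denote the limits $\gamman_{\infty,h}$ and $\gammab_{\infty,h}$.

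Finally, the functions $\rmufunc_h$ and $\rmvfunc_h$ are continuous in their matrix arguments; this follows from the uniform Lipschitz control of the relevant posterior expectations guaranteed by Assumption \ref{asm:prior_1_mom}, parts 6--7, together with dominated convergence. Passing to the limit in the recursion \eqref{eq:state_evol_mod} then produces the fixed-point identities \eqref{eq:fixed_point}, completing the proof. The main obstacle is the base case: establishing $\gamman_{1,h} \succeq \gamman_{0,h}$ as a Loewner inequality (rather than a scalar trace inequality) requires exploiting the Bayes optimality of the joint posterior-mean denoiser $u^\circ_{0,h}$ coordinate-by-coordinate as well as jointly across all $m$ high-dimensional and $\wt m$ low-dimensional modalities. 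The cross-modality dependence encoded in the prior $\mu$ must be handled by the $m$-input monotonicity of $\rmufunc_h(\cdot,\ldots,\cdot;\mu)$, which is where the MMSE-monotonicity argument has to be applied most carefully.
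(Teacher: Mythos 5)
Your proposal follows essentially the same route as the paper's proof: Loewner monotonicity of the matrix MMSE maps $\rmvfunc_h$ and $\rmufunc_h$ in each argument, a base case obtained by comparing the Bayes denoiser against the linear (Gaussian-prior) MMSE bound at the spectral initializer, induction to get a monotone sequence bounded above by $\bm D_h$, and convergence plus continuity to identify the fixed point. The only place the paper is more explicit is the base case, where it verifies by direct computation that $\gamman_{0,h}$ is exactly the fixed point of the linearized recursion $\gamman \mapsto \bm D_h-(\bm D_h^{-1}+\bm D_h-(\bm D_h^{-1}+\gamman)^{-1})^{-1}$, which is the precise content of your "identity denoiser" heuristic.
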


\begin{rem}
The system of equations \eqref{eq:fixed_point} is guaranteed to have at least one set of fixed points $\gamman_{\infty,h}$ and $\gammab_{\infty,h}$, while the set of fixed points is not necessarily unique. 
Whether the set is
unique is determined by the specific choice of the priors $\mu$ and $\{\nu_h: h \in [m]\}$.
\end{rem}

Next, consider the Bayes estimators of the signals $\bm U_h\bm D_h\bm V^\top_h$ for $h \in [m]$. The I-MMSE identity as in Proposition 3 of \cite{article}
implies that 
\begin{equation*}
\begin{aligned}
    &\lim_{N \rightarrow \infty}\frac{1}{Np_h}\mathbb E[\|\bm U_h\bm D_h\bm V^\top_h-\mathbb E[\bm U_h\bm D_h\bm V^\top_h|\bm X_1,\ldots,\bm X_m,\wt{\bm X}_1,\ldots,\wt{\bm X}_{\wt m}]\|^2_F]\\
	&\hskip 5em =\mathrm{Tr}(\bm D^2_h)-\mathrm{Tr}\,(\gammab_{\star,h}\gamman_{\star,h}),
\end{aligned}	
\end{equation*}
for some fixed point $\{(\gammab_{\star,h}\gamman_{\star,h}):h \in [m]\}$ of the recursion \eqref{eq:state_evol_mod}.

Though the estimators $\mathbb E[\bm U_h\bm D_h\bm V^\top_h|\bm X_1,\ldots,\bm X_m,\wt{\bm X}_1,\ldots,\wt{\bm X}_{\wt m}]$ give the optimal 
MSE
risk for estimating $\bm U_h\bm D_h\bm V^\top_h$,
they may not be computationally tractable (in the sense of polynomial time complexity) for 
general
priors $\mu$ and $\{\nu_h:h\in [m]\}$. 
In contrast, the following theorem shows that after sufficiently many iterative refinements of the empirical singular vectors using 
\eqref{eq:orc_amp_emp_bayes}, the reconstruction error of the OrchAMP estimators $\bu_{t,h}\bm D_h\bv^\top_{t,h}$ for any $h \in [m]$ is 
asymptotically
equal to the Bayes optimal risk as long as $\gammab_{\infty,h}$ and $\gamman_{\infty,h}$ are unique.
\begin{thm}
    \label{thm:improvement_in_error_2}
	If there exists unique fixed points of the recursive equations \eqref{eq:fixed_point} for the priors $\mu$ and $\{\nu_h:h \in [m]\}$ given by $\{\gammab_{\infty,h},\gamman_{\infty,h}:h \in [m]\}$, then 
the OrchAMP estimators $\{\bu_{t,h}\bm D_h\bv^\top_{t,h}:h\in [m]\}$ satisfy \revzms{that for all $h\in [m]$,}
 \begin{align}
    \lim_{t \rightarrow \infty}\lim_{n \rightarrow \infty}\frac{1}{Np_h}\|\bm U_h\bm D_h\bm V^\top_h-\bu_{t,h}\bm D_h\bv^\top_{t,h}\|^2_F
    \overset{a.s.}{=}\mathrm{Tr}(\bm D^2_h)-\mathrm{Tr}\,(\gammab_{\infty,h}\gamman_{\infty,h}).
 %    \quad
	% \mbox{for all $h\in [m]$.}
\end{align}   
\end{thm}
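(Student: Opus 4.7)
The plan is to expand the squared Frobenius norm into three trace-of-inner-product pieces and evaluate each using the asymptotic characterization in Theorem~\ref{thm:em_bayes_state_evol}. Writing
\begin{align*}
  &\frac{1}{Np_h}\|\bm U_h\bm D_h\bm V_h^\top - \bu_{t,h}\bm D_h\bv_{t,h}^\top\|_F^2 \\
  &\quad = \mathrm{Tr}\!\left(\bm D_h \tfrac{\bm V_h^\top\bm V_h}{p_h}\bm D_h \tfrac{\bm U_h^\top\bm U_h}{N}\right) + \mathrm{Tr}\!\left(\bm D_h \tfrac{\bv_{t,h}^\top\bv_{t,h}}{p_h}\bm D_h \tfrac{\bu_{t,h}^\top\bu_{t,h}}{N}\right) - 2\,\mathrm{Tr}\!\left(\bm D_h \tfrac{\bm V_h^\top\bv_{t,h}}{p_h}\bm D_h \tfrac{\bu_{t,h}^\top\bm U_h}{N}\right),
\end{align*}
the first trace is handled by the SLLN: under Assumption~\ref{asm:prior_1_mom}(2)--(3), $\tfrac{1}{p_h}\bm V_h^\top\bm V_h \xrightarrow{a.s.} \bm I_{r_h}$ and $\tfrac{1}{N}\bm U_h^\top\bm U_h \xrightarrow{a.s.} \bm I_{r_h}$, so this term tends to $\mathrm{Tr}(\bm D_h^2)$. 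For the four remaining Gram and cross products I would invoke Theorem~\ref{thm:em_bayes_state_evol}(3) coordinatewise with $\mathcal F_L=[N]$, $\mathcal F_{R,h}=[p_h]$, and quadratic test functions of the form $\psi(y,u)=y_i u_j$ or $y_i y_j$, which are pseudo-Lipschitz and hence fall within the hypothesis of that statement.

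The crucial algebraic step is the tower property for conditional expectations: since $v^\orc_{t,h}$ and $u^\orc_{t,h}$ are conditional expectations under $\nu_h$ and $\mu$ respectively,
\[
  \mathbb{E}[v^\orc_{t,h}V_h^\top] = \mathbb{E}[v^\orc_{t,h}(v^\orc_{t,h})^\top] = \frac{\hslvor_{t,h}}{\gamma_h}, \qquad \mathbb{E}[u^\orc_{t,h}U_h^\top] = \mathbb{E}[u^\orc_{t,h}(u^\orc_{t,h})^\top] = \hsrvor_{t+1,h}.
\]
Consequently $\tfrac{1}{p_h}\bv_{t,h}^\top\bv_{t,h}$ and $\tfrac{1}{p_h}\bm V_h^\top\bv_{t,h}$ share the common a.s.\ limit $\hslvor_{t,h}/\gamma_h$, while $\tfrac{1}{N}\bu_{t,h}^\top\bu_{t,h}$ and $\tfrac{1}{N}\bu_{t,h}^\top\bm U_h$ share the common a.s.\ limit $\hsrvor_{t+1,h}$. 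Substituting back yields
\[
  \lim_{N\to\infty}\frac{1}{Np_h}\|\bm U_h\bm D_h\bm V_h^\top - \bu_{t,h}\bm D_h\bv_{t,h}^\top\|_F^2 \overset{a.s.}{=} \mathrm{Tr}(\bm D_h^2) - \mathrm{Tr}\!\left(\bm D_h \tfrac{\hslvor_{t,h}}{\gamma_h}\bm D_h \hsrvor_{t+1,h}\right).
\]
The same tower identity gives $\hslmor_{t,h}=\hslvor_{t,h}\bm D_h$ and $\hsrmor_{t+1,h}=\hsrvor_{t+1,h}\bm D_h$, which reduces \eqref{eq:def_gamma} to $\gammab_{t,h} = \gamma_h^{-1}\bm D_h^{1/2}\hslvor_{t,h}\bm D_h^{1/2}$ and $\gamman_{t+1,h} = \bm D_h^{1/2}\hsrvor_{t+1,h}\bm D_h^{1/2}$; cyclicity of the trace then rewrites the second term above exactly as $\mathrm{Tr}(\gammab_{t,h}\gamman_{t+1,h})$.

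To close the argument, I would send $t\to\infty$ and invoke Theorem~\ref{thm:improvement_in_error} to conclude $\gammab_{t,h}\to\gammab_{\infty,h}$ and $\gamman_{t+1,h}\to\gamman_{\infty,h}$, upon which continuity of the trace yields the advertised limit $\mathrm{Tr}(\bm D_h^2) - \mathrm{Tr}(\gammab_{\infty,h}\gamman_{\infty,h})$. The uniqueness of the fixed points of \eqref{eq:fixed_point} is used here to ensure that this limit is unambiguously defined (independent of the initialization), so that the OrchAMP risk provably matches the Bayes-optimal value produced by the I-MMSE identity. I expect the main technical obstacle to be the careful bookkeeping of matrix-valued limits via entrywise applications of Theorem~\ref{thm:em_bayes_state_evol}(3): verifying pseudo-Lipschitz growth of the resulting composite test functions requires the Lipschitz regularity of the oracle denoisers from Assumption~\ref{asm:prior_1_mom}(6)--(7), and the order of the iterated limits $\lim_{t\to\infty}\lim_{N\to\infty}$ must be respected throughout — in particular, the rate of convergence in Theorem~\ref{thm:em_bayes_state_evol} depends implicitly on $t$, so the outer $t$-limit can only be taken after the inner $N$-limit has been extracted.
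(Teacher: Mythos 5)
Your proposal is correct and follows essentially the same route as the paper's proof: the same three-term trace expansion, the SLLN for the $\mathrm{Tr}(\bm D_h^2)$ term, Theorem \ref{thm:em_bayes_state_evol} with quadratic pseudo-Lipschitz test functions plus the tower identity $\mathbb E[Yg(X)\,|\,X]=g(X)\mathbb E[Y\,|\,X]$ to equate the Gram and cross-product limits with $\hslvor_{t,h}/\gamma_h$ and $\hsrvor_{t+1,h}$, and finally Lemma \ref{lem:redef_state_evol} together with Theorem \ref{thm:improvement_in_error} and the unique-fixed-point hypothesis to pass $t\to\infty$. Your closing remark about respecting the order of the iterated limits is also consistent with how the paper proceeds.
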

We refer interested readers to \cite{10.5555/3157096.3157144,montanari2021,AbbeMonYash} for examples of priors with unique fixed points when $h=1,\ell=0$ and $r_h=1$. Further, one can find an example of a set of priors with unique fixed points when $h=2,\ell=0$ and $r_1=r_2=1$ in \cite{ma_nandy}.

\subsection{Asymptotic coverage of prediction sets}
To begin with, let us characterize the asymptotic distributions of $\{\hatuq_{h_k}:k\in [d]\}$ defined in \eqref{eq:point_est_cell_eff}.
 \begin{thm}
     \label{thm:point_pred}
     Consider the predictors $\{\hatuq_{h_k}:k\in [d] \}$ defined in \eqref{eq:hat_uq_def}. 
	 Suppose as $N \rightarrow \infty$, $|\calf_{h_k}|/p_{h_k} \rightarrow \lambda_{h_k}\in (0,1]$ for $k\in [d]$.
	 Then there exist $Z_{h_k}\sim N_{r_{h_k}}(0, {\bm I}_{r_{h_k}})$, $k\in [d]$, which are mutually independent and independent of $({U}_1^{\mathrm{Q}},\dots,{U}_m^{\mathrm{Q}}, \wt{U}_1^{\mathrm{Q}},\dots, \wt{U}_{\wt{m}}^{\mathrm{Q}})$, such that conditional on $({U}_1^{\mathrm{Q}},\dots,{U}_m^{\mathrm{Q}}, \wt{U}_1^{\mathrm{Q}},\dots, \wt{U}_{\wt{m}}^{\mathrm{Q}})$,
	 we have jointly over $k\in [d]$,
     \[
     \hatuq_{h_k}\, \stackrel{d}{\to}\, U^{\mathrm Q}_{h_k}+\lambda^{-1/2}_{h_k}\revsns{\left(\bm D^{-1}_{h_k}(\hslvor_{T,h_k})^{-1}\bm D^{-1}_{h_k}\right)^{1/2}}Z_{h_k}.
     \]	 
      \end{thm}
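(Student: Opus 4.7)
My plan is to expand the definition $\hatuq_{h_k} = (\rcheck_{h_k}^\top \rcheck_{h_k})^{-1} \rcheck_{h_k}^\top \wb\bq_{h_k}$ with $\wb\bq_{h_k} = Q_{h_k}/\sqrt{N}$ and substitute the query model \eqref{eq:query-high}, obtaining the signal-plus-noise decomposition
\[
\hatuq_{h_k} = (\rcheck_{h_k}^\top \rcheck_{h_k})^{-1} \rcheck_{h_k}^\top \cdot \frac{1}{N}(\bm V_{h_k})_{\calf_{h_k}*}\bm D_{h_k} U^{\rmq}_{h_k} + (\rcheck_{h_k}^\top \rcheck_{h_k})^{-1} \rcheck_{h_k}^\top \cdot \frac{1}{\sqrt N}W^{\rmq}_{h_k}.
\]
The goal is then to compute almost-sure limits of the signal term and a conditional Gaussian limit for the noise term separately.

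The key inputs are from Theorem \ref{thm:em_bayes_state_evol}. Applying \eqref{eq:orc_iterate_lim_v} to the deterministic subset $\mathcal F_{R,h_k} = \calf_{h_k}$ (with $|\calf_{h_k}|/p_{h_k}\to \lambda_{h_k}$) and the pseudo-Lipschitz test functions $\varphi(x,y)=x_ix_j$ and $\varphi(x,y)=x_iy_j$ yields
\[
\tfrac{1}{|\calf_{h_k}|}(\bv_{T,h_k})^\top_{\calf_{h_k}*}(\bv_{T,h_k})_{\calf_{h_k}*} \xrightarrow{a.s.} \mathbb{E}_{\nu_{h_k}}[v^\orc_{T,h_k}(Y^{R,\orc}_{T,h_k})^{\otimes 2}] = \gamma_{h_k}^{-1}\hslvor_{T,h_k},
\]
\[
\tfrac{1}{|\calf_{h_k}|}(\bv_{T,h_k})^\top_{\calf_{h_k}*}(\bm V_{h_k})_{\calf_{h_k}*} \xrightarrow{a.s.} \mathbb{E}_{\nu_{h_k}}[v^\orc_{T,h_k}V_{h_k}^\top] = \gamma_{h_k}^{-1}\hslmor_{T,h_k}\bm D_{h_k}^{-1}.
\]
The second identity uses $\hslmor_{T,h_k}=\hslvor_{T,h_k}\bm D_{h_k}$, which follows from the tower property since $v^\orc_{T,h_k}=\mathbb{E}_{\nu_{h_k}}[V_{h_k}\mid Y^{R,\orc}_{T,h_k}]$. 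Combined with Lemma \ref{lem:consistency_nuisance} ($\hd_{h_k}\xrightarrow{a.s.}\bm D_{h_k}$) and the scaling $|\calf_{h_k}|/N\to \lambda_{h_k}\gamma_{h_k}$, I get
\[
N\rcheck_{h_k}^\top \rcheck_{h_k} \xrightarrow{a.s.} \lambda_{h_k}\bm D_{h_k}\hslvor_{T,h_k}\bm D_{h_k},\qquad
N\rcheck_{h_k}^\top \cdot \tfrac{1}{N}(\bm V_{h_k})_{\calf_{h_k}*}\bm D_{h_k}U^{\rmq}_{h_k}\xrightarrow{a.s.} \lambda_{h_k}\bm D_{h_k}\hslvor_{T,h_k}\bm D_{h_k}U^{\rmq}_{h_k}.
\]
Dividing, the signal term converges a.s. to $U^{\rmq}_{h_k}$.

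For the noise term, I would condition on the $\sigma$-field $\mathcal G$ generated by the reference data $\{\bm X_h,\wt{\bm X}_\ell\}$, the population matrices $\{\bm V_h,\bm D_h,\bm L_\ell\}$, and the query latent factors $(U^{\rmq}_1,\ldots,\wt U^{\rmq}_{\wt m})$; by the independence assumption on $\{W^{\rmq}_{h_k}\}$, conditional on $\mathcal G$ the vectors $W^{\rmq}_{h_k}$ remain independent $N(0,\bm I_{|\calf_{h_k}|})$. Hence the noise is conditionally Gaussian with covariance $N^{-1}(\rcheck_{h_k}^\top \rcheck_{h_k})^{-1}\xrightarrow{a.s.} \lambda_{h_k}^{-1}\bm D_{h_k}^{-1}(\hslvor_{T,h_k})^{-1}\bm D_{h_k}^{-1}$, which coincides with the covariance of the claimed limit $\lambda_{h_k}^{-1/2}\bm D_{h_k}^{-1/2}(\hslvor_{T,h_k})^{-1/2}\bm D_{h_k}^{-1/2} Z_{h_k}$ with $Z_{h_k}\sim N(0,\bm I_{r_{h_k}})$; joint conditional independence of the $Z_{h_k}$'s across $k$ is inherited from that of the $W^{\rmq}_{h_k}$'s. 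A characteristic-function (or Slutsky) argument upgrades the a.s. convergence of covariances and the exact conditional Gaussianity into the claimed conditional convergence in distribution, jointly in $k$.

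The main obstacle is the technical bookkeeping rather than a conceptual difficulty: I must verify invertibility of the limiting matrix $\lambda_{h_k}\bm D_{h_k}\hslvor_{T,h_k}\bm D_{h_k}$ uniformly in $N$ (ensured by Assumption \ref{asm:prior_1_mom}, part 5, which places the problem in the super-critical regime so that $\hslvor_{T,h_k}\succ 0$), and then appeal to the continuous mapping theorem for matrix inversion and for the square-root representation. A secondary point is applying Theorem \ref{thm:em_bayes_state_evol} with pseudo-Lipschitz rather than Lipschitz test functions, which is covered by the $(1+\|x\|+\|y\|)$ factor in \eqref{eq:pseudo_lips} combined with the finite second moments furnished by the state evolution.
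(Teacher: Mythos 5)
Your proposal is correct and follows essentially the same route as the paper: the same least-squares signal-plus-noise decomposition, the same application of Theorem \ref{thm:em_bayes_state_evol} with quadratic pseudo-Lipschitz test functions restricted to $\calf_{h_k}$ to obtain $N\rcheck_{h_k}^\top\rcheck_{h_k}\xrightarrow{a.s.}\lambda_{h_k}\bm D_{h_k}\hslvor_{T,h_k}\bm D_{h_k}$ and the a.s.\ convergence of the signal term to $U^{\rmq}_{h_k}$, and the same observation that the noise term is exactly conditionally Gaussian with covariance $N^{-1}(\rcheck_{h_k}^\top\rcheck_{h_k})^{-1}$ converging to the claimed limit. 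Your explicit remarks on invertibility of the limit and on the tower-property identity $\hslmor_{T,h_k}=\hslvor_{T,h_k}\bm D_{h_k}$ are correct and consistent with the paper's argument.
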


To obtain the asymptotic coverage probability for the prediction set of the query's full latent representation
$(U^\rmq_1,\ldots,U^\rmq_m,\wt{U}^\rmq_1,\ldots,\wt{U}^\rmq_{\wt m})$, 
we establish the following lemma proving that the class of closed  convex subsets of $\R^{r+\wt r}$, denoted by $\mathscr{C}_{r+\wt r}$, 
is Glivenko--Cantelli
with respect to the sequence of measures $\wh \mu$ that converges weakly to the true underlying prior $\mu$.
 \begin{lem}
 \label{lem:glivenko_cantelli}
     If $\mu \in \mathcal P$, then as $N \rightarrow \infty$,
     % \[
     $\sup_{A \in \mathscr{C}_{r+\wt r}}|\mu(A)-{\wh \mu}(A)| \xrightarrow{a.s.} 0$.
     % \]
\end{lem}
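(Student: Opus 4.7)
The plan is to combine the almost-sure weak convergence $\wh\mu \xrightarrow{w} \mu$ established in Lemma \ref{lem:weak_conv_emp_bayes} with a classical result on uniform weak convergence over convex sets when the limit measure is absolutely continuous with bounded density.

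First, by Assumption \ref{asm:prior_1_mom}(1), $\mu$ admits a bounded density with respect to Lebesgue measure on $\R^{r+\wt r}$. A standard fact from convex geometry is that the topological boundary $\partial A$ of any closed convex set $A \subset \R^{r+\wt r}$ has Lebesgue measure zero, so absolute continuity of $\mu$ forces $\mu(\partial A) = 0$ for every $A \in \mathscr{C}_{r+\wt r}$. Thus every closed convex set is a $\mu$-continuity set, and the Portmanteau theorem applied along the almost-sure realization $\wh\mu \xrightarrow{w} \mu$ yields the pointwise statement $\wh\mu(A) \xrightarrow{a.s.} \mu(A)$ for each fixed $A$.

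To upgrade pointwise convergence to uniformity over $\mathscr{C}_{r+\wt r}$, I would invoke the classical theorem of Rao (1962) (see also Theorem 2.11 of Bhattacharya--Ranga Rao, 1986): if a sequence of probability measures $\mu_n$ on $\R^d$ converges weakly to a measure $\mu$ whose Lebesgue density is bounded, then $\sup_{A \in \mathscr{C}_d} |\mu_n(A) - \mu(A)| \to 0$. Applying this along the almost-sure event on which $\wh\mu \xrightarrow{w} \mu$ immediately gives the lemma. Note that although $\wh\mu$ itself need not be absolutely continuous (it is typically a discrete NPMLE), only the limit $\mu$ is required to have a bounded density for the theorem to apply.

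The main obstacle is the uniformity in $A$, since weak convergence by itself only delivers pointwise convergence over an uncountable family of convex sets. The underlying argument handles this by: (i) using tightness of $\{\wh\mu\}$ (a consequence of weak convergence to a tight limit) to confine attention to a ball $B_R$ outside of which both $\mu$ and $\wh\mu$ carry uniformly negligible mass; (ii) exploiting the geometric fact that the $\epsilon$-shell $\{x: \mathrm{dist}(x,\partial A) < \epsilon\}$ of a convex set $A$ within $B_R$ has Lebesgue volume of order $\epsilon$ with a constant depending only on $R$ and the ambient dimension, so that $\mu$ assigns it mass of order $\epsilon$ uniformly in $A$; and (iii) approximating each convex set from inside and outside by polyhedra drawn from a finite $\epsilon$-net, reducing uniform control over $\mathscr{C}_{r+\wt r}$ to control on a finite family where pointwise convergence suffices. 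Since this is a well-established classical result, I would cite it directly rather than reproduce the geometric argument.
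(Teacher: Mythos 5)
Your proposal is correct. It reaches the same conclusion by what is, at bottom, the same mechanism as the paper's proof: use tightness of $\{\wh\mu\}$ to confine attention to a compact ball, use the bounded Lebesgue density of $\mu$ to control the mass near boundaries of convex sets uniformly, and reduce the uncountable supremum over $\mathscr{C}_{r+\wt r}$ to a finite family. The difference is purely in how the finite-approximation step is packaged. The paper carries it out explicitly via empirical-process machinery: it intersects each convex set with the ball, invokes Corollary 2.7.9 of van der Vaart--Wellner to get a finite $\varepsilon$-bracketing of the class of compact convex subsets of a bounded region with respect to $\mu$, and then passes the sup through the brackets. You instead cite Ranga Rao's classical theorem on uniform weak convergence over convex sets when the limit is absolutely continuous, which encapsulates exactly that reduction (your sketch of its internals --- $\epsilon$-shells of convex sets in a ball having volume $O(\epsilon)$, polyhedral nets --- is the standard proof of that theorem and is the same geometry underlying the bracketing bound). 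Your route is shorter and arguably cleaner as a citation; the paper's route is more self-contained within the empirical-process toolkit it already uses elsewhere (the same vdV--W corollary reappears in the proof of Theorem \ref{thm:pred_query_sets_full}). One point you handle correctly and that is worth keeping explicit: only the limit $\mu$ needs absolute continuity with bounded density; the NPMLE $\wh\mu$ may be discrete, and both arguments accommodate this because the boundary-mass control is needed only for $\mu$.
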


Since for any $\alpha \in (0,1)$, our prediction set $\calc_{\alpha}$, by definition, is a closed convex subset of $\R^{r+\wt r}$, Lemma \ref{lem:glivenko_cantelli} implies the following theorem.
 \begin{thm}
 \label{thm:pred_query_sets_full}
Suppose that $\mu \in \mathcal P$ and that $\wh \mu$ is any sequence of priors converging to $\mu$ weakly as $N \rightarrow \infty$  
For any $\alpha\in (0,1)$, the nominal $100\times (1-\alpha)\%$ prediction set $\calc_{\alpha}$ defined in \eqref{eq:full_pred_set} satisfies
     \[
     \lim_{N \rightarrow \infty}\mathbb P_\mu\left[(U^\rmq_1,\ldots,U^\rmq_m,\wt{U}^\rmq_1,\ldots,\wt{U}^\rmq_{\wt m}) \in \calc_{\alpha}\right]\ge 1-\alpha.
     \]
 \end{thm}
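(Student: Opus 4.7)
The plan is to reduce the unconditional coverage to a comparison of two posterior probabilities of $U^{\rmq}:=(U^{\rmq}_1,\ldots,\wt U^{\rmq}_{\wt m})$ given the observed query: the \emph{true} posterior induced by $\mu$ and the \emph{nominal} posterior induced by $\wh\mu$ that is used to define $\radobs_\alpha$. First, I would condition on the reference data and on the observed query features $\{Q_{h_k}\}$, $\{\wt Q_{\ell_k}\}$, so that $\mobs$, $\radobs_\alpha$, $\wh\mu$, and all estimated nuisance parameters become measurable with respect to the conditioning $\sigma$-algebra and, hence, deterministic inside the conditional probability. By construction of $\radobs_\alpha$ in \eqref{eq:rad_pred_set}, the nominal posterior places mass at least $1-\alpha$ on the closed ball $B_{r+\wt r}(\mobs,\radobs_\alpha)$, so the problem reduces to showing that the true conditional probability of this same ball is at least $1-\alpha-o(1)$ a.s.

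Next, I would show that, under the true data-generating mechanism, the conditional law of $U^{\rmq}$ given the observed query is asymptotically equal to the posterior under the true prior $\mu$ combined with the population analogue of the observation model \eqref{eq:pred_y_q} in which $\wh{\bm D}_h$, $\hslv_{T,h}$, and $\hatl_\ell$ are replaced by $\bm D_h$, $\hslvor_{T,h}$, and $\bm L_\ell$. Theorem \ref{thm:point_pred} (together with \eqref{eq:query-low} for the already-Gaussian low-dimensional queries) justifies this asymptotic equivalence. Combining Lemma \ref{lem:weak_conv_emp_bayes} ($\wh\mu\xrightarrow{w}\mu$), the nuisance consistency of Lemma \ref{lem:consistency_nuisance}, and Theorem \ref{thm:em_bayes_state_evol}, the nominal posterior $\pi_N(\cdot\mid\hatuq,\tq)$ converges weakly to the true posterior $\pi_\infty(\cdot\mid\hatuq,\tq)$; both are absolutely continuous with densities uniformly bounded by the bounded Lebesgue density of $\mu$ (Assumption \ref{asm:prior_1_mom}(1)) times a bounded Gaussian likelihood ratio. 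Consequently, the proof of Lemma \ref{lem:glivenko_cantelli} applies to this sequence of posteriors, giving uniform convergence over $\mathscr{C}_{r+\wt r}$. Since the Euclidean ball $B_{r+\wt r}(\mobs,\radobs_\alpha)\in\mathscr{C}_{r+\wt r}$, the true and nominal conditional probabilities of the ball differ by $o(1)$; taking expectations over the conditioning data and invoking dominated convergence completes the proof.

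The main obstacle is the uniformity of the posterior convergence \emph{in the random observation} $(\hatuq,\tq)$, which itself depends on $N$. To make the Glivenko--Cantelli step rigorous on this random family, I would first use Theorem \ref{thm:point_pred}, Lemma \ref{lem:consistency_nuisance}, and the finite second moments from Assumption \ref{asm:prior_1_mom}(2) to deduce tightness of $(\hatuq,\tq)$ and of $U^{\rmq}$; then restrict to a compact event of probability at least $1-\varepsilon$, and verify on that event that the nominal and true posterior densities are jointly continuous in $(\hatuq,\tq)$ with a common $L^\infty$ bound. On this compact set, Portmanteau plus uniform continuity of the posterior densities delivers uniform-in-observation convergence over $\mathscr{C}_{r+\wt r}$; sending $\varepsilon\downarrow 0$ yields the conclusion.
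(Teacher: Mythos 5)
Your proposal is correct and follows essentially the same route as the paper's proof: the nominal conditional coverage is at least $1-\alpha$ by construction of $\wh b_\alpha$, Theorem \ref{thm:point_pred} supplies the asymptotic Gaussian observation model, and the gap between the nominal and true posteriors is closed by tightness plus a bracketing/Glivenko--Cantelli argument over closed convex sets, uniformly over a compact set of observations and perturbations. The only difference is organizational: the paper runs the comparison through a chain of unconditional expectations ($T_1$--$T_3$, after a Skorokhod representation of the estimation error $E^{\rmq}$) and finishes by applying Lemma \ref{lem:glivenko_cantelli} to the marginal probabilities of the fixed ball, whereas you compare the two posteriors directly and then integrate, which amounts to the same estimates.
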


\begin{section}{Acknowledgement}
The research presented in this manuscript has been supported by NSF awards DMS-2345215 and DMS-2245575.
\end{section}

\bibliographystyle{apalike} % Style BST file (imsart-number.bst or imsart-nameyear.bst)
\bibliography{amp_multimodal_data}

\appendix

\section{\revzms{Additional results from the TEA-seq data example}}
\label{sec:app_tea_seq}

\subsection{Scree plots of the ATAC and the RNA data matrices}
% \nb{insert some description and the plots}

\revzm{In Fig \ref{fig:sidebyside_scree_plot}, we plot the leading singular values of the ATAC data matrix $\bm X_1$ and the RNA data matrix $\bm X_2$ in the TEA-seq data. 
The singular values were computed after the pre-processing steps described at the beginning of Section 4. 
The scree plots demonstrate the validity of our low-rank modeling approach on this data example.}

\begin{figure}[tbh]
    \centering
    \begin{minipage}{0.35\linewidth}
        \centering
        \includegraphics[width=\linewidth]{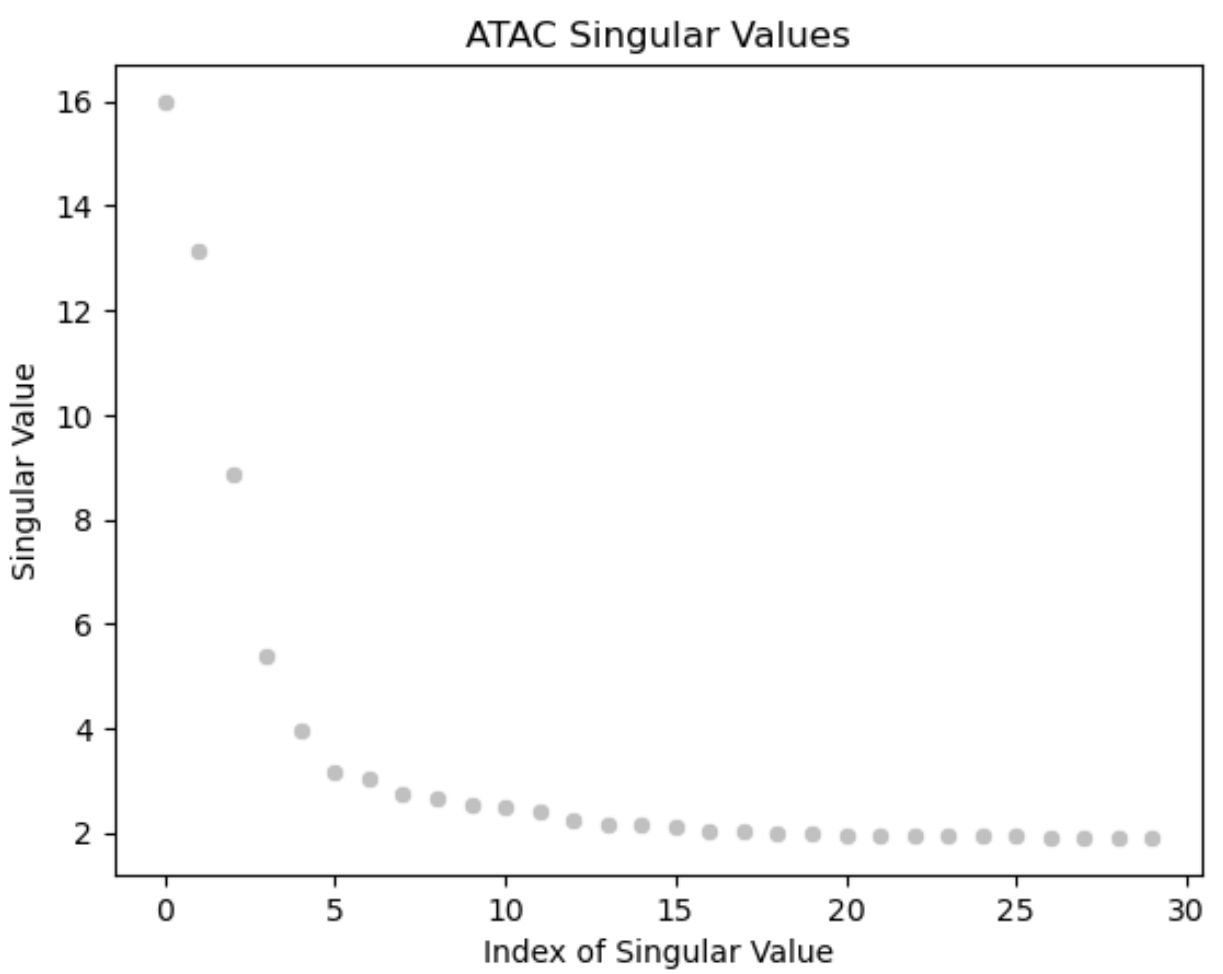}
        % \caption*{(a) First plot} % optional subcaption
    \end{minipage}
    % \hfill
    \begin{minipage}{0.35\linewidth}
        \centering
        \includegraphics[width=\linewidth]{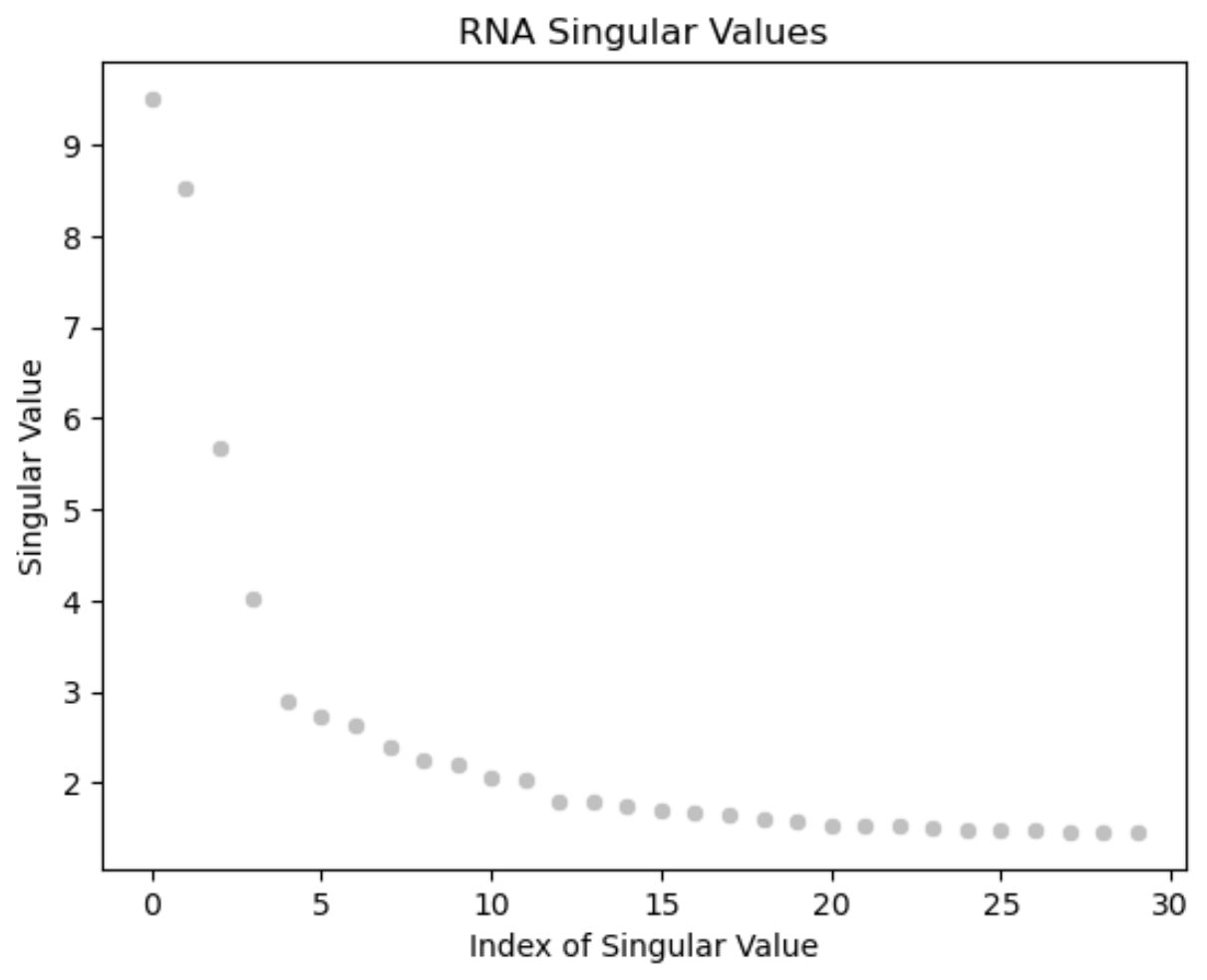}
        % \caption*{(b) Second plot} % optional subcaption
    \end{minipage}
    \caption{Scree plots of the empirical singular values of the ATAC measurement matrix $\bm X_1$ (left) and the RNA measurement matrix $\bm X_2$ (right), after preprocessing. 
    % Both display a clear elbow pattern, supporting the assumption of a low-rank signal plus Gaussian noise model.
    }
    \label{fig:sidebyside_scree_plot}
\end{figure}

\subsection{Prediction sets for the \revzms{held-out} CD8 effector cell and the \revzms{held-out} pre-B cell}
\revsn{In Figures~\ref{fig:atlas_tea_seq_pred_cd8e} and \ref{fig:atlas_tea_seq_pred_pbc}, we illustrate the $95\%$ prediction sets for the latent embedding of the held-out CD8 effector cell and the held-out pre-B cell, respectively. 
These figures are set up in the same way as Figure 2.
\revzm{Overall, these prediction sets exhibit comparable behaviors to those described in Section 4.2, which suggests that the proposed prediction set construction approach is not only robust with respect to different querying modalities, but also with respect to the cell state of the querying cell.}
% We observe that the most concentrated prediction sets were obtained when the atlas was queried using the RNA modality for both these cells. Furthermore, in contrast to the Double Negative T cell, for the CD8 Effector cell, the most diffuse prediction set was obtained when the atlas is queried with the protein modality. For the pre-B cell, all three modalities were more or less equally effective in producing informative prediction sets.
}

%Figure~\ref{fig:atlas_tea_seq_pred_pbc} shows, for the held-out pre-B cell, the relative locations of 500 points randomly sampled from the prediction sets obtained by querying the multimodal atlas with each modality separately; an analogous plot for the held-out CD8 Effector cell appears in Figure~\ref{fig:atlas_tea_seq_pred_cd8e}. Consistent with the main text \cite{nandy2024multimodal}, the informativeness of these sets—i.e., their concentration or width—varies markedly with both the query cell type and the modality used. For the CD8 Effector cell, the protein-based query yields a diffuse (less informative) set, whereas the RNA-based query produces the narrowest, most concentrated set—contrasting with Figure~3 of \citet{nandy2024multimodal}, where proteins were most informative for the held out Double Negative T cell. In contrast, all three modalities produce reasonably concentrated prediction sets for the held-out pre-B cell, in part because the pre-B cluster is comparatively large and well separated in the reference.}

\begin{figure}[tbh]
    \centering
    \includegraphics[width = 0.7\textwidth]{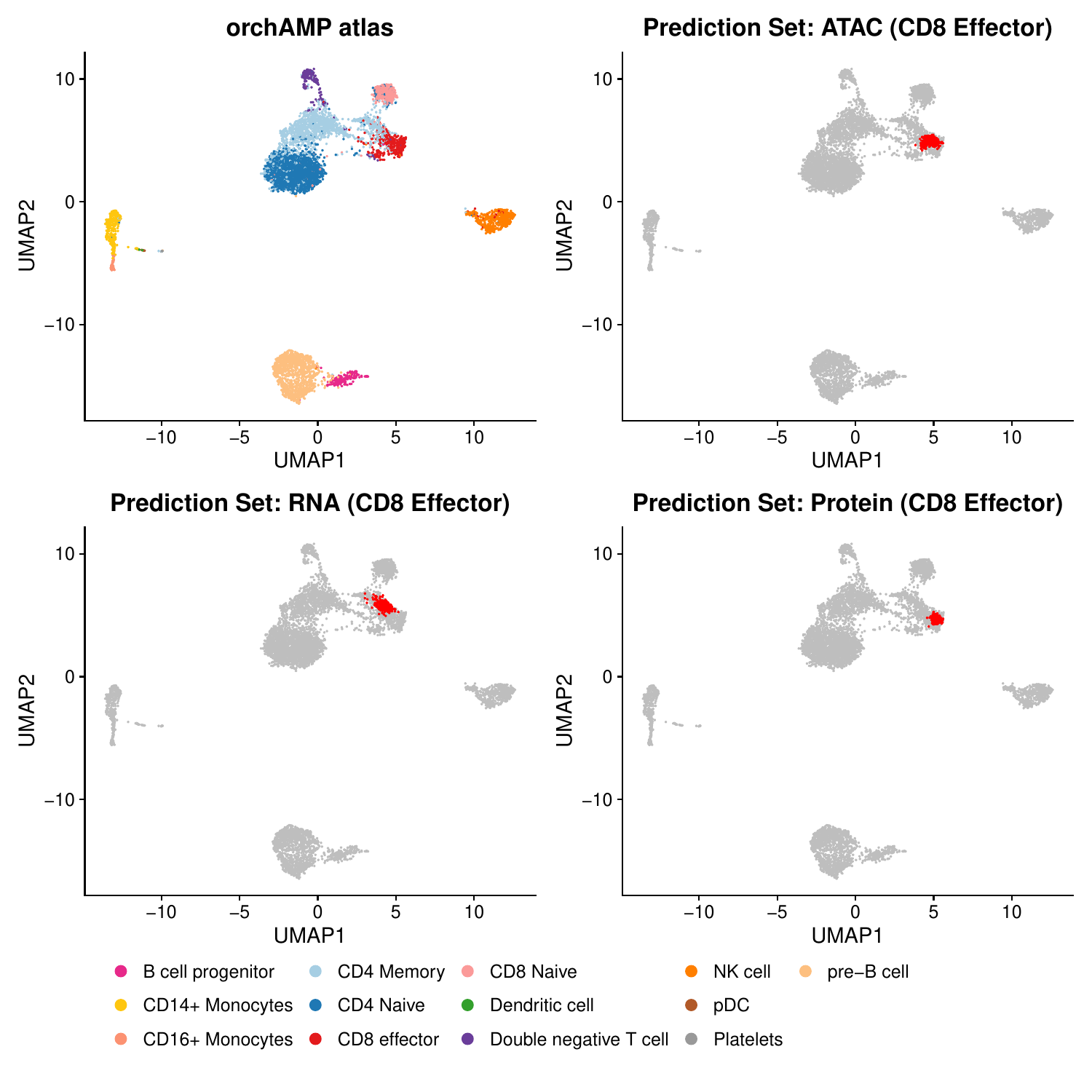}
    \caption{
	Visualization of prediction sets constructed by Algorithm 2. 
	    % for a held-out \textsf{double Negative T} cell.
		Top left: OrchAMP cell atlas constructed 
	    % by Algorithm \ref{alg:orchamp} 
	    on $6320$ TEA-seq cells. Cells are colored according to their cell type annotations in \cite{10.7554/eLife.63632}.
		\revzms{Visualizations of the $95\%$ prediction set ($500$ randomly sampled points from the set in red and atlas cells in grey)} when queried by the ATAC observation of a held-out \textsf{CD8 effector} cell \revzms{(top right), its RNA observation (bottom left), and its Protein observation (bottom right), respectively}.
	% Visualization of prediction sets constructed by Algorithm \ref{alg:pred} for the held-out \textsf{CD8 Effector} cell.
	% Top left: OrchAMP cell atlas constructed by Algorithm \ref{alg:orchamp} on $6338$ TEA-seq cells. Cells are colored according to their cell type annotations in \citet{10.7554/eLife.63632}.
	% Top right: $500$ randomly sampled points from the $95\%$ prediction sets when the atlas is queried by the ATAC reads.
	% Bottom left: $500$ randomly sampled points from the $95\%$ prediction sets when the atlas is queried by the RNA expressions.
	% Bottom right: $500$ randomly sampled points from the $95\%$ prediction sets when the atlas is queried by the Protein expressions.
	% Other than the top left panel, the atlas points are in {grey}, and the sampled points are in {red}.}
	}
    \label{fig:atlas_tea_seq_pred_cd8e} 
\end{figure}

\begin{figure}[tbh]
    \centering
    \includegraphics[width = 0.7\textwidth]{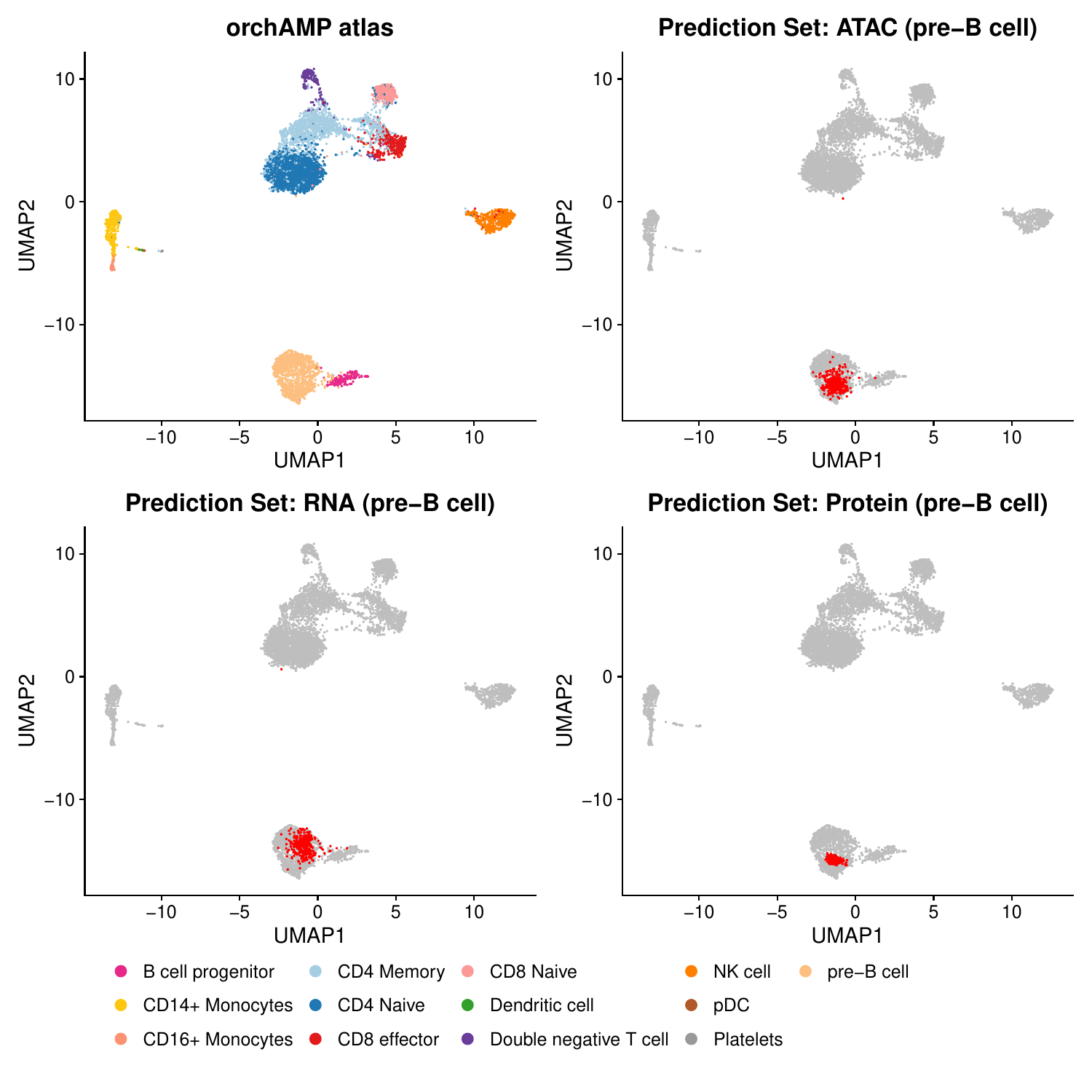}
    \caption{
	Visualization of prediction sets constructed by Algorithm 2. 
	    % for a held-out \textsf{double Negative T} cell.
		Top left: OrchAMP cell atlas constructed 
	    % by Algorithm \ref{alg:orchamp} 
	    on $6320$ TEA-seq cells. Cells are colored according to their cell type annotations in \cite{10.7554/eLife.63632}.
		\revzms{Visualizations of the $95\%$ prediction set ($500$ randomly sampled points from the set in red and atlas cells in grey)} when queried by the ATAC observation of a held-out \textsf{pre-B} cell \revzms{(top right), its RNA observation (bottom left), and its Protein observation (bottom right), respectively}.
		}
	% Visualization of prediction sets constructed by Algorithm \ref{alg:pred} for the held-out \textsf{pre-B} cell.
	% Top left: OrchAMP cell atlas constructed by Algorithm \ref{alg:orchamp} on $6338$ TEA-seq cells. Cells are colored according to their cell type annotations in \citet{10.7554/eLife.63632}.
	% Top right: $500$ randomly sampled points from the $95\%$ prediction sets when the atlas is queried by the ATAC reads.
	% Bottom left: $500$ randomly sampled points from the $95\%$ prediction sets when the atlas is queried by the RNA expressions.
	% Bottom right: $500$ randomly sampled points from the $95\%$ prediction sets when the atlas is queried by the Protein expressions.
	% Other than the top left panel, the atlas points are in {grey}, and the sampled points are in {red}.}
    \label{fig:atlas_tea_seq_pred_pbc}
\end{figure}

% \subsection{A CITE-seq data example}
\section{\revzm{Benchmarking atlas construction methods on a CITE-seq data example}}
\label{sec:bench_cite_seq}

\revzm{To further benchmark OrchAMP against state-of-the-art atlas construction methods for multi-omics data, we present an additional CITE-seq data example which has two modalities (RNA, with roughly $20000$ transcripts, and Protein, with $224$ surface proteins), which consists of \revsn{$10000$ peripheral blood mononuclear cells (PBMCs) randomly sampled from the large-scale CITE-seq dataset in \citet{HAO20213573} after 
% subjecting them to the
standard pre-processing 
% pipeline
recommended in the source study.}}
\revsn{
% The dataset contained two modalities: whole-transcriptome gene expression (RNA modality), comprising roughly $20{,}000$ transcripts, and cell-surface protein measurements (Protein modality), assayed using a panel of up to $224$ antibodies.
We retained $5000$ highly variable genes and $30$ highly variable protein markers for our analysis.} 
\revzm{Naturally, RNA is the high-dimensional modality and Protein low-dimensional.}
\revsn{\revzm{When the data was initially collected,} cells were processed across eight lanes of a 10x Genomics Chip B, which introduced noticeable batch effects. 
% As our method does not inherently correct for such effects, applying it directly to this dataset could result in spurious clusters. To mitigate this issue,
We employed Harmony \citep{Korsunsky2019}
% , a state-of-the-art batch integration method,
to remove lane-specific artifacts \revzm{in both modalities.
}}
\revzm{In addition to WNN and MOFA+, we also included in our comparison below totalVI \cite{gayoso2021joint}, which is an atlas construction method initially designed specifically for CITE-seq data.
Harmony-corrected values were supplied to all atlas construction methods except for totalVI, which has a built-in batch correction capacity and only requires batch indices as part of its input.}

% \revsn{For this illustrative example, we have analyzed $10{,}000$ peripheral blood mononuclear cells (PBMCs) randomly sampled from the large-scale CITE-seq dataset of \citet{HAO20213573} after subjecting them to the standard pre-processing pipeline recommended in the source study. }

%, which originally profiled approximately $211{,}000$ PBMCs. Following the standard preprocessing procedures described in the source study to filter out low-quality cells, we randomly selected a subset of high-quality cells for downstream analysis.

%After feature selection, we retained . Accordingly, the two modalities are represented as $\bm{X}_1 \in \mathbb{R}^{10000 \times 5000}$ for RNA and $\wt{\bm{X}}_1 \in \mathbb{R}^{10000 \times 20}$ for protein.

\subsection{Cell atlas construction}
\revsn{Based on a comparison between the empirical singular value distribution and the appropriately rescaled Marchenko–Pastur law, we had set the latent dimension of the high-dimensional modality to $r_1=50$. We then adopted an Empirical Bayes approach to estimate the priors: $\mu$ for the concatenated rows of $\bm U_1$ and $\wt{\bm U}_1$, and the rows of $\bm V$. Both priors were modeled using Gaussian mixtures, with $8$ mixture components for $\mu$ and $4$ for $\nu$. These choices were determined by the same procedure as described in Section 4, \revzm{and were used in the OrchAMP atlas construction reported below.}}

\revsn{Figure~\ref{fig:atlas_cite_seq} plots the UMAP representations \cite{mcinnes2018umap-software} of the \revzm{OrchAMP} atlas (top right), alongside the UMAP representations obtained from three other methods: the WNN-based integration of \citet{HAO20213573} (top left), using 50 principal components for the RNA modality and 30 for the protein modality, MOFA+ \cite{argelaguet2020mofa+} (bottom right) with 20 latent factors,  and totalVI \cite{gayoso2021joint} (bottom left) with 45 latent dimensions in the auto-encoder.}
\revzm{Visual inspection of the UMAPs suggests comparable performance of all four methods in separating the major cell types.}

\begin{figure}
    \centering
    \includegraphics[width = 0.7\textwidth]{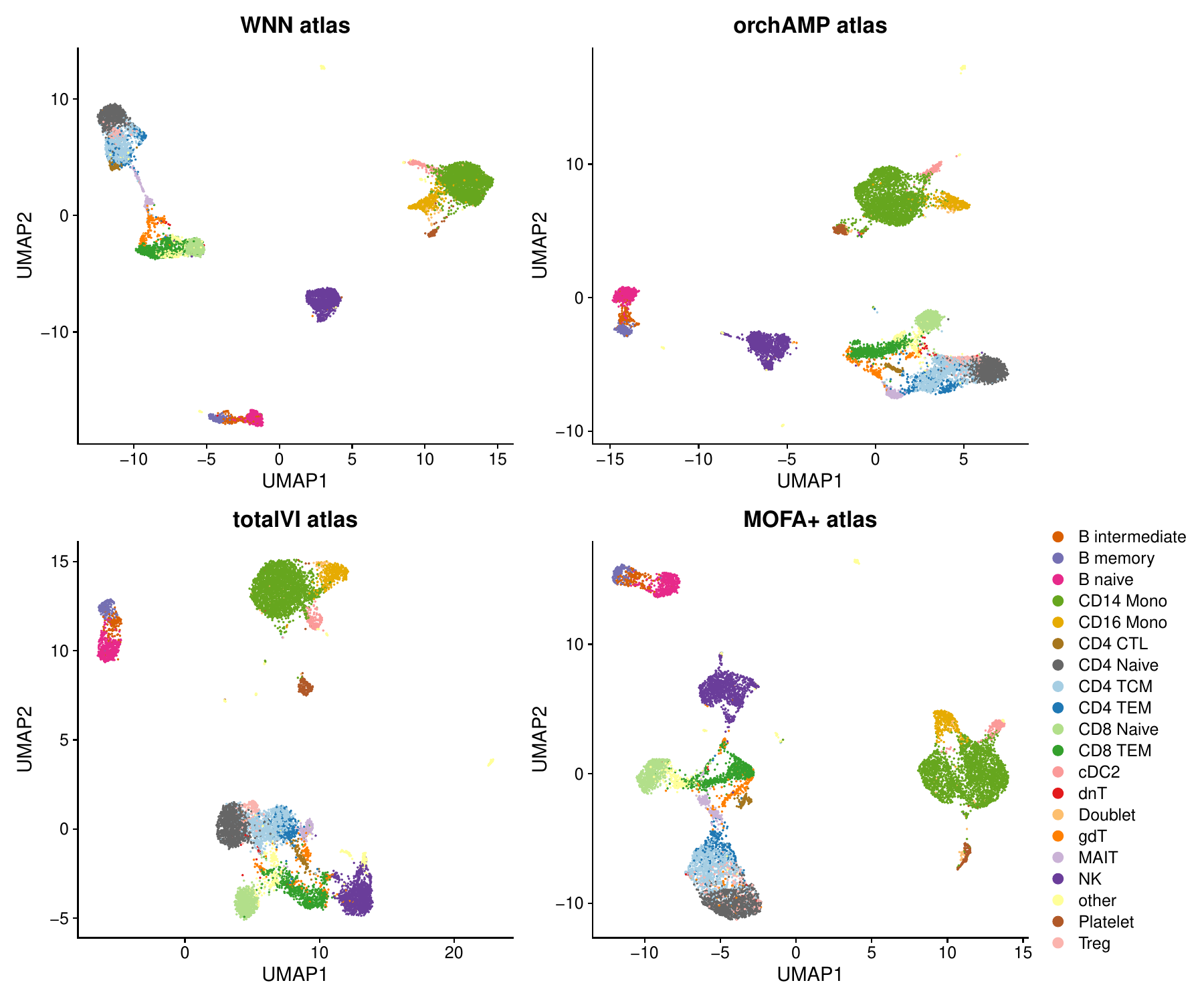}
    \caption{\revzm{UMAPs of cell atlases constructed from $10000$ CITE-seq PBMC cells by WNN (top left), OrchAMP (top right), totalVI (bottom left), and MOFA+ (bottom right). All cells colored according to cell type annotations in \cite{HAO20213573}.}}
	% Left Panel: UMAP of atlas generated using WNN based method from \cite{HAO20213573}, Middle Panel: UMAP of atlas generated using totalVI from \cite{gayoso2021joint}, Right Panel: UMAP of atlas generated using OrchAMP }
    \label{fig:atlas_cite_seq}
\end{figure}

% \revsn{We implemented Algorithm 1 of \citet{nandy2024multimodal} for $T=10$ iterations, yielding embeddings $\bm U_{10,1} \in \mathbb{R}^{10000 \times 45}$ and $\wt{\bm U}_{10,1} \in \mathbb{R}^{10000 \times 20}$. Concatenating these embeddings produced the multimodal cell atlas
% % \[
% $\bm U \;=\; \big[ \, \bm U_{10,1} \;\; \wt{\bm U}_{10,1} \, \big] \;\in\; \mathbb{R}^{10000 \times 65}$.
% % \]
% }

\revzm{Table \ref{tab:clustering_cite_seq} reports the four atlas quality metrics of the methods in comparison on the CITE-seq data, with the best of each metric in bold.
Definitions of these metrics (the average silhouette score \cite{rousseeuw1987silhouettes}, the V-measure \cite{rosenberg2007vmeasure}, the adjusted Rand index (ARI) \cite{Hubert1985} and the cell-type Local Inverse Simpson’s Index (LISI) \cite{Korsunsky2019}) are given in Section \ref{sec:clusreing_metrics}.
These metrics further confirmed that the performance of OrchAMP matched state of the art, while OrchAMP uniquely enjoys statistical optimality guarantees and theoretically justifiable prediction set coverage in downstream querying tasks.}

% \revsn{We also evaluated the three methods using standard clustering metrics (Table~\ref{tab:clustering_cite_seq}): the average silhouette score \cite{rousseeuw1987silhouettes}, the V-measure \cite{rosenberg2007vmeasure}, adjusted Rand index (ARI) \cite{Hubert1985} and the cell-type Local Inverse Simpson’s Index (LISI) \cite{Korsunsky2019}. Overall, these metrics suggest that our approach performed comparably to the two state-of-the-art baselines akin to the example in Section 4 of \cite{nandy2024multimodal}.}

\begin{table}[t]
\centering
\scriptsize
\setlength{\tabcolsep}{3pt}
\renewcommand{\arraystretch}{0.95}
\caption{Comparison of clustering metrics among WNN, OrchAMP, totalVI and MOFA+ on the CITE-seq dataset. Higher average silhouette score, ARI and V-measure indicate better separation, while lower cLISI indicates improved segregation of clusters.}
\begin{tabular}{ccccc}
\toprule
Clustering Metric & WNN  & OrchAMP & totalVI & MOFA+ \\
% \hline
\midrule
Average silhouette score & {\bf 0.393}  & 0.343 & 0.329& 0.329\\
% \hline
V-measure & 0.791  & {\bf 0.796} & 0.775 & 0.794 \\
% \hline
Adjusted Rand Index & {\bf 0.622}  & 0.610 & 0.619 & 0.605 \\
% \hline
cLISI & 1.223  & {\bf 1.186} & 1.188 & 1.217 \\
% \hline
\toprule
\end{tabular}
% \caption{Performance of the multimodal atlas in terms of clustering metrics. Higher average silhouette score, ARI and V-measure indicate better separation, while lower cLISI indicates improved segregation of clusters.}
\label{tab:clustering_cite_seq}
\end{table}

\section{Definitions of the clustering metrics}
\label{sec:clusreing_metrics}
\revsn{To assess the quality of the constructed atlases in terms of separating cells of different 
states
% lineages 
into well-defined clusters, we have employed four metrics, namely average silhouette score \cite{rousseeuw1987silhouettes}, adjusted Rand index \cite{Hubert1985}, V-measure \cite{rosenberg2007vmeasure}, and cell-type LISI \cite{Korsunsky2019}. In this subsection, we provide brief descriptions of these clustering metrics.}
\begin{enumerate}
    \item \textbf{Average Silhouette score.} \revsn{This metric
    % measure 
    measures how well a point lies within its assigned cluster compared to other clusters. If we define $a(i)$ to be the average intra-cluster Euclidean distance of the  point $i$ and $b(i)$ to be the minimum average Euclidean distance of the point $i$ to the other clusters, then the silhouette score for point $i$ is given by
    \[
    s(i)=\frac{b(i)-a(i)}{\max\{a(i),b(i)\}}.
    \]
    In this paper, we report the average silhouette score for all points computed using their UMAP embeddings,
    % of the atlas points, 
    with cluster labels 
    being the ground truth
    % specified by  
    cell types.}
    % metadata.

    \item \textbf{Adjusted Rand Index.} \revsn{This is a measure of how well two cluster partitions of a point set align. The reconstructed atlas shows some segregation of the points into clusters, which we recovered by Louvain clustering on the atlas points. Next, let $c$ be the number of point pairs assigned to the same cluster by both the cluster labels obtained by clustering the atlas points and those specified in the cell-type metadata. Similarly, $d$ be the pairs of points assigned to different clusters by the two clustering schemes. Then the Rand Index ($\mathrm{RI}$) is given by
    \[
    \mathrm{RI}=\frac{c+d}{{n \choose 2}}.
    \]
    Let $\mathbb E[\mathrm{RI}]$ be the expected value of the $\mathrm{RI}$ under random clustering but fixed cluster sizes, then the Adjusted Rand Index is defined as:
    \[
    \mathrm{ARI}=\frac{\mathrm{RI}-\mathbb E[\mathrm{RI}]}{\max(\mathrm{RI})-\mathbb E[\mathrm{RI}]},
    \]
    where $\max(\mathrm{RI})$ is the maximum possible value of the Rand Index given the cluster sizes of the two cluster partitions. }

    \item \textbf{V-measure.} \revsn{The V-measure provides an entropy-based evaluation scheme of two sets of cluster labels. In our case, these sets are the cluster labels obtained using Louvain clustering on the atlas points and the original cell-type metadata. The V-measure computes two quantities, namely, homogeneity $(h)$ and completeness $(c)$. Let $K$ be the true cell labels and $C$ be the Louvain clusters. Then 
    \[
    h=1-\frac{H(K\mid C)}{H(K)}, \quad \mbox{and} \quad c=1-\frac{H(C\mid K)}{H(C)} ,
    \]
    where $H(\cdot)$ refers to the Shannon entropy of a distribution. Then, the V-measure is defined as
    \[
    \text{V-measure}= \frac{2\times h \times c}{h+c}.
    \]
    Higher values of V-measure indicate that the reconstructed atlas adheres to the cluster structure encoded in the metadata.}

    \item \textbf{cLISI.} \revsn{The cell-type Local Inverse Simpson's Index (cLISI) is used to characterize how homogenous the local neighborhood of each cell is when such neighborhoods are determined by the atlas embeddings. In that direction, we computed the true cell-type composition 
    % (we used the true cell labels supplied by the metadata) 
    of the local neighborhood of each cell (10 nearest neighbors). Let $p_{\ell,i}$ be the proportion of cells of type $\ell$ in the neighborhood of cell $i$. Then $\mathrm{LISI}_i$ is defined as
    \[
    \mathrm{LISI}_i=\frac{1}{\sum_{\ell}p^2_{\ell,i}}.
    \]
    We report the average of the $\mathrm{LISI}_i$ over all cells as the cLISI measure.}
\end{enumerate}

\section{Proof of Proposition \ref{prop:singular_values} }
Let us denote the set of true latent factors by 
\[
\mathscr S_{m,\wt m} = \{\bm U_1,\ldots,\bm U_m,\wt{\bm U}_1,\ldots,\wt{\bm U}_{\wt m},\bm V_1,\ldots,\bm V_m\}.
\]
Observe that since $\frac{|\mathcal F_L|}{N} \rightarrow \lambda_L \in (0,1]$, using the definition of pseudo-Lipschitz functions \eqref{eq:pseudo_lips} and the Cauchy-Schwarz inequality, we can get a constant $C>0$ such that for sufficiently large values of $N$,
\begin{align*}
    &\Bigg|\frac{1}{|\mathcal F_L|}\sum_{i\in \mathcal F_L}\psi((\upca_{0,1})_{i*},\dots,(\upca_{0,m})_{i*},(\wt{\bm X}_{1})_{i*},\dots,(\wt{\bm X}_{\wt m})_{i*},(\bm U_{1})_{i*},\ldots,(\bm U_{m})_{i*},\\
    &\hskip 25em (\wt{\bm U}_{1})_{i*},\ldots,(\wt {\bm U}_{\wt m})_{i*})\\
    &\hskip 1em -\frac{1}{|\mathcal F_L|}\sum_{i\in \mathcal F_L}\psi\bigg(\hslmorp_{0,1}(\bm U_1)_{i*}+(\hslvorp_{0,1})^{1/2}(\bm Z^L_1)_{i*},\dots,\hslmorp_{0,m}(\bm U_m)_{i*}+(\hslvorp_{0,m})^{1/2}(\bm Z^L_m)_{i*},\\
    &\hskip 12em (\wt{\bm X}_{1})_{i*},\dots,(\wt{\bm X}_{\wt m})_{i*},(\bm U_{1})_{i*},\ldots,(\bm U_{m})_{i*},(\wt{\bm U}_{1})_{i*},\ldots,(\wt {\bm U}_{\wt m})_{i*}\bigg)\Bigg|\\
    & \le \frac{C}{\lambda_L}\Bigg(1+\sum_{h=1}^{m}\frac{\|\bm U_h\|_F}{\sqrt{N}}+\sum_{h=1}^{m}\frac{\|\upca_{0,h}\|_F}{\sqrt{N}}+\sum_{h=1}^{m}\frac{\|\bm U_h(\hslmorp_{0,h})^\top+\bm Z^L_h(\hslvorp_{0,h})^{1/2}\|_F}{\sqrt{N}} \\
    &\hskip 25em+\sum_{\ell=1}^{\wt m}\frac{\|\wt{\bm X}_\ell\|_F}{\sqrt{N}}+\sum_{\ell=1}^{\wt m}\frac{\|\wt{\bm U}_\ell\|_F}{\sqrt{N}}\Bigg)\\
    &\hskip 5em\times \left(\sum_{h=1}^{m}\frac{1}{\sqrt{N}}\|\upca_{0,h} - \bm U_h(\hslmorp_{0,h})^\top-\bm Z^L_h(\hslvorp_{0,h})^{1/2}\|_F\right).
\end{align*}
In the above asymptotic statements, 
% given 
conditional on
$\mathscr S_{m,\wt m}$, 
the random matrices $\{\bm Z^L_h \in \R^{N \times r_h}:h\in [m]\}$ and $\{\bm Z^R_h \in \R^{p_h \times r_h}:h \in [m]\}$ are mutually independent with iid $N(0,1)$ entries.
% distributed as . 
Therefore, to prove \eqref{eq:lim_u_pca}, it suffices to show that for all $h \in [m]$:
\begin{align}
\label{eq:sing_vec_F}
    \frac{1}{N}\left\|\upca_{0,h} - \bm U_h(\hslmorp_{0,h})^\top-\bm Z^L_h(\hslvorp_{0,h})^{1/2} \right\|^2_F \xrightarrow{a.s.} 0.
\end{align}
The rest of the proof follows using the Strong Law of Large Numbers. Similarly, since $\frac{|\mathcal F_{R,h}|}{p_h} \rightarrow \lambda_{R,h} \in (0,1]$ for all $h \in [m]$, to show \eqref{eq:lim_v_pca}, it suffices to show that
\begin{align}
\label{eq:sing_vec_G}
    \frac{1}{p_h}\left\|\vpca_{0,h} - \bm V_{h}(\hsrmorp_{0,h})^\top-\bm Z^R_h(\hsrvorp_{0,h})^{1/2} \right\|^2_F \xrightarrow{a.s.} 0, \quad \mbox{for all $h \in [m]$.}
\end{align}
In what follows, 
we focus on proving \eqref{eq:sing_vec_F}.
The proof of \eqref{eq:sing_vec_G} is similar.

First, let us condition on the set of true signal matrices given by $\mathscr S_{m,\wt m}$. Then the matrices $\wb{\bm X}_1,\ldots,\wb{\bm X}_m$ are independent. Let $\gsu_1,\ldots,\gsu_m$, be the matrices obtained by applying Gram-Schmidt orthogonalization to $\bm U_1,\ldots,\bm U_m$, respectively, with column norms set at $\sqrt{N}$. 
For each $h \in [m]$, let the projection matrix onto the column space of $\bm U_h$ and that onto the orthogonal complement of the column space of $\bm U_h$ be denoted by $\projpar_{h}$ and $\projperp_h$, respectively. Then, by the definition of projection matrices we have
\[
\projpar_h = \frac{\gsu_h(\gsu_h)^\top}{N} \quad \mbox{and} \quad  \projperp_h=\bm I_{N}-\frac{\gsu_h(\gsu_h)^\top}{N}.
\]
Since, by assumption we have $(\bm D_{h})_{kk}>\gamma^{-1/4}_h$ for all $k \in [r_h]$, using Lemma A.3 of \cite{eb_pca} and Theorem 2.9 of \cite{BENAYCHGEORGES2012120}, we have
\begin{align}
\label{eq:mean_projection_pca}
\lim_{N \rightarrow \infty}\frac{1}{N}(\gsu_h)^\top\upca_{0,h} = \lim_{N \rightarrow \infty}\frac{1}{N}\bm U^\top_h\upca_{0,h} \overset{a.s.}{=} \hslmorp_{0,h}, \quad \mbox{for all $h \in [m]$.}
\end{align}
Let us fix a set $\mathcal T$ such that on $\mathcal T$, \eqref{eq:mean_projection_pca} holds and $\mathbb P_{\mu}(\mathcal T)=1$.
Henceforth, we shall work on $\mathcal T$. Since, $\frac{1}{N}(\upca_{0,h})^\top\upca_{0,h} = \bm I_N$, we can conclude using \eqref{eq:mean_projection_pca} that
\[
\frac{1}{N}(\projperp_h\upca_{0,h})^\top(\projperp_h\upca_{0,h}) = \bm I_N-\hslmorp_{0,h}(\hslmorp_{0,h})^\top = \hslvorp_{0,h}.
\]
Here, the last equality follows from the definition of $\hslvorp_{0,h}$. Since $\hslvorp_{0,h}$ is positive definite for all $h \in [m]$, $\projperp\upca_{0,h}$ has full column rank for all $h \in [m]$.
Let $\gsuperp_h$ be the Gram-Schmidt orthogonalization of the columns of  $\projperp_h\upca_{0,h}$. For all $h \in [m]$, we have:
\begin{align}
\label{eq:decomp_pca}
\upca_{0,h} = \projpar_h\upca_{0,h}+\projperp_h\upca_{0,h} = \gsu_h\bm \Omega_h+\gsuperp_h\bm\Omega^\perp_h,
\end{align}
where $\bm \Omega_h=N^{-1}(\gsu_h)^\top\upca_{0,h}$ and $\bm\Omega^\perp_h=N^{-1}(\gsuperp_h)^\top\upca_{0,h}$ for all $h \in [m]$. 
Clearly, on $\mathcal T$, 
% the matrices 
$\bm \Omega_{h}\rightarrow \hslmorp_{0,h}$ and $(\bm\Omega^\perp_h)^\top\bm\Omega^\perp_h \rightarrow \hslvorp_{0,h}$, as $N \rightarrow \infty$. 
Let $\{\bm O_h \in \R^{N \times N}:h \in [m]\}$ 
%\nb{apparent, $\bm O_h$ needs to be $N\times N$! -zm20240723}
be a collection of orthogonal matrices such that for all $h \in [m]$, we have $\bm O_h\gsu_h=\gsu_h$. 
By Lemma A.3 of \cite{eb_pca}, this implies $\bm O_h\bm U_h=\bm U_h$ for all $h \in [m]$. As the conditional law of $\bm W_h$ given $\mathscr S_{m,\wt m}$ is invariant under multiplication of $\bm O_h$, we have for all $h \in [m]$:
\begin{align}
    \bm O_h\wb{\bm X}_h = \frac{1}{N}\bm O_h\bm U_h\bm D_h\bm V^\top_h+\frac{1}{\sqrt{N}}\bm O_h\bm W_h \overset{d}{=} \wb{\bm X}_h,
\end{align}
conditioned on $\mathscr S_{m,\wt m}$.
Since, $\upca_{0,h}$ contains the leading $r_h$ left singular vectors of $\wb{\bm X}_h$, this implies $\bm O_h\upca_{0,h}\overset{d}{=}\upca_{0,h}$ conditioned on $\mathscr S_{m,\wt m}$. This in turn implies that for all $h \in [m]$, conditioned on $\mathscr S_{m,\wt m}$, we have $\bm O_h\gsuperp_h \overset{d}{=} \gsuperp_h$. Therefore, conditioned on $\mathscr S_{m,\wt m}$, the matrices $\gsuperp_h$, for $h \in [m]$, are Haar uniformly distributed on the Steifel Manifolds 
%\nb{$\bm I_N$ in the following display should be $\bm I_{r_h}$? -zm20240723}
\[
\mathcal U_h=\Bigg\{\gsuperp_h:\bm U^\top_h\gsuperp_h=0;\,\frac{1}{N}(\gsuperp_h)^\top\gsuperp_h=\bm I_{r_h}\Bigg\},
\]
respectively, and are independent. 
Using properties of the Haar measure on the Steifel manifold, we can construct a collection of independent matrices $\{\bm Z^L_h \in \R^{N \times r_h}: h \in [m]\}$, with the entries independently and identically distributed as $N(0,1)$ conditioned on $\mathscr S_{m,\wt m}$, such that for all $h \in [m]$ we have $\gsuperp_h=\projperp_h\bm Z^L_h\left(\frac{(\bm Z^L_h)^\top\projperp_h\bm Z^L_h}{N}\right)^{-1/2}$. Using \eqref{eq:decomp_pca}, for $h \in [m]$, we have the following relation:
\[
\upca_{0,h}=\gsu_h\bm \Omega_h+\projperp_h\bm Z^L_h\left(\frac{(\bm Z^L_h)^\top\projperp_h\bm Z^L_h}{N}\right)^{-1/2}\bm \Omega^\perp_h.
\]
By Strong Law of Large Numbers and the fact that $\projperp_h$ is idempotent, $\frac{(\bm Z^L_h)^\top\projperp_h\bm Z^L_h}{N} \stackrel{a.s.}{\rightarrow} \bm I_N$ and $(1/N)\|\projperp_h\bm Z^L_h-\bm Z^L_h\|^2_F 
\stackrel{a.s.}{\rightarrow}
% \rightarrow 
0$, as $N 
\rightarrow 
\infty$. Hence, using Lemma A.3 of \cite{eb_pca}, 
% given $\mathscr S_{m,\wt m}$ 
we have
\[
\mathbb P\left(\left\{\frac{1}{N}\|\upca_{0,h} - \bm U_h(\hslmorp_{0,h})^\top-\bm Z^L_h(\hslvorp_{0,h})^{1/2}\|^2_F \rightarrow 0\; \forall h\in[m]\right\}\cap \mathcal T\bigg|\mathscr S_{m,\wt m}\right) = 1. 
\]
Since $\mathbb P(\mathcal T)=1$, 
% by the law of total probability, 
we can conclude that
\[
\mathbb P\left(\frac{1}{N}\|\upca_{0,h} - \bm U_h(\hslmorp_{0,h})^\top-\bm Z^L_h(\hslvorp_{0,h})^{1/2}\|^2_F \rightarrow 0\;\forall h\in[m]\right) = 1.
\]
This completes the proof of \eqref{eq:sing_vec_F}.
  
\section{Consistency of nuisance parameter estimation}
\subsection{Proof of Lemma \ref{lem:consistency_nuisance}}
The first assertion follows using Assumption \ref{asm:prior_1_mom}(5) and \cite[Theorem 2.9]{BENAYCHGEORGES2012120}. Since $\mathbb{E}_\mu[(\wt{\bm U}_{\ell})_{i*}(\wt{\bm U}_{\ell})^\top_{i*}]=\bm I_{{\wt r}_\ell}$ for all $\ell \in [\wt m]$ and $i \in [N]$, and for all $\ell \in [\wt m]$ the matrices $\bm L_{\ell}$ are symmetric; the second assertion can be shown using the Strong Law of Large Numbers. Finally, the third assertion follows using the first assertion and Theorem \ref{prop:singular_values}.

\subsection{Proof of Lemma \ref{lem:weak_conv_emp_bayes}}
%\nb{move to appendix? only keep Corollary 5.1 in main text for flow -zm20240708}
%\nb{not reviewed yet -zm20240709}\nbsag{moved to appendix}
Consider $\{\hslmp_{0,h},\hslvp_{0,h},\hsrmp_{0,h},\hsrvp_{0,h}: h \in [m]\}$ and $\{\wh{\bm L}_\ell: \ell \in [\wt m]\}$ defined in \eqref{eq:scale-est} and \eqref{eq:sqrt_L}, respectively. For some sequences of priors $\{\mu_N\} \subseteq \mathcal P$ and $\{\nu_{h,N}\in \mathcal P_{\nu_h}: h \in [m], N \in \mathbb N\}$, let us define
the functions $\{f^L_N\}$ and $\{f^R_{h,N}: h \in [m]\}$ as follows: 
\begin{align}
\label{eq:f_l_n}
&f^L_N(x_1,\ldots,x_m,\wt{x}_1,\ldots,\wt{x}_{\wt m};\mu_N)\\
&:= \int_{u,\tilde{u}} \prod_{h=1}^{m}\exp\left(-\frac{[x_h-\hslmp_{0,h}u_h]^\top(\hslvp_{0,h})^{-1}[x_h-\hslmp_{0,h}u_h]}{2}\right)\times\\
& \hskip 4.5em \prod_{\ell=1}^{\wt m}\exp\left(-\frac{[\wt x_\ell-\wh{\bm L}_{\ell}\wt u_\ell]^\top[\wt x_\ell-\wh{\bm L}_{\ell}\wt u_\ell]}{2}\right)\,\times\\
& \hskip 8.5em d\mu_N(u_1,\ldots,u_m,\wt u_1,\ldots,\wt u_{\wt m}),
\end{align}
and
\[
f^R_{h,N}(y_h;\nu_{h,N}):=\int_{v_h}\exp\left(-\frac{[y_h-\hsrmp_{0,h}v_h]^\top(\hsrvp_{0,h})^{-1}[y_h-\hsrmp_{0,h}v_h]}{2}\right)d\nu_{h,N}(v_h).
\]
Similarly, define the functions $f^L$ and $\{f^R_{h}: h \in [m]\}$ as follows:
\begin{align}
\label{eq:f_l}
&f^L(x_1,\ldots,x_m,\wt{x}_1,\ldots,\wt{x}_{\wt m};\mu)\\
&:= \int_{u,\tilde{u}} \prod_{h=1}^{m}\exp\left(-\frac{[x_h-\hslmorp_{0,h}u_h]^\top(\hslvorp_{0,h})^{-1}[x_h-\hslmorp_{0,h}u_h]}{2}\right)\times\\
& \hskip 4.5em \prod_{\ell=1}^{\wt m}\exp\left(-\frac{[\wt x_\ell-\bm L_{\ell}\wt u_\ell]^\top[\wt x_\ell-\bm L_{\ell}\wt u_\ell]}{2}\right)\,\times\\
& \hskip 8.5em d\mu(u_1,\ldots,u_m,\wt u_1,\ldots,\wt u_{\wt m}),
\end{align}
and
\[
f^R_{h}(y_h;\nu_{h}):=\int_{v_h}\exp\left(-\frac{[y_h-\hsrmorp_{0,h}v_h]^\top(\hsrvorp_{0,h})^{-1}[y_h-\hsrmorp_{0,h}v_h]}{2}\right)d\nu_{h}(v_h).
\]

\begin{lem}
\label{lem:consistency_of_NPMLE}
If the sequence of estimators $\{\mu_N\}$ and $\{\nu_{h,N}: h \in [m], N \in \mathbb N\}$ are approximately maximum likelihood estimators of the true priors $\mu$ and $\{\nu_h: h \in [m]\}$, respectively, i.e., they satisfy the following relations
\begin{align}
\label{eq:kl_regularity_1}
    \liminf_{N \rightarrow \infty}\frac{1}{N}\sum_{i=1}^{N}\log\frac{f^L_N((\upca_{0,1})_{i*},\ldots,(\upca_{0,m})_{i*},(\wt{\bm X}_1)_{i*},\ldots,(\wt{\bm X}_{\wt m})_{i*};\mu_N)}{f^L((\upca_{0,1})_{i*},\ldots,(\upca_{0,m})_{i*},(\wt{\bm X}_1)_{i*},\ldots,(\wt{\bm X}_{\wt m})_{i*};\mu)} \ge 0,
\end{align}
and 
\begin{align}
\label{eq:kl_regularity_2}
   \liminf_{p_h \rightarrow \infty}\frac{1}{p_h}\sum_{j=1}^{p_h}\log\frac{f^R_{h,N}((\vpca_{0,h})_{j*};\nu_{h,N})}{f^R_h((\vpca_{0,h})_{j*};\nu_h)} \ge 0 \quad \mbox{for $h \in [m]$;}
\end{align}
then as $N,p_h \rightarrow \infty$, we have $\mu_N \overset{w}{\rightarrow} \mu$ and $ \nu_{h,N} \overset{w}{\rightarrow} \nu_{h}$ for all $h \in [m]$, almost surely.
\end{lem}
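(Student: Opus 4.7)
The plan is to combine a tightness-and-subsequence argument with a Kullback--Leibler characterization of the approximate MLE inequality, followed by identifiability of Gaussian convolutions. I focus on proving $\mu_N \overset{w}{\rightarrow} \mu$; the argument for $\nu_{h,N} \overset{w}{\rightarrow} \nu_h$ is entirely analogous, with \eqref{eq:kl_regularity_2}, $f^R_{h,N}$, and $f^R_h$ replacing their $L$ counterparts, and with Proposition \ref{prop:singular_values}'s right-singular-vector statement \eqref{eq:lim_v_pca} used in place of \eqref{eq:lim_u_pca}.

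First, every $\mu_N \in \mathcal P$ satisfies $\E_{\mu_N}[U_h U_h^\top] = \bm I_{r_h}$ and $\E_{\mu_N}[\wt U_\ell \wt U_\ell^\top] = \bm I_{\wt r_\ell}$ by Assumption \ref{asm:prior_1_mom}(2), so the second moments of $\mu_N$ are uniformly bounded, and Chebyshev yields tightness of $\{\mu_N\}$. By Prokhorov's theorem, it suffices to show that every weak subsequential limit $\mu_\infty$ of $\mu_N$ must equal $\mu$. Fix such a subsequence. Lemma \ref{lem:consistency_nuisance} gives $\hslmp_{0,h} \xrightarrow{a.s.} \hslmorp_{0,h}$, $\hslvp_{0,h} \xrightarrow{a.s.} \hslvorp_{0,h}$, and $\wh{\bm L}_\ell \xrightarrow{a.s.} \bm L_\ell$. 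Since the Gaussian kernel inside \eqref{eq:f_l_n} is bounded and jointly continuous in the dummy variables $(u, \wt u)$ and in the nuisance parameters, weak convergence $\mu_N \overset{w}{\rightarrow} \mu_\infty$ combined with the nuisance convergence yields the pointwise limit $f^L_N(x; \mu_N) \xrightarrow{a.s.} f^L(x; \mu_\infty)$ for every fixed $x$, where $f^L(\cdot; \mu_\infty)$ is the density in \eqref{eq:f_l} with $\mu$ replaced by $\mu_\infty$.

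Next, I would pass to the limit in \eqref{eq:kl_regularity_1}. Proposition \ref{prop:singular_values} supplies the convergence of the empirical distribution of the rows $((\upca_{0,1})_{i*}, \ldots, (\wt{\bm X}_{\wt m})_{i*})$ to the joint law of the compound-decision observation $Y = (Y_{0,1}, \ldots, \wt Y_{0,\wt m})$ under $\mu$. Combining this with the pointwise convergence from the previous step and a uniform-integrability argument (see below) gives
\[
\frac{1}{N}\sum_{i=1}^N \log\frac{f^L_N((\upca_{0,1})_{i*}, \ldots, (\wt{\bm X}_{\wt m})_{i*}; \mu_N)}{f^L((\upca_{0,1})_{i*}, \ldots, (\wt{\bm X}_{\wt m})_{i*}; \mu)} \xrightarrow{a.s.} \E_\mu\bigl[\log f^L(Y; \mu_\infty)\bigr] - \E_\mu\bigl[\log f^L(Y; \mu)\bigr].
\]
Inequality \eqref{eq:kl_regularity_1} forces the right-hand side to be nonnegative, which rearranges into
\[
\mathrm{KL}\bigl(f^L(\cdot; \mu)\,\big\|\,f^L(\cdot; \mu_\infty)\bigr) \le 0,
\]
so by Gibbs' inequality $f^L(\cdot; \mu) = f^L(\cdot; \mu_\infty)$ almost everywhere. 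Now $\mu \mapsto f^L(\cdot; \mu)$ is the convolution of the linear pushforward of $\mu$ under the full-rank block-diagonal operator $\mathrm{diag}(\hslmorp_{0,1}, \ldots, \hslmorp_{0,m}, \bm L_1, \ldots, \bm L_{\wt m})$ with a non-degenerate Gaussian density (note that $\hslmorp_{0,h}$ is positive definite by \eqref{eq:initializers} and each $\bm L_\ell$ is invertible by assumption). Fourier inversion therefore recovers $\mu$ uniquely from $f^L(\cdot; \mu)$, so $\mu_\infty = \mu$. Since every weak subsequential limit of $\{\mu_N\}$ equals $\mu$, the full sequence satisfies $\mu_N \overset{w}{\rightarrow} \mu$ almost surely.

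The main obstacle is the uniform-integrability step needed to upgrade pointwise convergence of the log-likelihood ratios into convergence of the empirical averages. The upper tail is straightforward: uniform boundedness of densities in $\mathcal P$ (Assumption \ref{asm:prior_1_mom}(1)) forces $\log f^L$ to have a uniform upper bound depending only on the nuisance parameters and the dimension. The lower tail, which can diverge like $-\|x\|^2$, requires pinning most of the mass of $\mu_N$ on a compact set uniformly in $N$ using the second-moment bound $\E_{\mu_N}[\|U\|^2 + \|\wt U\|^2] = O(1)$; on such a compact set the Gaussian kernel is bounded below, yielding $\log f^L_N(x; \mu_N) \ge -c(1 + \|x\|^2)$ with constants independent of $N$. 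Integrability of $\|Y\|^2$ under the joint mixture law, inherited from the second moments of $\mu$, then closes the argument by dominated convergence together with Proposition \ref{prop:singular_values} applied to a sequence of pseudo-Lipschitz truncations of $\log f^L$.
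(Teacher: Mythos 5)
Your proposal is correct and follows essentially the same route as the paper, which itself only cites Lemma B.2 of the empirical Bayes PCA reference together with Lemma \ref{lem:consistency_nuisance} and omits the details; your tightness--subsequence argument, KL/Gibbs inequality step, truncation-based uniform integrability, and Fourier identifiability of the Gaussian convolution are exactly the ingredients of that cited argument. The only points worth double-checking in a full write-up are that the invertibility of $\hslmorp_{0,h}$ relies on Assumption \ref{asm:prior_1_mom}(5) (so that $(s^L_{0,h})_k>0$), and that the truncated $\log f^L$ is genuinely pseudo-Lipschitz before invoking Proposition \ref{prop:singular_values} — both of which you have flagged appropriately.
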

The proof of this theorem follows by the arguments presented in Lemma B.2 of \citet{eb_pca} and the consistency of $\hslmp_{0,h},\hsrmp_{0,h},\hslvp_{0,h},\hsrvp_{0,h}$ and $\hatl_\ell$ in estimating $\hslmorp_{0,h},\hsrmorp_{0,h},\hslvorp_{0,h},\hsrvorp_{0,h}$ and $\bm L_\ell$, respectively, for all $h \in [m]$ and $\ell \in [\wt m]$ (Lemma \ref{lem:consistency_nuisance}). Hence, we omit the details. Now, consider $\wh{\mu}$ and $\wh{\nu}_{h}$ defined by \eqref{eq:emp_bayes_1} and \eqref{eq:emp_bayes_2}. 

\begin{proof}[Proof of Lemma \ref{lem:weak_conv_emp_bayes}]
The proof of Lemma \ref{lem:weak_conv_emp_bayes} follows from Lemma \ref{lem:consistency_of_NPMLE}, the techniques used to prove Corollary B.7 of \cite{eb_pca}.
\end{proof}

\section{Asymptotics of the oracle AMP}
To analyze the asymptotics of the AMP iterates with the denoisers constructed using the estimated priors $\wh{\mu}$ and $\wh{\nu}_h$, we first analyze the asymptotics of the Oracle AMP iterates defined in \eqref{eq:orc_amp_basic}. Then we show that as $N \rightarrow \infty$, the differences between the two sets of iterates converge to zero. 

In that direction, let us consider the following theorem:
\begin{thm}
\label{thm:asymptotics of amp_iterates}
Consider the iterates $\{\uor_{t,h}, \vor_{t,h}: t \ge 0, h \in [m]\}$ given by \eqref{eq:orc_amp_basic}. Then, for any sequence of deterministic subsets $\mathcal F_L \subset [N]$ satisfying $\frac{|\mathcal F_L|}{N} \rightarrow \lambda_{L} \in (0,1]$, as $N \rightarrow \infty$ and any \emph{pseudo-Lipschitz} function $\psi:\mathbb R^{2(r+\wt r)} \rightarrow \mathbb R$, we have the following:
\begin{align}
     &\lim_{N \rightarrow \infty}\frac{1}{|\mathcal F_L|}\sum_{i \in \mathcal F_L}\psi((\uor_{t,1})_{i*},\ldots,(\uor_{t,m})_{i*},(\wt{\bm X}_1)_{i*},\ldots,(\wt{\bm X}_{\wt m})_{i*},(\bm U_{1})_{i*},\ldots,(\bm U_{m})_{i*},\\
     &\hskip 25em (\wt{\bm U}_{1})_{i*},\ldots,(\wt{\bm U}_{\wt m})_{i*})\\
     &\hskip 4em\overset{a.s.}{=} \mathbb{E}_\mu\left[\psi(Y^\orc_{t,1},\ldots,Y^\orc_{t,m},\wt{Y}_{0,1},\ldots,\wt{Y}_{0,\wt m},U_1,\ldots,U_m,\tilde U_1,\ldots,\wt U_{\wt m})\right].
\end{align}
where $\{Y^\orc_{t,h}:h \in [m]\}$ and $\{\wt{Y}_{0,\ell}:\ell \in [\wt m]\}$ are defined by \eqref{eq:y_orcs} and the collection of random vectors $(U_1,\ldots,U_m,\wt{U}_1,\ldots,\wt{U}_{\wt m}) \sim \mu$.
Next, for all $h \in [m]$, if we consider any sequence of deterministic subsets $\mathcal F_{R,h} \subset [p_{h}]$ that satisfy $\frac{|\mathcal F_{R,h}|}{p_{h}} \rightarrow \lambda_{R,h} \in (0,1]$ as $p_h \rightarrow \infty$ and \emph{pseudo-Lipschitz} functions $\varphi_h:\mathbb R^{2r_h} \rightarrow \mathbb R$, we have the following:
\begin{align}
    \lim_{p_h \rightarrow \infty}\frac{1}{|\mathcal F_{R,h}|}\sum_{i \in \mathcal F_{R,h}}\varphi_h((\vor_{t,h})_{i*},(\bm V_{h})_{i*}) \overset{a.s.}{=} \mathbb{E}_{\nu_h}\left[\varphi_h(Y^{R,\orc}_{t,h},V_h)\right],
\end{align}
where $\{Y^{R,\orc}_{t,h}:h \in [m]\}$ are defined by \eqref{eq:y_orc_2} and $V_h \sim \nu_h$.
\end{thm}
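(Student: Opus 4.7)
The plan is to prove the claim by induction on the iteration index $t$, extending the Bayati--Montanari conditioning technique to accommodate both the multi-orbit coupling and the spectral initialization used in Algorithm~\ref{alg:orchamp}. We work conditionally on the collection of true signal matrices, under which $\{\bm W_h: h\in [m]\}$ are mutually independent standard Gaussian matrices; this independence is the structural feature that will allow orbit-wise Onsager cancellation in spite of the coupling at the denoiser level.

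For the base case $t=0$, Proposition~\ref{prop:singular_values} provides the asymptotic representations $\upca_{0,h}\approx\bm U_h(\hslmorp_{0,h})^\top+\bm Z^L_h(\hslvorp_{0,h})^{1/2}$ and $\vpca_{0,h}\approx\bm V_h(\hsrmorp_{0,h})^\top+\bm Z^R_h(\hsrvorp_{0,h})^{1/2}$, in the sense of normalized Frobenius error tending to zero almost surely, where $\{\bm Z^L_h,\bm Z^R_h\}$ are mutually independent standard Gaussian matrices that are independent of the signals. Passing these representations through the oracle denoiser $v^\orc_{0,h}$, whose Lipschitz property is guaranteed by Assumption~\ref{asm:prior_1_mom}, and combining with the strong law applied to the i.i.d.\ rows of the limiting signal-plus-noise structure, yields the base case for the full empirical sum. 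The extension to averages over arbitrary deterministic subsets $\mathcal F_L,\mathcal F_{R,h}$ of positive asymptotic density follows from row-exchangeability of the limiting Gaussian structure conditional on the signals.

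For the inductive step, assume the claimed empirical limits hold through iteration $t$. Condition on the $\sigma$-algebra $\mathcal G_t$ generated by $\{\uor_{s,h},\vor_{s,h}: s\le t,\, h\in [m]\}$, the signals, and $\{\wt{\bm X}_\ell\}_{\ell\in [\wt m]}$. By the vector-valued Gaussian conditioning lemma, $\wb{\bm X}_h\bvor_{t,h}$ decomposes into a deterministic term linear in past iterates plus a fresh Gaussian component, modulo an asymptotically negligible residual. The Onsager term $\gamma_h\buor_{t-1,h}(\jacror_{t,h})^\top$ is designed to cancel this deterministic memory contribution; verifying the cancellation reduces, via the induction hypothesis applied to the pseudo-Lipschitz test function corresponding to partial derivatives of $v^\orc_{t,h}$, to the almost sure convergence $\jacror_{t,h}(\bvor_{t,h};\nu_h)\xrightarrow{a.s.}\mathbb E_{\nu_h}[\nabla v^\orc_{t,h}(Y^{R,\orc}_{t,h};\nu_h)]$, whose value (by Stein's identity applied to the Gaussian law of $Y^{R,\orc}_{t,h}$) is exactly the correction needed. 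Mutual independence of $\{\bm W_h\}_{h\in [m]}$ then delivers joint conditional independence of the Gaussian components across orbits, so the limiting joint law of $(\uor_{t,1},\ldots,\uor_{t,m},\wt{\bm X}_1,\ldots,\wt{\bm X}_{\wt m})$ factors as in \eqref{eq:y_orcs}. Lipschitz continuity of $u^\orc_{t,h}$ propagates the pseudo-Lipschitz limit to $\buor_{t,h}$, and an entirely analogous argument using $\jaclor_{t,h}$ handles $\vor_{t+1,h}$, closing the induction.

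The principal obstacle is the multi-orbit coupling at the denoiser level: $u^\orc_{t,h}$ depends simultaneously on all of $\bm U^\orc_{t,1},\ldots,\bm U^\orc_{t,m}$ and on $\wt{\bm X}_1,\ldots,\wt{\bm X}_{\wt m}$, which is the very feature distinguishing this multimodal AMP from the classical single-orbit theory. The delicate structural point is that, despite this coupling, the Onsager correction in orbit $h$ depends only on the Jacobian of its own denoiser $\jacror_{t,h}$ (resp.\ $\jaclor_{t,h}$) and its own previous iterate $\buor_{t-1,h}$; this per-orbit block-diagonality of the memory term, which stems from the mutual independence of $\{\bm W_h\}_{h\in [m]}$ conditional on the signals, is what permits the simultaneous orbit-wise cancellations that produce the stated joint state evolution. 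An additional technical point requiring care is that the Gaussian noise matrices arising from the PCA base case must be coupled coherently with the fresh Gaussian increments produced at each inductive step; the cleanest route is the ``synthetic Gaussian'' reduction used in \citet{montanari2021} and \citet{eb_pca}, adapted orbit by orbit.
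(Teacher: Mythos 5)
Your proposal is correct and follows essentially the same route as the paper: reduction to a synthetic ensemble (the total-variation coupling of $\bm X_h$ to $\adX_h$ from \citet{montanari2021}) to handle the spectral initialization, a Bolthausen-style Gaussian-conditioning induction for the multi-orbit recursion, Stein's identity to identify the Onsager correction, and conditional independence of $\{\bm W_h\}$ across modalities to obtain the product structure of the limiting law in \eqref{eq:y_orcs}. The paper merely packages the inductive step differently, routing it through an auxiliary AMP with centered denoisers driven by pure Gaussian matrices (Lemmas \ref{lem:asymptotics_slln}--\ref{lem:amp_aux}) plus a discrepancy-control induction, which is the concrete realization of the ``synthetic Gaussian reduction adapted orbit by orbit'' you describe.
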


Let $\projperp_{\upca_{0,h}}$ and $\projperp_{\vpca_{0,h}}$ be the projection matrices onto the orthogonal complements of $\upca_{0,h}$ and $\vpca_{0,h}$. For all $h \in [m]$, let us consider the matrices $\adX_h$ defined as follows:
\begin{align}
    \adX_h = \frac{1}{N}\upca_{0,h}\bm D_{0,h}(\vpca_{0,h})^\top+\projperp_{\upca_{0,h}}\left(\frac{1}{N}\bm U_h \bm D_h \bm V_h^\top + \bm W^\perp_h\right)\bm P^{\perp}_{\vpca_{0,h}},
\end{align}
where $\bm W^\perp_h\overset{d}{=}\bm W_h/\sqrt{N}$ and is independent of $\bm X_1, \ldots, \bm X_m, \wt{\bm X}_1,\ldots,\wt{\bm X}_{\wt m}$. \revsn{Using the techniques from Lemma B.3 of \citet{montanari2021} and \citet{BENAYCHGEORGES2012120}, we can show that there exists a constant $\varepsilon_0 > 0$ such that for any $\varepsilon \in (0, \varepsilon_0)$, if the singular value matrices satisfy
\begin{align}
\label{eq:conv_sing_val_svd}
\max_{k \in [r_h]} \left| (\bm D_{0,h})_{kk} - \sqrt{ \frac{ \left( \gamma_h (\bm D_h)^2_{kk} + 1 \right)\left( (\bm D_h)^2_{kk} + 1 \right) }{ \gamma_h (\bm D_h)^2_{kk} } } \right| \le \varepsilon,
\quad \text{for all } h \in [m],
\end{align}
and the left singular vectors satisfy
\begin{align}
\label{eq:conv_u_svd}
\min_{k \in [r_h]} \left\{ \frac{1}{N} \left| (\upca_{0,h})^\top_{*k} (\bm U_h)_{*k} \right|^2 - (s^L_{0,h})_k \right\} \ge -\varepsilon,
\quad \text{for all } h \in [m],
\end{align}
and the right singular vectors satisfy
\begin{align}
\label{eq:conv_v_svd}
\min_{k \in [r_h]} \left\{ \frac{1}{p_h} \left| (\vpca_{0,h})^\top_{*k} (\bm V_h)_{*k} \right|^2 - (s^R_{0,h})_k \right\} \ge -\varepsilon,
\quad \text{for all } h \in [m],
\end{align}
for $(s^L_{0,h},s^R_{0,h})$ defined in \eqref{eq:initializers}, then the following coupling bound holds:
\begin{align}
\label{eq:coupling}
    &\left\| \mathbb{P}\left( \{ \bm X_h \}_{h=1}^{m}, \{ \wt{\bm X}_\ell \}_{\ell=1}^{\wt m} \in \cdot \;\middle|\; \{\bm D_{0,h}, \upca_{0,h}, \vpca_{0,h} \}_{h=1}^m \right) \right. \nonumber \\
    &\hspace{4em} - \left. \mathbb{P}\left( \{ \adX_h \}_{h=1}^{m}, \{ \wt{\bm X}_\ell \}_{\ell=1}^{\wt m} \in \cdot \;\middle|\; \{\bm D_{0,h}, \upca_{0,h}, \vpca_{0,h} \}_{h=1}^m \right) \right\|_{\mathrm{TV}} \nonumber \\
    &\hspace{18em} \le \frac{1}{c(\varepsilon)} e^{-N c(\varepsilon)},
\end{align}
with probability at least $1 - e^{-N c(\varepsilon)}$, where $c(\varepsilon) > 0$ is a constant depending only on $\varepsilon$. Fix $\varepsilon \in [\varepsilon_0/2,\varepsilon_0]$. Using (2.1) of \cite{eb_pca} and Proposition 5.1, we can get a $N_0(\varepsilon)>0$ such that \eqref{eq:conv_sing_val_svd}, \eqref{eq:conv_u_svd} and \eqref{eq:conv_v_svd} holds for all $N \ge N_0(\varepsilon)$. Let us consider the following modified AMP iterations:
\begin{align}
\label{eq:orc_amp_simpler}
    &\advs_{t,h} = v^\orc_{t,h}(\adv_{t,h};\nu_h),\\
    &\adu_{t,h} = \adX_h\advs_{t,h}-\adus_{t-1,h}\gamma_h\left(\jacror_{t,h}(\adv_{t,h};\nu_h)\right)^\top\\
    &\adus_{t,h} = u^\orc_{t,h}(\adus_{t,1},\ldots,\adus_{t,m},\wt{\bm X}_1,\ldots,\wt{\bm X}_{\wt m};\mu),\\
    &\adv_{t+1,h} = (\adX_h)^\top\adus_{t,h}-\advs_{t,h}\left(\jaclor_{t,h}(\adus_{t,1},\ldots,\adus_{t,m},\wt{\bm X}_1,\ldots,\wt{\bm X}_{\wt m};\mu)\right)^\top,
\end{align}
with $\adv_{0,h}=\vpca_{0,h}$, $\adu_{-1,h}=\upca_{0,h}(\hslvorp_{0,h})^{1/2}$ and the same set of state-evolution recursions given by \eqref{eq:state_evol_gen}. In addition, the matrix-valued functions $\{\jacror_{t,h},\jaclor_{t,h}: t \ge 0, h \in [m]\}$ are defined in \eqref{eq:jacror} and \eqref{eq:jaclor} respectively. Then, using steps similar to Eq.(A.10)– Eq.(A.11) of \cite{montanari2021} and \eqref{eq:coupling}, we can show that there exists a coupling of the laws of $\{\bm X_h\}_{h=1}^m, \{\adX_h\}_{h=1}^{m}$ and $\{\wt{\bm X}_\ell\}_{\ell=1}^{\wt m}$ such that for any \emph{pseudo-Lipschitz} function $\psi:\mathbb R^{2(r+\wt r)} \rightarrow \mathbb R$
\begin{align}
     &\mathbb P\bigg[\frac{1}{N}\sum_{i=1}^N\psi((\adu_{t,1})_{i*},\dots,(\adu_{t,m})_{i*},(\wt{\bm X}_1)_{i*},\dots,(\wt{\bm X}_{\wt m})_{i*},\\
     &\hskip 15em (\bm U_{1})_{i*},\ldots,(\bm U_{m})_{i*},(\wt{\bm U}_{1})_{i*},\ldots,(\wt{\bm U}_{\wt m})_{i*})\\
     &\hskip 1.5em\neq \frac{1}{N}\sum_{i=1}^N\psi((\bm U^\orc_{t,1})_{i*},\dots,(\bm U^\orc_{t,1})_{i*},(\wt{\bm X}_1)_{i*},\dots,(\wt{\bm X}_{\wt m})_{i*},\\
     &\hskip 15em (\bm U_{1})_{i*},\ldots,(\bm U_{m})_{i*},(\wt{\bm U}_{1})_{i*},\ldots,(\wt{\bm U}_{\wt m})_{i*})\bigg]\\
      &\le \frac{1}{c(\varepsilon)}e^{-Nc(\varepsilon)}, \quad \mbox{for $c(\varepsilon)>0$ defined in \eqref{eq:coupling} and for all $N \ge N_0(\varepsilon)$.}
\end{align}
Furthermore, for any \emph{pseudo-Lipschitz} functions $\varphi_h:\mathbb R^{2r_h} \rightarrow \mathbb R$
\begin{align}
   & \mathbb P\Bigg[\frac{1}{p_h}\sum_{i=1}^{p_h}\varphi_h((\adv_{t,h})_{i*},(\bm V_{h})_{i*}) 
    \neq \frac{1}{p_h}\sum_{i=1}^{p_h}\varphi_h((\bm V^\orc_{t,h})_{i*},(\bm V_{h})_{i*})\Bigg]\\
    & \le \frac{1}{c(\varepsilon)}e^{-Nc(\varepsilon)}, \quad \mbox{for all $p_h \ge \lceil N_0(\varepsilon)\gamma_h\rceil$.}
\end{align}
Since $p_h/N \rightarrow \gamma_h$ for all $h \in [m]$, using the Borel Cantelli Lemmas, the above displays imply
\begin{align}
\label{eq:modified_asymp_1}
     &\lim\limits_{N \rightarrow \infty}\frac{1}{N}\sum_{i=1}^N\psi((\adu_{t,1})_{i*},\dots,(\adu_{t,m})_{i*},(\wt{\bm X}_1)_{i*},\dots,(\wt{\bm X}_{\wt m})_{i*},(\bm U_{1})_{i*},\ldots,(\bm U_{m})_{i*},\nonumber\\
     & \hskip 15em(\wt{\bm U}_{1})_{i*},\ldots,(\wt{\bm U}_{\wt m})_{i*})\\
     &\overset{a.s.}{=} \lim\limits_{N \rightarrow \infty}\frac{1}{N}\sum_{i=1}^N\psi((\bm U^\orc_{t,1})_{i*},\dots,(\bm U^\orc_{t,1})_{i*},(\wt{\bm X}_1)_{i*},\dots,(\wt{\bm X}_{\wt m})_{i*},\\
     &\hskip 15em (\bm U_{1})_{i*},\ldots,(\bm U_{m})_{i*},(\wt{\bm U}_{1})_{i*},\ldots,(\wt{\bm U}_{\wt m})_{i*}),
\end{align}
and
\begin{align}
\label{eq:modified_asymp_2}
   \lim\limits_{p_h \rightarrow \infty}\frac{1}{p_h}\sum_{i=1}^{p_h}\varphi_h((\adv_{t,h})_{i*},(\bm V_{h})_{i*}) \overset{a.s.}{=} \lim\limits_{p_h \rightarrow \infty}\frac{1}{p_h}\sum_{i=1}^{p_h}\varphi((\bm V^\orc_{t,h})_{i*},(\bm V_{h})_{i*}).\nonumber\\
\end{align}
Since, $|\mathcal F_L|/N \rightarrow \lambda_L$ and $|\mathcal F_{R,h}|/p_h \rightarrow \lambda_{R,h}$ for all $h \in [m]$, therefore one can also conclude that
\begin{align}
\label{eq:modified_asymp_1}
     &\lim\limits_{N \rightarrow \infty}\frac{1}{|\mathcal F_L|}\sum_{i \in \mathcal F_L}\psi((\adu_{t,1})_{i*},\dots,(\adu_{t,m})_{i*},(\wt{\bm X}_1)_{i*},\dots,(\wt{\bm X}_{\wt m})_{i*},(\bm U_{1})_{i*},\ldots,(\bm U_{m})_{i*},\nonumber\\
     & \hskip 15em(\wt{\bm U}_{1})_{i*},\ldots,(\wt{\bm U}_{\wt m})_{i*})\\
     &\overset{a.s.}{=} \lim\limits_{N \rightarrow \infty}\frac{1}{|\mathcal F_L|}\sum_{i \in \mathcal F_L}\psi((\bm U^\orc_{t,1})_{i*},\dots,(\bm U^\orc_{t,1})_{i*},(\wt{\bm X}_1)_{i*},\dots,(\wt{\bm X}_{\wt m})_{i*},\\
     &\hskip 15em (\bm U_{1})_{i*},\ldots,(\bm U_{m})_{i*},(\wt{\bm U}_{1})_{i*},\ldots,(\wt{\bm U}_{\wt m})_{i*}),
\end{align}
and
\begin{align}
\label{eq:modified_asymp_2}
   \lim\limits_{p_h \rightarrow \infty}\frac{1}{|\mathcal F_{R,h}|}\sum_{i \in \mathcal F_{R,h}}\varphi_h((\adv_{t,h})_{i*},(\bm V_{h})_{i*}) \overset{a.s.}{=} \lim\limits_{ p_h\rightarrow \infty}\frac{1}{|\mathcal F_{R,h}|}\sum_{i \in \mathcal F_{R,h}}\varphi((\bm V^\orc_{t,h})_{i*},(\bm V_{h})_{i*}).\nonumber\\
\end{align}
Hence, it is enough to analyze the AMP iterations given by \eqref{eq:orc_amp_simpler}. }
\subsection{Proof of Theorem \ref{thm:asymptotics of amp_iterates}}
Due to the above-mentioned arguments, we shall analyze the AMP iterations given by \eqref{eq:orc_amp_simpler}. In that direction, for all $t\ge 0$, let us consider random vectors $\{\adxl_{t,h},\adxr_{t,h} \in \R^{r_h}:h \in [m]\}$ generated as follows:
\begin{align}
    \adxl_{t,h}&\sim N_{r_h}\bigg(\alphalt_{t,h}U_h+\betalt_{t,h}Y_{0,h},\tault_{t,h}\bigg)\\
    \adxr_{t,h}&\sim N_{r_h}\bigg(\alphart_{t,h}V_h+\betart_{t,h}Y^R_{0,h},\taurt_{t,h}\bigg),
\end{align}
In the above equation, $(U_1,\ldots,U_m,\wt U_1, \ldots, \wt U_{\wt m}) \sim \mu$, $V_h \sim \nu_h$ for all $h \in [m]$ and $\{Y_{0,h},Y^R_{0,h}:h \in [m]\}$ are defined in \eqref{eq:compund_decision_singular_vector} and \eqref{eq:compound_decision_model_2} respectively. Furthermore, for all $h \in [m]$, the matrices \[\alphalt_{t,h},\betalt_{t,h},\tault_{t,h},\alphart_{t,h},\betart_{t,h},\taurt_{t,h} \in \R^{r_h \times r_h}\] are recursively defined as follows:
\begin{align}
    &\alphart_{0,h}=\bm 0, \quad \betart_{0,h}=\bm I_{r_h}, \quad \taurt_{0,h}=\bm 0,\quad \alphalt_{-1,h}=\bm 0, \quad \betalt_{-1,h}=(\hslvorp_{0,h})^{1/2}, \quad \tault_{-1,h}=\bm 0,
\end{align}
\begin{align}
\label{eq:state_evol_aux}
    \alphalt_{t,h} &= \gamma_h\bigg\{\mathbb E_{\nu_h}[v^\orc_{t,h}(\adxr_{t,h};\nu_h)V^\top_h]\bm D_h-\mathbb E_{\nu_h}[v^\orc_{t,h}(\adxr_{t,h};\nu_h)(Y^R_{0,h})^\top]\hsrmorp_{0,h}\bm D_h\bigg\},\\
    \betalt_{t,h} &= \gamma_h\bigg\{\mathbb E_{\nu_h}[v^\orc_{t,h}(\adxr_{t,h};\nu_h)(Y^R_{0,h})^\top]\bm D^\ast_h-\mathbb E_{\nu_h}[v^\orc_{t,h}(\adxr_{t,h};\nu_h)V^\top_h]\bm D_h(\hslmorp_{0,h})^\top\\
    &\hskip 2em+\mathbb E_{\nu_h}[v^\orc_{t,h}(\adxr_{t,h};\nu_h)(Y^R_{0,h})^\top]\hsrmorp_{0,h}\bm D_h(\hslmorp_{0,h})^\top\\
    &\hskip 2em-\mathbb E_{\nu_h}[\mathsf{d}v^\orc_{t,h}(\adxr_{t,h};\nu_h)]\mathbb E_\mu[u^\orc_{t-1,h}(\adxl_{t-1,h},\ldots,\adxl_{t-1,m},\wt Y_{0,1},\ldots,\wt Y_{0,\wt m})(Y_{0,h})^\top]\bigg\},\\
    \tault_{t,h} &= \gamma_h\bigg\{\mathbb E_{\nu_h}[v^\orc_{t,h}(\adxr_{t,h};\nu_h)v^\orc_{t,h}(\adxr_{t,h};\nu_h)^\top]\\
    &\hskip 2em-\mathbb E_{\nu_h}[v_{t,h}(\adxr_{t,h};\nu_h)(Y^R_{0,h})^\top]\mathbb E_{\nu_h}[(Y^R_{0,h})v_{t,h}(\adxr_{t,h};\nu_h)^\top]\bigg\},\\
    \alphart_{t+1,h} &= \mathbb E_\mu[u^\orc_{t,h}(\adxl_{t,1},\ldots,\adxl_{t,m},\wt{Y}_{0,1},\ldots,\wt{Y}_{0,\wt m};\mu)U^\top_h]\bm D_h\\
    &\hskip2em-\mathbb E[u^\orc_{t,h}(\adxl_{t,1},\ldots,\adxl_{t,m},\wt{Y}_{0,1},\ldots,\wt{Y}_{0,\wt m};\mu)(Y_{0,h})^\top]\hslmorp_{0,h}\bm D_h,\\
    \betart_{t+1,h} &= \mathbb E_\mu[u^\orc_{t,h}(\adxl_{t,1},\ldots,\adxl_{t,m},\wt{Y}_{0,1},\ldots,\wt{Y}_{0,\wt m};\mu)Y^\top_{0,h}]\bm D^*_h\\
    &\hskip2em-\mathbb E_\mu[u^\orc_{t,h}(\adxl_{t,1},\ldots,\adxl_{t,m},\wt{Y}_{0,1},\ldots,\wt{Y}_{0,\wt m};\mu)U^\top_h]\bm D_h(\hsrmorp_{0,h})^\top\\
    &\hskip 2em+\mathbb E_\mu[u^\orc_{t,h}(\adxl_{t,1},\ldots,\adxl_{t,m},\wt{Y}_{0,1},\ldots,\wt{Y}_{0,\wt m};\mu)Y^\top_{0,h}]\hslmorp_{0,h}\bm D_h(\hsrmorp_{0,h})^\top\\
    &\hskip 2em-\mathbb E_\mu[\mathsf{d}_hu^\orc_{t,h}(\adxl_{t,1},\ldots,\adxl_{t,m},\wt{Y}_{0,1},\ldots,\wt{Y}_{0,\wt m};\mu)]\mathbb E_{\nu_h}[v^\orc_{t,h}(\adxr_{t,h})(Y^R_{0,h})^\top],\\
    \taurt_{t+1,h} &= \mathbb E_\mu[u^\orc_{t,h}(\adxl_{t,1},\ldots,\adxl_{t,m},\wt{Y}_{0,1},\ldots,\wt{Y}_{0,\wt m};\mu)^{\otimes 2}]\\
    &\hskip 2em-\mathbb E_\mu[u^\orc_{t,h}(\adxl_{t,1},\ldots,\adxl_{t,m},\wt{Y}_{0,1},\ldots,\wt{Y}_{0,\wt m};\mu)Y^\top_{0,h}]^{\otimes 2}
\end{align}
where $\lim_{N \rightarrow \infty}\bm D_{0,h}\overset{a.s.}{=}\bm D^*_h$, the matrices $\{\mathsf{d}v^\orc_{t,h}(\cdot):h \in [m]\}$ are the Jacobian matrices of the functions $\{v^\orc_{t,h}(\cdot):h \in [m]\}$ and $\{\mathsf{d}_hu^\orc_{t,h}(\cdot):h \in [m]\}$ are the Jacobian matrices of the function $\{u^\orc_{t,h}(\cdot):h \in [m]\}$ with respect to the variables corresponding to the $h$-th modality. By convention we set 
\[\mathbb E[u^\orc_{-1,h}(\adxl_{-1,1},\ldots,\adxl_{-1,m},\wt{Y}_{0,1},\ldots,\wt{Y}_{0,\wt m};\mu)Y^\top_{0,h}]=(\hslvorp_{0,h})^{1/2}.\]
Next, consider the following lemmas.
\begin{lem}
\label{lem:asymptotics_slln}
   Consider the sequence of deterministic subsets $\mathcal F_L \subset [N]$ satisfying $\frac{|\mathcal F_L|}{N} \rightarrow \lambda_{L} \in (0,1]$, as $N \rightarrow \infty$ and $\{\mathcal F_{R,h} \subset [p_h]:h \in [m]\}$ satisfying $\frac{|\mathcal F_{R,h}|}{p_h} \rightarrow \lambda_{R,h} \in (0,1]$, as $p_h \rightarrow \infty$ for all $h \in [m]$. Then, for any pseudo-Lipschitz functions $\psi:\R^{3r+2\wt r}\rightarrow \R$ and $\phi_h:\R^{3r_h} \rightarrow \R$ for $h \in [m]$, the following limits hold almost surely.
   \setlength{\jot}{-1ex}
    \begin{align}
        &\lim_{N \rightarrow \infty}\Bigg|\frac{1}{|\mathcal F_L|}\sum_{i \in \mathcal F_L}\psi\left(\{(\adu_{t,h})_{i*}\}_{h=1}^{m},\{(\upca_{0,h})_{i*}\}_{h=1}^{m},\{(\wt{\bm X}_h)_{i*}\}_{h=1}^{\wt m},\{(\bm U_{h})_{i*}\}_{h=1}^{m},\{(\wt{\bm U}_{\ell})_{i*}\}_{\ell=1}^{\wt m}\right)\\
    &\hskip 4em- \mathbb{E}_\mu\left[\psi\left(\{\alphalt_{t,h}U_{h}+\betalt_{t,h}Y_{0,h}+(\tault_{t,h})^{1/2}W^{\perp,L}_{t,h}\}_{h=1}^{m},\{Y_{0,h}\}_{h=1}^{m},\right.\right.\\
    &\hskip 15em \left.\left.\{\wt{Y}_{0,\ell}\}_{\ell=1}^{\wt m},\{U_{h}\}_{h=1}^{m},\{\wt U_\ell\}_{\ell=1}^{\wt m}\right)\right]\Bigg|=0,
    \end{align}
    \setlength{\jot}{0 pt}
    and
    \begin{align}
         &\lim_{p_h \rightarrow \infty}\Bigg|\frac{1}{|\mathcal F_{R,h}|}\sum_{i \in \mathcal F_{R,h}}\phi_h\left((\adv_{t,h})_{i*},(\vpca_{0,h})_{i*},(\bm V_{h})_{i*}\right)\\
         &\hskip 4em-\mathbb{E}_{\nu_h}\left[\phi_h\left(\alphart_{t,h}V_{h}+\betart_{t,h}Y^R_{0,h}+(\taurt_{t,h})^{1/2}W^{\perp,R}_{t,h},Y^R_{0,h},V_h\right)\right]\Bigg|=0,
    \end{align}
    where for all $t \ge 0$ and $h \in [m]$, the random variables $W^{\perp,L}_{t,h},W^{\perp,R}_{t,h} \sim N_{r_h}(0,\bm I_{r_h})$ and are mutually independent. In addition, $(U_1,\ldots,U_m,\wt U_1,\ldots,\wt U_{\wt m}) \sim \mu$ and $\{V_h \sim \nu_h:h \in [m]\}$. 
\end{lem}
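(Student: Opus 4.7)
The plan is to prove the two asymptotic statements simultaneously by induction on $t\ge 0$, treating $\{\adu_{t,h},\adv_{t,h}:h\in[m]\}$ as a coupled AMP orbit driven by the noise matrices $\{\bm W^\perp_h:h\in[m]\}$ while carrying the ``initialization slots'' $\upca_{0,h}$, $\vpca_{0,h}$, and the low-dimensional observations $\wt{\bm X}_\ell$ as frozen side information. The overall strategy follows the conditioning-and-orthogonalization blueprint of Bolthausen, Bayati--Montanari, and Berthier--Montanari--Nguyen, extended to multi-orbit message passing as in \cite{gerbelot2022graphbased} and to spectral initialization as in \cite{montanari2021}.

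First, I would set up the right reference state evolution. Because the projected matrices $\adX_h$ enjoy the key independence $\bm P^\perp_{\upca_{0,h}}\bm W^\perp_h \bm P^\perp_{\vpca_{0,h}}$ conditional on $\mathscr S_{m,\wt m}$, each iterate $\adu_{t,h}$ admits a decomposition into three orthogonal pieces: a signal component along $\bm U_h$, an ``initialization'' component along $\upca_{0,h}$ (which by Proposition \ref{prop:singular_values} is jointly pseudo-Lipschitz equivalent to $Y_{0,h}$), and a Gaussian piece independent of everything previously generated. The coefficients $\alphalt_{t,h}, \betalt_{t,h}, \tault_{t,h}$ (resp.\ $\alphart_{t,h}, \betart_{t,h}, \taurt_{t,h}$) are precisely the population-level quantities that must match these three contributions, and the recursive definitions in \eqref{eq:state_evol_aux} are obtained by demanding that the Onsager corrections $\gamma_h\adus_{t-1,h}(\jacror_{t,h})^\top$ and $\advs_{t,h}(\jaclor_{t,h})^\top$ cancel the long-range memory between iterates of different $t$, so that the residual is conditionally Gaussian with the stated covariance.

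Next, I would carry out the induction. The base case $t=0$ is handled by Proposition \ref{prop:singular_values} together with Lemma \ref{lem:consistency_nuisance} and the Strong Law of Large Numbers applied to the low-dimensional modalities $\wt{\bm X}_\ell = \wt{\bm U}_\ell \bm L_\ell^\top + \wt{\bm W}_\ell$. For the inductive step, assume the joint empirical distribution of $\{\adu_{s,h},\upca_{0,h},\wt{\bm X}_\ell,\bm U_h,\wt{\bm U}_\ell:h,\ell,\,s\le t\}$ converges along any subset of proportional size to the law described. Using the Gram--Schmidt decomposition of $\bm W_h^\perp$ with respect to the sigma-algebra generated by past iterates and initializations, one writes
\begin{align*}
\bm W_h^\perp \,\bvor_{t,h} &= \widetilde{\mathbb{E}}\bigl[\bm W_h^\perp \bvor_{t,h}\bigr] + \bm Z \, \|\bvor_{t,h}\|_\text{orth}/\sqrt{N},
\end{align*}
where $\bm Z$ is a fresh Gaussian independent of the past, and similarly for $(\bm W_h^\perp)^\top \buor_{t,h}$. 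Combined with the pseudo-Lipschitz continuity of the oracle denoisers $v^{\circ}_{t,h}$, $u^{\circ}_{t,h}$, $\wt u^{\circ}_{t,\ell}$ (Assumption \ref{asm:prior_1_mom}(6)--(7)), the Stein-type identities encoded by the Onsager terms reduce the inner-product moments appearing in $\alphalt, \betalt, \tault$ to expectations against the induction-level law, and a standard concentration argument (Gaussian Poincar\'e plus Borel--Cantelli applied to the fixed sequence of pseudo-Lipschitz test functions) upgrades convergence in probability to almost sure convergence. The subset conditions $|\mathcal F_L|/N\to\lambda_L$ and $|\mathcal F_{R,h}|/p_h\to\lambda_{R,h}$ only affect the normalization, since the empirical distribution converges along any such sequence of subsets by the exchangeability of the joint law of rows.

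The main obstacle I anticipate is the bookkeeping for the spectral initialization inside the multi-orbit recursion. The functions $v^{\circ}_{t,h}$ depend only on the $h$-th orbit, but $u^{\circ}_{t,h}$ integrates every modality at once, so the Onsager correction $\jaclor_{t,h}$ in \eqref{eq:jaclor} extracts a specific block of a joint Jacobian. Verifying that the coefficients $\alphart_{t+1,h}, \betart_{t+1,h}, \taurt_{t+1,h}$ obtained by the conditioning lemma exactly match the recursive formulas in \eqref{eq:state_evol_aux} requires expanding the conditional expectation of $(\bm W_h^\perp)^\top \buor_{t,h}$ using the joint Gaussianity of $(\adxl_{t,1},\dots,\adxl_{t,m})$ together with the ``initialization slot'' $Y_{0,h}$, and identifying each summand in \eqref{eq:state_evol_aux} with either a covariance correction against $U_h$, a covariance correction against $Y_{0,h}$, or an Onsager debiasing term. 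Once this identification is made for a single modality, the argument extends orbit-by-orbit because distinct $\bm W_h^\perp$ are independent, so the multi-modality case adds no genuine new difficulty beyond matrix-valued notation.
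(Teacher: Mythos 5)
Your plan identifies the right family of techniques — Gaussian conditioning in the style of Bolthausen/Bayati--Montanari, a three-way decomposition of each iterate into a signal direction, an initialization direction, and a fresh Gaussian, and coefficient matching against the recursion \eqref{eq:state_evol_aux} — and in that sense it is aimed at the same target as the paper. However, the paper does not condition directly on the actual iterates $\adu_{t,h},\adv_{t,h}$ as you propose. Instead it first \emph{centers} the recursion: it defines modified denoisers $\tildfuncu_{t,h},\tildfuncv_{t,h}$ obtained by subtracting from $u^\orc_{t,h},v^\orc_{t,h}$ their population projections onto $\upca_{0,h}$ and $\vpca_{0,h}$, runs an auxiliary zero-initialized AMP $\{\nperpu_{t,h},\nperpv_{t,h}\}$ in \eqref{eq:aux_amp} driven by the pure noise matrices $\bm W^\perp_h$, and proves state evolution for that orbit by a self-contained generic lemma (Lemma \ref{lem:amp_aux}). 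The actual statement of Lemma \ref{lem:asymptotics_slln} is then recovered by a six-assertion induction showing $\frac{1}{N}\|\adu_{t,h}-(\nperpu_{t,h}+\bm U_h(\alphalt_{t,h})^\top+\upca_{0,h}(\betalt_{t,h})^\top)\|_F^2\to 0$ and its right-sided analogue. What this buys is that the Gaussian conditioning is only ever applied to iterates whose conditional means vanish, so the generic AMP lemma applies off the shelf; the entire correlation of $\adu_{t,h}$ with $\bm U_h$ and $\upca_{0,h}$ is packaged into the deterministic affine shift.

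The concrete difficulty your direct route would face, and which your proposal does not address, is twofold. First, because each $\adu_{t,h}$ has nonvanishing overlap with both $\bm U_h$ and $\upca_{0,h}$, conditioning $\bm W^\perp_h$ on the sigma-algebra of past iterates produces conditional means that are themselves random linear combinations of all previous iterates, and these must be tracked and matched \emph{exactly} against $\alphalt_{t,h},\betalt_{t,h}$ at every step; the paper's centering device is precisely what makes this tracking reduce to the closed-form recursion \eqref{eq:state_evol_aux}. Second, $\adX_h$ is not $\bm W^\perp_h$ plus a rank-$r_h$ signal: it contains the doubly projected term $\bm P^\perp_{\upca_{0,h}}(\frac1N\bm U_h\bm D_h\bm V_h^\top+\bm W^\perp_h)\bm P^\perp_{\vpca_{0,h}}$, so expanding $\adX_h\advs_{t,h}$ generates projection-correction terms (the quantities $\bm\delta_{t,1,h},\bm\delta_{t,2,h}$ in the paper) whose normalized norms must be shown to vanish, as well as the $\bm D^*_h$-versus-$\bm D_h$ discrepancies that appear in the $\betalt,\betart$ formulas. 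These are not fatal to your approach — they are exactly the estimates (B.46)--(B.50) of Montanari--Venkataramanan that the paper imports — but your sketch treats them as absorbed into ``standard bookkeeping,'' whereas they constitute the bulk of the actual proof. I would recommend restructuring your argument around an explicit centered auxiliary orbit so that the generic conditioning lemma can be quoted cleanly, and then devoting the induction to the discrepancy terms.
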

\begin{lem}
\label{lem:asymptotic_final_limit}
     For any sequence of pseudo-Lipschitz functions $\psi:\R^{3r+2\wt r}\rightarrow \R$ and $\phi_h:\R^{3r_h} \rightarrow \R$ for $h \in [m]$, the following limits hold almost surely.
    \begin{align}
        &\lim_{N \rightarrow \infty}\Bigg|\mathbb{E}_\mu\left[\psi\left(\{\alphalt_{t,h}U_{h}+\betalt_{t,h}Y_{0,h}+(\tault_{t,h})^{1/2}W^{\perp,L}_{t,h}\}_{h=1}^{m},\{Y_{0,h}\}_{h=1}^{m},\right.\right.\\
         &\hskip 20em\left.\left.\{\wt{Y}_{0,\ell}\}_{\ell=1}^{\wt m},\{U_{h}\}_{h=1}^{m},\{\wt U_\ell\}_{\ell=1}^{\wt m}\right)\right]\\
         &\hskip 6em-\mathbb{E}_\mu\left[\psi\left(\{\hslmor_{t,h}U_h+(\hslvor_{t,h})^{1/2}Z^{\perp,L}_{t,h}\}_{h=1}^{m},\{Y_{0,h}\}_{h=1}^{m},\right.\right.\\
         &\hskip 20em\left.\left.\{\wt{Y}_{0,\ell}\}_{\ell=1}^{\wt m},\{U_{h}\}_{h=1}^{m},\{\wt U_\ell\}_{\ell=1}^{\wt m}\right)\right]\Bigg|=0,
    \end{align}
    and
    \begin{align}
         &\lim_{p_h \rightarrow \infty}\Bigg|\mathbb{E}_{\nu_h}\left[\phi_h\left(\alphart_{t,h}V_{h}+\betart_{t,h}Y^R_{0,h}+(\taurt_{t,h})^{1/2}W^{\perp,R}_{t,h},Y^R_{0,h},V_h\right)\right]-\\      &\hskip4em\mathbb{E}_{\nu_h}\left[\phi_h\left(\hsrmor_{t,h}V_h+(\hsrvor_{t,h})^{1/2}Z^{\perp,R}_{t,h},Y^R_{0,h},V_h\right)\right]\Bigg|=0,
    \end{align} 
    where for all $t \ge 0$ and $h \in [m]$, the random variables $Z^{\perp,L}_{t,h},Z^{\perp,R}_{t,h} \sim N_{r_h}(0,\bm I_{r_h})$ and are independent.
\end{lem}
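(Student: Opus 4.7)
The plan is to establish, for each $t \geq 0$, the distributional identity
\[
\bigl(\alphalt_{t,h} U_h + \betalt_{t,h} Y_{0,h} + (\tault_{t,h})^{1/2} W^{\perp,L}_{t,h}\bigr)_{h=1}^m
\,\stackrel{d}{=}\,
\bigl(\hslmor_{t,h} U_h + (\hslvor_{t,h})^{1/2} Z^{\perp,L}_{t,h}\bigr)_{h=1}^m
\]
jointly with $\bigl((U_h)_{h=1}^m, (Y_{0,h})_{h=1}^m, (\wt Y_{0,\ell})_{\ell=1}^{\wt m}\bigr)$, and an analogous identity for the right-side iterates jointly with $(V_h, Y^R_{0,h})$. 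Since $\psi$ and $\phi_h$ are pseudo-Lipschitz and all involved quantities have uniformly bounded second moments by Assumption \ref{asm:prior_1_mom}, the two expectations in the lemma statement coincide once this distributional identity is established, making the claimed limit zero (indeed, the statement holds as an exact equality for every $N$).

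The key observation is that both sides of the identity are linear combinations of jointly Gaussian variables (conditional on the latent factors $(U_h, \wt U_\ell, V_h)$), so it suffices to verify agreement of their first two moments jointly with $(U_h, Y_{0,h})_{h=1}^m$. Using the representation $Y_{0,h} = \hslmorp_{0,h} U_h + (\hslvorp_{0,h})^{1/2} \eta_h$, where $\eta_h \sim N(0, I_{r_h})$ are mutually independent and independent of $U_h$, the top expression becomes
\[
(\alphalt_{t,h} + \betalt_{t,h} \hslmorp_{0,h}) U_h + \betalt_{t,h} (\hslvorp_{0,h})^{1/2} \eta_h + (\tault_{t,h})^{1/2} W^{\perp,L}_{t,h}.
\]
Matching means and conditional covariances (given $U_h$) then reduces the problem to verifying the matrix identities $\alphalt_{t,h} + \betalt_{t,h} \hslmorp_{0,h} = \hslmor_{t,h}$ and $\betalt_{t,h} \hslvorp_{0,h} (\betalt_{t,h})^\top + \tault_{t,h} = \hslvor_{t,h}$, together with the analogous cross-covariance identities across the multiple modalities and their right-side counterparts.

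These matrix identities will be proven by induction on $t$. The base case $t = 0$ follows immediately from the initial conditions $\alphart_{0,h} = 0$, $\betart_{0,h} = I_{r_h}$, $\taurt_{0,h} = 0$, combined with $\hsrmor_{0,h} = \hsrmorp_{0,h}$ and $\hsrvor_{0,h} = \hsrvorp_{0,h}$. For the inductive step, the inductive hypothesis at stage $t$ converts every expectation in the auxiliary recursion \eqref{eq:state_evol_aux} into an expectation under the simpler law of \eqref{eq:state_evol_gen}; then direct algebraic manipulation, crucially exploiting the identity $\hslmorp_{0,h}(\hslmorp_{0,h})^\top + \hslvorp_{0,h} = I_{r_h}$ implied by \eqref{eq:initializers}, shows that the Onsager correction terms involving the Jacobian expectations $\mathbb{E}[\mathsf{d} v^\orc_{t,h}]$ and $\mathbb{E}[\mathsf{d}_h u^\orc_{t,h}]$ precisely cancel the contributions arising from the $\betart Y^R_{0,h}$ and $\betalt Y_{0,h}$ terms. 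The main obstacle is the careful bookkeeping of these Onsager cancellations, which is the essential computation relating the ``long-memory'' auxiliary recursion to the memoryless oracle recursion characteristic of AMP proofs. Once the identities are verified at stage $t$, Gaussian moment matching yields the joint distributional equality, and hence the lemma.
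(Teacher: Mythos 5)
Your proposal is correct and follows essentially the same route as the paper: the paper likewise sets $\tildSl_{t,h}=\alphalt_{t,h}+\betalt_{t,h}\hslmorp_{0,h}$ and $\tildSgl_{t,h}=\tault_{t,h}+\betalt_{t,h}\hslvorp_{0,h}(\betalt_{t,h})^\top$, obtains the distributional identity by Gaussian moment matching through $Y_{0,h}=\hslmorp_{0,h}U_h+(\hslvorp_{0,h})^{1/2}\eta_h$, and then identifies these combined parameters with $\hslmor_{t,h}$, $\hslvor_{t,h}$ by induction on $t$, using Stein's lemma for the Onsager bookkeeping. The only (immaterial) difference is that the paper phrases the parameter identification asymptotically in $N$ (via the relations \eqref{eq:rel_beta} and continuity of the state-evolution maps) rather than as an exact identity for every $N$, and then invokes the pseudo-Lipschitz property of $\psi$ and $\phi_h$ to conclude.
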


\vskip 1em
\begin{proof}[Proof of Theorem \ref{thm:asymptotics of amp_iterates}]
Using \eqref{eq:modified_asymp_1} and \eqref{eq:modified_asymp_2} along with Lemmas \ref{lem:asymptotics_slln} and \ref{lem:asymptotic_final_limit}, the result follows.
\end{proof}

\subsection{Proof of Lemma \ref{lem:asymptotic_final_limit}}
Observe that by definition, as $N \rightarrow \infty$
\[
\|\hsrmorp_{0,h}\bm D_h-\bm D_h\hsrmorp_{0,h}\|^2_F\xrightarrow{a.s.}0, \quad \mbox{and} \quad \|\hslmorp_{0,h}\bm D_h-\bm D_h\hslmorp_{0,h}\|^2_F\xrightarrow{a.s.}0.
\]
Using the definitions of $\hslmorp_{0,h},\hsrmorp_{0,h}$ from \eqref{eq:initializers} and that of $\{\betalt_{t,h},\betart_{t,h}\}_{t \ge -1}$ we get the following limits using the Stein's Lemma \cite{Stein1981}.
\begin{align}
\label{eq:rel_beta}
    \lim_{N \rightarrow \infty}\|\betart_{t+1,h}\hsrmorp_{0,h}-\mathbb E_\mu[u^\orc_{t,h}(\adxl_{t,1},\ldots,\adxl_{t,m},\wt{Y}_{0,1},\ldots,\wt{Y}_{0,\wt m};\mu)Y^\top_{0,h}]\hslmorp_{0,h}\bm D_h\|^2_F \overset{a.s.}{=}0,\\
    \lim_{p_h \rightarrow \infty}\|\betalt_{t+1,h}\hslmorp_{0,h}-\gamma_h\mathbb E_{\nu_h}[v^\orc_{t,h}(\adxr_{t,h};\nu_h)(Y^R_{0,h})^\top]\hsrmorp_{0,h}\bm D_h\|^2_F \overset{a.s.}{=}0.
\end{align}
For $h \in [m]$, let us define:
\begin{align}
    &\funcm^L_{t,h}(\{\bm M_h\}_{h=1}^m,\{\bm Q_h\}_{h=1}^m)\\
    &\equiv\mathbb E_\mu\bigg[u^\orc_{t,h}\bigg(\{\bm M_hU_h+\bm Q^{1/2}_hZ_h\}_{h=1}^m,\{\bm L_\ell\wt{U}_\ell+\wt{Z}_\ell\}_{\ell=1}^{\wt m};\mu\bigg)U^\top_h\bigg]\bm D_h,\\
    &\funcv^L_{t,h}(\{\bm M_h\}_{h=1}^m,\{\bm Q_h\}_{h=1}^m)\equiv\mathbb E_\mu\bigg[u^\orc_{t,h}\bigg(\{\bm M_hU_h+\bm Q^{1/2}_hZ_h\}_{h=1}^m,\{\bm L_\ell\wt{U}_\ell+\wt{Z}_\ell\}_{\ell=1}^{\wt m};\mu\bigg)^{\otimes 2}\bigg],\\
   & \funcm^R_{t,h}(\bm M_h,\bm Q_h)\equiv\gamma_h\mathbb E_{\nu_h}[v^\orc_{t,h}(\bm M_hV_h+\bm Q^{1/2}_hZ_h;\nu_h)V^\top_h]\bm D_h\\
    &\funcv^R_{t,h}(\bm M_h,\bm Q_h)\equiv\gamma_h\mathbb E_{\nu_h}[v^\orc_{t,h}(\bm M_hV_h+\bm Q^{1/2}_hZ_h;\nu_h)^{\otimes 2}].
\end{align}
Let us define for $h \in [m]$, the parameters \[
\tildSl_{t,h}\equiv\alphalt_{t,h}+\betalt_{t,h}\hslmorp_{0,h}\quad \mbox{and}\quad \tildSgl_{t,h}\equiv\tault_{t,h}+\betalt_{t,h}\hslvorp_{0,h}(\betalt_{t,h})^\top.\]
 By construction, it is evident that for all $h \in [m]$ we have:
\begin{align}
        &\mathbb{E}_\mu\left[\psi\left(\{\alphalt_{t,h}U_{h}+\betalt_{t,h}Y_{0,h}+(\tault_{t,h})^{1/2}W^{\perp,L}_{t,h}\}_{h=1}^{m},\right.\right.\\
        &\hskip 10em \left.\left.\{Y_{0,h}\}_{h=1}^{m},\{\wt Y_{0,\ell}\}_{\ell=1}^{\wt m},\{U_{h}\}_{h=1}^{m},\{\wt U_{\ell}\}_{\ell=1}^{\wt m}\right)\right]\\
         &\hskip 4em=\mathbb{E}_\mu\left[\psi\left(\{\tildSl_{t,h}U_h+(\tildSgl_{t,h})^{1/2}Z^{\perp,L}_{t,h}\}_{h=1}^{m},\right.\right.\\
         &\hskip 15em\left.\left.\{Y_{0,h}\}_{h=1}^{m},\{\wt{Y}_{0,\ell}\}_{\ell=1}^{\wt m},\{U_{h}\}_{h=1}^{m},\{\wt U_\ell\}_{\ell=1}^{\wt m}\right)\right],
    \end{align}
From the definition of the functions $\funcm^R_{t,h}$ and $\funcv^R_{t,h}$, using \eqref{eq:rel_beta} we can show that
\begin{align}
    \lim_{N \rightarrow \infty}\|\tildSl_{t,h}-\funcm^R_{t,h}(\tildSr_{t,h},\tildSgr_{t,h})\|_F=0,\\
    \lim_{N \rightarrow \infty}\|\tildSgl_{t,h}-\funcv^R_{t,h}(\tildSr_{t,h},\tildSgr_{t,h})\|_F=0.
\end{align}
Similarly, we can show that
\begin{align}
    \lim_{N \rightarrow \infty}\|\tildSr_{t+1,h}-\funcm^L_{t,h}(\{\tildSl_{t,h}\}_{h=1}^m,\{\tildSgl_{t,h}\}_{h=1}^m)\|_F=0,\\
    \lim_{N \rightarrow \infty}\|\tildSgr_{t,h}-\funcv^L_{t,h}(\{\tildSl_{t,h}\}_{h=1}^m,\{\tildSgl_{t,h}\}_{h=1}^m)\|_F=0.
\end{align}
Finally, using the definition of the state evolution parameters \eqref{eq:state_evol_gen}, and $\tildSr_{0,h}=\hsrmor_{0,h}$, $\tildSgr_{0,h}=\hsrvor_{0,h}$ for all $h \in [m]$, using continuity of the functions $\funcm^R_{t,h}$ and $\funcv^R_{t,h}$ we can inductively conclude that
\[
\lim_{N \rightarrow \infty}\|\tildSl_{t,h}-\hslmor_{t,h}\|_F = 0 \quad \mbox{and} \quad \lim_{N \rightarrow \infty}\|\tildSgl_{t,h}-\hslvor_{t,h}\|_F = 0.
\]
Hence, the first assertion follows using the pseudo-Lipschitz property of $\psi$. The second assertion also follows using similar arguments.

\subsection{Proof of Lemma \ref{lem:asymptotics_slln}}
Let us consider the functions
\begin{align}
    &\tildfuncu_{t,h}(x_1,\ldots,x_m,\wt{x}_1,\ldots,\wt{x}_{\wt m};\mu)\\
    &=u^\circ_{t,h}(x_1,\ldots,x_m,\wt{x}_1,\ldots,\wt{x}_{\wt m};\mu)\\
    &\hskip 1.5em-\upca_{0,h}\,\mathbb E_\mu[Y_{0,h}u^\circ_{t,h}(\{\alphalt_{t,h}U_{h}+\betalt_{t,h}Y_{0,h}+(\tault_{t,h})^{1/2}W^{\perp,L}_{t,h}\}_{h=1}^{m},\{\wt Y_{0,\ell}\}_{\ell=1}^{\wt m})^\top],
\end{align}
and
\begin{align}
    &\tildfuncv_{t,h}(y;\nu_h)=v^\orc_{t,h}(y;\nu_h)-\vpca_{0,h}\,\mathbb E_{\nu_h}[Y^R_{0,h}v^\orc_{t,h}(\alphart_{t,h}V_{h}+\betart_{t,h}Y^R_{0,h}+(\taurt_{t,h})^{1/2}W^{\perp,R}_{t,h})^\top].
\end{align}
For all $h \in [m]$, let us define the sequence of auxiliary iterations $\{\nbaru_{t,h},\nbarv_{t,h}\}_{t \ge 0}$ as follows:
\begin{align}
\label{eq:aux_amp}
    &\nbarv_{t,h} = \tildfuncv_{t,h}(\nperpv_{t,h}+\bm V_h(\alphart_{t,h})^\top+\vpca_{0,h}(\betart_{t,h})^\top;\nu_h),\\
    &\nperpu_{t,h} = \bm W^{\perp}_h\nbarv_{t,h}-\nbaru_{t-1,h}\,\gamma_h\left(\jacrperp_{t,h}(\nperpv_{t,h}+\bm V_h(\alphart_{t,h})^\top+\vpca_{0,h}(\betart_{t,h})^\top;\nu_h)\right)^\top,\\
    &\nbaru_{t,h} = \tildfuncu_{t,h}(\nperpu_{t,1}+\bm U_1(\alphalt_{t,1})^\top+\upca_{0,1}(\betalt_{t,1})^\top,\ldots,\\
    & \hskip 10em \nperpu_{t,m}+\bm U_m(\alphalt_{t,m})^\top+\upca_{0,m}(\betalt_{t,m})^\top,\wt{\bm X}_1,\ldots,\wt{\bm X}_{\wt m};\mu),\\
    & \nperpv_{t+1,h} = (\bm W^{\perp}_h)^\top\nbaru_{t,h}-\nbarv_{t,h}\,\bigg(\jaclperp_{t,h}(\nperpu_{t,1}+\bm U_1(\alphalt_{t,1})^\top+\upca_{0,1}(\betalt_{t,1})^\top,\ldots,\\
    & \hskip 10em \nperpu_{t,m}+\bm U_m(\alphalt_{t,m})^\top+\upca_{0,m}(\betalt_{t,m})^\top,\wt{\bm X}_1,\ldots,\wt{\bm X}_{\wt m};\mu)\bigg)^\top.
\end{align}
In the above definitions $\{\bm W^{\perp}_h \in \R^{N \times p_h}:(\bm W^{\perp}_h)_{ij} \overset{iid}{\sim} N(0,N^{-1}), h \in [m]\}$ and is independent of the observed data.  Further, the matrix-valued functions $\jacrperp_{t,h}(\cdot):\mathbb R^{
p_h \times r_h} \rightarrow  \mathbb R^{
r_h \times r_h}$ is defined similarly as \eqref{eq:jacror} with $\tildfuncv_{t,h}(\cdot)$ replacing $v^\orc_{t,h}(\cdot)$. Similarly, the matrix-valued functions $\jaclperp_{t,h}(\cdot):\prod_{k=1}^m\mathbb R^{N \times r_k
} \times \prod_{\ell=1}^{\wt m}\mathbb R^{N \times {\wt r}_\ell
} \rightarrow  \mathbb R^{
r_h \times r_h}$ is defined similarly as \eqref{eq:jaclor} with $\tildfuncu_{t,h}(\cdot)$ replacing $u^\orc_{t,h}(\cdot)$.
We initialize the iterations \eqref{eq:aux_amp} as follows:
\begin{align}
    \nperpv_{0,h}=\adv_{0,h}-\bm V_{h}(\alphart_{0,h})^\top-\vpca_{0,h}(\betart_{0,h})^\top=0,\\
    \nperpu_{-1,h}=\adu_{-1,h}-\bm U_{h}(\alphalt_{-1,h})^\top-\upca_{0,h}(\betalt_{-1,h})^\top=0,
\end{align}
for all $h \in [m]$. Now we consider the following lemma.
\begin{lem}
\label{lem:amp_aux}
     Let $\tildgenu_{t,h}:\R^{r+\wt r} \rightarrow \R^{r_h}$, $\tildgenv_{t,h}:\R^{r_h} \rightarrow \R^{r_h}$ be Lipschitz functions with Lipschitz Jacobian matrices. Let $\genericm_h \in \R^{N \times p_h}$ for $h \in [m]$ be random matrices with iid Gaussian entries with zero mean and variance $N^{-1}$. The matrices $\genericm_h$ are independent of one another for $h \in [m]$. Consider the sequence of AMP iterates defined as follows:
     \begin{align}
    &\genvb_{t,h} = \tildgenv_{t,h}(\genv_{t,h}),\\
    &\genu_{t,h} = \genericm_{h}\genvb_{t,h}-\genub_{t-1,h}\,\gamma_h\left(J^{\mathtt{g,R}}_{t,h}(\genv_{t,h})\right)^\top,\\
    &\genub_{t,h} = \tildgenu_{t,h}(\genu_{t,1},\ldots,\genu_{t,m},\wt{\bm X}_1,\ldots,\wt{\bm X}_{\wt m}),\\ 
    &\genv_{t+1,h} =\genericm^\top_{h}\genub_{t,h}-\genvb_{t,h}\,\left(J^{\mathtt{g,L}}_{t,h}(\genu_{t,1},\ldots,\genu_{t,m},\wt{\bm X}_1,\ldots,\wt{\bm X}_{\wt m})\right)^\top.
\end{align}
where $\genv_{0,h}=\bm 0$ and $\genub_{-1,h}=\bm 0$. In the above definitions, the matrix-valued functions $J^{\mathtt{g,R}}_{t,h}:\mathbb R^{p_h \times r_h} \rightarrow \R^{r_h \times r_h}$ corresponding to the function $\tildgenv_{t,h}(\bm x)$ is defined as follows:
\[
J^{\mathtt{g,R}}_{t,h}(\genv_{t,h})=\frac{1}{p_h}\sum_{i=1}^{p_h}\frac{\bm \partial \tildgenv_{t,h}}{\bm \partial \bm x}((\genv_{t,h})_{i*})
\]
Similarly, the matrix-valued functions $J^{\mathtt{g,L}}_{t,h}:\prod_{k=1}^m\mathbb R^{N \times r_k} \times \prod_{\ell=1}^{\wt m}\mathbb R^{N \times {\wt r}_\ell} \rightarrow  \mathbb R^{r_h \times r_h}$ corresponding to the function $\tildgenu_{t,h}(\bm x_1,\ldots,\bm x_m,\wt{\bm x}_1,\ldots,\wt{\bm x}_{\wt m})$ is defined as follows:
\begin{align}
&J^{\mathtt{g,L}}_{t,h}(\genu_{t,1},\ldots,\genu_{t,m},\wt{\bm X}_1,\ldots,\wt{\bm X}_{\wt m})\\
&\quad =\frac{1}{N}\sum_{i=1}^{N}\frac{\bm \partial \tildgenu_{t,h}}{\bm \partial \bm x_h}((\genu_{t,1})_{i*},\ldots,(\genu_{t,m})_{i*},(\wt{\bm X}_1)_{i*},\ldots,(\wt{\bm X}_{\wt m})_{i*}).
\end{align}
Next, consider any sequence of deterministic subsets $\mathcal F_L \subset [N]$ satisfying $\frac{|\mathcal F_L|}{N} \rightarrow \lambda_{L} \in (0,1]$, as $N \rightarrow \infty$ and $\{\mathcal F_{R,h} \subset [p_h]:h \in [m]\}$ satisfying $\frac{|\mathcal F_{R,h}|}{p_h} \rightarrow \lambda_{R,h} \in (0,1]$, as $p_h \rightarrow \infty$ for all $h \in [m]$.
Then, for any pseudo Lipschitz functions $\psi:\R^{3r+2\wt r} \rightarrow \R$ and $\phi_h:\R^{3r_h} \rightarrow \R$ we have:
\begin{align}
        &\hskip -0.25em\lim_{N \rightarrow \infty}\Bigg|\frac{1}{|\mathcal F_L|}\sum_{i \in \mathcal F_L}\psi\left(\{(\genu_{t,h})_{i*}\}_{h=1}^{m},\{(\wt{\bm X}_\ell)_{i*}\}_{\ell=1}^{\wt m},\{(\upca_{0,h})_{i*}\}_{h=1}^{m},\{(\bm U_{h})_{i*}\}_{h=1}^{m},\{(\wt{\bm U}_{\ell})_{i*}\}_{\ell=1}^{\wt m}\right)\\
    &\hskip 4em- \mathbb{E}_\mu\left[\psi\left(\{(\taul_{t,h})^{1/2}W^{\mathtt{g,L}}_{t,h}\}_{h=1}^{m},\{\wt{Y}_{0,\ell}\}_{\ell=1}^{\wt m},\{Y_{0,h}\}_{h=1}^{m},\{U_{h}\}_{h=1}^{m},\{\wt U_\ell\}_{\ell=1}^{\wt m}\right)\right]\Bigg|=0,
    \end{align}
    and
    \begin{align}
         &\lim_{p_h \rightarrow \infty}\Bigg|\frac{1}{|\mathcal F_{R,h}|}\sum_{i \in \mathcal F_{R,h}}\phi_h\left((\genv_{t,h})_{i*},(\bm V_{0,h})_{i*},(\bm V_{h})_{i*}\right)\\
         & \hskip 10em -\mathbb{E}\left[\phi\left((\taur_{t,h})^{1/2}W^{\mathtt{ g,R}}_{t,h},V_{0,h},V_{h}\right)\right]\Bigg|=0.
    \end{align} 
Here, $W^{\mathtt{g,L}}_{t,1},\ldots,W^{\mathtt{g,L}}_{t,m}$ and $W^{\mathtt{g,R}}_{t,1},\ldots,W^{\mathtt{g,R}}_{t,\wt m}$ are independent; $W^{\mathtt{g,L}}_{t,h}, W^{\mathtt{g,R}}_{t,h} \sim N_{r_h}(0,\bm I_{r_h})$ for $h \in [m]$, $V_h \sim \nu_h$, $(U_1,\ldots,U_m,\wt{U}_1,\ldots,\wt{U}_{\wt m}) \sim \mu$ and for all $h \in [m]$, the parameters $\taul_{t,h}$ and $\taur_{t,h}$ are defined as follows:
\begin{align}
\label{eq:def_gen_amp_se}
    &\taul_{t,h}=\gamma_h\mathbb E[\tildgenv_{t,h}((\taur_{t,h})^{1/2}W^{\mathtt{g,R}}_{t,h})^{\otimes 2}], \quad \mbox{and} \\
    &\taur_{t+1,h}=\mathbb E[\tildgenu_{t,h}((\taul_{t,1})^{1/2}W^{\mathtt{g,L}}_{t,1},\ldots,(\taul_{t,m})^{1/2}W^{\mathtt{g,L}}_{t,m},\wt{Y}_{0,1},\ldots,\wt{Y}_{0,\wt m})^{\otimes 2}],
\end{align}
where $\taur_{0,h}=\bm 0$.
\end{lem}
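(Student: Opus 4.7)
The plan is to recognize Lemma \ref{lem:amp_aux} as a specialization of the general multi-orbit AMP state evolution theorem of \cite{gerbelot2022graphbased} (or equivalently, the multi-orbit framework of \cite{ma_nandy}) to the setting of $m$ parallel non-symmetric AMPs driven by mutually independent Gaussian matrices $\{\genericm_h\}_{h=1}^m$ that share a common vertex set on the left, with information integration across orbits happening entirely through the denoisers $\tildgenu_{t,h}$ and with $\{\wt{\bm X}_\ell\}_{\ell=1}^{\wt m}$ entering as exogenous side information. First, I would condition on the $\sigma$-algebra generated by $\{\bm U_h, \bm V_h, \upca_{0,h}, \vpca_{0,h}\}_{h=1}^m$ together with $\{\wt{\bm X}_\ell, \wt{\bm U}_\ell\}_{\ell=1}^{\wt m}$; since $\genericm_h$ are by assumption independent of all of these, their conditional laws remain Gaussian with iid $N(0,N^{-1})$ entries, while the denoisers $\tildgenu_{t,h}$ become deterministic (in the AMP randomness) functions of the iterates $(\genu_{t,1},\ldots,\genu_{t,m})$ indexed by fixed side information evaluated rowwise.

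Next, I would proceed by induction on $t$ using the conditioning technique of \cite{BM11journal,JM12}, applied orbit by orbit. The induction hypothesis asserts that the empirical distribution of rows of $(\genu_{s,h}, \genv_{s,h})$ for $s \leq t$ converges in Wasserstein sense to the law of mean-zero Gaussians with covariances $\taul_{s,h}$ and $\taur_{s,h}$ determined by \eqref{eq:def_gen_amp_se}, and that across $h \in [m]$ these Gaussians are independent at each fixed $s$ because $\genericm_1,\ldots,\genericm_m$ are mutually independent. Given this, in the $u$-update the conditional law of $\genericm_h \genvb_{t,h}$ given the past decomposes into a projection onto the span of previous directions (which is exactly cancelled by the Onsager term $\gamma_h \genub_{t-1,h}(J^{\mathtt{g,R}}_{t,h})^\top$ up to vanishing error, via the Stein-type identity controlling the divergence of $\tildgenv_{t,h}$) plus an orthogonal Gaussian with asymptotic covariance $\taul_{t,h}$. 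The analogous conditional decomposition for $\genericm_h^\top \genub_{t,h}$ with Onsager correction $\genvb_{t,h}(J^{\mathtt{g,L}}_{t,h})^\top$ yields the $v$-step limit with covariance $\taur_{t+1,h}$. Crucially, because $\tildgenu_{t,h}$ depends on iterates from all orbits together with side information $\wt{\bm X}_\ell$, the computation of $\taur_{t+1,h}$ requires the joint pseudo-Lipschitz limit at step $t$, which is guaranteed by the inductive hypothesis combined with the Lipschitz Jacobian assumption on $\tildgenu_{t,h}$.

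Finally, the two displayed limits in the lemma follow by applying pseudo-Lipschitz test functions $\psi$ and $\phi_h$ to the joint empirical distributions at step $t$ and passing to the limit using the induction; the Lipschitz regularity of $\tildgenu_{t,h}$ and $\tildgenv_{t,h}$ ensures that pseudo-Lipschitz convergence is preserved after composing with the denoisers. The choice of initialization $\genv_{0,h}=\bm 0$ and $\genub_{-1,h}=\bm 0$ (with $\taur_{0,h}=\bm 0$) seeds the induction trivially.

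The main technical obstacle is the cross-orbit coupling introduced by $\tildgenu_{t,h}$ being a joint function of $(\genu_{t,1},\ldots,\genu_{t,m})$: one must verify that the Onsager correction, which is computed orbit-wise via the partial Jacobian $\bm \partial \tildgenu_{t,h}/\bm \partial \bm x_h$, is the correct debiasing term despite this coupling. This works out because the matrices $\genericm_h$ are mutually independent, so the contribution to the asymptotic bias from past iterates of orbit $h$ through $\genericm_h$ depends only on how $\tildgenu_{t,h}$ varies in its $h$-th argument; variation in other arguments contributes independent randomness that averages out. Making this rigorous requires the orbit-wise conditioning to be carried out simultaneously across $h$, exploiting the product structure of the conditional law of $(\genericm_1,\ldots,\genericm_m)$, which is precisely the content of the multi-oriented-graph AMP framework cited above.
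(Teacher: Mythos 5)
Your proposal is correct and follows essentially the same route as the paper: the paper also proves this lemma by Bolthausen-style Gaussian conditioning (citing \citet{Bol12}, \citet{Berthier}, and Lemmas 11--12 of \cite{BM11journal}), inducting over $t$ with the conditional law of each $\genericm_h$ decomposed into a projection onto past directions (cancelled by the orbit-wise partial-Jacobian Onsager term via a Stein identity) plus a fresh orthogonally-projected Gaussian, exactly as you describe, and it handles the cross-orbit coupling through $\tildgenu_{t,h}$ by exploiting the mutual independence of the $\genericm_h$ as in the multi-orbit framework of \cite{ma_nandy}. The only points you gloss over are routine: the paper's induction additionally tracks boundedness of normalized norms and non-degeneracy of the conditional Gram matrices (with the degenerate case handled by a small perturbation as in \citet{Berthier}), neither of which affects the validity of your strategy.
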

Observe that the AMP orbits defined by \eqref{eq:aux_amp} conform to the set-up described in Lemma \ref{lem:amp_aux}. Hence, we can conclude that for any sequence of pseudo-Lipschitz functions $\wt{\psi}:\R^{3r+2\wt r}\rightarrow \R$ and $\wt{\phi}_h:\R^{3r_h} \rightarrow \R$, almost surely we have:
\setlength{\jot}{-1ex}
 \begin{align}
        &\hskip -0.5em\lim_{N \rightarrow \infty}\Bigg|\frac{1}{|\mathcal F_L|}\sum_{i \in \mathcal F_L}\tilde{\psi}\left(\{(\nperpu_{t,h})_{i*}\}_{h=1}^{m},\{(\upca_{0,h})_{i*}\}_{h=1}^{m},\{(\wt{\bm X}_\ell)_{i*}\}_{\ell=1}^{\wt m},\{(\bm U_{h})_{i*}\}_{h=1}^{m},\{(\wt{\bm U}_{\ell})_{i*}\}_{\ell=1}^{\wt m}\right)\\
    &\hskip 4em- \mathbb{E}_\mu\left[\wt\psi\left(\{(\tault_{t,h})^{1/2}W^{\perp,L}_{t,h}\}_{h=1}^{m},\{Y_{0,h}\}_{h=1}^{m},\{\wt{Y}_{0,\ell}\}_{\ell=1}^{\wt m},\{U_{h}\}_{h=1}^{m},\{\wt U_\ell\}_{\ell=1}^{\wt m}\right)\right]\Bigg|=0,
    \end{align}
    and
    \begin{align}
         &\lim_{p_h \rightarrow \infty}\Bigg|\frac{1}{|\mathcal F_{R,h}|}\sum_{i \in \mathcal F_{R,h}}\wt{\phi}_h\left((\nperpv_{t,h})_{i*},(\vpca_{0,h})_{i*},(\bm V_{h})_{i*}\right)\\
         & \hskip 10em -\mathbb{E}_{\nu_h}\left[\wt{\phi}_h\left((\taurt_{t,h})^{1/2}W^{\perp,R}_{t,h},Y^R_{0,h},V_{h}\right)\right]\Bigg|=0,
    \end{align}
    \setlength{\jot}{0pt}
where $\{\tault_{t,h}: h \in [m]\}$ and $\{\taurt_{t,h}: h \in [m]\}$ are defined by \eqref{eq:state_evol_aux}. Let us choose the following sequence of pseudo-Lipschitz functions:
\begin{align}
    &\tilde{\psi}\left(\{(\nperpu_{t,h})_{i*}\}_{h=1}^{m},\{(\upca_{0,h})_{i*}\}_{h=1}^{m},\{(\wt{\bm X}_\ell)_{i*}\}_{\ell=1}^{\wt m},\{(\bm U_{h})_{i*}\}_{h=1}^{m},\{(\wt{\bm U}_{\ell})_{i*}\}_{\ell=1}^{\wt m}\right)\\
    &\hskip 1em =\psi\bigg(\{(\nperpu_{t,h})_{i*}+\alphalt_{t,h}(\bm U_{h})_{i*}+\betalt_{t,h}(\upca_{0,h})_{i*}\}_{h=1}^{m},\{(\upca_{0,h})_{i*}\}_{h=1}^{m},\{(\wt{\bm X}_\ell)_{i*}\}_{\ell=1}^{\wt m},\\
    &\hskip 10em\{(\bm U_{h})_{i*}\}_{h=1}^{m}, \{(\wt{\bm U}_{\ell})_{i*}\}_{\ell=1}^{\wt m}\bigg)\\
    &\tilde{\phi}_h\left((\nperpv_{t,h})_{i*},(\vpca_{0,h})_{i*},(\bm V_{h})_{i*}\right) \\
    &\hskip 1em =\phi_h\left((\nperpv_{t,h})_{i*}+\alphart_{t,h}(\bm V_{h})_{i*}+\betart_{t,h}(\vpca_{0,h})_{i*},(\vpca_{0,h})_{i*},(\bm V_{h})_{i*}\right)
\end{align}
Consequently, for all $h \in [m]$, almost surely we have
    \setlength{\jot}{-1 ex}
\begin{align}
        &\lim_{N \rightarrow \infty}\Bigg|\frac{1}{|\mathcal F_L|}\sum_{i \in \mathcal F_L}\psi\bigg(\{(\nperpu_{t,h}+\bm U_h(\alphalt_{t,h})^\top+\upca_{0,h}(\betalt_{t,h})^\top)_{i*}\}_{h=1}^{m},\\
    &\hskip 12em\{(\upca_{0,h})_{i*}\}_{h=1}^{m},\{(\wt{\bm X}_\ell)_{i*}\}_{\ell=1}^{\wt m},\{(\bm U_{h})_{i*}\}_{h=1}^{m},\{(\wt{\bm U}_{\ell})_{i*}\}_{\ell=1}^{\wt m}\bigg)\\
    &\hskip 5em- \mathbb{E}\left[\psi\left(\{\alphalt_{t,h} U_{h}+\betalt_{t,h}Y_{0,h}+(\tault_{t,h})^{1/2}W^{\perp,L}_{t,h}\}_{h=1}^{m},\{Y_{0,h}\}_{h=1}^{m},\right.\right.\\
    &\hskip 14em\left.\left.\{\wt Y_{0,\ell}\}_{\ell=1}^{\wt m},\{U_{h}\}_{h=1}^{m},\{\wt U_{\ell}\}_{\ell=1}^{\wt m}\right)\right]\Bigg|=0,
    \end{align}
        \setlength{\jot}{0pt}
    and
    \begin{align}
         &\lim_{p_h \rightarrow \infty}\Bigg|\frac{1}{|\mathcal F_{R,h}|}\sum_{i \in \mathcal F_{R,h}}\phi_h\left((\nperpv_{t,h})_{i*},(\vpca_{0,h})_{i*},(\bm V_{h})_{i*}\right)\\
         & \hskip 4em-\mathbb{E}_{\nu_h}\left[\phi_h\left(\alphart_{t,h}V_{h}+\betart_{t,h}Y^R_{0,h}+(\taurt_{t,h})^{1/2}W^{\perp,R}_{t,h},Y^R_{0,h},V_{h}\right)\right]\Bigg|=0.
    \end{align} 
So, to establish the result, it is enough to show:
    \setlength{\jot}{-1px}
\begin{align}
    \label{eq:mean_u_perp}
  &\lim_{N \rightarrow \infty}\Bigg|\frac{1}{|\mathcal F_L|}\sum_{i \in \mathcal F_L}\psi\bigg(\{(\nperpu_{t,h}+\bm U_h(\alphalt_{t,h})^\top+\upca_{0,h}(\betalt_{t,h})^\top)_{i*}\}_{h=1}^{m},\{(\upca_{0,h})_{i*}\}_{h=1}^{m},\nonumber\\
    &\hskip 14em\{(\wt{\bm X}_\ell)_{i*}\}_{\ell=1}^{\wt m},\{(\bm U_{h})_{i*}\}_{h=1}^{m},\{(\wt{\bm U}_{\ell})_{i*}\}_{\ell=1}^{\wt m}\bigg)\\
    &\hskip 4em- \frac{1}{|\mathcal F_L|}\sum_{i \in \mathcal F_L}\psi\bigg(\{(\adu_{t,h})_{i*}\}_{h=1}^{m},\{(\upca_{0,h})_{i*}\}_{d=1}^{m},\\
    &\hskip 14em\{(\wt{\bm X}_\ell)_{i*}\}_{\ell=1}^{\wt m},\{(\bm U_h)_{i*}\}_{h=1}^{m},\{(\wt{\bm U}_{\ell})_{i*}\}_{\ell=1}^{\wt m}\bigg)\Bigg|=0,  
\end{align}
    \setlength{\jot}{0pt}
and 
\begin{align}
    \label{eq:mean_v_perp}
         &\lim_{p_h \rightarrow \infty}\Bigg|\frac{1}{|\mathcal F_{R,h}|}\sum_{i \in \mathcal F_{R,h}}\phi_h\left((\nperpv_{t,h}+\bm V_h(\alphart_{t,h})^\top+\vpca_{0,h}(\betart_{t,h})^\top)_{i*},(\vpca_{0,h})_{i*},(\bm V_{h})_{i*}\right)\nonumber\\
    &\hskip 4em-\frac{1}{|\mathcal F_{R,h}|}\sum_{i \in \mathcal F_{R,h}}\phi_h\left((\adv_{t,h})_{i*},(\vpca_{0.h})_{i*},(\bm V_{h})_{i*}\right)\Bigg|=0.
    \end{align} 
To show \eqref{eq:mean_u_perp} and \eqref{eq:mean_v_perp}, for $h \in [m]$, we define the discrepancy between the sets of iterates $\{\adu_{t,h},\adv_{t,h}\}_{t \ge 0}$ and $\{\nperpu_{t,h}+\bm U_h(\alphalt_{t,h})^\top+\upca_{0,h}(\betalt_{t,h})^\top,\nperpv_{t,h}+\bm V_h(\alphart_{t,h})^\top+\vpca_{0,h}(\betart_{t,h})^\top\}_{t \ge 0}$ as follows.
\begin{align}
    \bm \Delta_{1,t,h}&=\adu_{t,h}-(\nperpu_{t,h}+\bm U_h(\alphalt_{t,h})^\top+\upca_{0,h}(\betalt_{t,h})^\top),\\
    \bm \Delta_{2,t,h}&=\adv_{t,h}-(\nperpv_{t,h}+\bm V_h(\alphart_{t,h})^\top+\vpca_{0,h}(\betart_{t,h})^\top).
\end{align}
We shall inductively show the following:
\begin{enumerate}
    \item For any sequence of deterministic subsets $\mathcal F_L \subset [N]$ satisfying $\frac{|\mathcal F_L|}{N} \rightarrow \lambda_L \in (0,1]$, as $N \rightarrow \infty$ and pseudo-Lipschitz function $\psi:\R^{3r+2\wt r} \rightarrow \R$, for all $t \ge -1$, almost surely
    \begin{align}
  &\lim_{N \rightarrow \infty}\Bigg|\frac{1}{|\mathcal F_L|}\sum_{i \in \mathcal F_L}\psi\bigg(\{(\nperpu_{t,h}+\bm U_h(\alphalt_{t,h})^\top+\upca_{0,h}(\betalt_{t,h})^\top)_{i*}\}_{h=1}^{m},\\
    &\hskip 10em\{(\upca_{0,h})_{i*}\}_{h=1}^{m},\{(\wt{\bm X}_\ell)_{i*}\}_{\ell=1}^{\wt m},\{(\bm U_{h})_{i*}\}_{h=1}^{m},\{(\wt{\bm U}_{\ell})_{i*}\}_{\ell=1}^{\wt m}\bigg)\\
    &\hskip 1em- \frac{1}{|\mathcal F_L|}\sum_{i \in \mathcal F_L}\psi\left(\{(\adu_{t,h})_{i*}\}_{h=1}^{m},\{(\upca_{0,h})_{i*}\}_{h=1}^{m},\{(\wt{\bm X}_\ell)_{i*}\}_{\ell=1}^{\wt m},\right.\\
    &\hskip 10em\left.\{(\bm U_h)_{i*}\}_{h=1}^{m},\{(\wt{\bm U}_{\ell})_{i*}\}_{\ell=1}^{\wt m}\right)\Bigg|=0,  
\end{align}
\item For any sequence of deterministic subsets $\mathcal F_{R,h} \subset [p_h]$ satisfying $\frac{|\mathcal F_{R,h}|}{p_h} \rightarrow \lambda_{R,h} \in (0,1]$, as $p_h \rightarrow \infty$ and any pseudo-Lipschitz functions $\phi_h:\R^{3r_h} \rightarrow \R$ for $h \in [m]$, for all $t \ge -1$, almost surely
\begin{align}
         &\lim_{p_h \rightarrow \infty}\Bigg|\frac{1}{|\mathcal F_{R,h}|}\sum_{i \in \mathcal F_{R,h}}\phi_h\left((\nperpv_{t,h}+\bm V_h(\alphart_{t,h})^\top+\vpca_{0,h}(\betart_{t,h})^\top)_{i*},(\vpca_{0,h})_{i*},(\bm V_{h})_{i*}\right)\\
    &\hskip 4em-\frac{1}{|\mathcal F_{R,h}|}\sum_{i \in \mathcal F_{R,h}}\phi_h\left((\adv_{t,h})_{i*},(\bm V_{0.h})_{i*},(\bm V_{h})_{i*}\right)\Bigg|=0.
\end{align}
\item For $t \ge -1$, we have
\[
\lim\limits_{N \rightarrow \infty}\frac{1}{N}\|\bm \Delta_{1,t,h}\|^2_F = 0.
\]
\item For $t \ge -1$ and for all $h \in [m]$, we have
\[
\lim\limits_{p_h \rightarrow \infty}\frac{1}{p_h}\|\bm \Delta_{2,t,h}\|^2_F = 0.
\]
\item For $t \ge -1$, we have
\[
\lim\limits_{N \rightarrow \infty}\frac{1}{N}\|\adu_{t,h}\|^2_F < \infty, \quad \mbox{and} \quad \lim\limits_{N \rightarrow \infty}\frac{1}{N}\|\nperpu_{t,h}+\bm U_h(\alphalt_{t,h})^\top+\upca_{0,h}(\betalt_{t,h})^\top\|^2_F < \infty.
\]
\item For $t \ge 0$ and $h \in [m]$, we have
\[
\lim\limits_{p_h \rightarrow \infty}\frac{1}{p_h}\|\adv_{t+1,h}\|^2_F < \infty, \; \mbox{and}  \; \lim\limits_{p_h \rightarrow \infty}\frac{1}{p_h}\|\nperpv_{t+1,h}+\bm V_{h}(\alphart_{t+1,h})^\top+\vpca_{0,h}(\betart_{t+1,h})^\top\|^2_F < \infty.
\]
\end{enumerate}
To show the above assertions, we first introduce the following notations:
\begin{align}
    &f_{t,h}(\adu_{t,1},\ldots,\adu_{t,m},\wt{\bm X}_1,\ldots,\wt{\bm X}_{\wt m})
    =u^\orc_{t,h}(\adu_{t,1},\ldots,\adu_{t,m},\wt{\bm X}_1,\ldots,\wt{\bm X}_{\wt m})-\bm P_{\upca_{0,h}}\adus_{t,h},\\
    &g_{t,h}(\adv_{t,h})=v^\orc_{t,h}(\adv_{t,h})-\bm P_{\vpca_{0,h}}\advs_{t,h},\\
    &\bm \delta_{t,1,h}=\bm P_{\vpca_{0,h}}(\bm W^{\perp}_h)^\top f_{t,h}(\adu_{t,1},\ldots,\adu_{t,m},\wt{\bm X}_1,\ldots,\wt{\bm X}_{\wt m}) , \quad 
    \bm \delta_{t,2,h}=\bm P_{\upca_{0,h}}\bm W^{\perp}_hg_{t,h}(\adv_{t,h}).
\end{align}
Here, $\bm P_{\upca_{0,h}} = N^{-1}\upca_{0,h}(\upca_{0,h})^\top$ and $\bm P_{\vpca_{0,h}} = p_h^{-1}\vpca_{0,h}(\vpca_{0,h})^\top$.
Observe that for $t \ge 0$ and $h \in [m]$:
\begin{align}
   \adu_{t,h}&=\upca_{0,h}\bigg[\bm D_{0,h}\frac{(\vpca_{0,h})^\top v^\orc_{t,h}(\adv_{t,h})}{N}-\frac{(\upca_{0,h})^\top\adus_{t-1,h}}{N}\gamma_h\left(\jacror_{t,h}(\adv_{t,h};\nu_h)\right)^\top\\
   &\hskip 10em-\frac{(\upca_{0,h})^\top\bm U_h}{N}\bigg(\frac{\bm D_h\bm V^\top_h\bm P^\perp_{\vpca_{0,h}}v^\orc_{t,h}(\adv_{t,h})}{N}\bigg)\bigg]\\
   &\hskip 1em+\frac{1}{N}\bm U_h\bm D_h\bm V^\top_h\bm P^\perp_{\vpca_{0,h}}v^\orc_{t,h}(\adv_{t,h})\\
   &\hskip 8em -f_{t-1,h}(\adu_{t-1,1},\ldots,\adu_{t-1,m},\wt{\bm X}_1,\ldots,\wt{\bm X}_{\wt m})\gamma_h\left(\jacror_{t,h}(\adv_{t,h};\nu_h)\right)^\top\\
   &\hskip 1em+\bm W^\perp_hg_{t,h}(\adv_{t,h})-\bm \delta_{t,2,h},\\
   \adv_{t,h}&=\vpca_{0,h}\bigg[\bm D_{0,h}\frac{(\upca_{0,h})^\top u^\orc_{t,h}(\adu_{t,1},\ldots,\adu_{t,m},\wt{\bm X}_1,\ldots,\wt{\bm X}_{\wt m})}{N}\\
   &\hskip 5em-\frac{(\vpca_{0,h})^\top\advs_{t,h}}{p_h}\left(\jaclor_{t,h}(\adu_{t,1},\ldots,\adu_{t,m},\wt{\bm X}_1,\ldots,\wt{\bm X}_{\wt m})\right)^\top\\
   &\hskip 6em-\frac{(\vpca_{0,h})^\top\bm V_h}{p_h}\bigg(\frac{\bm D_h\bm U^\top_h\bm P^\perp_{\upca_{0,h}}u^\orc_{t,h}(\adu_{t,1},\ldots,\adu_{t,m},\wt{\bm X}_1,\ldots,\wt{\bm X}_{\wt m})}{N}\bigg)\bigg]\\
   &\hskip 1em+\frac{1}{N}\bm V_h\bm D_h\bm U^\top_h\bm P^\perp_{\upca_{0,h}}u^\orc_{t,h}(\adu_{t,1},\ldots,\adu_{t,m},\wt{\bm X}_1,\ldots,\wt{\bm X}_{\wt m})\\
   &\hskip 8em+(\bm W^\perp_h)^\top f_{t,h}(\adu_{t,1},\ldots,\adu_{t,m},\wt{\bm X}_1,\ldots,\wt{\bm X}_{\wt m})\\
   &\hskip 1em-g_{t,h}(\adv_{t,h})\left(\jaclor_{t,h}(\adu_{t,1},\ldots,\adu_{t,m},\wt{\bm X}_1,\ldots,\wt{\bm X}_{\wt m})\right)^\top-\bm \delta_{t,1,h},
\end{align}
with the convention $\tildfuncu_{-1,h}(\adu_{-1,1},\ldots,\adu_{-1,m},\wt{\bm X}_1,\ldots,\wt{\bm X}_{\wt m})=\adus_{-1,h}$. Consequently, we have
\begin{align}
\label{eq:delta_rewrite}
    \bm \Delta_{t,1,h}&=\upca_{0,h}\bigg[\bm D_{0,h}\frac{(\vpca_{0,h})^\top v^\orc_{t,h}(\adv_{t,h})}{N}-\frac{(\upca_{0,h})^\top\adus_{t-1,h}}{N}\gamma_h\left(\jacror_{t,h}(\adv_{t,h};\nu_h)\right)^\top\\
    &\hskip 1em -\frac{(\upca_{0,h})^\top\bm U_h}{N}\bigg(\frac{\bm D_h\bm V^\top_h\bm P^\perp_{\vpca_{0,h}}v^\orc_{t,h}(\adv_{t,h})}{N}\bigg)-(\betalt_{t,h})^\top\bigg]\\
    &\hskip 2em+\bm U_h\bigg[\bm D_h\frac{\bm V^\top_h\bm P^\perp_{\vpca_{0,h}}v^\orc_{t,h}(\adv_{t,h})}{N}-(\alphalt_{t,h})^\top\bigg]\\
    &\hskip 2em+\bm W^{\perp}_h(g_{t,h}(\adv_{t,h})-\tildfuncv_{t,h}(\nperpv_{t,h}+\bm V_h(\alphart_{t,h})^\top+\vpca_{0,h}(\betart_{t,h})^\top))\\
    &\hskip 2em+\bigg(\nbaru_{t-1,h}\,\gamma_h\left(\jacrperp_{t,h}(\nperpv_{t,h}+\bm V_h(\alphart_{t,h})^\top+\vpca_{0,h}(\betart_{t,h})^\top;\nu_h)\right)^\top\\
    &\hskip 2em-f_{t-1,h}(\adu_{t-1,1},\ldots,\adu_{t-1,m},\wt{\bm X}_1,\ldots,\wt{\bm X}_{\wt m})\gamma_h\left(\jacror_{t,h}(\adv_{t,h};\nu_h)\right)^\top\bigg)-\bm \delta_{t,2,h}\\
    \bm \Delta_{t+1,2,h}&=\vpca_{0,h}\bigg[\bm D_{0,h}\frac{(\upca_{0,h})^\top u^\orc_{t,h}(\adu_{t,1},\ldots,\adu_{t,m},\wt{\bm X}_1,\ldots,\wt{\bm X}_{\wt m})}{N}\\
    &\hskip 2em-\frac{(\vpca_{0,h})^\top\advs_{t,h}}{p_h}(\jaclor_{t,h}(\adu_{t,1},\ldots,\adu_{t,m},\wt{\bm X}_1,\ldots,\wt{\bm X}_{\wt m},\mu))^\top\\
   &\hskip 1em-\frac{(\vpca_{0,h})^\top\bm V_h}{p_h}\bigg(\frac{\bm D_h\bm U^\top_h\bm P^\perp_{\upca_{0,h}}u^\orc_{t,h}(\adu_{t,1},\ldots,\adu_{t,m},\wt{\bm X}_1,\ldots,\wt{\bm X}_{\wt m})}{N}\bigg)-(\betart_{t,h})^\top\bigg]\\
    &+\bm V_h\bigg[\bm D_h\frac{\bm U^\top_h\bm P^\perp_{\upca_{0,h}}u^\orc_{t,h}(\adu_{t,1},\ldots,\adu_{t,m},\wt{\bm X}_1,\ldots,\wt{\bm X}_{\wt m})}{N}-(\alphart_{t,h})^\top\bigg]-\bm \delta_{t,2,h}\\
    &+(\bm W^\perp_h)^\top\bigg(f_{t,h}(\adu_{t,1},\ldots,\adu_{t,m},\wt{\bm X}_1,\ldots,\wt{\bm X}_{\wt m})\\
    &-\tildfuncu_{t,h}(\nperpu_{t,1}+\bm U_h(\alphalt_{t,1})^\top+\upca_{0,1}(\betalt_{t,1})^\top,\ldots,\\
    &\hskip 10em\nperpu_{t,m}+\bm U_m(\alphalt_{t,m})^\top+\upca_{0,m}(\betalt_{t,m})^\top,\wt{\bm X}_1,\ldots,\wt{\bm X}_{\wt m})\bigg)\\
    &+\bigg(\nbarv_{t,h}\,\bigg(\jaclperp_{t,h}(\nperpu_{t,1}+\bm U_h(\alphalt_{t,h})^\top+\upca_{0,h}(\betalt_{t,h})^\top,\ldots,\\
    &\hskip 10em\nperpu_{t,h}+\bm U_h(\alphalt_{t,h})^\top+\upca_{0,h}(\betalt_{t,h})^\top,\wt{\bm X}_1,\ldots,\wt{\bm X}_{\wt m})\bigg)^\top\\
    &\hskip 10em-g_{t,h}(\adv_{t,h})\left(\jaclor_{t,h}(\adu_{t,1},\ldots,\adu_{t,m},\wt{\bm X}_1,\ldots,\wt{\bm X}_{\wt m})\right)^\top\bigg)\\
\end{align}
The base case for $t=-1$ follows by the definitions of $\adu_{-1,h}$ and $\adv_{0,h}$. Let us assume that the assertions (1)-(6) hold for $t=-1,\ldots,r$. First, let us consider Assertion (3) for $t=r+1$. Using \eqref{eq:state_evol_aux}, the assertions (2), (4), (6) for $t=r$ and arguments similar to the proof of (B.46) of \citet{montanari2021}, we can show 
\begin{align}
&\lim_{N \rightarrow \infty}\bigg[\bm D_{0,h}\frac{(\vpca_{0,h})^\top v^\orc_{t,h}(\adv_{t,h})}{N}-\frac{(\upca_{0,h})^\top\adus_{t-1,h}}{N}\gamma_h\left(\jacror_{t,h}(\adv_{t,h};\nu_h)\right)^\top\\
&\hskip 4em-\frac{(\upca_{0,h})^\top\bm U_h}{N}\bigg(\frac{\bm D_h\bm V^\top_h\bm P^\perp_{\vpca_{0,h}}v^\orc_{t,h}(\adv_{t,h})}{N}\bigg)-(\betalt_{t,h})^\top\bigg]=0
\end{align}
Using the arguments similar to the proof of (B.47) of \citet{montanari2021} and \eqref{eq:state_evol_aux}, we can further show that for all $h \in [m]$
\begin{align}
\lim_{N \rightarrow \infty}\bigg[\bm D_h\frac{\bm V^\top_h\bm P^\perp_{\vpca_{0,h}}v^\orc_{t,h}(\adv_{t,h})}{N}-(\alphalt_{t,h})^\top\bigg]=0.
\end{align}
Using the fact that $\|\bm W^\perp_h\|^2 \rightarrow 4$ along with the Lipschitz property of the functions $v^\orc_{0,h}$'s and their derivatives and the assertions (2), (4), (5), (6) for $t=r-1$, we can show along the lines of the proof of (B.48) and (B.49) of \citet{montanari2021} that
\[
\lim\limits_{N \rightarrow \infty}\frac{1}{N}\|\bm W^{\perp}_h(g_{t,h}(\adv_{t,h})-\tildfuncv_{t,h}(\nperpv_{t,h}+\bm V_h(\alphart_{t,h})^\top+\vpca_{0,h}(\betart_{t,h})^\top))\|_F=0,
\]
and
\begin{align}
    &\lim\limits_{N \rightarrow \infty}\frac{1}{N}\bigg\|\nbaru_{t-1,h}\,\gamma_h\left(\jacrperp_{t,h}(\nperpv_{t,h}+\bm V_h(\alphart_{t,h})^\top+\vpca_{0,h}(\betart_{t,h})^\top;\nu_h)\right)^\top\\
    &\hskip 4em-f_{t-1,h}(\adu_{t-1,1},\ldots,\adu_{t-1,m},\wt{\bm X}_1,\ldots,\wt{\bm X}_{\wt m})\gamma_h\left(\jacror_{t,h}(\adv_{t,h};\nu_h)\right)^\top\bigg\|_F=0.
\end{align}
Again, using the Lipschitz property of the denoisers and the previously established assertions, we can prove along the lines of (B.50) of \citet{montanari2021} that $\lim\limits_{N \rightarrow \infty}\frac{1}{N}\|\bm \delta_{t,2,h}\|=0$.
This proves assertion (3).

Next, we consider Assertion (5) for $t=r+1$. Consider the pseudo Lipschitz function $\psi(x,u,u_0)=\|x+\alphalt_{t,h}u+\betalt_{t,h}u_0\|^2$. Then by Lemma \ref{lem:amp_aux}, we have 
\[
\lim\limits_{N \rightarrow \infty}\frac{1}{N}\|\nperpu_{t,h}+\bm U_h(\alphalt_{t,h})^\top+\upca_{0,h}(\betalt_{t,h})^\top\|^2_F < \infty.
\]
Then, using Assertion (3) for $t=r+1$, we get
\[
\lim\limits_{N \rightarrow \infty}\frac{1}{N}\|\adu_{t,h}\|^2_F < \infty.
\]
To prove assertion (1), observe that, since $\frac{|\mathcal F_L|}{N} \rightarrow \lambda_L \in (0,1]$, using Cauchy-Schwarz inequality and properties of pseudo Lipschitz functions \eqref{eq:pseudo_lips}, we can get a constant $C>0$ (possibly depending on $\lambda_L$),
\begin{align}
\label{eq:typical_scaling_ineq}
     &\lim_{N \rightarrow \infty}\Bigg|\frac{1}{|\mathcal F_L|}\sum_{i \in \mathcal F_L}\psi\bigg(\{(\nperpu_{t,h}+\bm U_h(\alphalt_{t,h})^\top+\upca_{0,h}(\betalt_{t,h})^\top)_{i*}\}_{h=1}^{m},\{(\upca_{0,h})_{i*}\}_{h=1}^{m},\nonumber\\
    &\hskip 15em\{(\wt{\bm X}_\ell)_{i*}\}_{\ell=1}^{\wt m},\{(\bm U_{h})_{i*}\}_{h=1}^{m},\{(\wt{\bm U}_{\ell})_{i*}\}_{\ell=1}^{\wt m}\bigg)\\
    &\hskip 4em- \frac{1}{|\mathcal F_L|}\sum_{i \in \mathcal F_L}\psi\left(\{(\adus_{t,h})_{i*}\}_{h=1}^{m},\{(\upca_{0,h})_{i*}\}_{h=1}^{m},\{(\wt{\bm X}_\ell)_{i*}\}_{\ell=1}^{\wt m},\right.\\
    &\hskip 15em\left.\{(\bm U_h)_{i*}\}_{h=1}^{m},\{(\wt{\bm U}_{\ell})_{i*}\}_{\ell=1}^{\wt m}\right)\Bigg|\\
    & \le C\bigg(1+\sum_{h=1}^{m}\frac{\|\bm \Delta_{t,h}\|_F}{\sqrt{N}}+\sum_{h=1}^{m}\frac{\|\bm U_{h}\|_F}{\sqrt{N}}+\sum_{h=1}^{m}\frac{\|\upca_{0,h}\|_F}{\sqrt{N}}+\sum_{\ell=1}^{\wt m}\frac{\|\wt{\bm X}_\ell\|_F}{\sqrt{N}}+\sum_{\ell=1}^{\wt m}\frac{\|\wt{\bm U}_\ell\|_F}{\sqrt{N}}\\
    &\hskip 5em +\sum_{h=1}^{m}\frac{\|\nperpu_{t,h}+\bm U_h(\alphalt_{t,h})^\top+\upca_{0,h}(\betalt_{t,h})^\top\|_F}{\sqrt{N}}\bigg)\times\sum_{h=1}^{m}\frac{\|\bm \Delta_{t,h}\|_F}{\sqrt{N}}.
\end{align}
Combining Assertions (3) and (5) for $t=r+1$ with Proposition \ref{prop:singular_values}, Assertion (1) follows. 
The proof of Assertion (4) follows by retracing the steps of the proof of Assertion (3) while that of Assertion (6) follows using Assertion (3) and the techniques of the proof of Assertion (5). Finally, we can show Assertion (2) using Assertions (4) and (6) and the steps of the proof of Assertion (1).

\subsection{Proof of Lemma \ref{lem:amp_aux}}
To prove this lemma, we shall use the Gaussian conditioning technique from \citet{Bol12} as in \citet{Berthier}. Let us denote the $\sigma$-algebra generated by 
\begin{align}
&\{\genub_{0,h},\ldots,\genub_{t-1,h},\genu_{0,h},\ldots
\genu_{t-1,h},\\
& \hskip 10em\genvb_{1,h},\ldots,\genvb_{t,h},\genv_{1,h},\ldots,\genv_{t,h},
 \wt{\bm X}_\ell:h \in [m],\, \ell \in [\wt m]\}
\end{align}
by $\mathcal{G}_{t,t}$ and the $\sigma$-algebra generated by 
\begin{align}
&\{\genvb_{1,h},\ldots,\genvb_{t,h},\genv_{1,h},\ldots,\genv_{t,h},\\
& \hskip 10em\genub_{0,h},\ldots,\genub_{t,h},
\genu_{0,h},\ldots,\genu_{t,h}, \wt{\bm X}_\ell:h \in [m],\, \ell \in [\wt m]\}
\end{align}
by $\mathcal{G}_{t+1,t}$. 
Let us simplify the notations by introducing the following notations:
\[
\jacv_{t,h}=\gamma_{h}J^{\mathtt{g,R}}_{t,h}(\genv_{t,h}), \quad \jacu_{t,h}=J^{\mathtt{g,L}}_{t,h}(\genu_{t,1},\ldots,\genu_{t,m},\wt{\bm X}_1,\ldots,\wt{\bm X}_{\wt m}).
\]
Observe that conditional distribution of $\bm G_h$ given $\mathcal{G}_{t+1,t}$ is equivalent to the conditional distribution of $\bm G_h$ given 
\[
\underbracket{[\genv_{1,h}+\genvb_{0,h}(\jacu_{0,h})^\top|\cdots|\genv_{t,h}+\genvb_{t-1,h}(\jacu_{t-1,h})^\top]}_{=\sfnr_{t,h}}=\bm G^\top_h\underbracket{[\genub_{0,h}|\cdots|\genub_{t-1,h}]}_{=\sfnbl_{t,h}}
\]
and
\[
\underbracket{[\genu_{0,h}|\cdots|\genu_{t,h}+\genub_{t-1,h}(\jacv_{t,h})^\top]}_{=\sfnl_{t+1,h}}=\bm G_h\underbracket{[\genvb_{0,h}|\cdots|\genvb_{t,h}]}_{=\sfnbr_{t+1,h}}.
\]
Using Lemma 11 and 12 of \cite{BM11journal}, we can show that 
\[
\bm G_h|_{\mathcal G_{t+1,t}}\overset{d}{=}\sfE_{t+1,t,h}+\sfP_{t+1,t,h}(\wt{\bm G}_h),
\]
where 
\begin{align}
   &\sfE_{t+1,t,h}=\sfnl_{t+1,h}((\sfnbr_{t+1,h})^\top\sfnbr_{t+1,h})^{-1}(\sfnbr_{t+1,h})^\top+\sfnbl_{t,h}((\sfnbl_{t,h})^\top\sfnbl_{t,h})^{-1}\sfnr_{t,h}\\
   &\hskip 5em-\sfnbl_{t,h}((\sfnbl_{t,h})^\top\sfnbl_{t,h})^{-1}(\sfnbl_{t,h})^\top\sfnl_{t+1,h}((\sfnbr_{t+1,h})^\top\sfnbr_{t+1,h})^{-1}(\sfnbr_{t+1,h})^\top,\\
   &\sfP_{t+1,t,h}(\wt{\bm G}_h)=\bm P^\perp_{\sfnbl_{t,h}}\wt{
   \bm G}_h\bm P^\perp_{\sfnbr_{t+1,h}}.
\end{align}
Here $\wt{\bm G}_h \overset{d}{=} \bm G_h$ and $\wt{\bm G}_h$ is independent of $\bm G_h$ for all $h \in [m]$. Furthermore, conditioned on $\mathcal G_{t,t}$ the distribution of $\bm G_h$ for $h \in [m]$ is given by
\[
\bm G_h|_{\mathcal G_{t,t}}=\sfE_{t,t,h}+\sfP_{t,t,h}(\wt{\bm G}_h),
\]
where 
\begin{align}
   &\sfE_{t,t,h}=\sfnl_{t,d}((\sfnbr_{t,h})^\top\sfnbr_{t,h})^{-1}(\sfnbr_{t,h})^\top+\sfnbl_{t,h}((\sfnbl_{t,h})^\top\sfnbl_{t,h})^{-1}\sfnr_{t,h}\\
   &\hskip 8em-\sfnbl_{t,h}((\sfnbl_{t,h})^\top\sfnbl_{t,h})^{-1}(\sfnbl_{t,h})^\top\sfnl_{t,d}((\sfnbr_{t,h})^\top\sfnbr_{t,h})^{-1}(\sfnbr_{t,h})^\top,\\
   &\sfP_{t,t,d}(\wt{\bm G}_h)=\bm P^\perp_{\sfnbl_{t,h}}\wt{
   \bm G}_h\bm P^\perp_{\sfnbr_{t,h}}.
\end{align}
Again, the matrices $\wt{\bm G}_h \overset{d}{=} \bm G_h$ and $\wt{\bm G}_h$ are independent of $\bm G_h$ for all $h \in [m]$.
Note that in this analysis we are only considering the non-degenerate case when $(\sfnbl_{t,h})^\top\sfnbl_{t,h}$ and $(\sfnbr_{t,h})^\top\sfnbr_{t,h}$ are of full rank. The degenerate case can be handled by adding a small random perturbation to the denoisers as shown in Section 5.4 of \citet{Berthier}.

Define the following matrices:
\begin{align}
    \bm \Xi^L_{t,h} &= \bigg(\frac{(\sfnbl_{t,h})^\top\sfnbl_{t,h}}{N}\bigg)^{-1}\frac{(\sfnbl_{t,h})^\top\genub_{t,h}}{N}, \\
    & \mbox{and} \quad \bm \Xi^R_{t,h} = \bigg(\frac{(\sfnbr_{t,h})^\top\sfnbr_{t,h}}{p_h}\bigg)^{-1}\frac{(\sfnbr_{t,h})^\top\genvb_{t,h}}{p_h}.
\end{align}

Following the techniques used in the proof of Lemma E.1 of \citet{ma_nandy}, we can inductively show the following statements for all $t \ge 0$.
\begin{enumerate}
    \item If $\rfF_{L,t,h}=[\genu_{0,h}|\cdots|\genu_{t-1,h}]$ and $\rfF_{R,t,h}=[\genv_{1,h}|\cdots|\genv_{t,h}]$, then for $h \in [m]$ and $N \rightarrow \infty$, we have
    \begin{align}
    \label{eq:proj_gen_1}
        \genu_{t,h}|_{\mathcal{G}_{t,t}}=\rfF_{L,t,h}\bm \Xi^R_{t,h}+\wt{\bm G}_h\bm P^\perp_{\sfnbr_{t,h}}\genvb_{t,h}+ \wt{\boldsymbol{\mathrm N}}_{R,t,h}\,o(1),\\
        \label{eq:proj_gen_2}
        \genv_{t+1,h}|_{\mathcal{G}_{t+1,t}}=\rfF_{R,t,h}\bm \Xi^L_{t,h}+\wt{\bm G}^\top_h\bm P^\perp_{\sfnbl_{t,h}}\genub_{t,h}+\wt{\boldsymbol{\mathrm N}}_{L,t,h}\,o(1),
    \end{align}
where the matrices $\wt{\bm G}_h \overset{d}{=} \bm G_h$ and $\wt{\bm G}_h$ are independent of $\bm G_h$ for all $h \in [m]$. Furthermore, for all $h \in [m]$, the matrices $\wt{\boldsymbol{\mathrm N}}_{L,t,h}$ and $\wt{\boldsymbol{\mathrm N}}_{R,t,h}$ is formed by Gram-Schmidt orthogonalization of the columns of $\sfnbl_{t,h}$ and $\sfnbr_{t,h}$, respectively. 
    
    \item For any sequence of deterministic subsets $\mathcal F_L \subset [N]$ satisfying $\frac{|\mathcal F_L|}{N} \rightarrow \lambda_L \in (0,1]$, as $N \rightarrow \infty$ and $\mathcal F_{R,h} \subset [p_h]$, for $h \in [m]$ satisfying $\frac{|\mathcal F_{R,h}|}{p_h} \rightarrow \lambda_{R,h} \in (0,1]$, as $p_h \rightarrow \infty$. For any pseudo Lipschitz functions $\psi:\R^{3r+2\wt r} \rightarrow \R$ and $\phi_h:\R^{3r_h} \rightarrow \R$ for $h \in [m]$, we have:
\begin{align}
\label{eq:state_ev_g_amp_1}
        &\lim_{N \rightarrow \infty}\Bigg|\frac{1}{|\mathcal F_L|}\sum_{i \in \mathcal F_L}\psi\left(\{(\genu_{t,h})_{i*}\}_{h=1}^{m},\{(\upca_{0,h})_{i*}\}_{h=1}^{m},\{(\wt{\bm X}_\ell)_{i*}\}_{\ell=1}^{\wt m},\right.\nonumber\\
        &\hskip15em \left.\{(\bm U_{h})_{i*}\}_{h=1}^{m},\{(\wt{\bm U}_{\ell})_{i*}\}_{\ell=1}^{\wt m}\right)\nonumber\\
    &\hskip 1em- \mathbb{E}\left[\psi\left(\{(\taul_{t,h})^{1/2}W^{\mathtt{g,L}}_{t,h}\}_{h=1}^{m},\{Y_{0,h}\}_{h=1}^{m},\{\wt Y_{\ell}\}_{\ell=1}^{\wt m},\{U_{h}\}_{h=1}^{m},\{\wt U_{\ell}\}_{\ell=1}^{\wt m}\right)\right]\Bigg|=0,\nonumber\\
    &
    \end{align}
    and
    \begin{align}
    \label{eq:state_ev_g_amp_0}
         &\lim_{p_h \rightarrow \infty}\Bigg|\frac{1}{|\mathcal F_{R,h}|}\sum_{i \in \mathcal F_{R,h}}\phi_h\left((\genv_{t,h})_{i*},(\vpca_{0,h})_{i*},(\bm V_{h})_{i*}\right)\\
         &\hskip 10em-\mathbb{E}\left[\phi\left((\taur_{t,h})^{1/2}W^{\mathtt{g,R}}_{t,h},Y^R_{0,h},V_{h}\right)\right]\Bigg|=0.\nonumber\\
    \end{align} 
 \item For all $k,\ell \in \{0,\ldots,t\}$, we have for $h \in [m]$:
 \begin{align}
     &\lim_{p_h \rightarrow \infty}\frac{(\genv_{k+1,h})^\top\genv_{\ell+1,h}}{p_h}=\lim_{N\rightarrow \infty}\frac{(\genub_{k,h})^\top\genub_{\ell,h}}{N},\\
     &\lim_{N \rightarrow \infty}\frac{(\genu_{k,h})^\top\genu_{\ell,h}}{N}=\gamma_h\lim_{N\rightarrow \infty}\frac{(\genvb_{k,h})^\top\genvb_{\ell,h}}{N}.
 \end{align}
 \item For all $h \in [m]$, let $\varphi_h(x_1,\ldots,x_m,{\wt x}_1,\ldots,{\wt x}_{\wt m}):\R^{r+\wt r} \rightarrow \R$ and $\vartheta_h(x):\R^{r_h} \rightarrow \R$ be Lipschitz functions defined on appropriate domains with Lipschitz Jacobians. Let $\mathsf d\varphi_h$ and $\mathsf d\vartheta_h$ be the Jacobians of $\varphi_h$ and $\vartheta_h$ with respect to the $x_h$ and $y_h$ respectively. Then we have the following:
 \begin{align}
     &\lim_{N \rightarrow \infty}\frac{(\genu_{k,h})^\top\varphi_h(\{\genu_{\ell,j}\}_{j=1}^{m},\{\wt{\bm X}_{j'}\}_{j'=1}^{\wt m})}{N}\\
     &\hskip 3em =\lim_{N \rightarrow \infty}\frac{(\genu_{k,h})^\top\genu_{\ell,h}}{N}\lim\limits_{N \rightarrow\infty}\bigg(\frac{1}{N}\sum_{i=1}^{N}\frac{\partial \varphi_h}{\partial x_h}(\{(\genu_{\ell,j})_{i*}\}_{j=1}^{m},\{(\wt{\bm X}_{j'})_{i*}\}_{j'=1}^{\wt m})\bigg)^\top,\\
     &\lim_{p_h \rightarrow \infty}\frac{(\genv_{k+1,h})^\top\vartheta_h(\genv_{k+1,h})}{p_h}=\lim_{N\rightarrow \infty}\frac{(\genv_{k+1,h})^\top\genv_{\ell+1,h}}{p_h}\lim\limits_{N \rightarrow\infty}\bigg(\frac{1}{p_h}\sum_{i=1}^{p_h} \frac{\partial \vartheta_h}{\partial x}(\genv_{k,h})\bigg)^\top.
 \end{align}
 \item For all $h \in [m]$ and $i,j \in [r_h]\times[r_h]$, we have
 \[
 \lim_{N \rightarrow \infty}\bigg[\frac{(\genu_{t,h})^\top\genu_{t,h}}{N}\bigg]_{ij}<\infty \quad \mbox{and} \quad \lim_{p_h \rightarrow \infty}\bigg[\frac{(\genv_{t+1,h})^\top\genv_{t+1,h}}{p_h}\bigg]_{ij}<\infty.
 \]
 \item For all $0 \le s \le t-1$ and $0 \le k \le t$, we have positive definite matrices $\boldsymbol{\mathsf{Z}}_{s,h} \in \R^{r_h \times r_h}$ and $\boldsymbol{\mathsf{K}}_{k,h} \in \R^{r_h \times r_h}$ such that 
 \[
 \lim_{N \rightarrow \infty}\frac{(\genub_{s,h})^\top\bm P^\perp_{\sfnbl_{t,h}}\genub_{s,h}}{N}\succ \boldsymbol{\mathsf{Z}}_{s,h} \quad \mbox{and} \quad \lim_{p_h \rightarrow \infty}\frac{(\genvb_{s,h})^\top\bm P^\perp_{\sfnbr_{t,h}}\genvb_{s,h}}{p_h}\succ \boldsymbol{\mathsf{K}}_{s,h}.
 \]
\end{enumerate}
We shall show Assertions (1)-(6) inductively. For $t=0$, \eqref{eq:proj_gen_1} follows using the techniques used to show Assertion (a) for $\bm b^0$ in \cite{ma_nandy}. Observing that $\rfF_{L,0,h}=\bm 0$ and $\bm P^\perp_{\sfnbr_{0,h}}\genvb_{0,h}=\genvb_{0,h}$, from \eqref{eq:proj_gen_1} for $t=0$, we can conclude that for all $h \in [m]$, for all $i \in [N]$
\[
[\genu_{0,h}]_{i*}=\left(\frac{(\genvb_{0,h})^\top\genvb_{0,h}}{N}\right)^{1/2}Z,
\]
where $Z \sim N_{r_h}(0,\bm I_{r_h})$. Furthermore, the set of random vectors $\{[\genub_{0,h}]_{i*}:i \in [N]\}$ are independent. 
From the definition \eqref{eq:def_gen_amp_se}, we can conclude that for all $N \in \mathbb N$
\[
\frac{(\genvb_{0,h})^\top\genvb_{0,h}}{N} = \taul_{t,0}.
\]
Therefore, using Proposition \ref{prop:singular_values} and the Strong Law of Large Numbers we can conclude \eqref{eq:state_ev_g_amp_1}. The Assertions (3)-(6) for $\{\genu_{0,h}:h \in [m]\}$ follows by retracing the techniques used to prove Lemma E.1 of \cite{ma_nandy}. We omit the details to avoid the repetition of similar arguments. Furthermore, Assertion (1) for $\{\genv_{1,h}:h \in [m]\}$ follows using the techniques used to prove Lemma E.1 of \cite{ma_nandy}. Proceeding as in the proof of (3.36) of \cite{ma_nandy}, we can show that
\[
\genv_{1,h}|_{\mathcal G_{1,0}}= \wt{\bm G}_h\genub_{0,h}+o(1)\genvb_{0,h}.
\]
Since $\frac{|\mathcal F_{R,h}|}{p_h} \rightarrow \lambda_{R,h} \in (0,1]$, using \eqref{eq:pseudo_lips} and Cauchy Schwarz inequality we can get constants $C_h>0$ (possibly depending on $\lambda_{R,h}$) such that
\begin{align}
    \label{eq:state_ev_g_amp_2}
         &\lim_{p_h \rightarrow \infty}\Bigg|\frac{1}{|\mathcal F_{R,h}|}\sum_{i \in \mathcal F_{R,h}}\phi_h\left((\wt{\bm G}_h\genub_{0,h}+o(1)\genvb_{0,h})_{i*},(\vpca_{0,h})_{i*},(\bm V_{h})_{i*}\right)\\
         &\hskip 7em-\frac{1}{|\mathcal F_{R,h}|}\sum_{i \in \mathcal F_{R,h}}\phi_h\left((\wt{\bm G}_h\genub_{0,h})_{i*},(\vpca_{0,h})_{i*},(\bm V_{h})_{i*}\right)\Bigg|\\
         &\le C_h\left(1+\frac{\|\wt{\bm G}_h\genub_{0,h}\|_F}{\sqrt{N}}+\frac{\|\genvb_{0,h}\|_F}{\sqrt{N}}+\frac{\|\vpca_{0,h}\|_F}{\sqrt{N}}+\frac{\|\bm V_{h}\|_F}{\sqrt N}\right)\times\frac{\|\genvb_{0,h}\|_F}{\sqrt{N}}\times o(1)\xrightarrow{a.s} 0.
    \end{align}
Consequently, in \eqref{eq:state_ev_g_amp_2}, given $\mathcal G_{1,0}$, we can replace $\genv_{1,h}$ by $\wt{\bm N}^{G,R}_{1,h}$ for all $h \in [m]$ defined as follows
\[
\wt{\bm N}^{G,R}_{1,h}|_{\mathcal G_{1,0}} \overset{d}{=} \wt{\bm G}_h\genub_{0,h}.
\]
Consequently, using Theorem 3 of \cite{BM11journal}, Proposition \ref{prop:singular_values} and retracing the steps of the proof of Lemma E.1 of \cite{ma_nandy}, we can show \eqref{eq:state_ev_g_amp_2}. Again, one can show Assertions (3)-(6) for $\genv_{1,h}$ by retracing the steps of the proof of Lemma E.1 of \cite{ma_nandy}. Now, let us assume that Assertions (1)-(6) hold for $t=0,\ldots,s-1$. We shall show that they also hold for $t=s$. In fact, we shall only show Assertion (2) for $t=s$, as the proofs of all other assertions are similar to the proof of Lemma E.1 of \cite{ma_nandy}. To show the Assertion 2 for $\genu_{s,h}$, observe that, since $\frac{|\mathcal F_L|}{N} \rightarrow \lambda_L$, using Cauchy-Schwarz inequality and \eqref{prop:singular_values}, we have a constant $C>0$ (possibly depending on $\lambda_L$), such that
\begin{align}
        &\lim_{N \rightarrow \infty}\Bigg|\frac{1}{|\mathcal F_L|}\sum_{i \in \mathcal F_L}\psi\Bigg(\{(\rfF_{L,s,h}\bm \Xi^R_{s,h}+\wt{\bm G}_h\bm P^\perp_{\sfnbr_{s,h}}\genvb_{s,h}+ \wt{\boldsymbol{\mathrm N}}_{R,s,h}\,o(1))_{i*}\}_{h=1}^{m},\\
        & \hskip 10em \{(\upca_{0,h})_{i*}\}_{h=1}^{m},\{(\wt{\bm X}_\ell)_{i*}\}_{\ell=1}^{\wt m},\{(\bm U_{h})_{i*}\}_{h=1}^{m},\{(\wt{\bm U}_{\ell})_{i*}\}_{\ell=1}^{\wt m}\Bigg)\nonumber\\
    &\hskip 4em- \frac{1}{|\mathcal F_L|}\sum_{i \in \mathcal F_L}\psi\Bigg(\{(\rfF_{L,s,h}\bm \Xi^R_{s,h}+\wt{\bm G}_h\bm P^\perp_{\sfnbr_{s,h}}\genvb_{s,h})_{i*}\}_{h=1}^{m},\\
        & \hskip 13em \{(\upca_{0,h})_{i*}\}_{h=1}^{m},\{(\wt{\bm X}_\ell)_{i*}\}_{\ell=1}^{\wt m},\{(\bm U_{h})_{i*}\}_{h=1}^{m},\{(\wt{\bm U}_{\ell})_{i*}\}_{\ell=1}^{\wt m}\Bigg)\Bigg|,\nonumber\\
    & \le C\left(1+\sum_{h=1}^{m}\frac{\left\|\rfF_{L,s,h}\bm \Xi^R_{t,h}+\wt{\bm G}_h\bm P^\perp_{\sfnbr_{s,h}}\genvb_{s,h}\right\|_F}{\sqrt{N}}+\sum_{h=1}^{m}\frac{\|\wt{\boldsymbol{\mathrm N}}_{R,s,h}\|_F}{\sqrt{N}}+\sum_{h=1}^{m}\frac{\|\bm U_{h}\|_F}{\sqrt{N}}\right.\\
    &\left.\hskip 6em +\sum_{h=1}^{m}\frac{\|\upca_{0,h}\|_F}{\sqrt{N}}+\sum_{\ell=1}^{\wt m}\frac{\|\wt{\bm X}_\ell\|_F}{\sqrt{N}}+\sum_{\ell=1}^{\wt m}\frac{\|\wt{\bm U}_\ell\|_F}{\sqrt{N}}\right) \times \left(\sum_{h=1}^{m}\frac{\|\wt{\boldsymbol{\mathrm N}}_{R,s,h}\|_F}{\sqrt{N}}\right) \times ~o(1).
    \end{align}
    Furthermore, from induction hypotheses
    \begin{align}
        &\limsup_{N \rightarrow \infty}\sum_{h=1}^{m}\frac{\left\|\rfF_{L,s,h}\bm \Xi^R_{t,h}+\wt{\bm G}_h\bm P^\perp_{\sfnbr_{s,h}}\genvb_{s,h}\right\|_F}{\sqrt{N}} < \infty,\quad \limsup_{N \rightarrow \infty}\sum_{h=1}^{m}\frac{\|\upca_{0,h}\|_F}{\sqrt{N}}<\infty\\
        &\limsup_{N \rightarrow \infty}\sum_{h=1}^{m}\frac{\|\wt{\boldsymbol{\mathrm N}}_{R,s,h}\|_F}{\sqrt{N}}<\infty,\quad \limsup_{N \rightarrow \infty}\sum_{\ell=1}^{\wt m}\frac{\|\wt{\bm X}_\ell\|_F}{\sqrt{N}}<\infty\\
        &\mbox{and} \quad \limsup_{N \rightarrow \infty}\sum_{\ell=1}^{\wt m}\frac{\|\wt{\bm U}_\ell\|_F}{\sqrt{N}}<\infty.
    \end{align}
    Therefore, we can conclude that
    \begin{align}
        &\lim_{N \rightarrow \infty}\Bigg|\frac{1}{|\mathcal F_L|}\sum_{i \in \mathcal F_L}\psi\Bigg(\{(\rfF_{L,s,h}\bm \Xi^R_{s,h}+\wt{\bm G}_h\bm P^\perp_{\sfnbr_{s,h}}\genvb_{s,h}+ \wt{\boldsymbol{\mathrm N}}_{R,s,h}\,o(1))_{i*}\}_{h=1}^{m},\\
        & \hskip 10em \{(\upca_{0,h})_{i*}\}_{h=1}^{m},\{(\wt{\bm X}_\ell)_{i*}\}_{\ell=1}^{\wt m},\{(\bm U_{h})_{i*}\}_{h=1}^{m},\{(\wt{\bm U}_{\ell})_{i*}\}_{\ell=1}^{\wt m}\Bigg)\nonumber\\
    &\hskip 4em- \frac{1}{|\mathcal F_L|}\sum_{i \in \mathcal F_L}\psi\Bigg(\{(\rfF_{L,s,h}\bm \Xi^R_{s,h}+\wt{\bm G}_h\bm P^\perp_{\sfnbr_{s,h}}\genvb_{s,h})_{i*}\}_{h=1}^{m},\{(\upca_{0,h})_{i*}\}_{h=1}^{m},\\
        & \hskip 15em \{(\wt{\bm X}_\ell)_{i*}\}_{\ell=1}^{\wt m},\{(\bm U_{h})_{i*}\}_{h=1}^{m},\{(\wt{\bm U}_{\ell})_{i*}\}_{\ell=1}^{\wt m}\Bigg)\Bigg|=0.
    \end{align}
    Using the above observation, we can conclude \eqref{eq:state_ev_g_amp_1} by using Theorem 3 of \cite{BM11journal} and retracing the steps used to prove Lemma 5 of \cite{Berthier} and Lemma E.1 of \cite{ma_nandy}. Furthermore, \eqref{eq:state_ev_g_amp_2} can also be established using similar techniques. We omit the details for the sake of brevity.

\section{Proofs of AMP asymptotics with estimated priors}
Using Lemma B.4 in \citet{eb_pca} and Lemma \ref{lem:weak_conv_emp_bayes}, we have the following version of Corollary B.5 of \citet{eb_pca}.
\begin{lem}
\label{lem:conv_of_density}
 For $x=(x_1,\ldots,x_m)\in \R^{r}$ and $\wt{x}=(\wt x_1,\ldots,\wt x_{\wt m})\in \R^{\wt r}$
 \begin{align}
 \label{eq:conv_of_density_1}
     & \sup_{(x,\wt{x})}\left|f^L_N\left(x,\wt{x};\wh \mu\right)-f^L\left(x,\wt{x};\mu\right)\right|\rightarrow 0.
 \end{align}
Similarly, for all $h \in [m]$,
\begin{align}
 \label{eq:conv_of_density_2}
   \sup_{y_h \in \R^{r_h}}\left|f^R_{h,N}(y_h;\wh\nu_h)-f^R(y_h;\nu_h)\right|\rightarrow 0, 
\end{align}
where $f^L_N,f^L,f^R_{h,N}$ and $f^R_h$ are defined in Lemma \ref{lem:consistency_of_NPMLE}.
\end{lem}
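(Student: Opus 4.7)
The plan is to follow the strategy of Corollary B.5 in \cite{eb_pca}, adapted to the present multimodal setting. I will present the argument for the first claim in detail; the second claim concerning $f^R_{h,N}$ and $f^R_h$ follows by applying the same reasoning to a single modality with $m = 1$ and $\wt m = 0$. I introduce the intermediate quantity $\wt f^L(x,\wt x;\wh\mu)$, defined exactly as $f^L_N$ but with the sample nuisance parameters $\{\hslmp_{0,h}, \hslvp_{0,h}, \wh{\bm L}_\ell\}$ replaced by their population counterparts $\{\hslmorp_{0,h}, \hslvorp_{0,h}, \bm L_\ell\}$, and decompose
\begin{equation*}
\sup_{x,\wt x}|f^L_N(x,\wt x;\wh\mu) - f^L(x,\wt x;\mu)| \le \sup_{x,\wt x}|f^L_N(x,\wt x;\wh\mu) - \wt f^L(x,\wt x;\wh\mu)| + \sup_{x,\wt x}|\wt f^L(x,\wt x;\wh\mu) - f^L(x,\wt x;\mu)|.
\end{equation*}

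For the first supremum, each Gaussian-kernel factor in the integrand is bounded above by $1$ and is Lipschitz in its parameter matrices with a Lipschitz constant of order $C(1 + \|u_h\|^2)$ that is uniform in $x_h$, since $\|\cdot\|\exp(-\|\cdot\|^2/2)$ and $\|\cdot\|^2\exp(-\|\cdot\|^2/2)$ are bounded on Euclidean space. Combining this with the almost sure convergence of the nuisance parameter estimates furnished by Lemma \ref{lem:consistency_nuisance}, together with the fact that $\int (1 + \|u\|^2 + \|\wt u\|^2)\,d\wh\mu$ is almost surely eventually bounded (via the weak convergence from Lemma \ref{lem:weak_conv_emp_bayes} and the moment condition in Assumption \ref{asm:prior_1_mom}(2)), will give an $o_{\mathrm{a.s.}}(1)$ upper bound that holds uniformly in $(x,\wt x)$.

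The second supremum exploits the weak convergence $\wh\mu \to \mu$ from Lemma \ref{lem:weak_conv_emp_bayes}. For any fixed $(x,\wt x)$, the integrand is a bounded continuous function of $(u_1,\ldots,u_m,\wt u_1,\ldots,\wt u_{\wt m})$, so the Portmanteau theorem gives pointwise convergence. To upgrade to uniform convergence I will proceed in two stages. On any compact set $\mathcal K \subset \R^{r+\wt r}$ of values of $(x,\wt x)$, the family of integrands indexed by $(x,\wt x) \in \mathcal K$ is uniformly bounded and equicontinuous in $(u,\wt u)$; combined with tightness of the sequence $\{\wh\mu\}$ (a consequence of its weak convergence) and an Arzel\`a--Ascoli type argument, this yields uniform convergence of the integrals over $\mathcal K$. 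Outside a sufficiently large compact set, at least one of $\|x_h\|$ or $\|\wt x_\ell\|$ must be large, which forces the corresponding Gaussian-kernel factor to be uniformly small over any tight set of latent variables; hence both $\wt f^L(x,\wt x;\wh\mu)$ and $f^L(x,\wt x;\mu)$ decay to $0$ as $\|(x,\wt x)\|\to\infty$ at a rate uniform in $N$, again relying on tightness of $\{\wh\mu\}$.

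The main technical obstacle is precisely this synthesis of equicontinuity on compacts with a uniform tail estimate as $\|(x,\wt x)\| \to \infty$, and it is there that tightness of $\{\wh\mu\}$ does essential work: without it one could not rule out the pathology that a nontrivial portion of the prior mass escapes to infinity in $(u,\wt u)$, which would undermine the uniform smallness of the Gaussian kernels at large $(x,\wt x)$.
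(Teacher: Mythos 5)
Your argument is correct and is essentially the paper's: the paper defines the kernels $\xi^L_N$ and $\xi$, notes via Lemma \ref{lem:consistency_nuisance} that they and their derivatives are uniformly bounded once the estimated nuisance parameters lie in a neighborhood of the population ones, and then invokes Lemma B.4 and Corollary B.5 of \citet{eb_pca}, whose proof is precisely your two-term decomposition (first perturb the nuisance parameters, then perturb the prior using weak convergence together with uniform boundedness and equicontinuity of the kernels in $(u,\wt u)$). The only point worth tightening is that the eventual boundedness of $\int(1+\|u\|^2+\|\wt u\|^2)\,d\wh\mu$ follows from the normalization $\mathbb E_{\wh\mu}[U_hU_h^\top]=\bm I_{r_h}$ imposed on every element of $\mathcal P$ by Assumption \ref{asm:prior_1_mom}(2), not from weak convergence itself, which does not control second moments.
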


\subsection{Proof of Theorem \ref{thm:em_bayes_state_evol}}
We shall prove Theorem \ref{thm:em_bayes_state_evol} using the following lemmas characterizing the smoothness of the estimated denoisers. Let us consider the denoisers $u^\orc_{t,h},v^\orc_{t,h}$ defined using the true priors $\mu$ and $\nu_h$; the state evolution matrices $\{\hslmor_{t,h},\hsrmor_{t,h},\hslvor_{t,h},\hsrvor_{t,h}:t\ge 0,\,h \in [m]\}$ and $\{\bm L_\ell:\ell \in [\wt m]\}$. Also consider the approximate denoisers constructed using the Empirical Bayes estimates of the priors $\wh{\mu}$ and $\wh{\nu}_h$, namely,  defined by \eqref{eq:emp_bayes_1}, \eqref{eq:emp_bayes_2}, the state evolution matrices $\{\hslm_{t,h},\hsrm_{t,h},\hslv_{t,h},\hsrv_{t,h}:t \ge 0\}$ defined by \eqref{eq:state_evol_emp_bayes} and the matrices $\{\hatl_\ell:\ell \in [\wt m]\}$ defined in \eqref{eq:sqrt_L}.

\begin{lem}
\label{lem:err_emp_bayes}
For all matrices $\bm F_h \in \R^{N \times r_h}$, $\wt{\bm X}_\ell \in \R^{N \times {\wt r}_\ell}$ and $\bm G_h \in \R^{p_h \times r_h}$ where $h \in [m]$ and $\ell \in [\wt m]$, as $N \rightarrow \infty$, we have
\begin{align}
    \label{eq:u_denoiser_1}
        \frac{1}{N}\|u_{t,h}(\bm F_1,\ldots,\bm F_m,\wt{\bm X}_1,\ldots,\wt{\bm X}_{\wt m};\wh{\mu})-u^\orc_{t,h}(\bm F_1,\ldots,\bm F_m,\wt{\bm X}_1,\ldots,\wt{\bm X}_{\wt m};\mu)\|^2_F \xrightarrow{a.s.} 0,
    \end{align}
    and
    \begin{align}
    \label{eq:u_denoiser_2}
      \frac{1}{p_h}\|v_{t,h}(\bm G_h;\wh{\nu}_h)-v^\orc_{t,h}(\bm G_h;\nu_h)\|^2_F \xrightarrow{a.s.} 0.
    \end{align}
    In the above equations, for all $h \in [m]$ the functions $u_{t,h}:\R^{r+\wt r} \rightarrow \R^{r_h}$ and $v_{t,h}:\R^{r_h} \rightarrow \R^{r_h}$ are defined in \eqref{eq:opt_denoisers} and the functions $u^\orc_{t,h}:\R^{r+\wt r} \rightarrow \R^{r_h}$ and $v^\orc_{t,h}:\R^{r_h} \rightarrow \R^{r_h}$ are defined in \eqref{eq:oracle_denoisers}.
Similarly, for the functions $\wt{u}_{t,h}:\R^{r+\wt r} \rightarrow \R^{\wt{r}_\ell}$ defined in \eqref{eq:theta_t_k} and $\wt{u}^\orc_{t,h}:\R^{r+\wt r} \rightarrow \R^{\wt{r}_\ell}$ defined in \eqref{eq:oracle_denoisers}, as $N \rightarrow \infty$ 
\begin{align}
\label{eq:conv_theta_n}
\frac{1}{N}\|\wt{u}_{t,h}(\bm F_1,\ldots,\bm F_m,\wt{\bm X}_1,\ldots,\wt{\bm X}_{\wt m};\wh{\mu})-\wt{u}^\orc_{t,h}(\bm F_1,\ldots,\bm F_m,\wt{\bm X}_1,\ldots,\wt{\bm X}_{\wt m};\mu)\|^2_F \xrightarrow{a.s.} 0.
\end{align}
\end{lem}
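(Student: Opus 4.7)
The plan is to decompose the difference between the empirical Bayes denoiser and its oracle counterpart into two sources of error: (i) the discrepancy arising from using the estimated nuisance parameters $(\hslm_{t,h}, \hslv_{t,h}, \hatl_\ell)$ in place of the true ones $(\hslmor_{t,h}, \hslvor_{t,h}, \bm L_\ell)$, and (ii) the discrepancy from integrating against $\wh\mu$ rather than $\mu$. The first vanishes by Lemma \ref{lem:consistency_nuisance} (for the initializers) together with the induction hypothesis on the state evolution recursion used in the proof of Theorem \ref{thm:em_bayes_state_evol}. The second vanishes by Lemma \ref{lem:weak_conv_emp_bayes}, which provides $\wh\mu \xrightarrow{w} \mu$ and $\wh\nu_h \xrightarrow{w} \nu_h$.

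The core analytical step is pointwise convergence of the denoiser: for each fixed $x = (x_1, \ldots, x_m, \wt x_1, \ldots, \wt x_{\wt m})$, I would show $u_{t,h}(x; \wh\mu) \xrightarrow{a.s.} u^\orc_{t,h}(x; \mu)$. Writing the posterior mean as a ratio $N(x)/D(x)$ where $D(x)$ is the marginal density and $N(x)$ is the same integral weighted by $u_h$, I would invoke Lemma \ref{lem:conv_of_density} to handle uniform convergence of the denominator, while weak convergence of $\wh\mu$ combined with bounded continuity on compacts of the Gaussian likelihood factor (with its parameters converging) handles the numerator. Since $D(x)$ is bounded away from zero on compacts, the ratio then converges.

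Next, I would convert pointwise convergence into the claimed Frobenius-norm statement. By Assumption \ref{asm:prior_1_mom} parts 6 and 7, both denoisers are uniformly Lipschitz in their first argument with a constant $L$ that is uniform over a weak-topology neighborhood $O_\mu$ of $\mu$; since $\wh\mu$ lies in $O_\mu$ eventually, we obtain an $N$-independent Lipschitz constant. Equicontinuity plus pointwise convergence yields uniform convergence on compact sets by Arzela-Ascoli. The Lipschitz property also forces linear growth $\|u_{t,h}(x;\wh\mu) - u^\orc_{t,h}(x;\mu)\| \le C(1+\|x\|)$, so splitting the empirical sum at a truncation radius $R$ and controlling the tail via the (application-specific) boundedness of $\frac{1}{N}\|(\bm F_1,\ldots,\bm F_m,\wt{\bm X}_1,\ldots,\wt{\bm X}_{\wt m})\|_F^2$ closes the argument after sending $R\to\infty$. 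The analogous claims for $v_{t,h}$ and $\wt u_{t,h}$ follow by the same template, with $\nu_h$ replacing $\mu$ for the former.

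I expect the main obstacle to be the numerator $N(x) = \int u_h L(x|u,\wt u; \text{params})\, d\wh\mu(u,\wt u)$ in the pointwise-convergence step, since the factor $u_h$ is unbounded and so weak convergence of $\wh\mu$ does not immediately imply convergence of the integral. A uniform-integrability argument is required: Assumption \ref{asm:prior_1_mom} part 2 normalizes the second moment of each $U_h$ under every $\mu \in \mathcal P$ to $\bm I_{r_h}$, so the second moments under $\wh\mu$ are uniformly bounded in $N$. Combining this with the Gaussian tail decay of $L(x|u,\wt u;\text{params})$ in $u$ (which is uniform because the parameters converge to positive-definite limits) and truncating the integrand at $\|u\| \le M$ permits passage to the limit and completes the pointwise step, thereby reducing the entire lemma to an appeal to the two ingredients named in paragraph one.
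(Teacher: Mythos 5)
Your proposal is correct and follows the same overall strategy as the paper — reduce everything to convergence of the posterior-mean denoisers driven by $\wh\mu\xrightarrow{w}\mu$ (Lemma \ref{lem:weak_conv_emp_bayes}) and consistency of the nuisance/state-evolution parameters (Lemma \ref{lem:consistency_nuisance} plus the induction in Theorem \ref{thm:em_bayes_state_evol}), then upgrade to the normalized Frobenius-norm statement via the uniform Lipschitz property guaranteed by Assumption \ref{asm:prior_1_mom}(6)--(7). The one genuine difference is in the pointwise-convergence step. The paper invokes Tweedie's formula, writing $u_{t,h}(x;\wh\mu)$ as an affine function of $\nabla_{x_h} f^L_N/f^L_N$; this makes both numerator and denominator integrals of \emph{bounded} functions of $(u,\wt u)$ (Gaussian kernels and their derivatives), so the uniform convergence of $f^L_N$ and $\nabla_{x_h}f^L_N$ supplied by Lemma \ref{lem:conv_of_density} settles the matter with no further work, and the Frobenius-norm conclusion is delegated to the machinery of Corollary B.3 of the empirical-Bayes PCA paper. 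You instead keep the raw ratio $N(x)/D(x)$ with the unbounded weight $u_h$ in the numerator, which forces the uniform-integrability argument you correctly identify as the main obstacle; your resolution via the normalization $\mathbb{E}_{\wh\mu}[U_hU_h^\top]=\bm I_{r_h}$ for all members of $\mathcal P$ (indeed, the Gaussian decay of the likelihood in $u$ already makes $\|u_h\|L(x\mid u)$ bounded for fixed $x$ once the scale matrices are nonsingular) is sound. Your route is more self-contained but requires the extra truncation step; the paper's route is shorter because the score representation sidesteps unboundedness entirely. Both correctly require, as you flag, that the matrices $\bm F_h,\bm G_h$ have bounded normalized Frobenius norm for the final truncation of the empirical sum — a hypothesis the lemma leaves implicit but which holds wherever it is applied.
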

Next, we consider the following lemma characterizing the smoothness of the derivatives of the estimated denoisers.
\begin{lem}
    \label{lem:err_emp_bayes_derivative}
     Consider the same set-up as Lemma \ref{lem:err_emp_bayes}. Then as $N \rightarrow \infty$
    \begin{align}
    &\frac{1}{N}\sum_{i=1}^{N}\|\jacl_{t,h}((\bm F_1)_{i*},\ldots,(\bm F_m)_{i*},(\wt{\bm X}_1)_{i*},\ldots,(\wt{\bm X}_{\wt m})_{i*};\wh{\mu})\\
    &\hskip 8em -\jaclor_{t,h}((\bm F_1)_{i*},\ldots,(\bm F_m)_{i*},(\wt{\bm X}_1)_{i*},\ldots,(\wt{\bm X}_{\wt m})_{i*};\mu)\|_{\mathrm{op}} \xrightarrow{a.s.} 0,
    \end{align}
    and
    \[
    \frac{1}{p_h}\sum_{j=1}^{p_h}\|\jacr_{t,h}((\bm G_h)_{i*};\wh{\nu}_h)-\jacror_{t,h}((\bm G_h)_{i*};\nu_h)\|_{\mathrm{op}} \xrightarrow{a.s.} 0,
    \]
    for all matrices $\bm F_h \in \R^{N \times r_h}$, $\wt{\bm X}_\ell \in \R^{N \times {\wt r}_\ell}$ and $\bm G_h \in \R^{p_h \times r_h}$ where $h \in [m]$ and $\ell \in [\wt m]$.
\end{lem}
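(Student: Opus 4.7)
The plan is to reduce the comparison of Jacobians to a comparison of posterior second moments, after which essentially the same argument that drives Lemma~\ref{lem:err_emp_bayes} applies. For any Gaussian observation model $X\mid U\sim N(MU,\Sigma)$ with prior $U\sim\mu$, Stein's lemma gives the identity
\[
\partial_{x}\mathbb{E}_\mu[U\mid X=x]\;=\;\mathrm{Cov}_\mu(U\mid X=x)\,M^{\top}\Sigma^{-1}.
\]
Applying this to $u^{\orc}_{t,h}(\,\cdot\,;\mu)$ with oracle parameters $(\hslmor_{t,h},\hslvor_{t,h},\bm L_\ell)$ and to $u_{t,h}(\,\cdot\,;\wh\mu)$ with the estimated parameters $(\hslm_{t,h},\hslv_{t,h},\hatl_\ell)$ rewrites the submatrices $[\jacor_{t,h}(x;\mu)]_{*\mathcal I_h}$ and $[\jac_{t,h}(x;\wh\mu)]_{*\mathcal I_h}$ as posterior covariances of $U_h$ post-multiplied by the matrix factors $(\hslmor_{t,h})^{\top}(\hslvor_{t,h})^{-1}$ and $(\hslm_{t,h})^{\top}(\hslv_{t,h})^{-1}$ respectively. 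By Theorem~\ref{thm:em_bayes_state_evol} (part~2) and Lemma~\ref{lem:consistency_nuisance}, these multiplicative matrices converge almost surely, so the problem reduces to comparing posterior covariances computed under $(\wh\mu,\text{estimates})$ versus $(\mu,\text{oracle parameters})$. An analogous reduction handles the one-modality $\jacr_{t,h}$ and the prior $\nu_h$.

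Next, I would write each posterior covariance as a ratio of integrals $\int\psi(u,\wt u)\,g(x;u,\wt u;\text{params})\,d\mu(u,\wt u)$ with $\psi\in\{1,\,u_h,\,u_hu_h^{\top}\}$, where $g$ is the Gaussian likelihood of the combined observation vector. The first claim of Lemma~\ref{lem:conv_of_density} already gives the uniform-in-$x$ convergence of the denominator $f^L_N(x;\wh\mu)\to f^L(x;\mu)$; the same machinery (the bounded density assumption in Assumption~\ref{asm:prior_1_mom}(1), the finite second moments in Assumption~\ref{asm:prior_1_mom}(2), weak convergence $\wh\mu\xrightarrow{w}\mu$ from Lemma~\ref{lem:weak_conv_emp_bayes}, and consistency of the state-evolution parameters) extends verbatim to the numerator integrals for $\psi=u_h$ and $\psi=u_hu_h^{\top}$. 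Taking ratios yields the pointwise convergence $\|[\jac_{t,h}(x;\wh\mu)]_{*\mathcal I_h}-[\jacor_{t,h}(x;\mu)]_{*\mathcal I_h}\|_{\mathrm{op}}\to 0$, with a uniform envelope provided by the prior second moment together with the Gaussian likelihood decay. Evaluating at $x=((\bm F_1)_{i*},\ldots,(\wt{\bm X}_{\wt m})_{i*})$ and averaging over $i\in[N]$ yields the first conclusion; the second conclusion, for $\jacr_{t,h}$, follows identically from the second claim of Lemma~\ref{lem:conv_of_density} together with $\wh\nu_h\xrightarrow{w}\nu_h$ and the convergence $\hsrm_{t,h}\to\hsrmor_{t,h}$, $\hsrv_{t,h}\to\hsrvor_{t,h}$.

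The main technical obstacle is controlling the posterior-covariance differences \emph{uniformly in $x$}, since the rows of the fixed input matrices can have unbounded norms. Pointwise convergence at each $x$ follows from weak convergence of $\wh\mu$, but to average over $i$ for arbitrary deterministic $\bm F_h$, $\wt{\bm X}_\ell$, and $\bm G_h$ one must dominate the operator norm of the Jacobian difference by something summable. The uniform Lipschitz hypothesis on the oracle denoisers within a weak-convergence neighborhood of $\mu$ (Assumption~\ref{asm:prior_1_mom}(6)) transfers, via differentiation of the integral representation, to a uniform $L^\infty$ bound on the Jacobians of both $u_{t,h}(\cdot;\wh\mu)$ and $u^\orc_{t,h}(\cdot;\mu)$ once $\wh\mu$ is eventually inside that neighborhood, providing the envelope needed to close the argument by bounded convergence; the analogous statement for $v_{t,h}$ uses Assumption~\ref{asm:prior_1_mom}(7).
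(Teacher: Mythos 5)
Your route is essentially the paper's: the paper's proof is a one-line deferral to the techniques of Lemma \ref{lem:err_emp_bayes} (Tweedie's formula plus the uniform density convergence of Lemma \ref{lem:conv_of_density}) and Proposition B.6 of \citet{eb_pca}, and your Stein/posterior-covariance identity for the Jacobian is exactly the second-order Tweedie formula that those arguments differentiate, with the same reduction to convergence of the numerator/denominator integrals and the same appeal to Assumption \ref{asm:prior_1_mom}(6)--(7) for regularity.

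One step in your closing paragraph does not work as stated. The lemma is asserted for \emph{arbitrary} deterministic matrices $\bm F_h$, $\bm G_h$, so the empirical measure of the rows $\{x_i\}$ is not tight and need not converge to anything; consequently, pointwise convergence $\|\Delta_N(x)\|_{\mathrm{op}}\to 0$ at each fixed $x$ together with a uniform $L^\infty$ envelope does not yield $\frac{1}{N}\sum_i\|\Delta_N(x_i)\|_{\mathrm{op}}\to 0$ --- there is no dominating measure against which to invoke bounded convergence. What is actually needed, and what the Tweedie-formula route delivers, is $\sup_{x}\|\Delta_N(x)\|_{\mathrm{op}}\to 0$ over all of $\R^{r+\wt r}$: Lemma \ref{lem:conv_of_density} (and its extension to $\nabla f^L_N$ and $\nabla^2 f^L_N$) gives convergence uniformly in $x$, and the uniform-Lipschitz neighborhood of Assumption \ref{asm:prior_1_mom}(6)--(7) controls the ratio where the denominator $f^L$ is small, so the supremum bound follows as in Corollary B.3 / Proposition B.6 of \citet{eb_pca}. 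Replace ``pointwise convergence $+$ envelope $+$ bounded convergence'' by this sup-norm statement and the averaging over $i$ is then trivial; with that repair your argument coincides with the paper's.
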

\begin{proof}[Proof of Theorem \ref{thm:em_bayes_state_evol}]We shall prove the first two parts of the theorem using induction. Let us formulate two sets of induction hypotheses $\mathcal{H}_t$ and $\bar{\mathcal H}_t$ for $t \in \mathbb N \cup \{0\}$, where $\mathcal{H}_t$ implies:
\begin{enumerate}
\item $\hsrm_{t,h} \rightarrow \hsrmor_{t,h}$ and $\hsrv_{t,h} \rightarrow \hsrvor_{t,h}$ as $N \rightarrow \infty$,
\item As $N \rightarrow \infty$, $\frac{1}{N}\|\bu_{t-1,h}-\buor_{t-1,h}\|^2_F \xrightarrow{a.s.} 0$, 
\item As $N \rightarrow \infty$, $\frac{1}{N}\|\wt{\bm U}_{t-1,\ell}-\wt{\bm U}^\orc_{t-1,\ell}\|^2_F \xrightarrow{a.s.} 0$, \quad \mbox{for $t \ge 1$,}
\item As $N \rightarrow \infty$, $\frac{1}{p_h}\|\bm V_{t,h}-\bm V^\orc_{t,h}\|^2_F \xrightarrow{a.s.} 0$, \quad \mbox{for all $h \in [m],\; \ell \in [\wt m]$,}
\end{enumerate}
 and $\bar{\mathcal H}_t$ implies:
\begin{enumerate}
\item $\hslm_{t,h} \rightarrow \hslmor_{t,h}$ and $\hslv_{t,h} \rightarrow \hslvor_{t,h}$ as $N \rightarrow \infty$.
\item As $p_h \rightarrow \infty$, $\frac{1}{p_h}\|\bv_{t,h}-\bvor_{t,h}\|^2_F \xrightarrow{a.s.} 0$, 
\item As $p_h \rightarrow \infty$, $\frac{1}{N}\|\bm U_{t,h}-\bm U^\orc_{t,h}\|^2_F \xrightarrow{a.s.} 0$, \quad \mbox{for all $h \in [m]$.}
\end{enumerate}

\paragraph{Step 1: Initialization $\mathcal{H}_0$.} By \eqref{eq:initializers},~\eqref{eq:est_sing_val} and the results from \cite{BENAYCHGEORGES2012120}, the first assertion holds. For the second assertion observe the following:
\[
\frac{1}{N}\|\bu_{-1,h}-\buor_{-1,h}\|^2_F \le \frac{1}{N}\|\upca_{0,h}\|^2_F\;\|(\hsrvp_{0,h})^{1/2}-(\hsrvorp_{0,h})^{1/2}\|^2_{\mathrm{op}}.
\]
By Proposition \ref{prop:singular_values}, $\limsup\limits_{N \rightarrow \infty}\frac{1}{N}\|\upca_{0,h}\|^2_F<\infty$ almost surely. Finally, since the plug-in estimator of $\hsrvp_{0,h}$ converges to $\hsrvorp_{0,h}$ as $N \rightarrow \infty$, the second assertion holds. Further, as both algorithms, given by \eqref{eq:orc_amp_emp_bayes} and \eqref{eq:orc_amp_basic}, have the same initialization for $\bm V_{0,h}$ for all $h \in [m]$ and $\ell \in [\wt m]$, the fourth assertions also hold. The third assertion is not applicable for $\mathcal H_0$.
\paragraph{Step 2: $\mathcal{H}_{t}  \implies \bar{\mathcal H}_t$.} This inductive step follows using the arguments of the proof of Theorem 5.4 in \citet{eb_pca}.
\paragraph{Step 3: $\bar{\mathcal H}_t  \implies \mathcal{H}_{t+1}$.} 
First, let us consider the second assertion. Observe that 
\begin{align}
\label{eq:triangle_ineq}
    & \frac{1}{N}\|\bu_{t,h}-\buor_{t,h}\|^2_F\\
    & = \frac{1}{N}\|u_{t,h}(\bm U_{t,1},\ldots,\bm U_{t,m},\wt{\bm X}_1,\ldots,\wt{\bm X}_{\wt m};\wh{\mu})-u^\orc_{t,h}(\bm U^\orc_{t,1},\ldots,\bm U^\orc_{t,m},\wt{\bm X}_1,\ldots,\wt{\bm X}_{\wt m};\mu)\|^2_F\\
    & = \frac{1}{N}\sum_{i=1}^{N}\|u_{t,h}((\bm U_{t,1})_{i*},\ldots,(\bm U_{t,m})_{i*},(\wt{\bm X}_{1})_{i*},\ldots,(\wt{\bm X}_{\wt m})_{i*};\wh{\mu})\\
    &\hskip 10em-u^\orc_{t,h}((\bm U^\orc_{t,1})_{i*},\ldots,(\bm U^\orc_{t,m})_{i*},(\wt{\bm X}_{1})_{i*},\ldots,(\wt{\bm X}_{\wt m})_{i*};\mu)\|^2_2\\
    & \le \frac{2}{N}\sum_{i=1}^{N}\|u_{t,h}((\bm U_{t,1})_{i*},\ldots,(\bm U_{t,m})_{i*},(\wt{\bm X}_{1})_{i*},\ldots,(\wt{\bm X}_{\wt m})_{i*};\wh{\mu})\\
    &\hskip 10em-u^\orc_{t,h}((\bm U_{t,1})_{i*},\ldots,(\bm U_{t,m})_{i*},(\wt{\bm X}_{1})_{i*},\ldots,(\wt{\bm X}_{\wt m})_{i*};\mu)\|^2_2\\
    & \hskip 1em + \frac{2}{N}\sum_{i=1}^{N}\|u^\orc_{t,h}((\bm U_{t,1})_{i*},\ldots,(\bm U_{t,m})_{i*},(\wt{\bm X}_{1})_{i*},\ldots,(\wt{\bm X}_{\wt m})_{i*};\mu)\\
    &\hskip 10em-u^\orc_{t,h}((\bm U^\orc_{t,1})_{i*},\ldots,(\bm U^\orc_{t,m})_{i*},(\wt{\bm X}_{1})_{i*},\ldots,(\wt{\bm X}_{\wt m})_{i*};\mu)\|^2_2.
\end{align}
Let us consider the first expression on the right-hand side of the above equation. By Lemma \ref{lem:err_emp_bayes} we have:
\begin{align}
    &\frac{2}{N}\sum_{i=1}^{N}\|u_{t,h}((\bm U_{t,1})_{i*},\ldots,(\bm U_{t,m})_{i*},(\wt{\bm X}_{1})_{i*},\ldots,(\wt{\bm X}_{\wt m})_{i*};\wh{\mu})\\
    &\hskip 10em-u^\orc_{t,h}((\bm U_{t,1})_{i*},\ldots,(\bm U_{t,m})_{i*},(\wt{\bm X}_{1})_{i*},\ldots,(\wt{\bm X}_{\wt m})_{i*};\mu)\|^2_2\xrightarrow{a.s.} 0.
\end{align}
Next, for second expression on the right-hand side of \eqref{eq:triangle_ineq}, we use the Lipschitz property of the denoisers implied by Assumption \ref{asm:prior_1_mom}, to conclude that for all $h \in [m]$ 
\begin{align}
   &\frac{2}{N}\sum_{i=1}^{N}\|u^\orc_{t,h}((\bm U_{t,1})_{i*},\ldots,(\bm U_{t,m})_{i*},(\wt{\bm X}_{1})_{i*},\ldots,(\wt{\bm X}_{\wt m})_{i*};\mu)\\
    &\hskip 10em-u^\orc_{t,h}((\bm U^\orc_{t,1})_{i*},\ldots,(\bm U^\orc_{t,m})_{i*},(\wt{\bm X}_{1})_{i*},\ldots,(\wt{\bm X}_{\wt m})_{i*};\mu)\|^2_2\\
   & \le \frac{C_1}{N}\sum_{h=1}^m\|\bm U_{t,h}-\bm U^\orc_{t,h}\|^2_F \xrightarrow{a.s.}0, \quad \mbox{as $N \rightarrow \infty$}.
\end{align}
Here $C_1>0$ is an absolute constant and the last relation follows using $\bar{\mathcal{H}}_t(3)$. For the third assertion, observe that 
\begin{align}
\label{eq:triangle_ineq_ld}
    & \frac{1}{N}\|\wt{\bm U}_{t,h}-\wt{\bm U}^\orc_{t,h}\|^2_F\nonumber\\
    & = \frac{1}{N}\|\wt{u}_{t,h}(\bm U_{t,1},\ldots,\bm U_{t,m},\wt{\bm X}_1,\ldots,\wt{\bm X}_{\wt m};\wh{\mu})-\wt{u}^\orc_{t,h}(\bm U^\orc_{t,1},\ldots,\bm U^\orc_{t,m},\wt{\bm X}_1,\ldots,\wt{\bm X}_{\wt m};\mu)\|^2_F\nonumber\\
    & = \frac{1}{N}\sum_{i=1}^{N}\|\wt{u}_{t,h}((\bm U_{t,1})_{i*},\ldots,(\bm U_{t,m})_{i*},(\wt{\bm X}_1)_{i*},\ldots,(\wt{\bm X}_{\wt m})_{i*};\wh{\mu})\nonumber\\
    &\hskip 10em-\wt{u}^\orc_{t,h}((\bm U^\orc_{t,1})_{i*},\ldots,(\bm U^\orc_{t,m})_{i*},(\wt{\bm X}_1)_{i*},\ldots,(\wt{\bm X}_{\wt m})_{i*};\mu)\|^2_2.
\end{align}
The right-hand side of \eqref{eq:triangle_ineq_ld} converges to $0$ by \eqref{eq:conv_theta_n}, Lemma \ref{lem:err_emp_bayes} and arguments as before.
Finally, consider the fourth assertion. Define \[\jacl_{t,h}\equiv\jacl_{t,h}(\bm U_{t,1},\ldots,\bm U_{t,m},\wt{\bm X}_1,\ldots,\wt{\bm X}_{\wt m};\wh{\mu})\] and \[\jaclor_{t,h}\equiv\jaclor_{t,h}(\bm U^\orc_{t,1},\ldots,\bm U^\orc_{t,m},\wt{\bm X}_1,\ldots,\wt{\bm X}_{\wt m};\mu).\] We can get a constant $C_2>0$ such that
\begin{align}
\label{eq:triangle_ineq_f}
    \frac{1}{p_h}\|\bm V_{t+1,h}-\bm V^\orc_{t+1,h}\|^2_F
    & \le \frac{C_2}{p_h}\|\bm X^\top_h\bu_{t,h}-\bm X^\top_h\bu^\orc_{t,h}\|^2_F+C_2\frac{\|\bv_{t,h}\|^2_F}{p_h}\|\jacl_{t,h}-\jaclor_{t,h}\|^2_{\mathrm{op}}\\
    &\hskip 3em +\frac{C_2}{p_h}\|\bv_{t,h}-\bv^\orc_{t,h}\|^2_F\|\jaclor_{t,h}\|^2_{\mathrm{op}}\\
    & \le \frac{C_2}{p_h}\|\bm X_h\|^2_{\mathrm{op}}\|\bu_{t,h}-\bu^{\orc}_{t,h}\|^2_F+C_2\frac{\|\bv_{t,h}\|^2_F}{p_h}\|\jacl_{t,h}-\jaclor_{t,h}\|^2_{\mathrm{op}}\\
    &\hskip 3em +\frac{C_2}{p_h}\|\bv_{t,h}-\bv^\orc_{t,h}\|^2_F\|\jaclor_{t,h}\|^2_{\mathrm{op}}.
\end{align}
By Corollary 5.35 of \citet{vershynin_2012} we have $\limsup_{N \rightarrow \infty}\|\bm X_h\|_{\mathrm{op}}<\infty$ almost surely. By $\mathcal{\bar{H}}_t(2)$, $\mathcal{H}_t(2)$ and Theorem \ref{thm:mmse_svd}, we can conclude that $\limsup_{p_h \rightarrow \infty}\frac{\|\bv_{t,h}\|^2_F}{p_h}<\infty$, almost surely. Further, observe that 
\begin{align}
    &\jacl_{t,h}-\jaclor_{t,h}\\
    &= \frac{1}{N}\sum_{i=1}^{N}\bigg[\jacl_{t,h}((\bm U_{t,1})_{i*},\ldots,(\bm U_{t,m})_{i*},(\wt{\bm X}_1)_{i*},\ldots,(\wt{\bm X}_{\wt m})_{i*};\wh{\mu})\\
    &\hskip 8em-\jaclor_{t,h}((\bm U^\orc_{t,1})_{i*},\ldots,(\bm U^\orc_{t,m})_{i*},(\wt{\bm X}_1)_{i*},\ldots,(\wt{\bm X}_{\wt m})_{i*};\mu)\bigg]\\
    \label{eq:def_I}
    & \le \frac{1}{N}\sum_{i=1}^{N}\bigg[\jacl_{t,h}((\bm U_{t,1})_{i*},\ldots,(\bm U_{t,m})_{i*},(\wt{\bm X}_1)_{i*},\ldots,(\wt{\bm X}_{\wt m})_{i*};\wh{\mu})\\
    &\hskip 8em-\jaclor_{t,h}((\bm U_{t,1})_{i*},\ldots,(\bm U_{t,m})_{i*},(\wt{\bm X}_1)_{i*},\ldots,(\wt{\bm X}_{\wt m})_{i*};\mu)\bigg]\\
    \label{eq:def_II}
    & + \frac{1}{N}\sum_{i=1}^{N}\bigg[\jaclor_{t,h}((\bm U_{t,1})_{i*},\ldots,(\bm U_{t,m})_{i*},(\wt{\bm X}_1)_{i*},\ldots,(\wt{\bm X}_{\wt m})_{i*};\mu)\\
    &\hskip 8em-\jaclor_{t,h}((\bm U^\orc_{t,1})_{i*},\ldots,(\bm U^\orc_{t,m})_{i*},(\wt{\bm X}_1)_{i*},\ldots,(\wt{\bm X}_{\wt m})_{i*};\mu)\bigg].
\end{align}
For the expression in \eqref{eq:def_I}, using Lemma \ref{lem:err_emp_bayes_derivative} can conclude that
\begin{align}
    &\Bigg\|\frac{1}{N}\sum_{i=1}^{N}\bigg[\jacl_{t,h}((\bm U_{t,1})_{i*},\ldots,(\bm U_{t,m})_{i*},(\wt{\bm X}_1)_{i*},\ldots,(\wt{\bm X}_{\wt m})_{i*};\wh{\mu})\\
    &\hskip 8em-\jaclor_{t,h}((\bm U_{t,1})_{i*},\ldots,(\bm U_{t,m})_{i*},(\wt{\bm X}_1)_{i*},\ldots,(\wt{\bm X}_{\wt m})_{i*};\mu)\bigg]\Bigg\|_{\mathrm{op}}\\
    &\le \frac{1}{N}\sum_{i=1}^{N}\bigg\|\jacl_{t,h}((\bm U_{t,1})_{i*},\ldots,(\bm U_{t,m})_{i*},(\wt{\bm X}_1)_{i*},\ldots,(\wt{\bm X}_{\wt m})_{i*};\wh{\mu})\\
    &\hskip 8em-\jaclor_{t,h}((\bm U_{t,1})_{i*},\ldots,(\bm U_{t,m})_{i*},(\wt{\bm X}_1)_{i*},\ldots,(\wt{\bm X}_{\wt m})_{i*};\mu)\bigg\|_{\mathrm{op}}\\
    & \hskip 2em \xrightarrow{a.s.} 0, \quad \mbox{as $N \rightarrow \infty$.}
\end{align}
To analyze the expression in \eqref{eq:def_II}, we first express the function $u^{\orc}_{t,h}(x_1,\ldots,x_m,\wt x_1,\ldots,\wt x_{\wt m};\mu)$ using Tweedie's formula as follows:
\begin{align}
    u^\circ_{t,h}(x_1,\ldots,x_m,\wt x_1,\ldots,\wt x_{\wt m};\mu)=x_h+\frac{\nabla_hf^L_{t}(x_1,\ldots,x_m,\wt x_1,\ldots,\wt x_{\wt m};\mu)}{f^L_{t}(x_1,\ldots,x_m,\wt x_1,\ldots,\wt x_{\wt m};\mu)},
\end{align}
where $\nabla_h$ refers to the gradient of a function $f:\R^{r+\wt r} \rightarrow \R$ with respect to the vector of argument $x_h$. The function $f^L_{t}:\R^{r+\wt r} \rightarrow \R$ is defined as follows:
\begin{align} f^L_{t}(x_1,\ldots,x_m,\wt x_1,\ldots,\wt x_{\wt m};\mu)\equiv&\int_{u,\tilde{u}}\prod_{h=1}^{m}\phi_{r_h}((\hslvor_{t,h})^{-1/2}(x_h-\hslmor_{t,h} u_h))\times\\
&\hskip 4em\prod_{\ell=1}^{\wt m}\phi_{\wt r_{\ell}}(\wt x_\ell-\bm L_\ell\wt{u}_\ell)d\mu(u_1,\ldots,u_m,\wt{u}_1,\ldots,\wt{u}_\ell).
\end{align}
By Assumption \ref{asm:prior_1_mom}, since $u^\orc_{t,h}$ is uniformly Lipschitz, therefore $\jaclor_{t,h}$ is uniformly bounded by a constant $C_L$. Further, for all $h \in [m]$ and $\ell \in [\wt m]$; $x_h \in \mathbb B_{r_h}(B)$ (here, $\mathbb B_{r}(B)$ denotes the open Euclidean ball of dimension $r$ and radius $B$) and $\wt x_\ell \in \mathbb B_{{\wt r}_\ell}(B)$, then there \revsn{exists} a constant $c>0$, such that $f_t(x_1,\ldots,x_m,\wt x_1,\ldots,\wt x_{\wt m};\mu)>c$ which implies that the derivative of $\jaclor_{t,h}$ with respect to the variable $x_h$ is uniformly bounded by a constant $C_B>0$ for all $x_h \in \mathbb B_{r_h}(B)$ and $\wt x_\ell \in \mathbb B_{{\wt r}_\ell}(B)$.
Hence, if $x_h \in \mathbb B_{r_h}(B)$ and $\wt x_\ell \in \mathbb B_{{\wt r}_\ell}(B)$ for all $h \in [m]$ and $\ell \in [\wt m]$, then $\jaclor_{t,h}$ is Lipschitz continuous with Lipschitz constant $C_B$. Now, consider the term in \eqref{eq:def_II}. 
\begin{align}
    &\frac{1}{N}\sum_{i=1}^{N}\bigg\|\jaclor_{t,h}((\bm U_{t,1})_{i*},\ldots,(\bm U_{t,m})_{i*},(\wt{\bm X}_1)_{i*},\ldots,(\wt{\bm X}_{\wt m})_{i*};{\mu})\\
    &\hskip 8em-\jaclor_{t,h}((\bm U^\orc_{t,1})_{i*},\ldots,(\bm U^\orc_{t,m})_{i*},(\wt{\bm X}_1)_{i*},\ldots,(\wt{\bm X}_{\wt m})_{i*};\mu)\bigg\|_{\mathrm{op}}\\
    &\le \frac{1}{N}\sum_{i=1}^{N}\bigg\|\jaclor_{t,h}((\bm U_{t,1})_{i*},\ldots,(\bm U_{t,m})_{i*},(\wt{\bm X}_1)_{i*},\ldots,(\wt{\bm X}_{\wt m})_{i*};{\mu})\\
    &\hskip 8em-\jaclor_{t,h}((\bm U^\orc_{t,1})_{i*},\ldots,(\bm U^\orc_{t,m})_{i*},(\wt{\bm X}_1)_{i*},\ldots,(\wt{\bm X}_{\wt m})_{i*};\mu)\bigg\|_{\mathrm{op}}\\
        \label{eq:def_IV}
    & \hskip 5em \times \prod_{h=1}^m\mathbb{I}(\|(\bm U_{t,h})_{i*}\|_2 \le B) \times \prod_{\ell=1}^n\mathbb{I}(\|(\wt{\bm X}_\ell)_{i*}\|_2 \le B) \times \prod_{h=1}^m\mathbb{I}(\|(\bm U^\orc_{t,h})_{i*}\|_2 \le B)\\
    & \hskip 2em+ \frac{C_L}{N}\sum_{i=1}^{N}\sum_{h=1}^{m}\mathbb{I}(\|(\bm U_{t,h})_{i*}\|_2 \ge B) + \frac{C_L}{N}\sum_{i=1}^{N}\sum_{\ell=1}^{\wt m}\mathbb{I}(\|(\wt{\bm X}_\ell)_{i*}\|_2 \ge B)\\
    &\hskip 8em+\frac{C_L}{N}\sum_{i=1}^{N}\sum_{h=1}^{m}\mathbb{I}(\|(\bm U^\orc_{t,h})_{i*}\|_2 \ge B).
\end{align}
Since $\|x\|^2_2 \ge B^2\mathbb{I}(\|x\|_2 \ge B)$, for all $x \in \R^{r_h}$, the last two terms in the above display can be bounded as follows:
\begin{align}
    &\frac{C_L}{N}\sum_{i=1}^{N}\sum_{h=1}^{m}\mathbb{I}(\|(\bm U_{t,h})_{i*}\|_2 \ge B) + \frac{C_L}{N}\sum_{i=1}^{N}\sum_{\ell=1}^{\wt m}\mathbb{I}(\|(\wt{\bm X}_\ell)_{i*}\|_2 \ge B)\\
    &\hskip 20em+\frac{C_L}{N}\sum_{i=1}^{N}\sum_{h=1}^{m}\mathbb{I}(\|(\bm U^\orc_{t,h})_{i*}\|_2 \ge B)\\
    & \le \frac{C_L}{NB^2}\sum_{h=1}^{m}\|\bm U_{t,h}\|^2_F + \frac{C_L}{NB^2}\sum_{h=1}^{m}\|\bm U^\orc_{t,h}\|^2_F + \frac{C_L}{NB^2}\sum_{\ell=1}^{\wt m}\|\wt{\bm X}_\ell\|^2_F.
\end{align}
By definition of $\wt{\bm X}_\ell$, $\mathcal{\wb{H}}_{t}(3)$ and Theorem \ref{thm:asymptotics of amp_iterates}, we can conclude that, almost surely
\[
\limsup_{N \rightarrow \infty}\left\{\frac{C_L}{N}\sum_{h=1}^{m}\|\bm U_{t,h}\|^2_F + \frac{C_L}{N}\sum_{h=1}^{m}\|\bm U^\orc_{t,h}\|^2_F+ \frac{C_L}{N}\sum_{\ell=1}^{\wt m}\|\wt{\bm X}_\ell\|^2_F\right\} < \infty.
\]
Hence, if we take $B \rightarrow \infty$, we get
\begin{align}
&\limsup_{B \rightarrow \infty}\left\{\frac{C_L}{N}\sum_{i=1}^{N}\sum_{h=1}^{m}\mathbb{I}(\|(\bm U_{t,h})_{i*}\|_2 \ge B) + \frac{C_L}{N}\sum_{i=1}^{N}\sum_{\ell=1}^{\wt m}\mathbb{I}(\|(\wt{\bm X}_\ell)_{i*}\|_2 \ge B)\right.\\
& \hskip 5em\left.+\frac{C_L}{N}\sum_{i=1}^{N}\sum_{h=1}^{m}\mathbb{I}(\|(\bm U^\orc_{t,h})_{i*}\|_2 \ge B)\right\}\overset{a.s.}{=}0
\end{align}
To analyze the other term we use the Lipschitz property of the Jacobian matrices of the denoisers when its arguments are confined to an Euclidean ball of appropriate dimensions and radius $B>0$. That condition implies 
\begin{align}
    & \frac{1}{N}\sum_{i=1}^{N}\bigg\|\jaclor_{t,h}((\bm U_{t,1})_{i*},\ldots,(\bm U_{t,m})_{i*},(\wt{\bm X}_1)_{i*},\ldots,(\wt{\bm X}_{\wt m})_{i*};{\mu})\\
    &\hskip 8em-\jaclor_{t,h}((\bm U^\orc_{t,1})_{i*},\ldots,(\bm U^\orc_{t,m})_{i*},(\wt{\bm X}_1)_{i*},\ldots,(\wt{\bm X}_{\wt m})_{i*};\mu)\bigg\|_{\mathrm{op}}\\
    & \hskip 5em \times \prod_{h=1}^m\mathbb{I}(\|(\bm U_{t,h})_{i*}\|_2 \le B) \times \prod_{\ell=1}^n\mathbb{I}(\|(\wt{\bm X}_\ell)_{i*}\|_2 \le B) \times \prod_{h=1}^m\mathbb{I}(\|(\bm U^\orc_{t,h})_{i*}\|_2 \le B)\\
    & \le \frac{C_{B}}{\sqrt{N}}\sum_{h=1}^{m}\|\bm U_{t,h}-\bm U^\orc_{t,h}\|_F \xrightarrow{a.s.} 0.
\end{align}
Hence, we can conclude that $\|\jacl_{t,h}-\jaclor_{t,h}\| \xrightarrow{a.s.}0$, as $N \rightarrow \infty$. Using similar arguments one can also show that $\limsup_{N \rightarrow \infty}\|\jaclor_{t,h}\|<\infty$, almost surely. Combining \eqref{eq:triangle_ineq_f}, $\mathcal{H}_{t+1}(2)$ and $\bar{\mathcal H}_t$, the third assertion of $\mathcal{H}_{t+1}$ follows.

To show the first assertion of $\mathcal{H}_{t+1}$, we observe that $u^{\otimes 2}_{t,h}$ is a pseudo-Lipschitz function. Hence, the assertion follows using the techniques described in the proof of Theorem 5.4 in \citet{eb_pca}. 

Finally, since $\frac{|\mathcal F_L|}{N} \rightarrow \lambda_L \in (0,1]$ and $\frac{|\mathcal F_{R,h}|}{p_h} \rightarrow \lambda_{R,h} \in (0,1]$ for all $h \in [m]$, the third part of the theorem follows using Theorem \ref{thm:asymptotics of amp_iterates} and the property of pseudo-Lipschitz functions (in particular, inequalities of form \eqref{eq:typical_scaling_ineq}).
\end{proof}

\subsection{Proof of Lemma \ref{lem:conv_of_density}}
We shall prove the statement \eqref{eq:conv_of_density_1}. The proof of \eqref{eq:conv_of_density_2} is similar. Denote by $$\wh{\bm \Theta}=\bigg(\{\hslmp_{0,h}\}_{h=1}^{m},\{\hslvp_{0,h}\}_{h=1}^{m},\{\hatl_{\ell}\}_{\ell=1}^{\wt m}\bigg),$$ and $$\bm \Theta=\bigg(\{\hslmorp_{0,h}\}_{h=1}^{m},\{\hslvorp_{0,h}\}_{h=1}^{m},\{\bm L_{\ell}\}_{\ell=1}^{\wt m}\bigg).$$ Consider the functions
\[
\xi^L_N((x,\wt x);(u,\wt u);\wh{\bm \Theta})=\prod_{h=1}^{m}\phi_{r_h}((\hslvp_{0,h})^{-1/2}(x_h-\hslmp_{0,h} u_h))\prod_{\ell=1}^{\wt m}\phi_{\wt r_{\ell}}(\wt x_\ell-\hatl_\ell\wt{u}_\ell),
\]
and
\[
\xi((x,\wt x);(u,\wt u);\bm \Theta)=\prod_{h=1}^{m}\phi_{r_h}((\hslvorp_{0,h})^{-1/2}(x_h-\hslmorp_{0,h}u_h))\prod_{\ell=1}^{\wt m}\phi_{\wt r_\ell}(\wt x_\ell-\bm L_\ell\wt{u}_\ell).
\]
Since, by Lemma \ref{lem:consistency_nuisance}, $\wh{\bm \Theta} \rightarrow \bm \Theta$ as $N \rightarrow \infty$, both $\xi^L_N((x,\wt x);(u,\wt u);\wh{\bm \Theta})$ and $\xi((x,\wt x);(u,\wt u);\bm \Theta)$ and their derivatives are uniformly bounded with respect to $(x,\wt x)$, $(u,\wt u)$ and $\wh{\bm \Theta}$, when $\wh{\bm \Theta}$ is contained in a finite radius neighborhood of $\bm \Theta$.  

Then, proceeding as in the proof of Lemma B.4 and Corollary B.5 of \citet{eb_pca}, the stated result follows.

\subsection{Proof of Lemma \ref{lem:err_emp_bayes}}
We shall prove \eqref{eq:u_denoiser_1} and the proof of \eqref{eq:u_denoiser_2} shall follow similarly. Let us consider the collection of matrices $\bm \Theta=\bigg(\{\hslmorp_{0,h}\}_{h=1}^{m},\{\hslvorp_{0,h}\}_{h=1}^{m},\{\bm L_{\ell}\}_{\ell=1}^{\wt m}\bigg)$ and $\wh{\bm \Theta}=\bigg(\{\hslmp_{0,h}\}_{h=1}^{m},\{\hslvp_{0,h}\}_{=1}^{m},\{\hatl_{\ell}\}_{\ell=1}^{\wt m}\bigg)$. By Tweedie's formula, for $h \in [m]$, we can write the posterior mean denoisers as 
\begin{align}
&u_{t,h}(x_1,\ldots,x_m,\wt x_1,\ldots,\wt x_{\wt m};\wh{\mu})\\
&=(\hslmp_{0,h})^{-1}\bigg(\hslvp_{0,h}\frac{\nabla_{x_h}f^L_N(x_1,\ldots,x_m,\wt x_1,\ldots,\wt x_{\wt m};\wh{\mu})^\top}{f^L_N(x_1,\ldots,x_m,\wt x_1,\ldots,\wt x_{\wt m};\wh{\mu})}+x_h\bigg),
\end{align}
and
\begin{align}
&u^\orc_{t,h}(x_1,\ldots,x_m,\wt x_1,\ldots,\wt x_{\wt m};\mu)\\
&=(\hslmorp_{0,h})^{-1}\bigg(\hslvorp_{0,h}\frac{\nabla_{x_h}f^L(x_1,\ldots,x_m,\wt x_1,\ldots,\wt x_{\wt m};\mu)^\top}{f^L(x_1,\ldots,x_m,\wt x_1,\ldots,\wt x_{\wt m};\mu)}+x_h\bigg),
\end{align}
where for any function $f$ defined on appropriate domain, $\nabla_{x_h}f(x_1,\ldots,x_m,\wt x_1,\ldots,\wt x_{\wt m};\wh{\mu})$ denotes the gradient of the function $f$ with respect to $x_h$. In the above equations the functions $f^L_N$ and $f^L$ are defined in \eqref{eq:f_l_n} and \eqref{eq:f_l}, respectively.
By Lemma \ref{lem:conv_of_density}, for $N \rightarrow \infty$, we have
\begin{align}
     & \sup_{(x,\wt x)}\Bigg|f^L_N(x_1,\ldots,x_m,\wt x_1,\ldots,\wt x_{\wt m};\wh{\mu})-f^L(x_1,\ldots,x_m,\wt x_1,\ldots,\wt x_{\wt m};\mu)\Bigg|\rightarrow 0.
 \end{align}
Using similar techniques as the proof of Lemma \ref{lem:conv_of_density}, we can also show that for $h \in [m]$:
\begin{align}
     & \sup_{(x,\wt x)}\Bigg\|\nabla_{x_h}f^L_N(x_1,\ldots,x_m,\wt x_1,\ldots,\wt x_{\wt m};\wh{\mu})-\nabla_{x_h}f^L(x_1,\ldots,x_m,\wt x_1,\ldots,\wt x_{\wt m};\mu)\Bigg\|_{\mathrm{op}}\rightarrow 0.
 \end{align}
 Since, by Lemma \ref{lem:consistency_nuisance}, $\wh{\bm \Theta} \rightarrow \bm \Theta$ as $N \rightarrow \infty$, we have an open bounded neighborhood of $\bm \Theta$ such that as $N\rightarrow \infty$, the matrices in the collection $\wh{\bm \Theta}$ lie in it. Then using Assumption \ref{asm:prior_1_mom}, we can verify along the lines of Proposition B.1 of \citet{eb_pca} that there exists a neighborhood of $(\bm \Theta,\mu)$ (with respect to the topology of weak convergence in $\mu$) where $u_{t,h}$ is uniformly Lipschitz in $(x,\wt x)$. Therefore, following the analysis of Corollary B.3 of \citet{eb_pca}, we can conclude \eqref{eq:u_denoiser_1}. The proofs of \eqref{eq:u_denoiser_2} and \eqref{eq:conv_theta_n} follow using similar arguments.
 
 \subsection{Proof of Lemma \ref{lem:err_emp_bayes_derivative}}
 The proof of this Lemma follows using the techniques used to prove Lemma \ref{lem:err_emp_bayes} and Proposition B.6 of \citet{eb_pca}.

\section{Proof of Bayes optimality of orchestrated AMP }
%\nb{ready for review}}
\subsection{Proof of Lemma \ref{lem:redef_state_evol}}
Recall the definitions of $\gamman_{t,h}$ and $\gammab_{t,h}$ from \eqref{eq:def_gamma}. Using \eqref{eq:state_evol_gen}, we can write
\begin{align}
    \gamman_{t,h}= \bm D^{1/2}_h\hsrvor_{t,h}\bm D^{1/2}_h, \quad \mbox{and} \quad
    \gammab_{t,h}= \frac{1}{\gamma_h}\bm D^{1/2}_h\hslvor_{t,h}\bm D^{1/2}_h.
\end{align}
Since
\begin{align}
    \rmvfunc(\gamman_{t,h};\nu_h)= \mathbb{E}[(\mathrm{V}_h-\mathbb{E}_{\nu_h}[\mathrm{V}_h|\rmxr(\gamman_{t,h})])^{\otimes 2}],
\end{align}
we have
\[
\mathbb E[\mathbb{E}_{\nu_h}[\mathrm{V}_h|\rmxr(\gamman_{t,h})])^{\otimes 2}]]=\bm I_{r_h}-\rmvfunc(\gamman_{t,h};\nu_h).
\]
Next, recall that:
\[
\hslvor_{t,h}=\gamma_h\mathbb E[\mathbb{E}_{\nu_h}[\mathrm{V}_h|\hsrmor_{t,h}\mathrm{V}_h+(\hsrvor_{t,h})^{1/2}\mathrm{Z}^R_h]^{\otimes 2}],
\]
where $\mathrm{Z}^R_h \sim N_{r_h}(0,\bm I_{r_h})$ and is independent of $\mathrm{V}_h \sim \nu_h$. In both the equations above and in similar contexts later, the outer expectations are always with respect to the distributions of both $\mathrm V_h$ and the Gaussian random variables $\mathrm Z^R_h$. Again, using \eqref{eq:state_evol_gen}, the above equation can be re-expressed as follows:
\begin{align}
\hslvor_{t,h}&=\gamma_h\mathbb E[\mathbb{E}_{\nu_h}[\mathrm{V}_h|\bm D^{1/2}_h\mathrm{V}_h+\bm D^{1/2}_h(\hsrmor_{t,h})^{-1}(\hsrvor_{t,h})^{1/2}\mathrm{Z}^R_h]^{\otimes 2}]\\
&=\gamma_h\mathbb E[\mathbb{E}_{\nu_h}[\mathrm{V}_h|\bm D^{1/2}_h\mathrm{V}_h+(\gamman_{t,h})^{-1/2}\mathrm{Z}^R_h]^{\otimes 2}].
\end{align}
Hence, we get
\begin{align}
\gammab_{t,h}&=\mathbb E[\mathbb{E}[\bm D^{1/2}_h\mathrm{V}_h|\bm D^{1/2}_h\mathrm{V}_h+(\gamman_{t,h})^{-1/2}\mathrm{Z}^R_h]^{\otimes 2}]\\
&=\bm D_h-\rmvfunc_h(\gamman_{t,h};\bm D^{1/2}_h\nu_h).
\end{align}
The other equation follows similarly.

\subsection{Proof of Theorem \ref{thm:improvement_in_error}}
First, observe that $\mathbb E[\mathbb E[\Theta|\Theta+\bm \Gamma^{-1/2}Z]^{\otimes 2}]=\mathbb E[\mathbb E[\Theta|\bm \Gamma^{1/2}\Theta+Z]^{\otimes 2}]$ for any $\bm Q$ positive semi-definite and $Z\sim\mbox{Gaussian}$. 
%Let us denote the set of positive semi-definite matrices of dimension $r \times r$ by $\mathbb S^r_+$ for all $r \in \mathbb N$. 
Using Lemma 9 in \cite{article}, we can show that, for all $h \in [m]$:
\begin{align}
\label{eq:in_v}
   \rmvfunc_{h}(\gamman_1;\bm D^{1/2}_h\nu_h) \succeq \rmvfunc_{h}(\gamman_2;\bm D^{1/2}_h\nu_h) \quad \mbox{for $\gamman_1 \preceq \gamman_2$ and $\gamman_1,\gamman_2 \in \mathbb S^{r_h}_+$.} 
\end{align}
Further, using the techniques used to prove Lemma 9 in \cite{article}, we can show that for all $k \in [m]$ and pairs of positive semi-definite matrices $\gammab_{k},\wb{\bm\Gamma}^L_{k} \in \mathbb S^{r_k}_+$ satisfying $\gammab_{k} \preceq \wb{\bm\Gamma}^L_{k}$, if the matrices $\{\gammab_{g} \in \mathbb S^{r_g}_+:g \in [m]\setminus\{k\}\}$ are fixed, then we have:
\begin{align}
\label{eq:in_u_1}
&\rmufunc_{h}(\gamma_1\gammab_1,\ldots,\gamma_{k-1}\gammab_{k-1},\gamma_k\gammab_{k},\gamma_{k+1}\gammab_{k+1},\ldots,\gamma_m\gammab_m;\bm D^{1/2}\mu)\\
&\succeq \rmufunc_h(\gamma_1\gammab_1,\ldots,\gamma_{k-1}\gammab_{k-1},\gamma_k\wb{\bm\Gamma}^L_{k},\gamma_{k+1}\gammab_{k+1},\ldots,\gamma_m\gammab_m;\bm D^{1/2}\mu),
\end{align}
for all $h \in [m]$. This implies that for the collections of matrices given by $\{(\gammab_{h},\wb{\bm\Gamma}^L_{h}) \in \mathbb S^{r_h}_+ \times \mathbb S^{r_h}_+:\gammab_{h}\preceq\wb{\bm\Gamma}^L_{h}, h \in [m]\}$, we have:
\begin{align}
\label{eq:in_u}
&\rmufunc_h(\gamma_1\gammab_{1},\ldots,\gamma_m\gammab_{m};\bm D^{1/2}\mu)\succeq \rmufunc_h(\gamma_1\wb{\bm\Gamma}^L_{1},\ldots,\gamma_m\wb{\bm\Gamma}^L_{m};\bm D^{1/2}\mu),
\end{align}
for all $h \in [m]$. Further, by Lemma 9 in \cite{article}, we get that for any $\{\gamman_h \in \mathbb S^{r_h}_+:h \in [m]\}$ and $\{\gammab_h \in \mathbb S^{r_h}_+:h \in [m]\}$, 
\begin{align}
\label{eq:mat_up_bound}
    \rmvfunc_h(\gamman_h;\bm D^{1/2}_h\nu_h) \preceq \bm D_h \quad \mbox{and} \quad \rmufunc_h(\gamma_1\gammab_1,\ldots,\gamma_m\gammab_{m};\bm D^{1/2}\mu) \preceq \bm D_h.
\end{align}
Using Theorem 2 in \cite{8437326}, we can show that for all $\gamman_h \in \mathbb S^{r_h}_+$ and $h \in [m]$,
\begin{align}
\rmvfunc_h(\gamman_h;\bm D^{1/2}_h\nu_h) \preceq (\bm D^{-1}_h+\gamman_h)^{-1}.
\end{align}
Next, for all $h \in [m]$ and $\{\gammab_h \in \mathbb S^{r_h}_+:h \in [m]\}$, using techniques similar to the proof of Theorem 2 in \cite{8437326}, we can show the following:
\begin{align}
\label{eq:u_up_bnd}
&\rmufunc_h(\gamma_1\gammab_1,\ldots,\gamma_m\gammab_{m};\bm D^{1/2}\mu)\\
&\preceq (\rmufunc_h(\gamma_1\gammab_{1},\ldots,\gamma_{h-1}\gammab_{h-1},\bm 0,\gamma_{h+1}\gammab_{h+1},\ldots,\gamma_m\gammab_{m};\bm D^{1/2}\mu)^{-1}+\gamma_h\gammab_h)^{-1}\\
&\preceq (\bm D^{-1}_h+\gamma_h\gammab_h)^{-1},
\end{align}
where the last relation follows from \eqref{eq:mat_up_bound}.

Next, let us recall that $\hsrvor_{0,h}=\hsrvorp_{0,h}$ where $\hsrvorp_{0,h}$ is defined in \eqref{eq:initializers} and construct $\gamman_{0,h}$ following the prescription of \eqref{eq:def_gamma}. From \eqref{eq:state_evol_mod} and Proposition 5.6 (a) of \cite{eb_pca}, we get for all $h \in [m]$
\begin{align}
    \gammab_{1,h}&=\bm D_h-\rmvfunc_h(\gamman_{0,h};\bm D^{1/2}_h\nu_h)\\
    & \succeq \bm D_h-(\bm D^{-1}_h+\gamman_{0,h})^{-1} = \mathsf{diag}\left(\frac{\gamma_h(\bm D^4_{h})_{ii}-1}{\gamma_h(\bm D_{h})_{ii}(1+(\bm D^2_{h})_{ii}}\right).
\end{align}
Therefore, again using \eqref{eq:state_evol_mod} and \eqref{eq:u_up_bnd}, we can conclude that
\begin{align}
    \gamman_{1,h}&=\bm D_h-\rmufunc_h(\gamma_1\gammab_{1,1},\ldots,\gamma_m\gammab_{1,m};\bm D^{1/2}\mu)\\
    & \succeq \bm D_h-\left(\bm D^{-1}_h+\gammab_{1,h}\right)^{-1}\\
    & \succeq \bm D_h-\left(\bm D^{-1}_h+\mathsf{diag}\left(\frac{\gamma_h(\bm D^4_{h})_{ii}-1}{(\bm D_{h})_{ii}(1+(\bm D^2_{h})_{ii})}\right)\right)^{-1} = \gamman_{0,h},
\end{align}
for all $h \in [m]$. Again, using the definitions of $\gammab_{1,h}$ and $\gammab_{0,h}$, we can conclude using \eqref{eq:in_v} that $\gammab_{1,h} \succeq \gammab_{0,h}$, for all $h \in [m]$. Consequently, using \eqref{eq:in_v} and \eqref{eq:in_u}, we can inductively conclude that for all $t \in \mathbb N$ and $h \in [m]$, $\gamman_{t+1,h} \succeq \gamman_{t,h}$ and $\gammab_{t+1,h} \succeq \gammab_{t,h}$. Further, using \eqref{eq:in_v} and \eqref{eq:in_u}, it can also be concluded that for all $t\in \mathbb N$ and $h \in [m]$, $\gamman_{t,h} \preceq \bm D_h$ and $\gammab_{t,h} \preceq \bm D_h$. Therefore, for all $h \in [m]$, there exists matrices $\gamman_{\infty,h}$ and $\gammab_{\infty,h}$, satisfying \eqref{eq:fixed_point}, such that the sequences $\{\gamman_{t,h}\}$ and $\{\gammab_{t,h}\}$ converges to $\gamman_{\infty,h}$ and $\gammab_{\infty,h}$, respectively, as $t \rightarrow \infty$.

\subsection{Proof of Theorem \ref{thm:improvement_in_error_2}}
By simplification, we get:
\begin{align}
\label{eq:expand_square}
    &\frac{1}{Np_h}\|\bm U_h\bm D_h\bm V^\top_h-\wb{\bm U}_{t,h}\bm D_h(\wb{\bm V}_{t,h})^\top\|^2_F\\
    & = \frac{1}{Np_h}\mathrm{Tr}(\bm V^\top_h\bm V_h\bm D_h\bm U^\top_h\bm U_h\bm D_h)+\frac{1}{Np_h}\mathrm{Tr}(({\bm V}_h)^\top\wb{\bm V}_{t,h} \bm D_h(\wb{\bm U}_{t,h})^\top\wb{\bm U}_{t,h}\bm D_h)\\
    &\hskip 4em -\frac{2}{Np_h}\mathrm{Tr}(\bm V^\top_h\wb{\bm V}_{t,h} \bm D_h(\wb{\bm U}_{t,h})^\top\bm U_h\bm D_h).
\end{align}
Using the Strong Law of Large Numbers, we have:
\[
\frac{\bm U^\top_h\bm U_h}{N} \xrightarrow{a.s.}\mathbb E_\mu[U_hU^\top_h] \quad \mbox{and} \quad \frac{\bm V^\top_h\bm V_h}{p_h} \xrightarrow{a.s.} \mathbb E_{\nu_h}[V_hV^\top_h], \quad \mbox{for all $h \in [m]$.}
\]
Consequently, we have:
\[
\frac{1}{Np_h}\mathrm{Tr}(\bm V^\top_h\bm V_h \bm D_h\bm U^\top_h\bm U_h\bm D_h) \xrightarrow{a.s.} \mathrm{Tr}\left[\mathbb E_\mu[U_hU^\top_h]\bm D_h\mathbb E_{\nu_h}[V_hV^\top_h]\bm D_h\right].
\]
Using the assumption that $\mathbb E_\mu[U_hU^\top_h]=\mathbb E_{\nu_h}[V_hV^\top_h]=\bm I_{r_h}$ for all $h \in [m]$, we have:
\[
\frac{1}{Np_h}\mathrm{Tr}(\bm V^\top_h\bm V_h \bm D_h\bm U^\top_h\bm U_h\bm D_h) \xrightarrow{a.s.} \mathrm{Tr}(\bm D^2_h).
\]
Next, using Theorem \ref{thm:em_bayes_state_evol} and \eqref{eq:state_evol_gen}, and the Lipschitz property of the denoisers, we get
\begin{align}
\label{eq:part_state_evol}
   \frac{(\wb{\bm U}_{t,h})^\top\wb{\bm U}_{t,h}}{N} \xrightarrow{a.s.}&\mathbb{E}\,[\mathbb{E}_{\mu}[U_h|Y^\orc_{t,1},\ldots,Y^\orc_{t,m},\wt{Y}_{0,1},\ldots,\wt{Y}_{0,\wt m}]^{\otimes 2}]=\hsrvor_{t+1,h}, \\
   \frac{(\wb{\bm V}_{t,h})^\top\wb{\bm V}_{t,h}}{p_h} \xrightarrow{a.s.}& \mathbb{E}\,[\mathbb{E}_{\nu_h}[V_h|Y^{R,\orc}_{t,h}]^{\otimes 2}]=\frac{\hslvor_{t,h}}{\gamma_h}.
\end{align}
Finally, using Theorem \ref{thm:em_bayes_state_evol} and $\mathbb E[Yg(X)|X]=g(X)\mathbb E[Y|X]$, we get 
\begin{align}
   \frac{(\wb{\bm U}_{t,h})^\top\bm U_h}{N} \xrightarrow{a.s.}\hsrvor_{t+1,h}, \quad \mbox{and} \quad \frac{(\wb{\bm V}_{t,h})^\top\bm V_{t,h}}{p_h} \xrightarrow{a.s.} \frac{\hslvor_{t,h}}{\gamma_h}.
\end{align}
Plugging in \eqref{eq:state_evol_gen} in \eqref{eq:def_gamma}, we get:
\[
\gamman_{t,h}:= \bm D^{1/2}_h\hsrvor_{t,h}\bm D^{1/2}_h, \quad \mbox{and} \quad
    \gammab_{t,h}:= \frac{1}{\gamma_h}\bm D^{1/2}_h\hslvor_{t,h}\bm D^{1/2}_h.
\]
Next, using Lemma \ref{lem:redef_state_evol}, Theorem \ref{thm:improvement_in_error}, and the fact that there exists unique fixed points to \eqref{eq:fixed_point}, we get positive definite matrices $\gammab_{\infty,h}$ and $\gamman_{\infty,h}$ such that:
\begin{align}
   \lim\limits_{t \rightarrow \infty}\hsrvor_{t,h} = \bm D^{-1/2}_h\gamman_{\infty,h}\bm D^{-1/2}_h, \quad \mbox{and} \quad \lim\limits_{t \rightarrow \infty}\hslvor_{t+1,h} = \gamma_h\bm D^{-1/2}_h\gammab_{\infty,h}\bm D^{-1/2}_h.
\end{align}
Substituting the above relation in \eqref{eq:part_state_evol} and then in \eqref{eq:expand_square}, we get the desired result.

%!TEX root = main.tex

\section{Proof of theoretical results for prediction sets} %\nb{under review -zm20240723}

\subsection{Proof of Theorem \ref{thm:point_pred}}
Observe that as $\rcheck_{h_k}=\frac{1}{N}(\bv_{T,h_k})_{\calf_{h_k}*}\hd_{h_k}$, 
Theorem \ref{thm:em_bayes_state_evol} implies that for all $j,\ell \in [r_{h_{k}}]\times[r_{h_{k}}]$ and pseudo-Lipschitz functions $\varphi_{h_k,j,\ell}(x,v):\R^{2r_{h_k}} \rightarrow \R$ defined as $\varphi_{h_k,j,\ell}(x,v)=x_jx_\ell$,
\begin{align*}
&\lim_{N \rightarrow \infty}N\left(\left((\bv_{T,h_k})_{\calf_{h_k}*}\right)^\top_{*j}\left((\bv_{T,h_k})_{\calf_{h_k}*}\right)_{*\ell}\right) \\
& = \lim_{N \rightarrow \infty}\frac{p_{h_k}}{N}\times\frac{|\calf_{h_k}|}{p_{h_k}}\times\frac{1}{|\calf_{h_k}|}\sum_{i \in \calf_{h_k}}\varphi_{h_k,j,\ell}((\bv_{T,h_k})_{i*},(\bm V_{h_k})_{i*})\\
& \overset{a.s}{=}\, 	\lambda_{h_k}\gamma_{h_k}\mathbb{E}\left[\big[\mathbb{E}_{\nu_{h_k}}[V_{h_k}|Y^{R,\orc}_{T,h_k}]^{\otimes 2} \big]_{j\ell}\right],
\end{align*}
where $Y^{R,\orc}_{T,h_k}$ is defined in \eqref{eq:y_orc_2} and the outer expectation is with respect to the distribution of $Y^{R,\orc}_{T,h_k}$. Combining with, $\wh{\bm D}_{h_k} \overset{a.s.}{\rightarrow} \bm D_{h_k}$, we get  
% \nb{this is not the right pseudo-lipschitz function! -zm20240717}
% \nb{what is the effect of subsetting with $\calf_{h_k}$? i can understand the following claim being correct if each feature enters the subset with probability $\lambda_{h_k}$ independently. otherwise, it is not clear to me. -zm20240717}
\begin{align}
\label{eq:ci_pt_est}
    \lim_{N \rightarrow \infty} N(\rcheck^\top_{h_k}\rcheck_{h_k}) \,\overset{a.s}{=}\, 
	\lambda_{h_k}\gamma_{h_k}\bm D_{h_k}\mathbb{E}\big[\mathbb{E}_{\nu_{h_k}}[V_{h_k}|Y^{R,\orc}_{T,h_k}]^{\otimes 2} \big]\bm D_{h_k}.
\end{align}
Plugging in the definition of $\hslvor_{T,h_k}$ from \eqref{eq:state_evol_gen} in \eqref{eq:ci_pt_est}, we get
\begin{equation}
\label{eq:limit-unoise-cov1}	
\lim_{N \rightarrow \infty}
N(\rcheck^\top_{h_k}\rcheck_{h_k})
\, \overset{a.s}{=}\,
\lambda_{h_k}\bm D_{h_k}\hslvor_{T,h_k}\bm D_{h_k}.
\end{equation}
Next, for any $k \in [d]$, recall that 
\[
 \frac{1}{\sqrt{N}}\bq_{h_k}=\frac{1}{N}(\bm V_{h_k})_{\calf_{h_k}*}\bm D_{h_k}U^{\mathrm Q}_{h_k} + \frac{1}{\sqrt{N}}W^{\mathrm Q}_{h_k}
\,\in\, \mathbb{R}^{|\calf_{h_k}|}.
\]
Hence, for any $k \in [d]$, we have 
% the following:
\[
\frac{1}{\sqrt{N}}(\rcheck^\top_{h_k}\rcheck_{h_k})^{-1}\rcheck^\top_{h_k}Q_{h_k} \,=\, \usig_{h_k} + \unoise_{h_k},
\]
where, conditional on $\{U^{\rmq}_{h_1},\ldots,U^{\rmq}_{h_d},\bar{\bm V}_{T,1},\ldots,\bar{\bm V}_{T,m},\wh{\bm D}_1,\ldots,\wh{\bm D}_{m},\revsag{\bm V_1,\ldots,\bm V_m}\}$, 
% \nb{also need to conditional on ${\bm V}_{h},h\in [m]$? -zm20240717}, 
\newline $\usig_{h_k}$ and $\unoise_{h_k}$ are independent and can be expressed as 
\begin{align*} 
\usig_{h_k} & =\frac{1}{N}(\rcheck^\top_{h_k}\rcheck_{h_k})^{-1}\rcheck^\top_{h_k}(\bm V_{h_k})^\top_{\calf_{h_k*}}\bm D_{h_k}U^{\mathrm{Q}}_{h_k},	\\
\unoise_{h_k} & = 
\frac{1}{\sqrt{N}}(\rcheck^\top_{h_k}\rcheck_{h_k})^{-1}\rcheck^\top_{h_k}W^{\mathrm Q}_{h_k}.
\end{align*}
As $W^{\mathrm Q}_{h_k}\overset{ind}{\sim}N_{|\calf_{h_k}|}(0,\bm I_{|\calf_{h_k}|})$ for all $k \in [d]$,  we have that, conditional on \[\{U^{\rmq}_{h_1},\ldots,U^{\rmq}_{h_d},\bm V_{1},\ldots,\bm V_{m},
\bar{\bm V}_{T,1},\ldots,\newline \bar{\bm V}_{T,m},\wh{\bm D}_1,\ldots,\wh{\bm D}_{m}\},\] we have
 % the following:
\[
\unoise_{h_k} \stackrel{ind}{\sim} N_{r_{h_k}}(0,\varnoise_{h_k}),\quad k\in [d],
\]
where for each $k$,
\[
\varnoise_{h_k} = \frac{(\rcheck^\top_{h_k}\rcheck_{h_k})^{-1}}{N}\wh{\bm D}_{h_k}\frac{(\bv_{T,h_k})^\top_{\calf_{h_k*}}(\bv_{T,h_k})_{\calf_{h_k*}}}{N}\wh{\bm D}_{h_k}\frac{(\rcheck^\top_{h_k}\rcheck_{h_k})^{-1}}{N}.
\]
In view of \eqref{eq:limit-unoise-cov1},
\begin{equation}
	\label{eq:limit-unoise-cov}
\lim_{N\to\infty} \varnoise_{h_k} =  \lambda_{h_k}^{-1}
\bm D_{h_k}^{-1}(\hslvor_{T,h_k})^{-1}\bm D_{h_k}^{-1}.
\end{equation}

Using Theorem \ref{thm:em_bayes_state_evol} again, 
% \nb{similar to \eqref{eq:ci_pt_est}, what is the effect of subsetting with $\calf_{h_k}$? i can understand the following claim being correct if each feature enters the subset with probability $\lambda_{h_k}$ independently. otherwise, it is not clear to me. -zm20240717}
 % for the pseudo Lipschitz function $\varphi_{h_k}(x,v)=xv$,
we obtain that
\begin{align}
    \label{eq:weak_conv_signal}
\lim_{N \rightarrow \infty}\usig_{h_k}
\,\overset{a.s.}{=}\,
U^{\mathrm Q}_{h_k}.
\end{align}
Combining \revzm{\eqref{eq:limit-unoise-cov}}
% \eqref{eq:weak_conv_error}
and \eqref{eq:weak_conv_signal}, we get the stated assertion.

% Now, using Theorem \ref{thm:em_bayes_state_evol} for the pseudo-Lipschitz function $\varphi_{h_k}(x,v)=x^2$, consistency of $\{\wh{\bm D}_{h_k}:k \in [d]\}$ in estimating $\{\bm D_{h_k}:k \in [d]\}$ and Skorohod Representation Theorem, we can show that there exists a random vector $Z_{h_k} \in \R^{r_{h_k}}$ defined on the same probability space as $U^{\rmq}_{h_k}$ such that the entries of $Z_{h_k} \sim N_{r_{h_k}}(0,\bm I_{r_{h_k}})$ and as $N \rightarrow \infty$,
% \begin{align}
%     \label{eq:weak_conv_error}
%     \unoise_{h_k} \overset{a.s.}{\rightarrow} \lambda^{-1/2}_{h_k}\bm D^{-1/2}_{h_k}(\hslvor_{T,h_k})^{-1/2}\bm D^{-1/2}_{h_k} Z_{h_k}.
% \end{align} 

\subsection{Proof of Lemma \ref{lem:glivenko_cantelli}}
Since $\wh \mu = \wh{\mu}_N \xrightarrow{w} \mu$, using arguments similar to Lemma \ref{lem:consistency_of_NPMLE}, we obtain that 
the sequence of probability measures $\{\wh \mu: N \in \mathbb N\}$ is uniformly tight.
% Let us begin by observing that using arguments similar to Lemma \ref{lem:consistency_of_NPMLE} and $\wh \mu \xrightarrow{w} \mu$, as $N \rightarrow \infty$, the sequence of measures $\{\wh \mu: N \in \mathbb N\}$ is uniformly tight.
Hence, for all $\varepsilon>0$, there exists a constant $M_{\mu,\varepsilon}>0$, such that for all $N \in \mathbb N$, 
% we have
\[
\wh \mu(B_{r+\wt r}(0,M_{\mu,\varepsilon})) \ge 1-\frac{\varepsilon}{8} \quad \mbox{and} \quad \mu(B_{r+\wt r}(0,M_{\mu,\varepsilon})) \ge 1-\frac{\varepsilon}{8}.
\]
This implies that, for all $N \in \mathbb N$ and $C \in \mathscr C_{r+\wt r}$,
\begin{equation}
\label{eq:tight}
\begin{aligned}
\wh \mu(C)-\mu(C) &= \wh \mu(B_{r+\wt r}(0,M_{\mu,\varepsilon}) \cap C)-\mu(B_{r+\wt r}(0,M_{\mu,\varepsilon}) \cap C)\\
&~~~ +\wh \mu(B^c_{r+\wt r}(0,M_{\mu,\varepsilon}) \cap C)-\mu(B^c_{r+\wt r}(0,M_{\mu,\varepsilon}) \cap C)\\
& \le \wh \mu(B_{r+\wt r}(0,M_{\mu,\varepsilon}) \cap C)-\mu(B_{r+\wt r}(0,M_{\mu,\varepsilon}) \cap C) + \frac{\varepsilon}{4}.
\end{aligned}
\end{equation}
Thus, it suffices to work with the following collection of sets
\[
\mathscr C_{r+\wt r,M_{\mu,\varepsilon}} = \{C \cap B_{r+\wt r}(0,M_{\mu,\varepsilon}): C \in \mathscr C_{r+\wt r}\}. 
\]
Note that this is a collection of compact and convex subsets of a bounded set in $\R^{r+\wt r}$. 
Since $\mu$ is absolutely continuous with respect to the Lebesgue measure with bounded density, Corollary 2.7.9 of \cite{vanderVaart1996} implies that the collection $\mathscr C_{r+\wt r,M_{\mu,\varepsilon}}$ has a finite bracketing number with respect to $\mu$. 
In other words, for any $\varepsilon'>0$, there exists a natural number, $N_{[\,]}(\mathscr C_{r+\wt r,M_{\mu,\varepsilon}},\varepsilon',\mu)<\infty$ and a collection of pairs of sets, $\mathscr C_{r+\wt r,M_{\mu,\varepsilon},\varepsilon',[\,]}=\left\{(C_{L,1},C_{U,1}),\ldots,(C_{L,N_{[\,]}(\mathscr C_{r+\wt r,M_{\mu,\varepsilon}},\varepsilon',\mu)},C_{U,N_{[\,]}(\mathscr C_{r+\wt r,M_{\mu,\varepsilon}},\varepsilon',\mu)})\right\} \subseteq \mathscr C_{r+\wt r,M_{\mu,\varepsilon}} \times \mathscr C_{r+\wt r,M_{\mu,\varepsilon}}$, such that for any $C \in \mathscr C_{r+\wt r,M_{\mu,\varepsilon}}$, we have pairs of sets $(C_{L,k},C_{U,k}) \in \mathscr C_{r+\wt r,M_{\mu,\varepsilon},\varepsilon',[\,]}$ satisfying
\[
C_{L,k} \subseteq C \subseteq C_{U,k}, \quad \mbox{and} \quad \mu(C_{U,k}\setminus C_{L,k})<\frac{\varepsilon'}{4}.
\]
Consequently, for a set in $C \in \mathscr C_{r+\wt r,M_{\mu,\varepsilon}}$, if its $\varepsilon'$-bracket is given by $(C_{L,k},C_{U,k})$, then we have
\begin{align}
    \wh \mu(C)-\mu(C) &\le \wh \mu(C_{U,k})-\mu(C_{U,k})+\mu(C_{U,k})-\mu(C)\\
    &\le \wh \mu(C_{U,k})-\mu(C_{U,k}) + \frac{\varepsilon'}{4}\\
    &\le \sup\left\{\wh \mu(D_U)-\mu(D_U):(D_L,D_U) \in \mathscr C_{r+\wt r,M_{\mu,\varepsilon},\varepsilon',[\,]}\right\} + \frac{\varepsilon'}{4}.
\end{align}
This implies,
\[ 
\sup_{C \in \mathscr C_{r+\wt r,M_{\mu,\varepsilon}}}\left\{\wh \mu(C)-\mu(C)\right\} \le \sup\left\{\wh \mu(D_U)-\mu(D_U):(D_L,D_U) \in \mathscr C_{r+\wt r,M_{\mu,\varepsilon},\varepsilon',[\,]}\right\} + \frac{\varepsilon'}{4}.
\]
Similarly, we can show that
\[
\inf_{C \in \mathscr C_{r+\wt r,M_{\mu,\varepsilon}}}\left\{\wh \mu(C)-\mu(C)\right\} \ge \inf\left\{\wh \mu(D_L)-\mu(D_L):(D_L,D_U) \in \mathscr C_{r+\wt r,M_{\mu,\varepsilon},\varepsilon',[\,]}\right\} - \frac{\varepsilon'}{4}.
\]
Combining the last two displays and setting $\varepsilon'= \varepsilon$,
we obtain
\begin{align}
\sup_{C \in \mathscr C_{r+\wt r,M_{\mu,\varepsilon}}}\left|\wh \mu(C)-\mu(C)\right| &\le \max\bigg\{\sup_{(D_L,D_U) \in \mathscr C_{r+\wt r,M_{\mu,\varepsilon},\varepsilon,[\,]}}\left\{\wh \mu(D_U)-\mu(D_U)\right\},\\
& \hskip 5em -\inf_{(D_L,D_U) \in \mathscr C_{r+\wt r,M_{\mu,\varepsilon},\varepsilon,[\,]}}\left\{\wh \mu(D_L)-\mu(D_L)\right\}\bigg\}+\frac{\varepsilon}{2}.
\end{align}
Since, $\mu$ is absolutely continuous with respect to Lebesgue measure and supported on $\R^{r+\wt r}$, it assigns zero mass to the boundaries of compact sets. Consequently, as $|\mathscr C_{r+\wt r,M_{\mu,\varepsilon},\varepsilon,[\,]}|<\infty$ and $\wh \mu \xrightarrow{w} \mu$, 
% almost surely, 
we have
\[
\limsup_{N \rightarrow \infty} \sup_{C \in \mathscr C_{r+\wt r,M_{\mu,\varepsilon}}}\left|\wh \mu(C)-\mu(C)\right| \le \frac{3\varepsilon}{4}, \quad \mbox{almost surely.}
\]
Substituting the above inequality in \eqref{eq:tight}, we almost surely get
\[
\limsup_{N \rightarrow \infty} \sup_{C \in \mathscr C_{r+\wt r}}\left|\wh \mu(C)-\mu(C)\right| \le \varepsilon.
\]
Since, $\varepsilon>0$ is arbitrary, the conclusion of the lemma follows.

\subsection{Proof of Theorem \ref{thm:pred_query_sets_full}}
Let us observe that by definition, for $y^{\rmq} \in \mathbb R^{r^{\rmq}}$ and \newline $\wt y^{\rmq} \in \mathbb R^{\wt r^\rmq}$ satisfying \eqref{eq:observed_predictors}, we have
\begin{align}
\label{eq:end_points_pred_int}
    \mathbb P_{\wh \mu}\left[ (U^\rmq_1,\ldots,U^\rmq_m,\wt{U}^\rmq_1,\ldots,\wt{U}^\rmq_{\wt m}) \in \calc_{\alpha}\,\Big|\,Y^{\rmq}=y^{\rmq},\wt Y^{\rmq}=\wt y^{\rmq}\right] \ge 1-\alpha,
\end{align}
where $Y^{\rmq},\wt Y^{\rmq}$ are defined in \eqref{eq:pred_y_q} with prior $\wh{\mu}$.
Let us define the function:
\[
f(x;B,\pi):=\mathbb P_{\pi}\left[ (U^\rmq_1,\ldots,U^\rmq_m,\wt{U}^\rmq_1,\ldots,\wt{U}^\rmq_{\wt m}) \in B\,\Big|\,(Y^{\rmq},\wt Y^{\rmq})=x\right].
\]
In the above definition, $x \in \R^{r^\rmq+\wt r^\rmq}$, $B \subset \R^{r+\wt r}$ and $(Y^{\rmq},\wt Y^{\rmq})$ satisfy  \eqref{eq:pred_y_q} with 
\[
(U^\rmq_{h_1},\ldots,U^\rmq_{h_d},\wt U^\rmq_{\ell_1},\ldots,\wt U^\rmq_{\ell_{\wt d}}) \sim \pi_{ h_1,\ldots,h_d;\ell_1,\ldots,\ell_{\wt d}}.
\]
Here, $\pi_{ h_1,\ldots,h_d;\ell_1,\ldots,\ell_{\wt d}}$ is the marginal distribution of the modalities $h_1,\ldots,h_d$ and $\ell_1,\ldots,\ell_{\wt d}$ when the entire set of latent factors $(U^\rmq_1,\ldots,U^\rmq_m,\wt{U}^\rmq_1,\ldots,\wt{U}^\rmq_{\wt m}) \sim \pi$. 
% Furthermore, for any measurable function $\mathtt T(Y^{\rmq},\wt Y^{\rmq})$, 
% \[
% \mathbb E_{\pi}[\mathtt T(Y^{\rmq},\wt Y^{\rmq})]:=\mathbb E_{(Y^{\rmq},\wt Y^{\rmq})}[\mathtt T(Y^{\rmq},\wt Y^{\rmq})].
% \]
% \nb{what is the meaning of this definition? i am confused. -zm20240719}
Now, Theorem \ref{thm:point_pred}, and Skorokhod's representation theorem implies that 
% observe that from Theorem \ref{thm:point_pred}, 
$y^{\rmq}=s^\rmq+e^\rmq$, where $s^\rmq$ is a sample from the distribution of $Y^\rmq$ and $e^\rmq$ is a realization of a random variable $E^\rmq = (E^\rmq_{h_1},\ldots,E^\rmq_{h_d}) \in \R^{r^\rmq}$ defined as follows:

\[
E^\rmq_{h_k} = \left(\frac{1}{N}(\rcheck^\top_{h_k}\rcheck_{h_k})^{-1}\rcheck^\top_{h_k}(\bm V_{h_k})^\top_{\calf_{h_k*}}\bm D_{h_k}-\bm I_{r_{h_k}}\right)U^\rmq_{h_k}+\bm T_{h_k}Z_{h_k}.
\]
In the above definition, the random matrix $\bm T_{h_k} \xrightarrow{a.s} 0$ and $Z_{h_k} \sim N_{r_{h_k}}(0,\bm I_{r_{h_k}})$. 
% In other words, 
Conditional on the matrices $\{\bar{\bm V}_{T,1},\ldots,\bar{\bm V}_{T,m},\wh{\bm D}_1,\ldots,\wh{\bm D}_{m},\bm V_1,\ldots,\bm V_m\}$,
% \revzm{and $(U^\rmq_1,\ldots,U^\rmq_m,\wt{U}^\rmq_1,\ldots,\wt{U}^\rmq_{\wt m})$},
\begin{align}
\label{eq:def_ermq}
E^\rmq_{h_k} \sim N\left(\left(\frac{1}{N}(\rcheck^\top_{h_k}\rcheck_{h_k})^{-1}\rcheck^\top_{h_k}(\bm V_{h_k})^\top_{\calf_{h_k*}}\bm D_{h_k}-\bm I_{r_{h_k}}\right)U^\rmq_{h_k},\bm T_{h_k} \revzms{{\bm T}_{h_k}^\top} \right),
\end{align}
and all quantities defining mean and variance in the last display, except for $U^\rmq_{h_k}$, are deterministic after conditioning.
However, from Lemma \ref{lem:consistency_nuisance} and Theorem \ref{thm:em_bayes_state_evol}, we can conclude that 
\[
% \left(
\frac{1}{N}(\rcheck^\top_{h_k}\rcheck_{h_k})^{-1}\rcheck^\top_{h_k}(\bm V_{h_k})^\top_{\calf_{h_k*}}\bm D_{h_k}-\bm I_{r_{h_k}}
% \right) 
\xrightarrow{a.s} 0.
\]
Henceforth, we shall condition on the the reference data \[\{\bar{\bm V}_{T,1},\ldots,\bar{\bm V}_{T,m},\wh{\bm D}_1,\ldots,\wh{\bm D}_{m},\bm V_1,\ldots,\bm V_m\}.\] 
Since, almost surely $\wh\mu \stackrel{w}{\to} \mu$, for a given $\varepsilon>0$, we can get a constant $K>0$ (independent of $N$) such that
\begin{align}
\label{eq:tight_e}
\mathbb P_{\wh \mu}[\|(U^\rmq_{h_1},\ldots,U^\rmq_{h_d})\|>K]
< \frac{\varepsilon}{16} \quad \mbox{a.s.}
\quad \mbox{and} \quad 
\mathbb P_{\mu}[\|(U^\rmq_{h_1},\ldots,U^\rmq_{h_d})\|>K]<\frac{\varepsilon}{16}.
\end{align}
Therefore, by \eqref{eq:def_ermq}, \eqref{eq:tight_e} and the same $\varepsilon>0$ chosen in \eqref{eq:tight_e}, we can get a constant $\wb M>0$ (independent of $N$) such that 
\begin{align}
\label{eq:bound_error_pred_set}
\mathbb P_{\wh \mu}[\|E^\rmq\| > \wb M]< \frac{\varepsilon}{16} 
\quad \mbox{a.s.}
\quad \mbox{and} \quad 
\mathbb P_{\mu}[\|E^\rmq\| > \wb M]< \frac{\varepsilon}{16}.
% , \quad \mbox{almost surely}
\end{align}
Henceforth, we shall use \[\mathbb P_{\wh \mu}\left[ (U^\rmq_1,\ldots,U^\rmq_m,\wt{U}^\rmq_1,\ldots,\wt{U}^\rmq_{\wt m}) \in \calc_{\alpha}\,\Big|\,Y^{\rmq}+E^\rmq,\wt Y^{\rmq}\right]\] as a (slightly abused) alternative notation for $f((Y^{\rmq}+E^\rmq,\wt Y^{\rmq});\calc_{\alpha},\wh \mu)$ and \[\mathbb P_{\mu}\left[ (U^\rmq_1,\ldots,U^\rmq_m,\wt{U}^\rmq_1,\ldots,\wt{U}^\rmq_{\wt m}) \in \calc_{\alpha}\,\Big|\,Y^{\rmq}+E^\rmq,\wt Y^{\rmq}\right]\]
% as a notation 
for $f((Y^{\rmq}+E^\rmq,\wt Y^{\rmq});\calc_{\alpha},\mu)$.
From \eqref{eq:end_points_pred_int}, we have
\[
\mathbb E_{\wh \mu}\left[\mathbb P_{\wh \mu}\left[ (U^\rmq_1,\ldots,U^\rmq_m,\wt{U}^\rmq_1,\ldots,\wt{U}^\rmq_{\wt m}) \in \calc_{\alpha}\,\Big|\,Y^{\rmq}+E^\rmq,\wt Y^{\rmq}\right]\right] \ge 1-\alpha.
\]
We shall show that, as $N \rightarrow \infty$,
\begin{equation}
\label{eq:plug_in_bias}
\begin{aligned}
&\bigg|\mathbb E_{\wh \mu}\left[\mathbb P_{\wh \mu}\left[ (U^\rmq_1,\ldots,U^\rmq_m,\wt{U}^\rmq_1,\ldots,\wt{U}^\rmq_{\wt m}) \in \calc_{\alpha}\,\Big|\,Y^{\rmq}+E^\rmq,\wt Y^{\rmq}\right]\right]\\
&\hskip 10em -\mathbb E_{\wh \mu}\left[\mathbb P_{\wh \mu}\left[ (U^\rmq_1,\ldots,U^\rmq_m,\wt{U}^\rmq_1,\ldots,\wt{U}^\rmq_{\wt m}) \in \calc_{\alpha}\,\Big|\,Y^{\rmq},\wt Y^{\rmq}\right]\right]\bigg| \xrightarrow{a.s} 0.
\end{aligned}    
\end{equation}
This implies
\[
\limsup_{n \rightarrow \infty}\mathbb P_{\wh \mu}\left[(U^\rmq_1,\ldots,U^\rmq_m,\wt{U}^\rmq_1,\ldots,\wt{U}^\rmq_{\wt m}) \in \calc_{\alpha}\right] \ge 1-\alpha, \quad \mbox{a.s.}
\]

Now since $\wh \mu \xrightarrow{w} \mu$, the sequences of random variables $\{(Y^\rmq,\wt Y^\rmq)\}$ and $\{(Y^\rmq+E^\rmq,\wt Y^\rmq)\}$ are tight. This implies, for the $\varepsilon>0$ chosen in \eqref{eq:tight_e}, there exist compact and convex sets $M \in \R^{r^\rmq+\wt r^\rmq}$ and $\wt M \in \R^{r+\wt r}$ such that
\begin{equation}
\label{eq:tightness}    
\begin{aligned}
    &\mathbb P_{\wh \mu}[(Y^\rmq,\wt Y^\rmq) \notin M] <\frac{\varepsilon}{16}, \quad \mathbb P_{\wh \mu}[(Y^\rmq+E^\rmq,\wt Y^\rmq) \notin M] <\frac{\varepsilon}{16},\\ 
    &\mathbb P_{\mu}[(Y^\rmq,\wt Y^\rmq) \notin M] <\frac{\varepsilon}{16},\quad
    \mathbb P_{\mu}[(Y^\rmq+E^\rmq,\wt Y^\rmq) \notin M] <\frac{\varepsilon}{16},\\ 
  & \wh \mu(\wt M^c)<\frac{\varepsilon}{16}, \quad \mbox{and} \quad \mu(\wt M^c)<\frac{\varepsilon}{16}.
\end{aligned}
\end{equation}
In the above definition, $\mathbb P_{\mu}[(Y^\rmq,\wt Y^\rmq) \in \cdot]$ refers to the probability of $(Y^\rmq,\wt Y^\rmq)$ when in the definition \eqref{eq:pred_y_q}, the latent factors $(U^\rmq_1,\ldots,U^\rmq_m,\wt{U}^\rmq_1,\ldots,\wt{U}^\rmq_{\wt m}) \sim \mu$. 
By properties of multivariate Gaussian distributions, 
% observe that 
for any compact set $M \in \R^{r^\rmq+\wt r^\rmq}$ and $t \in \R^{r^\rmq}$ satisfying $\|t\| \le \wb M$ (with $\wb M$ defined in \eqref{eq:bound_error_pred_set}), the function
\begin{align*}
&f((y+t,\wt y);B,\pi)\\
&\hskip 6em:=\mathbb P_{\pi}\left[ (U^\rmq_1,\ldots,U^\rmq_m,\wt{U}^\rmq_1,\ldots,\wt{U}^\rmq_{\wt m}) \in B\,\Big|\,(Y^{\rmq},\wt Y^{\rmq})=(y+t,\wt y)\right]\mathbbm 1_{[(y,\wt y) \in M]},
\end{align*}
is uniformly Lipschitz and bounded as a function of $(y,\wt y) \in M$. 
% \nb{this may not be uniformly Lipschitz due to the presence of the indicator function at the end. -zm20240724}
% \nb{what is the role of $t$ here? is the uniformly Lipschitz claim true for every fixed $t$? -zm202040719}
By Corollary 2.7.2 of \cite{vanderVaart1996}, the class of functions
% \[
% \mathcal F_M =\bigg\{f(\revzm{(y+t,{\wt y})},B,\pi)\mathbbm 1_{[(y,\wt y) \in M,\,\|t\| \le \wb M]}:\pi \in \{\wh \mu,\mu\},\,B \subset \wt M,\,t \in \R^{r^\rmq}\bigg\},
% \]
\[
\mathcal F_M =\bigg\{f((y+t,{\wt y}); B,\pi)\mathbbm 1_{[(y,\wt y) \in M]}:\pi \in \{\wh \mu,\mu\},\,B \subset \wt M,\, \|t\|\leq \wb{M} \bigg\},
\]
has a finite bracketing number.
% \nb{double check after addressing lipschitz-ness of the function -zm20240724}.
Therefore, proceeding as in the proof of Lemma \ref{lem:glivenko_cantelli}, we can show that
\begin{align}
\label{eq:first_plug_in_bias_removal}
% \begin{aligned}
% &
&\sup_{\substack{\pi \in \{\wh \mu,\mu\}\\B \subset \wt M, \|t\| \le \wb M}}\bigg|\mathbb E_{\wh \mu}\left[f((Y^\rmq+t,\wt Y^\rmq),B,\pi)\mathbbm 1_{[(Y^\rmq,\wt Y^\rmq) \in M]}\right]\\
&\hskip 10em-\mathbb E_{\mu}\left[f((Y^\rmq+t,\wt Y^\rmq),B,\pi)\mathbbm 1_{[(Y^\rmq,\wt Y^\rmq) \in M]}\right]\bigg|\xrightarrow{a.s}0.
% \end{aligned}    
\end{align}
This implies
\begin{equation}
    \label{eq:t_1_zero}
\begin{aligned}
&\bigg|\mathbb E_{\wh \mu}\left[f((Y^\rmq+E^\rmq,\wt Y^\rmq),\calc_\alpha,\wh \mu)\mathbbm 1_{[(Y^\rmq,\wt Y^\rmq) \in M,\, \|E^\rmq\| \le \wb M]}\right]\\
&\hskip 6em -\mathbb E_{\mu}\left[f((Y^\rmq+E^\rmq,\wt Y^\rmq),\calc_\alpha,\wh \mu)\mathbbm 1_{[(Y^\rmq,\wt Y^\rmq) \in M,\, \|E^\rmq\| \le \wb M]}\right]\bigg|\xrightarrow{a.s}0.
\end{aligned}
\end{equation}
Similarly, we can show that
\begin{equation}
\label{eq:t_3_zero}    
\begin{aligned}
&\bigg|\mathbb E_{\wh \mu}\left[f((Y^\rmq,\wt Y^\rmq),\calc_\alpha,\wh \mu)\mathbbm 1_{[(Y^\rmq,\wt Y^\rmq) \in M]}\right]
\\
& \hskip 5em -\mathbb E_{\mu}\left[f((Y^\rmq,\wt Y^\rmq),\calc_\alpha,\wh \mu)\mathbbm 1_{[(Y^\rmq,\wt Y^\rmq) \in M]}\right]\bigg|\xrightarrow{a.s}0.
\end{aligned}
\end{equation}
Now, let us consider the following:
\begin{equation}
    \label{eq:grand_decomp}
\begin{aligned}
&\bigg|\mathbb E_{\wh \mu}\left[\mathbb P_{\wh \mu}\left[ (U^\rmq_1,\ldots,U^\rmq_m,\wt{U}^\rmq_1,\ldots,\wt{U}^\rmq_{\wt m}) \in \calc_{\alpha}\,\Big|\,Y^{\rmq}+E^\rmq,\wt Y^{\rmq}\right]\right]\\
&\hskip 10em -\mathbb E_{\wh \mu}\left[\mathbb P_{\wh \mu}\left[ (U^\rmq_1,\ldots,U^\rmq_m,\wt{U}^\rmq_1,\ldots,\wt{U}^\rmq_{\wt m}) \in \calc_{\alpha}\,\Big|\,Y^{\rmq},\wt Y^{\rmq}\right]\right]\bigg|\\
&= \bigg|\mathbb E_{\wh \mu}\left[\mathbb P_{\wh \mu}\left[ (U^\rmq_1,\ldots,U^\rmq_m,\wt{U}^\rmq_1,\ldots,\wt{U}^\rmq_{\wt m}) \in \calc_{\alpha}\,\Big|\,Y^{\rmq}+E^\rmq,\wt Y^{\rmq}\right]\mathbbm 1_{[(Y^\rmq,\wt Y^\rmq) \notin M]}\right]\\
& \hskip 0.5em+\mathbb E_{\wh \mu}\left[\mathbb P_{\wh \mu}\left[ (U^\rmq_1,\ldots,U^\rmq_m,\wt{U}^\rmq_1,\ldots,\wt{U}^\rmq_{\wt m}) \in \calc_{\alpha}\,\Big|\,Y^{\rmq}+E^\rmq,\wt Y^{\rmq}\right]\mathbbm 1_{[(Y^\rmq,\wt Y^\rmq) \in M]}\mathbbm 1_{[\|E^\rmq\| > \wb{M}]}\right]\\
& \hskip 1em+\mathbb E_{\wh \mu}\left[\mathbb P_{\wh \mu}\left[ (U^\rmq_1,\ldots,U^\rmq_m,\wt{U}^\rmq_1,\ldots,\wt{U}^\rmq_{\wt m}) \in \calc_{\alpha}\,\Big|\,Y^{\rmq}+E^\rmq,\wt Y^{\rmq}\right]\mathbbm 1_{[(Y^\rmq,\wt Y^\rmq) \in M, \|E^\rmq\| \leq \wb{M}]}\right]\\
&\hskip 2em -\mathbb E_{\wh \mu}\left[\mathbb P_{\wh \mu}\left[ (U^\rmq_1,\ldots,U^\rmq_m,\wt{U}^\rmq_1,\ldots,\wt{U}^\rmq_{\wt m}) \in \calc_{\alpha}\,\Big|\,Y^{\rmq},\wt Y^{\rmq}\right]\mathbbm 1_{[(Y^\rmq,\wt Y^\rmq) \in M]}\right]  \\
&\hskip 5em -\mathbb E_{\wh \mu}\left[\mathbb P_{\wh \mu}\left[ (U^\rmq_1,\ldots,U^\rmq_m,\wt{U}^\rmq_1,\ldots,\wt{U}^\rmq_{\wt m}) \in \calc_{\alpha}\,\Big|\,Y^{\rmq},\wt Y^{\rmq}\right]\mathbbm 1_{[(Y^\rmq,\wt Y^\rmq) \notin M]}\right] \bigg|\\
& \hskip 2em\le 2\,\mathbb P_{\wh \mu}[(Y^{\rmq},\wt Y^{\rmq}) \notin M]
% +\mathbb P_{\mu}[(Y^{\rmq},\wt Y^{\rmq}) \notin M]
+\mathbb P_{\wh \mu}[\|E^\rmq\|>\wb M]
% +\mathbb P_{\mu}[\|E^\rmq\|>\wb M]
+T_1+T_2+T_3\\
% + T_1}\\
& \hskip 2em\le \frac{\varepsilon}{4}+T_1
+T_2+T_3,
\end{aligned}
\end{equation}
where 
% \nb{start here! -20240719}  
% \revzm{
% \begin{align*}
% T_1 & = \bigg| 
% \mathbb E_{\wh \mu}\left[\mathbb P_{\wh \mu}\left[ (U^\rmq_1,\ldots,U^\rmq_m,\wt{U}^\rmq_1,\ldots,\wt{U}^\rmq_{\wt m}) \in \calc_{\alpha}\,\Big|\,Y^{\rmq}+E^\rmq,\wt Y^{\rmq}\right]\mathbbm 1_{[(Y^\rmq,\wt Y^\rmq) \in M, \|E^\rmq\| \leq \wb{M}]}\right]\\
% &\hskip 6em
% -\mathbb E_{\wh \mu}\left[\mathbb P_{\wh \mu}\left[ (U^\rmq_1,\ldots,U^\rmq_m,\wt{U}^\rmq_1,\ldots,\wt{U}^\rmq_{\wt m}) \in \calc_{\alpha}\,\Big|\,Y^{\rmq},\wt Y^{\rmq}\right]\mathbbm 1_{[(Y^\rmq,\wt Y^\rmq) \in M]}\right]
% \bigg|.
% \end{align*}}
% Furthermore, $T_1$ can be decomposed into
 \begin{align*}
 T_1 & = \bigg|\mathbb E_{\wh \mu}\left[f((Y^\rmq+E^\rmq,\wt Y^\rmq),\calc_\alpha,\wh \mu)\mathbbm 1_{[(Y^\rmq,\wt Y^\rmq) \in M,\, \|E^\rmq\| \le \wb M]}\right]\\
 & \hskip 6em -\mathbb E_{\mu}\left[f((Y^\rmq+E^\rmq,\wt Y^\rmq),\calc_\alpha,\wh \mu)\mathbbm 1_{[(Y^\rmq,\wt Y^\rmq) \in M,\, \|E^\rmq\| \le \wb M]}\right]\bigg|,\\
 T_2 & = \bigg|\mathbb E_{\mu}\left[f((Y^\rmq+E^\rmq,\wt Y^\rmq),\calc_\alpha,\wh \mu)\mathbbm 1_{[(Y^\rmq,\wt Y^\rmq) \in M,\, \|E^\rmq\| \le \wb M]}\right]\\
 &\hskip 10em-\mathbb E_{\mu}\left[f((Y^\rmq,\wt Y^\rmq),\calc_\alpha,\wh \mu)\mathbbm 1_{[(Y^\rmq,\wt Y^\rmq) \in M]}\right]\bigg|,    \\
 T_3 & = \bigg|\mathbb E_{\wh \mu}\left[f((Y^\rmq,\wt Y^\rmq),\calc_\alpha,\wh \mu)\mathbbm 1_{[(Y^\rmq,\wt Y^\rmq) \in M]}\right]-\mathbb E_{\mu}\left[f((Y^\rmq,\wt Y^\rmq),\calc_\alpha,\wh \mu)\mathbbm 1_{[(Y^\rmq,\wt Y^\rmq) \in M]}\right]\bigg|.
 \end{align*}
 By \eqref{eq:t_1_zero}, we can conclude that $T_1 \xrightarrow{a.s}0$ and by \eqref{eq:t_3_zero}, we can conclude that $T_3 \xrightarrow{a.s}0$. Next, by the properties of multivariate Gaussian distributions and arguments similar to those leading to \eqref{eq:t_1_zero}, 
% \nb{i think we are still in the middle of proving \eqref{eq:plug_in_bias}. -zm20240719}, 
we can also show that, as $N \rightarrow \infty$,
\begin{align*}
&\bigg|f((Y^\rmq+E^\rmq,\wt Y^\rmq),\calc_\alpha,\wh \mu)\mathbbm 1_{[(Y^\rmq,\wt Y^\rmq) \in M,\, \|E^\rmq\| \le \wb M]}\\
&\hskip 10em-f((Y^\rmq+E^\rmq,\wt Y^\rmq),\calc_\alpha,\mu)\mathbbm 1_{[(Y^\rmq,\wt Y^\rmq) \in M,\, \|E^\rmq\| \le \wb M]}\bigg| \xrightarrow{a.s} 0,\\
&\bigg|f((Y^\rmq,\wt Y^\rmq),\calc_\alpha,\wh \mu)\mathbbm 1_{[(Y^\rmq,\wt Y^\rmq) \in M]}-f((Y^\rmq,\wt Y^\rmq),\calc_\alpha,\mu)\mathbbm 1_{[(Y^\rmq,\wt Y^\rmq) \in M]}\bigg| \xrightarrow{a.s} 0.
\end{align*}
Finally, as $\|E^\rmq\| \xrightarrow{a.s}0$, by the Dominated Convergence theorem and Lemma \ref{lem:consistency_nuisance}, we conclude that
\[
\bigg|f((Y^\rmq+E^\rmq,\wt Y^\rmq),\calc_\alpha,\mu)\mathbbm 1_{[(Y^\rmq,\wt Y^\rmq) \in M,\, \|E^\rmq\| \le \wb M]}-f((Y^\rmq,\wt Y^\rmq),\calc_\alpha,\mu)\mathbbm 1_{[(Y^\rmq,\wt Y^\rmq) \in M]}\bigg| \xrightarrow{a.s} 0.
\]
Therefore,
\[
\bigg|f((Y^\rmq+E^\rmq,\wt Y^\rmq),\calc_\alpha,\wh \mu)\mathbbm 1_{[(Y^\rmq,\wt Y^\rmq) \in M,\, \|E^\rmq\| \le \wb M]}-f((Y^\rmq,\wt Y^\rmq),\calc_\alpha,\wh\mu)\mathbbm 1_{[(Y^\rmq,\wt Y^\rmq) \in M]}\bigg| \xrightarrow{a.s} 0.
\]
Furthermore, since the functions in the above equation are uniformly dominated by $1$, by the Dominated Convergence Theorem, we have 
% \nb{i think the LHS of the following display is non-random... -zm20240719}
\begin{align}
\label{eq:t_2_zero}
&\bigg|\mathbb E_{\mu}\left[f((Y^\rmq+E^\rmq,\wt Y^\rmq),\calc_\alpha,\wh \mu)\mathbbm 1_{[(Y^\rmq,\wt Y^\rmq) \in M,\, \|E^\rmq\| \le \wb M]}\right]\\
&\hskip 10em-\mathbb E_{\mu}\left[f((Y^\rmq,\wt Y^\rmq),\calc_\alpha,\wh \mu)\mathbbm 1_{[(Y^\rmq,\wt Y^\rmq) \in M]}\right]\bigg| \rightarrow 0,
\end{align}
almost surely, given $\{\bar{\bm V}_{T,1},\ldots,\bar{\bm V}_{T,m},\wh{\bm D}_1,\ldots,\wh{\bm D}_{m},\bm V_1,\ldots,\bm V_m\}$. In other words, $T_2\stackrel{a.s.}{\to} 0$ conditionally.
% \revzm{Therefore, combining the above arguments, $T_1\stackrel{a.s.}{\to} 0$ conditionally.}
Since, $\varepsilon$ chosen in \eqref{eq:bound_error_pred_set} is arbitrary, by \eqref{eq:grand_decomp}, we can get \eqref{eq:plug_in_bias} conditioning on 
\[
\{\bar{\bm V}_{T,1},\ldots,\bar{\bm V}_{T,m},\wh{\bm D}_1,\ldots,\wh{\bm D}_{m},\bm V_1,\ldots,\bm V_m\}.\]
As the conclusion holds regardless of the conditioning event, \eqref{eq:plug_in_bias} holds marginally.
Now, as $\calc_{\alpha} \in \mathscr{C}_{r+\wt r}$, Lemma \ref{lem:glivenko_cantelli} 
% almost surely 
implies,
\begin{align}
\label{eq:target_result}
&|\mu(\calc_{\alpha})-\wh \mu( \calc_{\alpha})| \xrightarrow{a.s} 0, \quad \mbox{as $N \rightarrow \infty$,}
\end{align}
conditioned on $\{\bar{\bm V}_{T,1},\ldots,\bar{\bm V}_{T,m},\wh{\bm D}_1,\ldots,\wh{\bm D}_{m},\bm V_1,\ldots,\bm V_m\}$. Since the right-hand side of \eqref{eq:target_result} is non-random, \eqref{eq:target_result} also holds unconditionally.
Hence,
we almost surely have
\begin{align*}
&\lim_{N \rightarrow \infty}\mathbb P_{\mu}\left[(U^\rmq_1,\ldots,U^\rmq_m,\wt{U}^\rmq_1,\ldots,\wt{U}^\rmq_{\wt m}) \in \calc_{\alpha}\right]\\
& = \lim_{N \rightarrow \infty} \mu(\calc_{\alpha})
\ge\lim_{N \rightarrow \infty}\wh \mu(\calc_{\alpha})+\lim_{N \rightarrow \infty} \revzm{|}\wh \mu(\calc_{\alpha})-\mu(\calc_{\alpha})\revzm{|}\ge 1-\alpha.
\end{align*}
Again as, $\mathbb P_{\mu}\left[(U^\rmq_1,\ldots,U^\rmq_m,\wt{U}^\rmq_1,\ldots,\wt{U}^\rmq_{\wt m}) \in \calc_{\alpha}\right]$ is a non-random quantity, our result follows.

\section{Choice of prior classes}
%\nbsag{to review at end}
\label{sec:npmle}
In this section, we provide examples of two classes of prior distributions that can be chosen as $\mathcal P$ and $\mathcal P_{\nu_h}$.

\subsection{Gaussian mixture models}
\label{gmm}

We can choose $\mathcal P$ in \eqref{eq:emp_bayes_1} and each $\mathcal P_{\nu_h}$ in \eqref{eq:emp_bayes_2} to be the set of distributions on $\mathbb R^{r+\wt r}$ and $\mathbb R^{r_h}$ respectively, which are absolutely continuous with respect to the Lebesgue measure and have bounded densities. 
From \cite[Chapter 3]{GoodBengCour16}, we know that the class of Gaussian mixture distributions with a large number of components can approximate any distribution up to negligible error. 
Inspired by this fact, 
% for implementation purpose, 
we can set $\mathcal P$ and all $\mathcal P_{\nu_h}$'s as collections of Gaussian mixture distributions with large numbers of components. 
The maximum number of components in each class can be specified according to the size of the data and its complexity.
In other words, 
% an
% % our 
% approximating 
% class of priors for 
we can set $\mathcal{P}$ as
% \nb{when putting $\widehat{~}$ on top of the classes? the notation does not make sense to me. -zm20240801}
\begin{equation}
    \label{eq:prior-class-mu}
    {\mathcal{P}} = \left\{\sum_{k=1}^{K}\pi_k\,N_{r+\wt r}(m_k,\bm \Sigma_k): m_k \in \R^{r+\wt r},\,\bm \Sigma_k \in \mathbb S^{r+\wt r}_+,\,\pi_k \geq 0,\, \sum_{k=1}^{K}\pi_k=1\; \mbox{for $k \in [K]$}\right\},
\end{equation}
and $\mathcal P_{\nu_h}$ for each $h \in [m]$ as
% Similarly, for each $h \in [m]$, we can approximate $\mathcal P_{\nu_h}$ by
\begin{align}
    \label{eq:prior-class-nu-h}
    {\mathcal{P}}_{\nu_h} &= \left\{\sum_{k=1}^{K_h}\pi_{k,h}\,N_{r_h}(m_{k,h},\bm \Sigma_{k,h}): m_{k,h} \in \R^{r_h},\,\bm \Sigma_{k,h} \in \mathbb S^{r_h}_+,\,\pi_{k,h} \geq 0,\, \right.\\
    &\hskip 5em\left.\sum_{k=1}^{K_h}\pi_{k,h}=1\; \mbox{for $k_h \in [K_h]$}\right\}.
\end{align}

By working with these prior classes,
% $\wh{\mathcal{P}}$ and $\wh{\mathcal{P}}_{\nu_h}$ for $h \in [m]$, 
estimating $\mu$ and any $\nu_h$ reduces to
estimating mixing proportions and component means and covariances in a multivariate Gaussian mixture.

If $r_h$ for some $h \in [m]$ is very large, we can model $\bm \Sigma_k$ and $\bm \Sigma_{k,h}$ by diagonal matrices. 
In such situations, it might be essential to have more components in the Gaussian mixture to account for simplification in the covariance structure. 
% Overall, we may need to estimate fewer parameters, and consequently, we may have greater estimation accuracy.

\subsection{Nonparametric maximum likelihood estimators}
An alternative approach for estimating the priors $\mu$ and $\{\nu_h:h\in[m]\}$ is to consider their nonparametric maximum likelihood estimators (NPMLEs). 
This approach was advocated in \cite{wang_zhong_fan}. 
It is important to observe that, this estimator can suffer from the ``curse of dimensionality" if $r+\wt r$, or any of the $r_h$, is large. However, if the signal strengths $(\bm D_h)_{ii}$ for $i \in [r_h]$ and $h \in [m]$ are sufficiently high, such ``curse of dimensionality can be mitigated. 
For constructing the NPMLEs, we approximate $\mathcal P$ and all of the $\mathcal P_{\nu_h}$'s using discrete distributions with an appropriately chosen support.
In particular, we can set the support of $\mu$, denoted by $\mathcal S_\mu \subseteq \R^{r+\wt{r}}$, as
\begin{align}
\mathcal{S}_\mu &:= \left\{((\hslmp_{0,1})^{-1}(\upca_{0,1})_{i*},\ldots,(\hslmp_{0,m})^{-1}(\upca_{0,m})_{i*},\hatl^{-1}_1(\ldm_1)_{i*},\ldots,\right.\\
&\hskip 25em \left.\hatl^{-1}_{\wt m}(\ldm_{\wt m})_{i*}): i \in [N]
\right\}.
\end{align}
In other words, each point in the support of $\mu$ is one row in the 
% column-concatenated 
matrix
\begin{equation*}
\begin{bmatrix}
    (\hslmp_{0,1})^{-1}\upca_{0,1} & \ldots & (\hslmp_{0,m})^{-1}\upca_{0,m} & 
    \hatl^{-1}_1 \ldm_1 & \ldots & \hatl^{-1}_{\wt m}\ldm_{\wt m}
\end{bmatrix} \in \mathbb{R}^{n\times (r + \wt{r} )}.
\end{equation*}
Then, we set
% our approximating class of priors will be
\begin{equation}
    \label{eq:prior-class-mu_1}
    {\mathcal{P}} = \{\mbox{all prior $\mu$ supported on $\mathcal{S}_{\mu}$}\}.
\end{equation}
Similarly, for each $h \in [m]$, the support of $\nu_h$, denoted by $\mathcal S_{\nu_h} \subset \R^{r_h}$, can be set at 
\[
\mathcal{S}_{\nu_h}:= \left\{(\hsrmp_{0,h})^{-1}(\vpca_{0,h})_{i*}: i \in [p_h] \right\}.
\]
Thus, each point in the support of $\nu_h$ is a row in $(\hsrm_{0,h})^{-1}{\bm V}_{0,h}$, and we then set
% can approximate $\mathcal P_{\nu_h}$ by
\begin{equation}
    \label{eq:prior-class-nu-h_1}
    {\mathcal{P}}_{\nu_h} = \{\mbox{all prior $\nu_h$ supported on $\mathcal{S}_{\nu_h}$}\}.
\end{equation}
By working with these discrete priors with a fixed support, estimating $\mu$ and each $\nu_h$ reduces to
estimating the mixing proportions in a mixture of degenerate Gaussians where the number of components equals the number of points in support. While these priors do not satisfy Assumption 5.1, yet because of the discrete structure the same theoretical conclusions can also be proven for them.

\subsection{Effect of the numbers of retained principal components on atlas building in TEA-seq data}
\revsn{In order to study the impact of the choice of the number of principal components $r_1$ and $r_2$ used to approximate the signals in the high-dimensional modalities (ATAC and RNA) on the quality of the atlas constructed by Algorithm 1, we considered an experiment where we built the multimodal atlases following the technique outlined in Section 4  while varying the numbers of principal components in the high-dimensional modalities as follows: 10 ATAC PCs and 15 RNA PCs, 15 ATAC PCs and 20 RNA PCs, and 20 ATAC PCs and 20 RNA PCs.
We assessed the quality of the atlases in terms of the four clustering metrics, namely, Adjusted Rand Index, Average Silhouette Score, cLISI, and V-measure. We summarize our findings in Table \ref{tab:clustering_tea_seq_ablation}. The UMAP 
visualization of the atlases constructed in the three settings is shown in Figure \ref{fig:tea_seq_ablation}. 
Both the UMAP visualization and the clustering metrics suggest that the quality of the constructed atlas does not vary drastically with the choice of the number of principal components,
% of the high-dimensional modalities, 
after the leading principal components explaining the maximum variability in the data are included. 
The numbers of  principal components can generally be determined by observing the scree plots of the empirical singular values of the data matrices and comparing the empirical spectrum
% singular value distribution 
of the residual component (after subtracting out the low rank representation) with appropriately rescaled 
% square root of the 
Marcenko-Pastur distribution.}
\begin{table}[t]
\centering
\scriptsize
\setlength{\tabcolsep}{3pt}
\renewcommand{\arraystretch}{0.95}
\caption{Comparison of clustering metrics among the three settings with different numbers of RNA and ATAC PC's. Higher average silhouette score, ARI and V-measure indicate better separation, while lower cLISI indicates improved segregation of clusters.}
\begin{tabular}{ccccc}
\toprule
Clustering Metric & (10 ATAC PC, 15 RNA PC) & (15 ATAC PC, 20 RNA PC) & (20 ATAC PC, 20 RNA PC) \\
% \hline
\midrule
Average silhouette score & {\bf 0.416} & 0.395 & 0.366\\
          V-measure & 0.737 & {\bf 0.762} & 0.732\\
          Adjusted Rand Index & 0.549 & {\bf 0.634} & 0.548\\
          cLISI & {\bf 1.191} & 1.200 & 1.202\\
% \hline
\toprule
\end{tabular}
% \caption{Performance of the multimodal atlas in terms of clustering metrics. Higher average silhouette score, ARI and V-measure indicate better separation, while lower cLISI indicates improved segregation of clusters.}
\label{tab:clustering_tea_seq_ablation}
\end{table}

\begin{figure}
    \centering
    \includegraphics[width=\linewidth]{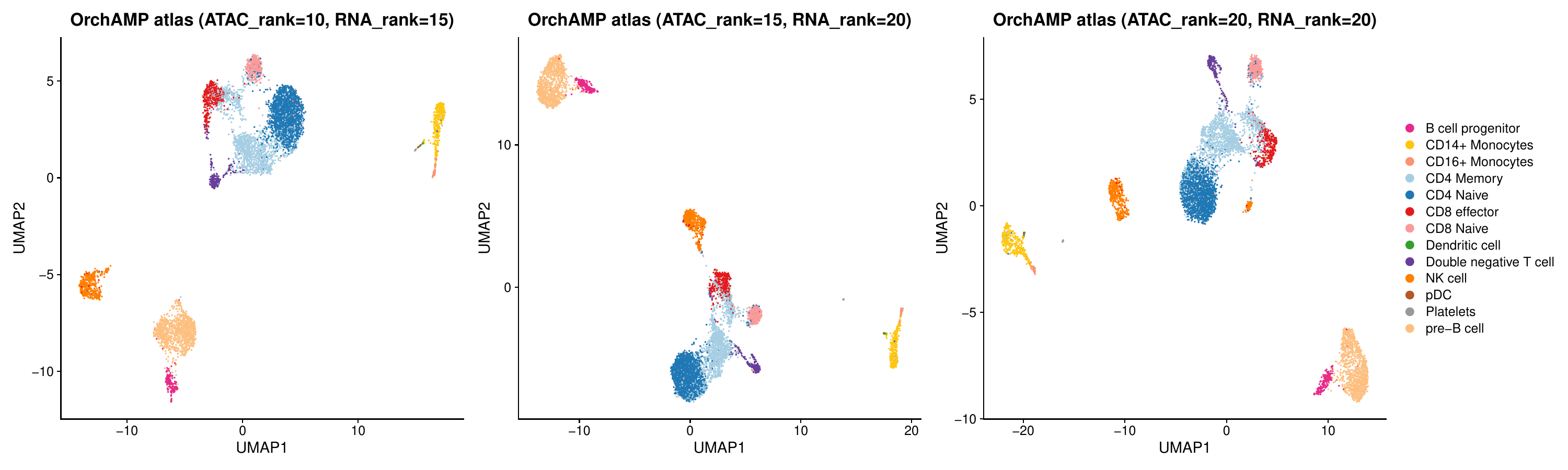}
    \caption{UMAP representation of the embeddings constructed using OrchAMP in the TEA-seq data with different numbers of principal components for the high-dimensional modalities (ATAC and RNA). Left: 10 ATAC PCs, 15 RNA PCs, Middle: 15 ATAC PCs, 20 RNA PCs, and Right: 20 ATAC PCs, 20 RNA PCs}
    \label{fig:tea_seq_ablation}
\end{figure}

\begin{table}[t]
\centering
\scriptsize
\setlength{\tabcolsep}{3pt}
\renewcommand{\arraystretch}{0.95}
\caption{Comparison of clustering metrics in CITE-seq atlas building among the three settings with different numbers of RNA PC's.}
\begin{tabular}{ccccc}
\toprule
Clustering Metric & 45 RNA PC & 50 RNA PC & 55 RNA PC \\
% \hline
\midrule
Average silhouette score & 0.358 & 0.343 & {\bf 0.372}\\
          V-measure & 0.792 & {\bf 0.796} & 0.787\\
          Adjusted Rand Index & 0.592 & {\bf 0.610} & 0.587\\
          cLISI & 1.195 & 1.186 & {\bf 1.184}\\
% \hline
\toprule
\end{tabular}
% \caption{Performance of the multimodal atlas in terms of clustering metrics. Higher average silhouette score, ARI and V-measure indicate better separation, while lower cLISI indicates improved segregation of clusters.}
\label{tab:clustering_cite_seq_ablation}
\end{table}

%As shown in Table~\ref{tab:rec_error_num_comp}, the optimal number of mixture components varies with the latent dimension $r$. For small $r$, increasing the number of components improves performance, but beyond a certain threshold (e.g., $r = 5$), the increased estimation error dominates, and simpler models perform better. This experiment illustrates the inherent trade-off between approximation and estimation in nonparametric prior learning and the need to tune mixture complexity to match the latent structure. A simple data-dependent technique to choose the number of mixture components can be fitting multiple models with increasing complexity and choosing the model with the lowest BIC.}

\iffalse
\begin{figure}[ht]
    \centering
    \begin{minipage}{0.48\linewidth}
        \centering
        \includegraphics[width=\linewidth]{scree_tea_seq_atac.png}
        % \caption*{(a) First plot} % optional subcaption
    \end{minipage}\hfill
    \begin{minipage}{0.48\linewidth}
        \centering
        \includegraphics[width=\linewidth]{scree_tea_seq_rna.png}
        % \caption*{(b) Second plot} % optional subcaption
    \end{minipage}
    \caption{Scree plots of the empirical singular values of the ATAC measurement matrix $\bm X_1$ (left) and the RNA measurement matrix $\bm X_2$ (right), after preprocessing. Both display a clear elbow pattern, supporting the assumption of a low-rank signal plus Gaussian noise model.}
    \label{fig:sidebyside_scree_plot}
\end{figure}
\fi

\subsection{Effect of the number of retained principal components on atlas building in CITE-seq data}
\revsn{We conducted a similar experiment to study the impact of the choice of the number of principal components used to model the signal component in the RNA modality of the CITE-seq data described in Section \ref{sec:bench_cite_seq}. 
In that direction, we repeated the same atlas-building procedure described in Section \ref{sec:bench_cite_seq} while varying the number of principal components used to model RNA in $\{45,50,55\}$. 
The quantitative metrics are reported in Table \ref{tab:clustering_cite_seq_ablation} and the UMAP visualization of the three atlases in Figure \ref{fig:cite_seq_ablation}. We see that both the clustering metrics and the UMAP visualizations are comparable across the three choices of the number of retained principal components in the RNA modality, indicating that the relative robustness to the numbers of principal components used for atlas construction is consistent across data types and sequencing technologies.}

\begin{figure}
    \centering
    \includegraphics[width=\linewidth]{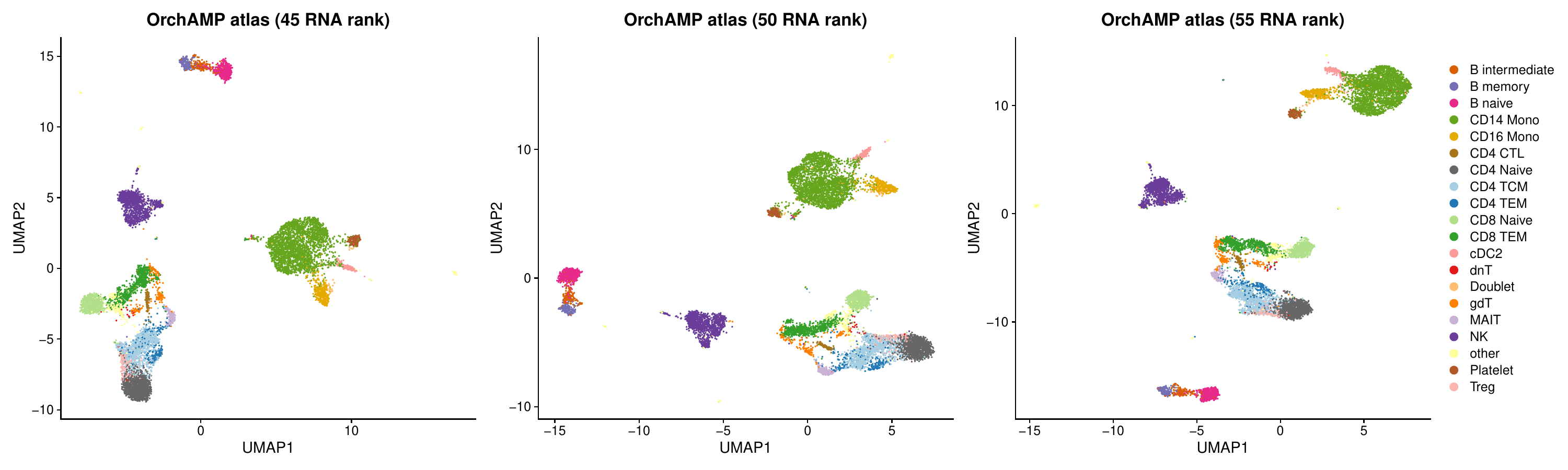}
    \caption{UMAP representation of the embeddings constructed using OrchAMP in the CITE-seq data with different numbers of principal components for the RNA modality. Left: 45 RNA PCs, Middle: 50 RNA PCs, and Right: 55 RNA PCs}
    \label{fig:cite_seq_ablation}
\end{figure}

\end{document}